\newtheorem{theorem}{Theorem}[section]
\theoremstyle{definition}
\newtheorem{definition}[theorem]{Definition}
\theoremstyle{remark}
\newtheorem{remark}[theorem]{Remark}
\numberwithin{equation}{section}
\begin{document}

\title[IST for nonlocal NLS with nonzero boundary conditions] {Inverse scattering transform for the nonlocal nonlinear Schr\"{o}dinger equation with nonzero boundary conditions}

\author{Mark J. Ablowitz}
\address{Department of Applied Mathematics, University of Colorado, Campus Box 526, Boulder, Colorado 80309-0526}
\email{mark.ablowitz@colorado.edu}
\thanks{}

\author{Xu-Dan Luo}
\address{Department of Applied Mathematics, University of Colorado at Boulder, Boulder, Colorado}
\email{lxdmathematics@gmail.com}

\author{Ziad H. Musslimani}
\address{Department of Mathematics, Florida State University, Tallahassee, FL 32306-4510}
\email{muslimani@math.fsu.edu}
\begin{abstract}
In 2013 a new nonlocal symmetry reduction of the well-known AKNS scattering problem was found; it was shown to give rise to a new nonlocal $PT$ symmetric and integrable Hamiltonian nonlinear Schr\"{o}dinger (NLS) equation. Subsequently, the inverse scattering transform was constructed for the case of rapidly decaying initial data and a family of spatially localized, time periodic one soliton solution were found.  In this paper, the inverse scattering transform for the nonlocal NLS equation with nonzero boundary conditions at infinity is presented in the four cases when the data at infinity have constant amplitudes. The direct and inverse scattering problems are analyzed. Specifically, the direct problem is formulated, the analytic properties of the eigenfunctions and scattering data and their symmetries are obtained. The inverse scattering problem is developed via a left-right Riemann-Hilbert problem in terms of a suitable uniformization variable and the time dependence of the scattering data is obtained. This leads to a method to linearize/solve the Cauchy problem. 
Pure soliton solutions are discussed and explicit 1-soliton solution and two 2-soliton solutions are provided for three of the four different cases corresponding to two different signs of nonlinearity and two different values of the phase difference between plus and minus infinity. In the one other case there are no solitons.
\end{abstract}
\maketitle
\section{Introduction}
Solitons are unique type of nonlinear wave that arise as a solution to integrable infinite dimensional Hamiltonian dynamical systems. They were first discovered by Zabusky and Kruskal while conducting numerical experiments on the Korteweg-deVries (KdV) equation. To their surprise, such solitons revealed an unusual particle-like behavior upon collisions despite the fact that they are inherently nonlinear ``objects". Their results sparked intense research interest on two parallel fronts. One is related to the physics and applications of solitons (or solitary waves) while the other is focused on the mathematical structure of integrable evolution equations.\\\\
From the physics point of view, solitons, or solitary waves, represent finite energy spatially localized structures that generally form as a balance between dispersion and
nonlinearity. They
have been theoretically predicted and observed in laboratory experiments in various settings in the
physical and optical sciences (see 
\cite{KIVSHAR}, \cite{Konotop}, \cite{Segev}   for extensive reviews).

 In fluid mechanics, they have been shown to appear as isolated humps in 
shallow water whereas in nonlinear optics they occur as a diffraction-free self guided nonlinear modes of a self-induced optical potential. Both disciplines provide exceptional situations where mathematical analysis, numerical simulations, mathematical modeling and laboratory experiments go hand in hand.

Mathematically speaking,
exactly solvable models play an essential role in the study of nonlinear wave propagation. 
There are many integrable equations that arise as important models in diverse physical phenomena. For example, the Korteweg-deVries (KdV) and the Kadomtsev-Petviashvili (KP) equations describe weakly nonlinear shallow water waves \cite{KDV,Ablowitz2} propagating in one and two dimensions respectively.  The cubic nonlinear Schr\"odinger (NLS) equation is also a physically important integrable model 
 \cite{ZS}. It describes the evolution of weakly nonlinear and quasi-monochromatic wave trains in media with cubic nonlinearities \cite{Ablowitz5}.
Solitons appear as a special class of solutions to these models  which are integrable in the sense that they admit an infinite number of conserved quantities.
The KdV, KP and NLS equations share the mathematical property that they all are exactly solvable evolution equations with many explicit solutions and linearizations known.

There are many other continuous and also discrete integrable evolution equations that are physically relevant. Applications are diverse and include problems in fluid mechanics, electromagnetics, gravitational waves, elasticity, fundamental physics and lattice dynamics, to name but a few \cite{Ablowitz1,Ablowitz2,Ablowitz3}.

Recently,  continuous and discrete integrable nonlinear nonlocal Schr\"odinger equations describing wave propagation in certain nonlinear $PT$ symmetric media were also found; remarkably, they have a very simple structure  \cite{AblowitzMusslimani,AblowitzMusslimani2}.

Generally speaking integrability is established once an infinite number of constants of motion or an infinite number of conservation laws are obtained. However considerably more information about the solution can be obtained if the inverse
scattering transform (IST) can be carried out \cite{AKNS}.

The inverse scattering transform (IST) to solve the initial-value problem with rapidly decaying data for the nonlocal NLS equation
\begin{equation}
\label{E:nonlocal NLS}
iq_{t}(x,t)=q_{xx}(x,t)- 2\sigma q^{2}(x,t)q^{*}(-x,t),
\end{equation}
where $q^*(x,t)$ denotes complex conjugate of $q(x,t), x\in\mathbb{R}, t\ge 0$ and $\sigma=\mp 1$,   has been developed in \cite{AblowitzMusslimani,AblowitzMusslimani3}. In this problem there are novel symmetry relations which relate analytic eigenfunctions as $x \rightarrow \infty$ to those as $x \rightarrow -\infty$. In turn it is useful to employ Riemann-Hilbert (RH) problems from both the left and right in order to  effectively develop the inverse scattering for both sets of eigenfunctions: i.e. those defined as
$x \rightarrow  \pm \infty$. We refer to this as  left-right Riemann-Hilbert problems. This is different from the classical NLS equation where the inverse problem is carried out using a RH problem using corresponding symmetries at either infinity \cite{Ablowitz3}. It is important to carry out the inverse scattering analysis not only to be able to solve the nonlinear equation but also because inverse scattering is important in its own right. The equation (\ref{E:nonlocal NLS}) was derived based upon on physical intuition. Recently this equation was derived in the physical context of magnetics \cite{ApplicNlclNLS16}.

It is well-known that the IST procedure for rapidly decaying potentials must be substantially modified when one is interested in potentials that do not decay as $|x|\rightarrow\infty$. This class of potentials is also relevant for
the nonlocal NLS equation, since it admits soliton solutions with nonzero boundary conditions (NZBCs).

For the classical NLS equation the first studies of NZBCs were done for the NLS equation \cite{ZS2}. The method to carry out the inverse problem employed  two Riemann surfaces associated with square root branch points in the eigenfunctions/scattering data. An improvement was made  with the introduction of a uniformization variable \cite{FT87}. This transforms the inverse problem to the more standard inverse problem in the upper lower/half planes in the new variable.  Subsequently a number of researchers have also studied NLS problems in this manner  cf. \cite{Prinari06,Demontis, Demontis2, BK14}.\\\\

In this paper, we consider the following nonzero boundary conditions
\begin{equation}
q(x,t)\rightarrow q_{\pm}(t)=q_{0}e^{i\theta_{\pm}(t)},\ \ as \ \ x\rightarrow\pm\infty,
\end{equation}
where $q_{0}>0$ is a constant , $0\leq \theta_{\pm}< 2\pi$ and  $\Delta \theta:=\theta_{+}-\theta_{-}$ is either $0$ or $\pi$.
If $\Delta \theta:=\theta_{+}-\theta_{-} \neq 0,\pi$ then the amplitude $q_{\pm}(t)$ is exponentially growing/decaying at one or the other infinity. We do not consider this situation.
We consider four different cases:  two different signs of $\sigma= \mp 1$ and two different values of $\Delta \theta=0,\pi $. First, we consider the case when $\sigma=-1, \Delta \theta = \pi. $
 We find the the following 1-soliton stationary in space, oscillating in time solution
\begin{equation}
q(x,t)=q_{0}\cdot e^{i(2q_{0}^{2}t+\theta_{+}- \pi)}\cdot \tanh \left[q_{0}x-i\theta_*
\right]
\end{equation}
where $\theta_*=\frac{1}{2}(\theta_{+}+\theta_{1}+\pi), \theta_1$ is a real constant related to the scattering data. This solution can be singular in the nongeneric case when $x=0$ and $\theta_{+}+\theta_{1}= 2n\pi, n \in \mathbb{Z}$. Apart from a complex phase shift, the above solution is similar to the well known black soliton solution in the standard integrable NLS equation. Second, we consider the case when $\sigma=-1, \Delta \theta=0.$ In this case a single eigenvalue is found to be in the continuous spectrum; there is no `proper exponentially decaying' pure one soliton solution. The simplest decaying pure reflectionless potential generates  a 2-soliton standing wave solution.
Third, we consider the case when $\sigma=1, \Delta \theta= \pi$. Here all solitons must arise from an even number of eigenvalues: $2N$. The simplest situation occurs when $N=1$ which leads to a 2-soliton traveling wave solution. Lastly, the case when $\sigma=1, \Delta \theta= 0$ is considered; in this case we show that there are no eigenvalues/solitons.

We use the uniformization methodology mentioned earlier for the nonlocal NLS problem with the above NZBCs. We first introduce a two sheeted Riemann surface and then introduce a suitable uniformization variable. There are a number of new features regarding the nonlocal NLS equation such as the introduction of important new symmetries which, when combined with a left-right RH problem allows us to construct the inverse scattering. There are situations when only an even number of solitons (eigenvalues) can be obtained and others in which there are no eigenvalues/solitons at all. In certain non-generic situations the solitons can be singular. We also study `box' like potentials and show that the eigenvalue spectrum is consistent with these results.\\\\
The outline of this paper is as follows. In section 2 some preliminaries are developed. The equation and compatible linear pair are given and the different nonzero boundary values at infinity that we will consider in this paper are presented. It is also shown that the only cases in which the amplitude at infinity are not exponentially growing/decaying are when $\sigma= \mp 1, \Delta \theta= 0,
\pi$. In section 3 the  direct scattering theory is analyzed and the analytic structure of the eigenfunctions and associated scattering data are found. From the symmetry of the potentials the corresponding symmetry of the eigenfunctions and scattering data are deduced. 
 A suitable uniformization variable is introduced and the inverse scattering from both the right and left is developed and pure reflectionless potentials and trace formulae are obtained.
From the time dependence of the scattering data the IST is constructed; pure soliton solutions are discussed and an explicit one soliton solution is given.

The analysis of the other three cases is carried out in a  the similar manner in subsequent sections. However the details, due to the underlying branch structure of the scattering data, are quite different in each case.
\section{Preliminaries}
The nonlocal nonlinear Schr\"{o}dinger (NLS) equation (\ref{E:nonlocal NLS})
 is associated with the following $2\times2$
compatible system  \cite{AblowitzMusslimani}:
\begin{equation}
v_{x}=Xv=
\left(\begin{array}{cc}
-ik& q(x,t)\\
\sigma q^{*}(-x,t)& ik
\end{array}\right)v,
\end{equation}
\begin{equation}
v_{t}=Tv=
\left(\begin{array}{cc}
2ik^{2}+i\sigma q(x,t)q^{*}(-x,t)& -2kq(x,t)-iq_{x}(x,t)\\
-2k\sigma q^{*}(-x,t)-i\sigma q^{*}_{x}(-x,t)& -2ik^{2}-i\sigma q(x,t)q^{*}(-x,t)
\end{array}\right)v,
\end{equation}
where $q(x,t)$ is a complex-valued function of the real variables $x$ and $t$.

Alternatively,the space part of the compatible system may be written in the form
\begin{equation}
\label{E:nonlocal ode}
v_{x}=(ikJ+Q)v, \ \ \ x\in \mathbb{R},
\end{equation}
where
\begin{equation}J=
\left(\begin{matrix}
-1& 0\\
0& 1
\end{matrix}\right),\ \ \
Q=
\left(\begin{matrix}
0& q(x,t)\\
\sigma q^{*}(-x,t)& 0
\end{matrix}\right),
\end{equation}
Here, $q(x,t)$ is called the potential and $k$ is a complex spectral parameter. In general as
$x \rightarrow \pm \infty, q \rightarrow q_{\pm}(t).$ Then equation (\ref{E:nonlocal NLS}) simplifies to
\begin{equation}
\label{NLS+}
iq_{+,t} =-2\sigma q_+^2q_-^*  \mbox{~~~as~~~~} ~~~~~x \rightarrow +\infty
\end{equation}
and

\begin{equation}
\label{NLS-}
iq_{-,t}=-2\sigma q_-^2q_+^*  \mbox{~~~as~~~~} ~~~~~x \rightarrow -\infty
\end{equation}
From the above equations we find the conserved quantity

\begin{equation}
\label{NLSinftycons}
q_+q_-^*=C_0,   ~~~C_0 ~~~\mbox{~~~constant~~~~}
\end{equation}
The solutions to equations (\ref{NLS+}-\ref{NLS-}) are then given by

\begin{equation}
\label{SolNLS_+}
q_{+} (t)= q_{0,+} e^{2i\sigma C_0t} ~~ \mbox{~~~as~~~~} ~~~~~x \rightarrow +\infty
\end{equation}
and
\begin{equation}
\label{SolNLS_-}
q_{-}(t) = q_{0,-} e^{2i\sigma C_0^*t} ~~ \mbox{~~~as~~~~} ~~~~~x \rightarrow -\infty
\end{equation}
where $q_{0,\pm}$ are constant. Further, if we take
\begin{equation}
\label{ConstInfty}
q_{\pm}(t) = q_{0,\pm}e^{i \theta_{\pm}}
\end{equation}
where $q_{0,\pm}>0$, then
\begin{equation}
\label{Constinfty2}
C_0= q_{0,+} q_{0,-}e^{i\Delta \theta}, ~\mbox{~~~where~~~~~~} \Delta \theta= \theta_+-\theta_-= \text{const}
\end{equation}
If
$$\Delta \theta =0 ~~\mbox{~~~or~~~} \Delta \theta =\pi $$
then $C_0$ is real. Otherwise it is complex and the background either grows or decays exponentially as $|t| \rightarrow \infty$. In this paper we shall only consider the cases $\Delta \theta =0 ~~\mbox{or~~} \pi$. For convenience we will take

$$q_{0,\pm}=q_0.$$
There is no material difference in the analysis if we  take $q_{0,+} \neq q_{0,-}$.

\bigskip
We also note that as $x\rightarrow\pm\infty$, the eigenfunctions of the scattering problem asymptotically satisfy
\begin{equation}
\label{E:large x}
\left(\begin{array}{cc}
v_{1}\\
v_{2}
\end{array}\right)_{x}
=
\left(\begin{array}{cc}
-ik& q_{0}e^{\mp2\sigma q_{0}^{2}t\sin \Delta\theta}\cdot e^{i(2\sigma q_{0}^{2}t\cos\Delta\theta+\theta_{\pm})}\\
\sigma q_{0}e^{\pm2\sigma q_{0}^{2}t\sin \Delta\theta}\cdot e^{-i(2\sigma q_{0}^{2}t\cos\Delta\theta+\theta_{\mp})}& ik
\end{array}\right)
\left(\begin{array}{cc}
v_{1}\\
v_{2}
\end{array}\right),
\end{equation}
i.e.,
\begin{equation}
v_{x}=(ikJ+Q_{\pm}(t))v,
\end{equation}
\begin{equation}
Q_{\pm}(t)=\left(\begin{array}{cc}
0& q_{0}e^{\mp2\sigma q_{0}^{2}t\sin \Delta\theta}\cdot e^{i(2\sigma q_{0}^{2}t\cos\Delta\theta+\theta_{\pm})}\\
\sigma q_{0}e^{\pm2\sigma q_{0}^{2}t\sin \Delta\theta}\cdot e^{-i(2\sigma q_{0}^{2}t\cos\Delta\theta+\theta_{\mp})}& 0
\end{array}\right),
\label{EVA}
\end{equation}
where
\begin{equation}
\label{NZBCA}
q(x,t)\rightarrow q_{\pm}(t)=q_{0}e^{\mp2\sigma q_{0}^{2}t\sin \Delta\theta}\cdot e^{i(2\sigma q_{0}^{2}t\cos\Delta\theta+\theta_{\pm})},\ \ as \ \ x\rightarrow\pm\infty.
\end{equation}
Here, $q_{0}>0, ~0\leq \theta_{\pm}<2\pi.$
\section{The case of $\sigma=-1$ with $\theta_{+}-\theta_{-}=\pi$}
\subsection{Direct Scattering}
In this section we consider the nonzero boundary conditions (NZBCs) given above in (\ref{NZBCA}) and $\sigma=-1$, $\Delta \theta:=\theta_{+}-\theta_{-}=\pi$.
With this condition, equation (\ref{EVA}) conveniently reduces to
\begin{equation}
\frac{\partial^{2} v_{j}}{\partial x^{2}}=-(k^{2}+q_{0}^{2}e^{i\Delta \theta})v_{j}=-(k^{2}-q_{0}^{2})v_{j}, \ \ \ j=1,2.
\end{equation}

Each of the two equations has two linearly independent solutions $e^{i\lambda x}$ and $e^{-i\lambda x}$ as $|x|\rightarrow\infty$, where
$\lambda=\sqrt{k^{2}-q_{0}^{2}}$.
The variable $k$ is then thought of as belonging to a Riemann surface $\mathbb{K}$ consisting of two sheets $\mathbb{C}_{1}$ and $\mathbb{C}_{2}$ 
with the complex plane cut along $(-\infty, -q_{0}]\cup [q_{0}, +\infty)$
with its edges glued in such a way that $\lambda(k)$ is continuous through the cut. We introduce the local polar coordinates
\begin{equation}
k-q_{0}=r_{1}e^{i\theta_{1}}, \ \ \ 0\leq\theta_{1}<2\pi,
\end{equation}
\begin{equation}
k+q_{0}=r_{2}e^{i\theta_{2}}, \ \ \ -\pi\leq\theta_{2}<\pi,
\end{equation}
where $r_{1}=|k-q_{0}|$ and $r_{2}=|k+q_{0}|$. Then the function $\lambda(k)$ becomes single-valued on $\mathbb{K}$, i.e.,
\begin{equation}
\label{E:lambda}
\lambda(k)=\left\{\begin{array}{ll}
\lambda_{1}(k)=(r_{1}r_{2})^{\frac{1}{2}}\cdot e^{i\frac{\theta_{1}+\theta_{2}}{2}},\ \ k\in\mathbb{C}_{1},\\
\lambda_{2}(k)=-(r_{1}r_{2})^{\frac{1}{2}}\cdot e^{i\frac{\theta_{1}+\theta_{2}}{2}}, \ \ k\in\mathbb{C}_{2}.\\
\end{array}\right.
\end{equation}
Moreover, if $k\in\mathbb{C}_{1}$, then $\Im \lambda\geq 0$; and if $k\in\mathbb{C}_{2}$, then $\Im \lambda\leq 0$.
Hence, the variable $\lambda$ is thought of as belonging to the complex plane consisting of the upper half plane $U_{+}$ :$\Im \lambda\geq 0$, and lower half plane $U_{-}$
:$\Im \lambda\leq 0$,  glued together along the real axis; the transition occurs at  $\Im \lambda= 0$. The transformation $k\rightarrow\lambda$ maps $\mathbb{C}_{1}$ onto $U_{+}$, $\mathbb{C}_{2}$ onto $U_{-}$, the cut $(-\infty, -q_{0}]\cup [q_{0}, +\infty)$ onto the real axis, and the points $\pm q_{0}$ to $0$; see also the figures 1,2 below.


\begin{figure}
\begin{tabular*}{\textwidth}{ccc}
\hspace*{-1cm}
\begin{minipage}{\dimexpr0.5\textwidth-2\tabcolsep}

\begin{tikzpicture}

\draw (0,-.1) -- (0.9,1.5) -- (5.8,1.5) -- (5,-.1);
\draw (-.1,-.4) -- (-1,-2) -- (4,-2) -- (4.9,-.4);

\draw (0,-.1) -- (1.8,-.1);
\draw (-.1,-.4) -- (1.8,-.4);
\draw [-stealth]  (1.8,-.1) -- (0.6,-.1);
\draw [-stealth] (-.1,-.4) -- (0.7,-.4);
\draw [-stealth]  (1.8,-.1) -- (1.5,-.1);
\draw [-stealth] (-.1,-.4) -- (1.6,-.4);

\draw (5,-.1) -- (3,-.1);
\draw (4.9,-.4) -- (3,-.4);
\draw  [-stealth] (3,-.1) -- (4.5,-.1) ;
\draw  [-stealth] (4.9,-.4) -- (4.4,-.4);
\draw  [-stealth] (3,-.1) -- (3.5,-.1) ;
\draw  [-stealth] (4.9,-.4) -- (3.4,-.4);

\draw  (1.8,-.1) arc (90:-90:0.15);
\draw  (3,-.4) arc (270:90:0.15);

\draw (0,-4.1) -- (0.9,-2.5) -- (5.8,-2.5) -- (5,-4.1);
\draw (-.1,-4.4) -- (-1,-6) -- (4,-6) -- (4.9,-4.4);

\draw (0,-4.1) -- (1.8,-4.1);
\draw (-.1,-4.4) -- (1.8,-4.4);
\draw [-stealth]  (1.8,-4.1) -- (0.6,-4.1);
\draw [-stealth] (-.1,-4.4) -- (0.7,-4.4);
\draw [-stealth]  (1.8,-4.1) -- (1.5,-4.1);
\draw [-stealth] (-.1,-4.4) -- (1.6,-4.4);

\draw (5,-4.1) -- (3,-4.1);
\draw (4.9,-4.4) -- (3,-4.4);
\draw  [-stealth] (3,-4.1) -- (4.5,-4.1) ;
\draw  [-stealth] (4.9,-4.4) -- (4.4,-4.4);
\draw  [-stealth] (3,-4.1) -- (3.5,-4.1) ;
\draw  [-stealth] (4.9,-4.4) -- (3.4,-4.4);

\draw  (1.8,-4.1) arc (90:-90:0.15);
\draw  (3,-4.4) arc (270:90:0.15);

\fill (2.4,-.25)  circle (1pt);
\fill (1.95,-.25)  circle (1pt);
\fill (2.85,-.25)  circle (1pt);

\fill (1.95,-4.25)  circle (1pt);
\fill (2.85,-4.25)  circle (1pt);

\fill (2.35,-4.25)  circle (1pt);

\fill ((1,-.1)  circle (1pt);
\fill (1,-.4)  circle (1pt);
\fill (4,-.1)  circle (1pt);
\fill (4,-.4)  circle (1pt);

\fill ((1,-4.1)  circle (1pt);
\fill (1,-4.4)  circle (1pt);
\fill (4,-4.1)  circle (1pt);
\fill (4,-4.4)  circle (1pt);

\draw (1.9,0) node () { \tiny $-q_0$};
\draw (2.85,0) node () { \tiny $q_0$};
\draw (2.4,-.5) node () { \tiny $0_{\rm I}$};

\draw (1,.1) node () { \tiny $b_{\rm I}$};
\draw (1,-.6) node () { \tiny $c_{\rm I}$};
\draw (4,.1) node () { \tiny $a_{\rm I}$};
\draw (4,-.6) node () { \tiny $d_{\rm I}$};

\draw (0.8,1.8) node () { \tiny Sheet I: Im $\lambda > 0$};

\draw (3,1) node () { \tiny Im $k > 0$};
\draw (2,-1.5) node () { \tiny  Im $k < 0$};

\draw (5.4,-.1) node () { \tiny $\infty_+$};
\draw (5.1,-.5) node () { \tiny $\infty_-$};
\draw (-.2,0) node () { \tiny $\infty_+$};
\draw (-.4,-.5) node () { \tiny $\infty_-$};


\draw (0.5,-2.3) node () { \tiny  Sheet II: Im $\lambda < 0$};

\draw (3,-3) node () { \tiny Im $k > 0$};
\draw (2,-5.5) node () { \tiny  Im $k < 0$};

\draw (1.9,-4) node () { \tiny $-q_0$};
\draw (2.85,-4) node () { \tiny $q_0$};
\draw (2.4,-4.5) node () { \tiny $0_{\rm II}$};

\draw (1,-3.9) node () { \tiny $b_{\rm II}$};
\draw (1,-4.6) node () { \tiny $c_{\rm II}$};
\draw (4,-3.9) node () { \tiny $a_{\rm II}$};
\draw (4,-4.6) node () { \tiny $d_{\rm II}$};

\draw (5.5,-4) node () { \tiny $\infty_-$};
\draw (5.1,-4.5) node () { \tiny $\infty_+$};
\draw (-.2,-4) node () { \tiny $\infty_-$};
\draw (-.4,-4.5) node () { \tiny $\infty_+$};

\end{tikzpicture}

\captionsetup{font=footnotesize}
\captionof{figure}{The two-sheeted Riemann surface $\mathbb{K}$. \label{fig1}}
\end{minipage}%

&
\hspace*{1cm}
\begin{minipage}{\dimexpr0.5\textwidth-2\tabcolsep}

\begin{tikzpicture}

 \draw  (0,1) arc (180:360:2 and 0.5);
 \draw  [dashed] (4,1) arc (0:180:2 and 0.5);
 \draw (0,1) arc (180:0:2 and 2);

\draw  [-stealth] (0,1) arc (180:210:2 and 0.5);
 \draw  [-stealth] (4,1) arc (0:-30:2 and 0.5);
 \draw  [-stealth] (0,1) arc (180:250:2 and 0.5);
 \draw  [-stealth] (4,1) arc (0:-70:2 and 0.5);
 \draw  [-stealth] [dashed] (0,1) arc (180:70:2 and 0.5);
 \draw  [-stealth] [dashed] (4,1) arc (0:110:2 and 0.5);
  \draw  [-stealth] [dashed] (0,1) arc (180:30:2 and 0.5);
 \draw  [-stealth] [dashed] (4,1) arc (0:150:2 and 0.5);

\fill (0,1)  circle (1pt);
\fill (4,1)  circle (1pt);
\fill (2,1.5)  circle (1pt);
\fill (2,0.5)  circle (1pt);

\fill (1,.57)  circle (1pt);
\fill (1,1.43)  circle (1pt);
\fill (3,.57)  circle (1pt);
\fill (3,1.43)  circle (1pt);

\fill (2,2.9)  circle (1pt);

\draw (-.25,1) node () { \tiny $-q_0$};
\draw (4.2,1) node () { \tiny $q_0$};
\draw (2,0.3) node () { \tiny $\infty_+$};
\draw (2,1.7) node () { \tiny $\infty_-$};

\draw (0.9,.4) node () { \tiny $b_{\rm I}$};
\draw (3.1,.4) node () { \tiny $a_{\rm I}$};
\draw (0.9,1.6) node () { \tiny $c_{\rm I}$};
\draw (3.1,1.6) node () { \tiny $d_{\rm I}$};

\draw (1.5,2.25) node () { \tiny Im $k>0$ (front)};
\draw (4,3) node () { \tiny Im $k<0$ (back)};
\draw (2,3.2) node () { \tiny $0_{ I}$};
\draw (0,3.5) node () { \tiny Sheet I: Im $\lambda > 0$};

 \filldraw[fill=gray!50!white] (0,-1) arc (180:360:2 and 0.5);
 \filldraw[fill=gray!50!white]  (0,-1) arc (180:0:2 and 0.5);
 \draw (0,-1) arc (-180:0:2 and 2);

 \draw  [-stealth] (0,-1) arc (180:210:2 and 0.5);
 \draw  [-stealth] (4,-1) arc (0:-30:2 and 0.5);
 \draw  [-stealth] (0,-1) arc (180:250:2 and 0.5);
 \draw  [-stealth] (4,-1) arc (0:-70:2 and 0.5);
 \draw  [-stealth] [dashed] (0,-1) arc (180:70:2 and 0.5);
 \draw  [-stealth] [dashed] (4,-1) arc (0:110:2 and 0.5);
  \draw  [-stealth] [dashed] (0,-1) arc (180:30:2 and 0.5);
 \draw  [-stealth] [dashed] (4,-1) arc (0:150:2 and 0.5);

\fill (0,-1)  circle (1pt);
\fill (4,-1)  circle (1pt);
\fill (2,-1.5)  circle (1pt);
\fill (2,-0.5)  circle (1pt);

\fill (1,-.57)  circle (1pt);
\fill (1,-1.43)  circle (1pt);
\fill (3,-.57)  circle (1pt);
\fill (3,-1.43)  circle (1pt);

\fill (2,-2.9)  circle (1pt);

\draw (-.25,-1) node () { \tiny $-q_0$};
\draw (4.2,-1) node () { \tiny $q_0$};
\draw (2,-0.3) node () { \tiny $\infty_-$};
\draw (2,-1.7) node () { \tiny $\infty_+$};

\draw (0.9,-.4) node () { \tiny $b_{\rm II}$};
\draw (3.1,-.4) node () { \tiny $a_{\rm II}$};
\draw (0.9,-1.6) node () { \tiny $c_{\rm II}$};
\draw (3.1,-1.6) node () { \tiny $d_{\rm II}$};

\draw (1.5,-2.25) node () { \tiny Im $k<0$ (front)};
\draw (2.75,-1) node () { \tiny Im $k>0$ (back)};
\draw (2,-3.2) node () { \tiny $0_{ II}$};
\draw (0,0) node () { \tiny Sheet II: Im $\lambda < 0$};

\end{tikzpicture}

\captionsetup{font=footnotesize}
\captionof{figure}{The genus $0$ surface is topologically equivalent to $\mathbb{K}$. \label{fig2}}
\end{minipage}%

\end{tabular*}%

\end{figure}

\subsection{Eigenfunctions}
It is natural to introduce the eigenfunctions defined by the following boundary conditions
\begin{equation}
\label{E:asymptotic 1}
\phi(x,k)\sim w e^{-i\lambda x}, \ \ \ \overline{\phi}(x,k)\sim \overline{w}e^{i\lambda x}
\end{equation}
as $x\rightarrow-\infty$,
\begin{equation}
\label{E:asymptotic 2}
\psi(x,k)\sim v e^{i\lambda x}, \ \ \ \overline{\psi}(x,k)\sim \overline{v}e^{-i\lambda x}
\end{equation}
as $x\rightarrow +\infty$. We substitute the above into (\ref{E:large x}), obtaining
\begin{equation}
\label{E:boundary conditions 1}
w=\left(\begin{array}{cc}
\lambda+k\\
-iq_{+}^{*}
\end{array}\right), \ \ \
\overline{w}=\left(\begin{array}{cc}
-iq_{-}\\
\lambda+k
\end{array}\right),
\end{equation}
\begin{equation}
\label{E:boundary conditions 2}
v=\left(\begin{array}{cc}
-iq_{+}\\
\lambda+k
\end{array}\right), \ \ \
\overline{v}=
\left(\begin{array}{cc}
\lambda+k\\
-iq_{-}^{*}
\end{array}\right)
\end{equation}
which satisfy the boundary conditions, but they are not unique.
In the following analysis, it is convenient to consider functions with constant boundary conditions. We define the bounded eigenfunctions as follows:
\begin{equation}
\label{E:definition 1}
M(x,k)=e^{i\lambda x}\phi(x,k), \ \ \ \overline{M}(x,k)=e^{-i\lambda x}\overline{\phi}(x,k),
\end{equation}
\begin{equation}
\label{E:definition 2}
N(x,k)=e^{-i\lambda x}\psi(x,k), \ \ \ \overline{N}(x,k)=e^{i\lambda x}\overline{\psi}(x,k).
\end{equation}
The eigenfunctions can be represented by means of the following integral equations
\begin{equation}
M(x,k)=
\left(\begin{array}{cc}
\lambda+k\\
-iq_{+}^{*}
\end{array}\right)
+\int_{-\infty}^{+\infty}G_{-}(x-x',k)((Q-Q_{-})M)(x',k)dx',
\end{equation}
\begin{equation}
\overline{M}(x,k)=
\left(\begin{array}{cc}
-iq_{-}\\
\lambda+k
\end{array}\right)
+\int_{-\infty}^{+\infty}\overline{G}_{-}(x-x',k)((Q-Q_{-})M)(x',k)dx',
\end{equation}
\begin{equation}
N(x,k)=
\left(\begin{array}{cc}
-iq_{+}\\
\lambda+k
\end{array}\right)
+\int_{-\infty}^{+\infty}G_{+}(x-x',k)((Q-Q_{+})M)(x',k)dx',
\end{equation}
\begin{equation}
\overline{N}(x,k)=\left(\begin{array}{cc}
\lambda+k\\
-iq_{-}^{*}
\end{array}\right)
+\int_{-\infty}^{+\infty}\overline{G}_{+}(x-x',k)((Q-Q_{+})M)(x',k)dx'.
\end{equation}
Using the Fourier transform method, we get
\begin{equation}
G_{-}(x,k)=\frac{\theta(x)}{2\lambda}[(1+e^{2i\lambda x})\lambda I-i(e^{2i\lambda x}-1)(ikJ+Q_{-})],
\end{equation}
\begin{equation}
\overline{G}_{-}(x,k)=\frac{\theta(x)}{2\lambda}[(1+e^{-2i\lambda x})\lambda I+i(e^{-2i\lambda x}-1)(ikJ+Q_{-})],
\end{equation}
\begin{equation}
G_{+}(x,k)=-\frac{\theta(-x)}{2\lambda}[(1+e^{-2i\lambda x})\lambda I+i(e^{-2i\lambda x}-1)(ikJ+Q_{+})],
\end{equation}
\begin{equation}
\overline{G}_{+}(x,k)=-\frac{\theta(-x)}{2\lambda}[(1+e^{2i\lambda x})\lambda I-i(e^{2i\lambda x}-1)(ikJ+Q_{+})],
\end{equation}
where $\theta(x)$ is the Heaviside function, i.e., $\theta(x)=1$ if $x>0$ and $\theta(x)=0$ if $x<0$.
\begin{definition}
We say $f\in L^{1}(\mathbb{R})$  if $\int_{-\infty}^{+\infty}|f(x)|dx<\infty$, and $f\in L^{1,2}(\mathbb{R})$ if $\int_{-\infty}^{+\infty}|f(x)|\cdot(1+|x|)^{2}dx<\infty$.
\end{definition}
Then we have the following result see also \cite{Demontis}).
\begin{theorem}
\label{T:1}
Suppose the entries of $Q-Q_{\pm}$ belong to $L^{1}(\mathbb{R})$, then for each $x\in\mathbb{R}$, the eigenfunctions $M(x,k)$ and $N(x,k)$ are continuous for $k\in \overline{\mathbb{C}}_{1}\setminus\{\pm q_{0}\}$ and analytic for $k\in \mathbb{C}_{1}$, $\overline{M}(x,k)$ and $\overline{N}(x,k)$ are continuous for $k\in \overline{\mathbb{C}}_{2}\setminus\{\pm q_{0}\}$ and analytic for $k\in \mathbb{C}_{2}$. In addition, if the entries of $Q-Q_{\pm}$ belong to $L^{1,2}(\mathbb{R})$, then for each $x\in\mathbb{R}$, the eigenfunctions $M(x,k)$ and $N(x,k)$ are continuous for $k\in \overline{\mathbb{C}}_{1}$ and analytic for $k\in \mathbb{C}_{1}$, $\overline{M}(x,k)$ and $\overline{N}(x,k)$ are continuous for $k\in \overline{\mathbb{C}}_{2}$ and analytic for $k\in \mathbb{C}_{2}$.
\end{theorem}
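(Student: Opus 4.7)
The plan is to turn the four integral equations into Volterra problems, solve them by a Neumann (Picard) series, and read off continuity and analyticity from uniform convergence of that series. I will focus on $M$; the arguments for $N,\overline M,\overline N$ are analogous, differing only in the direction of the Volterra integration and in the half of the Riemann surface on which the exponential factors in the Green's function are bounded.

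First I would make the Volterra structure explicit. Because $G_{-}(x-x',k)$ carries the factor $\theta(x-x')$, the integral equation for $M$ reads
\begin{equation*}
M(x,k)=w+\int_{-\infty}^{x}\frac{1}{2\lambda}\!\left[(1+e^{2i\lambda(x-x')})\lambda I-i(e^{2i\lambda(x-x')}-1)(ikJ+Q_{-})\right]\!(Q-Q_{-})(x')M(x',k)\,dx'.
\end{equation*}
For $k\in\overline{\mathbb{C}}_{1}$ one has $\operatorname{Im}\lambda\ge0$, so $|e^{2i\lambda(x-x')}|\le 1$ when $x'\le x$, and the kernel is bounded on any compact subset of $\overline{\mathbb{C}}_{1}\setminus\{\pm q_{0}\}$ by a constant $C(k)$ that blows up only as $\lambda\to0$, i.e.\ as $k\to\pm q_{0}$. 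Define the Neumann iterates $M^{(0)}=w$ and $M^{(n+1)}(x,k)=\int_{-\infty}^{x}K(x,x',k)(Q-Q_{-})(x')M^{(n)}(x',k)\,dx'$ with $K$ the bracketed kernel divided by $2\lambda$. A standard telescoping estimate then yields
\begin{equation*}
\bigl\|M^{(n)}(x,k)\bigr\|\le \frac{C(k)^{n}}{n!}\Bigl(\int_{-\infty}^{x}\|(Q-Q_{-})(x')\|\,dx'\Bigr)^{n}\|w\|,
\end{equation*}
so under the $L^{1}$ hypothesis the series $M=\sum_{n\ge 0}M^{(n)}$ converges absolutely and uniformly on sets of the form $\mathbb{R}\times K$ with $K\Subset\overline{\mathbb{C}}_{1}\setminus\{\pm q_{0}\}$.

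From this uniform convergence continuity in $k\in\overline{\mathbb{C}}_{1}\setminus\{\pm q_{0}\}$ is immediate, and analyticity in $k\in\mathbb{C}_{1}$ follows by Weierstrass: each iterate is analytic (since $\lambda(k)$ and the boundary vectors $w,v$ are analytic away from the cut, and the exponentials $e^{2i\lambda(x-x')}$ with $x'\le x$ are analytic for $\operatorname{Im}\lambda>0$), and a uniform limit of analytic functions on compact subsets of $\mathbb{C}_{1}$ is analytic by Morera/Weierstrass. The parallel argument, using $\theta(-x+x')$ and hence $x'\ge x$, gives the corresponding statements for $N$ in $\overline{\mathbb{C}}_{1}$ and for $\overline M,\overline N$ in $\overline{\mathbb{C}}_{2}$ (where now $\operatorname{Im}\lambda\le 0$ makes the opposite exponentials bounded).

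The main obstacle, and the content of the second half of the theorem, is extending continuity through the branch points $k=\pm q_{0}$ where $\lambda=0$ and the $1/\lambda$ factor in $G_{-}$ looks singular. The resolution is that the singularity is only apparent: expanding $(1+e^{2i\lambda y})/2=1+O(\lambda y)$ and $(e^{2i\lambda y}-1)/(2\lambda)=iy+O(\lambda y^{2})$ at $y=x-x'$, one sees that
\begin{equation*}
K(x,x',k)=I+(x-x')(ikJ+Q_{-})+O\!\bigl(\lambda(x-x')^{2}\bigr),
\end{equation*}
so $K$ itself stays bounded as $\lambda\to 0$ but at the price of a linear growth in $|x-x'|$. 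To absorb this growth in the Neumann estimate one needs $\int(1+|x'|)^{2}\|(Q-Q_{-})(x')\|\,dx'<\infty$, which is exactly the $L^{1,2}$ hypothesis; a weighted version of the telescoping bound then produces finite, uniform constants up to and including $k=\pm q_{0}$, giving continuity on all of $\overline{\mathbb{C}}_{1}$ (and, by the symmetric argument, on $\overline{\mathbb{C}}_{2}$) while leaving the analyticity in the open half-sheets unchanged. The moment weight $(1+|x|)^{2}$ rather than $(1+|x|)$ is needed because after one iteration the kernel contributes a factor $|x-x'|$ and after two iterations one needs the double moment to close the estimate at the branch points.
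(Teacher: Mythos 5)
Your proposal is correct and follows essentially the same route as the paper: a Volterra/Neumann iteration with the factorial bound for uniform convergence away from $\pm q_{0}$, Weierstrass for analyticity, and the observation that the apparent $1/\lambda$ singularity at the branch points cancels at the cost of a factor linear in $|x-x'|$, which the $L^{1,2}$ weight absorbs. The only cosmetic difference is that the paper packages this cancellation through the spectral projectors $P_{\pm i\lambda}^{\pm}$ and the rewriting $I_{2}+[e^{2i\lambda(x-x')}-1]P_{i\lambda}^{-}$, with the bound $1+2|x-x'|\,|k|\le\max\{1,2|k|\}(1+|x|)(1+|x'|)$, which is exactly your Taylor-expansion argument in different notation.
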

\begin{proof}
For $k\in(-\infty,-q_{0})\cup (q_{0},+\infty)$, the matrices $P_{-i\lambda}(k)^{\pm}$ and $P_{i\lambda}^{\pm}$ are defined as follows:
\begin{equation}
P_{-i\lambda}^{\pm}(k)=
\frac{1}{2\lambda}\left(\begin{array}{cc}
\lambda+k& iq_{\pm}\\
-iq_{\mp}^{*}& \lambda-k
\end{array}\right), \ \ \
P_{i\lambda}^{\pm}(k)
=\frac{1}{2\lambda}\left(\begin{array}{cc}
\lambda-k& -iq_{\pm}\\
iq_{\mp}^{*}& \lambda+k
\end{array}\right).
\end{equation}
We have $(P_{i\lambda}^{\pm})^{2}=P_{i\lambda}^{\pm}$, $(P_{-i\lambda}^{\pm})^{2}=P_{-i\lambda}^{\pm}$, $P_{i\lambda}^{\pm}+P_{-i\lambda}^{\pm}=I_{2}$, $P_{i\lambda}^{\pm}P_{-i\lambda}^{\pm}=P_{-i\lambda}^{\pm}P_{i\lambda}^{\pm}=0$.
Moreover,
\begin{equation}
(ikJ+Q_{\pm})P_{-i\lambda}^{\pm}(k)=-i\lambda P_{-i\lambda}^{\pm}(k)
\end{equation}
and
\begin{equation}
(ikJ+Q_{\pm})P_{i\lambda}^{\pm}(k)=i\lambda P_{i\lambda}^{\pm}(k).
\end{equation}
Then we can rewrite the Green's functions in terms of the projectors to find
\begin{equation}
M(x,k)=
\left(\begin{array}{cc}
\lambda+k\\
-iq_{+}^{*}
\end{array}\right)
+\int_{-\infty}^{x}[P_{-i\lambda}^{-}+e^{2i\lambda(x-x')}P_{i\lambda}^{-}]((Q-Q_{-})M)(x',k)dx',
\end{equation}
\begin{equation}
\overline{M}(x,k)=
\left(\begin{array}{cc}
-iq_{-}\\
\lambda+k
\end{array}\right)
+\int_{-\infty}^{x}[e^{2i\lambda(x'-x)}P_{-i\lambda}^{-}+P_{i\lambda}^{-}]((Q-Q_{-})M)(x',k)dx',
\end{equation}
\begin{equation}
N(x,k)=
\left(\begin{array}{cc}
-iq_{+}\\
\lambda+k
\end{array}\right)
+\int_{x}^{+\infty}[e^{2i\lambda(x'-x)}P_{-i\lambda}^{+}+P_{i\lambda}^{+}]((Q-Q_{+})M)(x',k)dx',
\end{equation}
\begin{equation}
\overline{N}(x,k)=\left(\begin{array}{cc}
\lambda+k\\
-iq_{-}^{*}
\end{array}\right)
+\int_{x}^{+\infty}[P_{-i\lambda}^{+}+e^{2i\lambda(x-x')}P_{i\lambda}^{+}]((Q-Q_{+})M)(x',k)dx'.
\end{equation}
The projections $P_{-i\lambda}^{\pm}(k)$ and $P_{i\lambda}^{\pm}(k)$ admit a natural continuation to $k\in\mathbb{K}\setminus\{\pm q_{0}\}$, i.e., $\lambda\in\mathbb{C}\setminus\{0\}$. Taking into account that these projections are singular matrices, we can show that 
their $l^{2}$-norm are given by
\begin{equation}
\parallel P_{-i\lambda}^{\pm}(k)\parallel_2=\parallel P_{i\lambda}^{\pm}(k)\parallel_2=
\sqrt{\frac{|\lambda-k|^{2}+|\lambda+k|^{2}+2q_{0}^{2}}{4|\lambda|^{2}}}=\frac{\sqrt{2|\lambda|^{2}+2|k|^{2}+2q_{0}^{2}}}{2|\lambda|}.
\end{equation}
In particular, if $k\in(-\infty, -q_{0}]\cup [q_{0}, +\infty)$, then
\begin{equation}
\parallel P_{-i\lambda}^{\pm}(k)\parallel_{2}=\frac{|k|}{|\lambda|}.
\end{equation}
We consider the Neumann series
\begin{equation}
M(x,k)=\sum_{n=0}^{\infty}M^{(n)}(x,k),
\end{equation}
where
\begin{equation}
M^{(0)}(x,k)=\left(\begin{array}{cc}
\lambda+k\\
-iq_{0}e^{-2iq_{0}^{2}t-i\theta_{+}}
\end{array}\right),
\end{equation}
\begin{equation}
M^{(n+1)}(x,k)=\int_{-\infty}^{x}[P_{-i\lambda}^{-}+e^{2i\lambda(x-x')}P_{i\lambda}^{-}]((Q-Q_{-})M^{(n)})(x',k)dx'.
\end{equation}
Then
\begin{equation}
\parallel M^{(n+1)}(x,k)\parallel\leq \sqrt{|\lambda+k|^{2}+q_{0}^{2}}+\frac{2|k|}{|\lambda|}\int_{-\infty}^{x}\parallel Q(x')-Q_{-}\parallel
\cdot\parallel M^{(n)}(x',k) \parallel dx'
\end{equation}
if $x'\leq x$ and $\lambda\in U_{+}\cup (-\infty,0)\cup (0,+\infty)$.

Since the entries of $Q-Q_{-}$ belong to $L^{1}(\mathbb{R})$, using the identities
\begin{equation}
\begin{split}
&\frac{1}{j!}\int_{-\infty}^{x}|f(\xi)|\left[\int_{-\infty}^{\xi}|f(\xi')|d\xi'\right]^{j}d\xi\\
&=\frac{1}{(j+1)!}\int_{-\infty}^{x}\frac{d}{d\xi}\left[\int_{-\infty}^{\xi}|f(\xi')|d\xi'\right]^{j+1}d\xi\\
&=\frac{1}{(j+1)!}\left[\int_{-\infty}^{x}|f(\xi)|d\xi\right]^{j+1},
\end{split}
\end{equation}
we can get the Neumann series itself is uniformly convergent for $k\in \overline{\mathbb{C}}_{1}\setminus\{\pm q_{0}\}$.
It follows that $M(x,k)$ is analytic for $k\in\mathbb{C}_{1}$ because a uniformly convergent series of analytic functions converges to an analytic function. Similarly, $M(x,k)$ is continuous for $k\in \overline{\mathbb{C}}_{1}\setminus\{\pm q_{0}\}$.

To extend the continuity properties at $k=\pm q_{0}$, we rewrite $P_{-i\lambda}^{-}+e^{2i\lambda(x-x')}P_{i\lambda}^{-}$ as
$I_{2}+[e^{2i\lambda(x-x')}-1]P_{i\lambda}^{-}$ and use the estimation
\begin{equation}
\begin{split}
\parallel I_{2}+[e^{2i\lambda(x-x')}-1]P_{i\lambda}^{-}\parallel
\leq 1+2|x-x'|\cdot|k|
&\leq \max\{1,2|k|\}(1+|x|)(1+|x'|)\\
&\leq \max\{1,2|k|\}(1+|x|)^{2}
\end{split}
\end{equation}
and the fact that $f\in L^{1,2}(\mathbb{R})\subseteq L^{1}(\mathbb{R})$, the proof then proceeds as before under the condition that the entries of $Q-Q_{-}$ belong to $L^{1,2}(\mathbb{R})$.
Similarly, the proof for $N(x,k)$, $\overline{M}(x,k)$ and $\overline{N}(x,k)$ can be done by using the above method.
\end{proof}
\begin{definition}
The Schwartz space or space of rapidly decreasing functions on $\mathbb{R}^{n}$ is the function space
\begin{equation}
S(\mathbb{R}^{n}):=\{f\in C^{\infty}(\mathbb{R}^{n}): \parallel f \parallel_{\alpha, \beta}<\infty, \forall \alpha, \beta\in \mathbb{Z}_{+}^{n}\},
\end{equation}
where $\alpha$, $\beta$ are multi-indices, $C^{\infty}(\mathbb{R}^{n})$ is the set of smooth functions from $\mathbb{R}^{n}$ to $\mathbb{C}$, and $\parallel f \parallel_{\alpha, \beta}=\sup_{x\in\mathbb{R}^{n}}|x^{\alpha}D^{\beta}f(x)|$. In particular, $x^{j}e^{-a|x|^{2}}\in S(\mathbb{R}^{n})$, where $j$ is a multi-index and $a$ is a positive real number.
\end{definition}
Then we have the following result.
\begin{theorem}
\label{T:Schwarz condition 1}
Suppose the entries of $Q-Q_{\pm}$ do not grow faster than $e^{-ax^{2}}$, where $a$ is a positive real number, then for each $x\in\mathbb{R}$, the eigenfunctions $M(x,k)$, $N(x,k)$, $\overline{M}(x,k)$ and $\overline{N}(x,k)$ are analytic in the Riemann surface $\mathbb{K}$.
\end{theorem}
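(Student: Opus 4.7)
The plan is to revisit the Neumann-series construction from the proof of Theorem \ref{T:1} and verify that, under Gaussian decay of $Q-Q_\pm$, the iterates converge for every value of the spectral variable, not only on the sheet singled out there. The only obstruction to extending analyticity of $M(x,k)$ from $\mathbb{C}_1$ across the cut into $\mathbb{C}_2$ comes from the factor $e^{2i\lambda(x-x')}$ attached to the projector $P_{i\lambda}^-$; its modulus is $e^{-2\Im\lambda(x-x')}$, which grows exponentially in $x-x'$ once $\Im\lambda<0$. A Gaussian bound $\|Q(x')-Q_-\|\leq Ce^{-a(x')^{2}}$ dominates any such exponential, so the integrals defining each Neumann iterate remain absolutely convergent for every $\lambda\in\mathbb{C}$. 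The same mechanism works for $N$, $\overline{M}$, $\overline{N}$ with the roles of the endpoints and the sign of $\Im\lambda$ interchanged appropriately.

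Quantitatively, I would rescale by the weight $\mu(x)=e^{2|\Im\lambda|x}$, setting $\widetilde{M}^{(n)}(x):=\mu(x)^{-1}M^{(n)}(x,k)$, so that the exponential in the kernel is absorbed into the weight and the iteration reduces to
\begin{equation*}
\widetilde{M}^{(n+1)}(x)\leq C(k)\int_{-\infty}^{x}\|Q(x')-Q_{-}\|\,\widetilde{M}^{(n)}(x')\,dx'.
\end{equation*}
The initial datum $\widetilde{M}^{(0)}(x)=\mu(x)^{-1}M^{(0)}$ grows at $-\infty$, but Gaussian decay of $Q-Q_{-}$, together with the completing-the-square estimate
\begin{equation*}
\int_{-\infty}^{\infty}e^{-a(x')^{2}+2|\Im\lambda|\,|x'|}\,dx'\leq 2\sqrt{\pi/a}\,e^{|\Im\lambda|^{2}/a},
\end{equation*}
renders $\widetilde{M}^{(1)}$ bounded in $x$. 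From that stage the factorial identity displayed in the proof of Theorem \ref{T:1} applies and yields $\|\widetilde{M}^{(n)}\|_{\infty}\lesssim C(k)^{n}/n!$ uniformly on compact subsets of $\mathbb{K}\setminus\{\pm q_{0}\}$ and of $x$. Because each iterate is analytic in $k$ on $\mathbb{K}\setminus\{\pm q_{0}\}$ through the single-valued $\lambda(k)$, the uniform limit $M(x,k)$ is analytic there.

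To extend analyticity across the branch points $k=\pm q_{0}$, where $\lambda=0$ and the projector norms $\|P_{\pm i\lambda}^{\pm}\|\sim 1/|\lambda|$ are singular, I would reuse the rewriting from the end of the proof of Theorem \ref{T:1},
\begin{equation*}
P_{-i\lambda}^{-}+e^{2i\lambda(x-x')}P_{i\lambda}^{-}=I_{2}+\bigl(e^{2i\lambda(x-x')}-1\bigr)P_{i\lambda}^{-},
\end{equation*}
whose prefactor $e^{2i\lambda(x-x')}-1=O(\lambda)$ cancels the $1/\lambda$ singularity and leaves an integrand controlled by $(1+|x-x'|)\|Q-Q_{-}\|(x')$. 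Gaussian decay of course lies in $L^{1,2}(\mathbb{R})$, so that argument carries over and produces analyticity across $k=\pm q_{0}$. Identical reasoning, with obvious changes of signs and endpoints, disposes of $N$, $\overline{M}$, $\overline{N}$. The principal obstacle is the uniform-on-compacta bookkeeping in the second paragraph: since $|\Im\lambda|$ is unbounded on $\mathbb{K}$, one must track the $\lambda$-dependence of all constants and confirm that the Neumann series still converges uniformly on compact subsets of the surface. The Gaussian decay of $Q-Q_{\pm}$ is precisely what converts the bounded-kernel Neumann argument of Theorem \ref{T:1} into an entire-function (Paley--Wiener-type) extension.
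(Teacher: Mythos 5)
Your argument is correct, but it takes a genuinely different route from the paper's. The paper's proof is essentially a one-line reduction: it rewrites the $2\times2$ Volterra system as a scalar second-kind equation for $M_{1}$ alone,
\begin{equation*}
M_{1}(x,k)=\lambda+k+\int_{-\infty}^{x}\int_{-\infty}^{y}(q(y)-q_{-})(-q^{*}(-z)+q_{+}^{*})\,e^{2i\lambda(y-z)}M_{1}(z,k)\,dz\,dy,
\end{equation*}
with $M_{2}$ then recovered from $M_{1}$ by a single further integration. In that form the projectors --- and with them every $1/\lambda$ factor --- have disappeared: the kernel is an entire function of $\lambda$ multiplied by two Gaussian-decaying potential factors, so the iterates converge locally uniformly for \emph{every} $\lambda\in\mathbb{C}$ and analyticity on all of $\mathbb{K}$, branch points included, is immediate. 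You instead keep the projector form and tame $e^{2i\lambda(x-x')}$ on the second sheet with the weight $e^{2|\Im\lambda|x}$, use the Gaussian to make the first weighted iterate bounded, run the factorial estimate from that stage on, and then treat the branch points separately via $I_{2}+(e^{2i\lambda(x-x')}-1)P_{i\lambda}^{-}$. That works, and it has the merit of making explicit both the ``Gaussian beats the exponential'' mechanism that the paper dismisses as easy to see and the $\lambda$-dependence of the constants. One small imprecision to repair: the bound $|e^{2i\lambda(x-x')}-1|\leq 2|\lambda||x-x'|$ holds only when $\Re\bigl(2i\lambda(x-x')\bigr)\leq 0$, i.e.\ on the closed first sheet; approaching $\pm q_{0}$ from the second sheet you pick up an extra factor $e^{2|\Im\lambda|(x-x')}$, which your Gaussian estimate again absorbs, so the gap is cosmetic --- but mere $L^{1,2}$ membership would not suffice there, contrary to what your closing sentence on the branch points suggests.
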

\begin{proof}
The integral equation of $M(x,k)$ can be written in the alternative form
\begin{equation}
M_{1}(x,k)=\lambda+k+\int_{-\infty}^{x}\int_{-\infty}^{y}(q(y)-q_{-})(-q^{*}(-z)+q_{+}^{*})e^{2i\lambda(y-z)}M_{1}(z,k)dz dy
\end{equation}
and
\begin{equation}
M_{2}(x,k)=-iq_{+}^{*}+\int_{-\infty}^{x}e^{2i\lambda(x-y)}(-q^{*}(-y)+q_{+}^{*})M_{1}(y, k)dy,
\end{equation}
where $M(x,k)=\left(\begin{array}{cc}
M_{1}(x,k)\\
M_{2}(x,k)
\end{array}\right)$.
~\\
Since the entries of $Q-Q_{\pm}$ do not grow faster than $e^{-ax^{2}}$, it is easy to see that $M(x,k)$ is analytic in the Riemann surface $\mathbb{K}$. Similarly, we can prove $N(x,k)$, $\overline{M}(x,k)$ and $\overline{N}(x,k)$ are also analytic in $\mathbb{K}$.
\end{proof}
\subsection{Scattering data}
The two eigenfunctions with fixed boundary conditions as $x\rightarrow-\infty$ are linearly independent, as are the two eigenfunctions with fixed boundary conditions as $x\rightarrow+\infty$. Indeed, if $u(x,k)=(u_{1}(x,k), u_{2}(x,k))^{T}$ and $v(x,k)=(v_{1}(x,k), v_{2}(x,k))^{T}$ are any two solutions of (\ref{E:nonlocal ode}), we have
\begin{equation}
\frac{d}{dx}W(u,v)=0,
\end{equation}
where the Wronskian of $u$ and $v$, $W(u,v)$ is given by
\begin{equation}
W(u,v)=u_{1}v_{2}-u_{2}v_{1}.
\end{equation}
From the asymptotics (\ref{E:asymptotic 1}) and (\ref{E:asymptotic 2}), it follows that
\begin{equation}
\label{E:Wronskian 1}
W(\phi,\overline{\phi})=\lim_{x\rightarrow-\infty}W(\phi(x,k),\overline{\phi}(x,k))=2\lambda(\lambda+k)
\end{equation}
and
\begin{equation}
\label{E:Wronskian 2}
W(\psi,\overline{\psi})=\lim_{x\rightarrow+\infty}W(\psi(x,k),\overline{\psi}(x,k))=-2\lambda(\lambda+k),
\end{equation}
which proves that the functions $\phi(x,k)$ and $\overline{\phi}(x,k)$ are linearly independent, as are $\psi$ and $\overline{\psi}$,
with only the branch points $\pm q_{0}$ being excluded. Hence, we can write $\phi(x,k)$ and $\overline{\phi}(x,k)$ as linear combinations of
$\psi(x,k)$ and $\overline{\psi}(x,k)$, or vice versa. Thus, the relations
\begin{equation}
\label{E:linear combination 1}
\phi(x,k)=b(k)\psi(x,k)+a(k)\overline{\psi}(x,k)
\end{equation}
and
\begin{equation}
\label{E:linear combination 2}
\overline{\phi}(x,k)=\overline{a}(k)\psi(x,k)+\overline{b}(k)\overline{\psi}(x,k)
\end{equation}
hold for any $k$ such that all four eigenfunctions exist. Combining (\ref{E:Wronskian 1}) and (\ref{E:Wronskian 2}), we can deduce
that the scattering data satisfy the following characterization equation
\begin{equation}
a(k)\overline{a}(k)-b(k)\overline{b}(k)=1.
\end{equation}
The scattering data can be represented in terms of  Wronskians of the eigenfunctions, i.e.,
\begin{equation}
a(k)=\frac{W(\phi(x,k),\psi(x,k))}{W(\overline{\psi}(x,k),\psi(x,k))}=\frac{W(\phi(x,k),\psi(x,k))}{2\lambda(\lambda+k)},
\end{equation}
\begin{equation}
\overline{a}(k)=-\frac{W(\overline{\phi}(x,k),\overline{\psi}(x,k))}{W(\overline{\psi}(x,k),\psi(x,k))}
=-\frac{W(\overline{\phi}(x,k),\overline{\psi}(x,k))}{2\lambda(\lambda+k)},
\end{equation}
\begin{equation}
b(k)=-\frac{W(\phi(x,k),\overline{\psi}(x,k))}{W(\overline{\psi}(x,k),\psi(x,k))}=-\frac{W(\phi(x,k),\overline{\psi}(x,k))}{2\lambda(\lambda+k)},
\end{equation}
\begin{equation}
\overline{b}(k)=\frac{W(\overline{\phi}(x,k),\psi(x,k))}{W(\overline{\psi}(x,k),\psi(x,k))}
=\frac{W(\overline{\phi}(x,k),\psi(x,k))}{2\lambda(\lambda+k)}.
\end{equation}
Then from the analytic behavior of the eigenfunctions we have the following theorem (see also \cite{Demontis}).
\begin{theorem}
\label{T:2}
Suppose the entries of $Q-Q_{\pm}$ belong to $L^{1}(\mathbb{R})$, then $a(k)$ is continuous for $k\in \overline{\mathbb{C}}_{1}\setminus\{\pm q_{0}\}$ and analytic for $k\in \mathbb{C}_{1}$, and $\overline{a}(k)$ is continuous for $k\in \overline{\mathbb{C}}_{2}\setminus\{\pm q_{0}\}$ and analytic for $k\in \mathbb{C}_{2}$. Moreover, $b(k)$ and $\overline{b}(k)$ are continuous in $k\in(-\infty,-q_{0})\cup (q_{0},+\infty)$. In addition, if the entries of $Q-Q_{\pm}$ belong to $L^{1,2}(\mathbb{R})$,
then $a(k)\lambda(k)$ is continuous for $k\in \overline{\mathbb{C}}_{1}$ and analytic for $k\in \mathbb{C}_{1}$, and $\overline{a}(k)\lambda(k)$ is continuous for $k\in \overline{\mathbb{C}}_{2}$ and analytic for $k\in \mathbb{C}_{2}$. Moreover, $b(k)\lambda(k)$ and $\overline{b}(k)\lambda(k)$ are continuous for $k\in \mathbb{R}$. If the entries of $Q-Q_{\pm}$ do not grow faster than $e^{-ax^{2}}$, where $a$ is a positive real number, then $a(k)\lambda(k)$, $\overline{a}(k)\lambda(k)$, $b(k)\lambda(k)$ and $\overline{b}(k)\lambda(k)$ are analytic for $k\in\mathbb{K}$.
\end{theorem}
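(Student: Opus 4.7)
My plan is to express each scattering coefficient as a ratio whose numerator is a Wronskian of the bounded eigenfunctions $M, N, \overline{M}, \overline{N}$ (which by construction is $x$-independent) and whose denominator is the elementary factor $2\lambda(\lambda+k)$, then transfer the analyticity/continuity established in Theorem~\ref{T:1} and Theorem~\ref{T:Schwarz condition 1} directly to the coefficients. Substituting $\phi=e^{-i\lambda x}M$ and $\psi=e^{i\lambda x}N$, the exponentials cancel and one gets
\begin{equation*}
W(\phi,\psi)=\phi_1\psi_2-\phi_2\psi_1=M_1N_2-M_2N_1=W(M,N),
\end{equation*}
and analogously $W(\overline{\phi},\overline{\psi})=W(\overline{M},\overline{N})$. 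In the mixed cases $W(\phi,\overline{\psi})$ and $W(\overline{\phi},\psi)$ an overall factor $e^{\mp 2i\lambda x}$ survives; since the Wronskian of two ODE solutions is constant in $x$, the surviving combination is still a function of $k$ alone, now defined only on the common domain of the $M$--$\overline{N}$ (resp.\ $\overline{M}$--$N$) pair.

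For the $L^1$ case, Theorem~\ref{T:1} gives $M,N$ analytic on $\mathbb{C}_1$ and continuous on $\overline{\mathbb{C}}_1\setminus\{\pm q_0\}$, so $W(M,N)(k)$ inherits the same, and I would similarly treat $\overline{M},\overline{N}$ on sheet II. The Wronskians producing $b$ and $\overline{b}$ couple one eigenfunction analytic on $\mathbb{C}_1$ to one analytic on $\mathbb{C}_2$, and hence are only jointly defined where the two sheets agree, i.e., on the real axis in $\lambda$, which corresponds to $k\in(-\infty,-q_0)\cup(q_0,+\infty)$; continuity there is inherited from the continuity of the eigenfunctions on the edges of the cut. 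The key algebraic step is to check the denominator: $\lambda+k=0$ forces $k^2-q_0^2=k^2$, hence $q_0=0$, a contradiction, so $\lambda+k$ is nonvanishing everywhere on $\mathbb{K}$, and the only obstruction in $2\lambda(\lambda+k)$ is the factor $\lambda$, which vanishes exactly at $k=\pm q_0$. This yields the claimed $L^1$ conclusions for all four scattering coefficients.

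For the $L^{1,2}$ hypothesis, Theorem~\ref{T:1} extends continuity of the eigenfunctions across $\pm q_0$, so each Wronskian numerator is continuous on the closure of its natural domain. The only remaining obstruction in the scattering coefficients is the simple pole contributed by $1/\lambda$ at the branch points; multiplying by $\lambda(k)$ cancels it exactly, giving $a(k)\lambda(k)=W(M,N)/[2(\lambda+k)]$ continuous on $\overline{\mathbb{C}}_1$ and analytic on $\mathbb{C}_1$, and similarly for $\overline{a}\lambda$ on sheet II and for $b\lambda$, $\overline{b}\lambda$ on $\mathbb{R}$. In the Schwartz-decay case, Theorem~\ref{T:Schwarz condition 1} upgrades the four eigenfunctions to globally analytic functions on $\mathbb{K}$; each Wronskian numerator is then analytic on $\mathbb{K}$, and dividing by the nonvanishing $2(\lambda+k)$ gives $a\lambda,\overline{a}\lambda,b\lambda,\overline{b}\lambda$ analytic on all of $\mathbb{K}$.

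I expect the main obstacle to be the careful bookkeeping in the mixed Wronskians $W(\phi,\overline{\psi})$ and $W(\overline{\phi},\psi)$: one needs to identify the correct $x$-independent quantity that remains after the exponentials only partially cancel, and then argue that this quantity is well defined only on the real $\lambda$-axis where $M$ (from sheet I) and $\overline{N}$ (from sheet II) simultaneously make sense. Once this is clarified, every remaining step is a direct translation of the eigenfunction results plus the algebraic observation that $\lambda+k$ never vanishes on $\mathbb{K}$.
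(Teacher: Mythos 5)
Your proposal is correct and follows essentially the same route as the paper, which simply invokes the Wronskian representations $a=W(\phi,\psi)/[2\lambda(\lambda+k)]$ (and their analogues) together with Theorem~\ref{T:1} and Theorem~\ref{T:Schwarz condition 1}; your observation that $\lambda+k$ never vanishes on $\mathbb{K}$ and that multiplication by $\lambda$ cancels the only singular factor is exactly the intended argument. No gaps.
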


Note that (\ref{E:linear combination 1}) and (\ref{E:linear combination 2}) can be written as
\begin{equation}
\mu(x,k)=\rho(k)e^{2i\lambda x}N(x,k)+\overline{N}(x,k)
\end{equation}
and
\begin{equation}
\overline{\mu}(x,k)=N(x,k)+\overline{\rho}(k)e^{-2i\lambda x}\overline{N}(x,k),
\end{equation}
where $\mu(x,k)=M(x,k)a^{-1}(k)$, $\overline{\mu}(x,k)=\overline{M}(x,k)\overline{a}^{-1}(k)$, $\rho(k)=b(k)a^{-1}(k)$ and $\overline{\rho}(k)=\overline{b}(k)\overline{a}^{-1}(k)$. Introducing the $2\times2$ matrices
\begin{equation}
m_{+}(x,k)=(\mu(x,k), N(x,k)), \ \ \ m_{-}(x,k)=(\overline{N}(x,k), \overline{\mu}(x,k)),
\end{equation}
which are meromorphic in $\mathbb{C}_{1}$ and $\mathbb{C}_{2}$ respectively. Hence, we can write the Riemann-Hilbert problem or `jump' conditions in the $k$-plane as
\begin{equation}
m_{+}(x,k)-m_{-}(x,k)=m_{-}(x,k)\left(\begin{array}{cc}
-\rho(k)\overline{\rho}(k)& -\overline{\rho}(k)e^{-2i\lambda x}\\
\rho(k)e^{2i\lambda x}& 0
\end{array}\right)
\end{equation}
on the contour $\Sigma: k \in(-\infty,-q_{0}]\cup [q_{0},+\infty)$.


\begin{remark}
For the Riemann-Hilbert problem
\begin{equation}
F^{+}(\xi)-F^{-}(\xi)=F^{-}(\xi)g(\xi)
\end{equation}
on the contour $\Sigma$, where $g(\xi)$ is H\"{o}lder-continuously in $\Sigma$, we can consider the projection operators
\begin{equation}
(P_{j}(f))(k)=\frac{1}{2\pi i}\int_{\Sigma}\frac{\lambda(k)+\lambda(\xi)}{2\lambda(\xi)}\cdot\frac{f(\xi)}{\xi-k}d\xi, \ \ k\in \mathbb{C}_{j}, \ \ j=1,2,
\end{equation}
where $\frac{\lambda(k)+\lambda(\xi)}{2\lambda(\xi)}\cdot\frac{d\xi}{\xi-k}$ is the Weierstrass kernel and $\int_{\Sigma}$ denotes the integral along the oriented contour in sheet I of Figure \ref{fig1}. One can show that (\cite{Zverovich})
\begin{equation*}
\frac{\lambda(k)+\lambda(\xi)}{2\lambda(\xi)}\cdot\frac{d\xi}{\xi-k}=\frac{d\xi}{\xi-k}+ regular \ terms
\end{equation*}
for $(\xi, \lambda(\xi))\rightarrow (k, \lambda(k))$. If $f_{j}\ (j=1, 2)$, is sectionally analytic in $\mathbb{C}_{j}$ and rapidly decaying as $|k|\rightarrow\infty$ in the proper sheet, we have
\begin{equation}
(P_{j}(f_{j}))(k)=(-1)^{j-1}f_{j}(k), \ \ \ k\in \mathbb{C}_{j},
\end{equation}
\begin{equation}
(P_{j}(f_{l}))(k)=0, \ \ \ k\in \mathbb{C}_{j}.
\end{equation}
We obtain
\begin{equation}
F^{-}(k)=\frac{1}{2\pi i}\int_{\Sigma}\frac{\lambda(k)+\lambda(\xi)}{2\lambda(\xi)}\cdot\frac{F^{-}(\xi)g(\xi)}{\xi-k}d\xi,\ \ \ k\in\mathbb{C}_{2},
\end{equation}
\begin{equation}
F^{+}(k)=\frac{1}{2\pi i}\int_{\Sigma}\frac{\lambda(k)+\lambda(\xi)}{2\lambda(\xi)}\cdot\frac{F^{-}(\xi)g(\xi)}{\xi-k}d\xi,\ \ \ k\in\mathbb{C}_{1}.
\end{equation}
For $\xi\in\Sigma$, we can take limits from proper sheet; they are connected by the analogue of Plemelj-Sokhotskii formulas
\begin{equation}
F^{+}(\xi_{0})=\lim_{k\rightarrow\xi_{0}\in\Sigma, k\in\mathbb{C}_{1}}\frac{1}{2\pi i}\int_{\Sigma}\frac{\lambda(k)+\lambda(\xi)}{2\lambda(\xi)}\cdot\frac{F^{-}(\xi)g(\xi)}{\xi-k}d\xi,
\end{equation}
\begin{equation}
F^{-}(\xi_{0})=\lim_{k\rightarrow\xi_{0}\in\Sigma, k\in\mathbb{C}_{2}}\frac{1}{2\pi i}\int_{\Sigma}\frac{\lambda(k)+\lambda(\xi)}{2\lambda(\xi)}\cdot\frac{F^{-}(\xi)g(\xi)}{\xi-k}d\xi.
\end{equation}
\end{remark}

\subsection{Symmetry reductions}
The symmetry in the potential induces a symmetry between the eigenfunctions. Indeed, if $v(x,k)=(v_{1}(x,k), v_{2}(x,k))^{T}$ solves
(\ref{E:nonlocal ode}), then $(v_{2}^{*}(-x,-k^{*}), v_{1}^{*}(-x,-k^{*}))^{T}$ also solves (\ref{E:nonlocal ode}). Moreover, if $k\rightarrow -k^{*}$, according to (\ref{E:lambda}), we have $\theta_{1}\rightarrow \pi-\theta_{2}$ and $\theta_{2}\rightarrow \pi-\theta_{1}$. Hence, $\lambda_{j}^{*}(-k^{*})=-\lambda_{j}(k)$, where $j=1, 2$. Taking into account boundary conditions (\ref{E:boundary conditions 1}) and (\ref{E:boundary conditions 2}), we can obtain
\begin{equation}
\psi(x,k)=-\left(\begin{array}{cc}
0& 1\\
1& 0
\end{array}\right)\phi^{*}(-x,-k^{*})
\end{equation}
and
\begin{equation}
\overline{\psi}(x,k)=-\left(\begin{array}{cc}
0& 1\\
1& 0
\end{array}\right)\overline{\phi}^{*}(-x,-k^{*}).
\end{equation}
By (\ref{E:definition 1}) and (\ref{E:definition 2}), we can get the symmetry relations of the eigenfunctions, i.e.,
\begin{equation}
N(x,k)=-\left(\begin{array}{cc}
0& 1\\
1& 0
\end{array}\right)M^{*}(-x,-k^{*})
\end{equation}
and
\begin{equation}
\overline{N}(x,k)=-\left(\begin{array}{cc}
0& 1\\
1& 0
\end{array}\right)\overline{M}^{*}(-x,-k^{*}).
\end{equation}
From the Wronskian representations for the scattering data and the above symmetry relations, we have
\begin{equation}
a^{*}(-k^{*})=a(k),
\end{equation}
\begin{equation}
\overline{a}^{*}(-k^{*})=\overline{a}(k),
\end{equation}
\begin{equation}
b^{*}(-k^{*})=-\overline{b}(k).
\end{equation}

When using a particular single sheet for the Riemann surface of the function $\lambda^{2}=k^{2}-q_{0}^{2}$, the involution $(k, \lambda)\rightarrow (k, -\lambda)$ can only be considered across the cuts. The scattering data and eigenfunctions are defined by means of the corresponding values on the upper/lower edge of the cut; they are labeled with superscripts $\pm$ as clarified below.
Explicitly, one has
\begin{equation}
a^{\pm}(k)=\frac{W(\phi^{\pm}(x,k),\psi^{\pm}(x,k))}{2\lambda^{\pm}(\lambda^{\pm}+k)}, \ \ \ k\in (-\infty, -q_{0}]\cup [q_{0}, +\infty),
\end{equation}
\begin{equation}
\overline{a}^{\pm}(k)=-\frac{W(\overline{\phi}^{\pm}(x,k),\overline{\psi}^{\pm}(x,k))}{2\lambda^{\pm}(\lambda^{\pm}+k)}, \ \ \ k\in (-\infty, -q_{0}]\cup [q_{0}, +\infty),
\end{equation}
\begin{equation}
b^{\pm}(k)=-\frac{W(\phi^{\pm}(x,k),\overline{\psi}^{\pm}(x,k))}{2\lambda^{\pm}(\lambda^{\pm}+k)}, \ \ \ k\in (-\infty, -q_{0}]\cup [q_{0}, +\infty),
\end{equation}
\begin{equation}
\overline{b}^{\pm}(k)=\frac{W(\overline{\phi}^{\pm}(x,k),\psi^{\pm}(x,k))}{2\lambda^{\pm}(\lambda^{\pm}+k)},  \ \ \ k\in(-\infty, -q_{0}]\cup [q_{0}, +\infty).
\end{equation}
Using the notation $\lambda=\lambda^{+}=-\lambda^{-}$, we have the following symmetry:
\begin{equation}
\phi^{\mp}(x,k)=\frac{\lambda^{\mp}+k}{-iq_{-}}\overline{\phi}^{\pm}(x,k), \ \ \ \ \ \psi^{\mp}(x,k)=\frac{\lambda^{\mp}+k}{-iq_{-}^{*}}\overline{\psi}^{\pm}(x,k)
\end{equation}
for $k\in(-\infty, -q_{0}]\cup [q_{0}, +\infty)$.
Moreover,
\begin{equation}
a^{\pm}(k)=-\overline{a}^{\mp}(k), \ \ \ \ \ b^{\pm}(k)=-\frac{q_{0}^{2}}{q_{-}\cdot q_{+}}\cdot \overline{b}^{\mp}(k)
\end{equation}
for $k\in(-\infty, -q_{0}]\cup [q_{0}, +\infty)$.

%

\subsection{Uniformization coordinates}
Before discussing the properties of scattering data and solving the inverse problem, we introduce a uniformization variable $z$, defined by the conformal mapping:
\begin{equation}
z=z(k)=k+\lambda(k),
\end{equation}
where $\lambda= \sqrt{k^2-q_0^2}$ and the inverse mapping is given by
\begin{equation}
k=k(z)=\frac{1}{2}\left(z+\frac{q_{0}^{2}}{z}\right).
\end{equation}
Then
\begin{equation}
\lambda(z)=\frac{1}{2}\left(z-\frac{q_{0}^{2}}{z}\right).
\end{equation}
We observe that

(1) the upper sheet $\mathbb{C}_{1}$ and lower sheet $\mathbb{C}_{2}$ of the Riemann surface $\mathbb{K}$ are mapped onto the upper half plane $\mathbb{C}^{+}$ and lower half plane $\mathbb{C}^{-}$ of the complex $z-$ plane respectively;

(2) the cut $(-\infty, -q_{0}]\cup [q_{0}, +\infty)$ on the Riemann surface is mapped onto the real $z$ axis;

(3) the segments $[-q_{0}, q_{0}]$ on $\mathbb{C}_{1}$ and $\mathbb{C}_{2}$ are mapped onto the upper and lower semicircles of radius $q_{0}$ and centered at the origin of the complex $z-$ plane respectively.

From Theorem \ref{T:1}, we have the eigenfunctions $M(x,z)$ and $N(x,z)$ are analytic in the upper half $z-$ plane: i.e  $ z \in \mathbb{C}^{+}$, and $\overline{M}(x,z)$ and $\overline{N}(x,z)$ are analytic in the lower half $z-$ plane: i.e. $z \in \mathbb{C}^{-}$. Moreover, by Theorem \ref{T:2},
we find that $a(z)$ is analytic in the upper half $z-$ plane: $z \in \mathbb{C}^{+}$ and $\overline{a}(z)$ is analytic in the lower half plane: $z \in \mathbb{C}^{-}$.

\subsection{Symmetries via uniformization coordinates}
It is known that (1) when $z\rightarrow -z^{*}$, then $(k,\lambda)\rightarrow (-k^{*}, -\lambda^{*})$; (2) when $z\rightarrow \frac{q_{0}^{2}}{z}$, then $(k, \lambda)\rightarrow (k, -\lambda)$. Hence,
\begin{equation}
\psi(x,z)=-\left(\begin{array}{cc}
0& 1\\
1& 0
\end{array}\right)\phi^{*}(-x,-z^{*}),
\end{equation}
\begin{equation}
\overline{\psi}(x,z)=-\left(\begin{array}{cc}
0& 1\\
1& 0
\end{array}\right)\overline{\phi}^{*}(-x,-z^{*}),
\end{equation}
\begin{equation}
\phi\left(x,\frac{q_{0}^{2}}{z}\right)=\frac{\frac{q_{0}^{2}}{z}}{-iq_{-}}\cdot\overline{\phi}(x,z), \ \ \ \psi\left(x,\frac{q_{0}^{2}}{z}\right)=\frac{-iq_{+}}{z}\cdot\overline{\psi}(x,z), \ \ \ \Im z<0.
\end{equation}

Similarly, we can get
\begin{equation}
N(x,z)=-\left(\begin{array}{cc}
0& 1\\
1& 0
\end{array}\right)M^{*}(-x,-z^{*})
\end{equation}
and
\begin{equation}
\overline{N}(x,z)=-\left(\begin{array}{cc}
0& 1\\
1& 0
\end{array}\right)\overline{M}^{*}(-x,-z^{*}).
\end{equation}
Moreover,
\begin{equation}
a^{*}(-z^{*})=a(z), \ \ \ \Im z>0,
\end{equation}
\begin{equation}
\overline{a}^{*}(-z^{*})=\overline{a}(z), \ \ \ \Im z<0,
\end{equation}
\begin{equation}
b^{*}(-z^{*})=-\overline{b}(z),
\end{equation}
\begin{equation}
\label{E:scatering}
a\left(\frac{q_{0}^{2}}{z}\right)=-\overline{a}(z),\ \ \ \Im z<0; \ \ \ b\left(\frac{q_{0}^{2}}{z}\right)=-\frac{q_{0}^{2}}{q_{-}\cdot q_{+}}\cdot \overline{b}(z).
\end{equation}


We will assume that $a(z) ~\mbox{and}~ \bar{a}(z)$, have simple zero's in the upper/lower half $z-$ planes respectively. We assume that there are no multiple zero's and no zero's on $\Im z=0$.

\subsection{Asymptotic behavior of eigenfunctions and scattering data}
In order to solve the inverse problem, one has to determine the asymptotic behavior of eigenfunctions and scattering data both as $z\rightarrow\infty$ and as $z\rightarrow 0$. From the integral equations (in terms of Green's functions), we have
\begin{equation}
M(x,z)\sim\left\{\begin{array}{ll}
\left(\begin{array}{cc}
z\\
-iq^{*}(-x)
\end{array}\right), \ \ \ z\rightarrow\infty\\
\left(\begin{array}{cc}
-z\cdot\frac{q(x)}{q_{+}}\\
iq_{-}^{*}
\end{array}\right), \ \ \ z\rightarrow 0,\\
\end{array}\right.
\end{equation}

\begin{equation}
N(x,z)\sim\left\{\begin{array}{ll}
\left(\begin{array}{cc}
-iq(x)\\
z
\end{array}\right), \ \ \ z\rightarrow\infty\\
\left(\begin{array}{cc}
-iq_{+}\\
z\cdot \frac{q^{*}(-x)}{q_{-}^{*}}
\end{array}\right), \ \ \ z\rightarrow 0,\\
\end{array}\right.
\end{equation}

\begin{equation}
\overline{M}(x,z)\sim\left\{\begin{array}{ll}
\left(\begin{array}{cc}
-iq(x)\\
z
\end{array}\right), \ \ \ z\rightarrow\infty\\
\left(\begin{array}{cc}
iq_{+}\\
-z\cdot \frac{q^{*}(-x)}{q_{-}^{*}}
\end{array}\right), \ \ \ z\rightarrow 0,\\
\end{array}\right.
\end{equation}

\begin{equation}
\label{E:asymptotic 3}
\overline{N}(x,z)\sim\left\{\begin{array}{ll}
\left(\begin{array}{cc}
z\\
-iq^{*}(-x)
\end{array}\right), \ \ \ z\rightarrow\infty\\
\left(\begin{array}{cc}
z\cdot\frac{q(x)}{q_{+}}\\
-iq_{-}^{*}
\end{array}\right), \ \ \ z\rightarrow 0,\\
\end{array}\right.
\end{equation}

\begin{equation}
a(z)=
\left\{\begin{array}{ll}
1,\ \ \ z\rightarrow\infty,\\
-1, \ \ \ z\rightarrow 0,\\
\end{array}\right.
\end{equation}

\begin{equation}
\overline{a}(z)=
\left\{\begin{array}{ll}
1,\ \ \ z\rightarrow\infty,\\
-1, \ \ \ z\rightarrow 0,\\
\end{array}\right.
\end{equation}
\begin{equation}
\lim_{z\rightarrow\infty}zb(z)=0, \ \ \ \lim_{z\rightarrow0}\frac{b(z)}{z^{2}}=0.
\end{equation}

\subsection{Riemann-Hilbert problem via uniformization coordinates}
\subsubsection{Left scattering problem}
In order to take into account the behavior of the eigenfunctions, the `jump' conditions at the real $z-$ axis can be written from the left end as
\begin{equation}
\frac{M(x,z)}{za(z)}-\frac{\overline{N}(x,z)}{z}=\rho(z)e^{i\big(z-\frac{q_{0}^{2}}{z}\big)x}\cdot \frac{N(x,z)}{z}
\end{equation}
and
\begin{equation}
\frac{\overline{M}(x,z)}{z\overline{a}(z)}-\frac{N(x,z)}{z}=\overline{\rho}(z)e^{-i\big(z-\frac{q_{0}^{2}}{z}\big)x}\cdot \frac{\overline{N}(x,z)}{z},
\end{equation}
so that the functions will be bounded at infinity, though having an additional pole at $z=0$.
Note that $M(x,z)/a(z)$, as a function of
$z$, is defined in the upper half plane $\mathbb{C}^{+}$, where it has by assumption) simple poles $z_{j}$, i.e., $a(z_{j})=0$, and $\overline{M}(x,z)/\overline{a}(z)$,
is defined in the lower half plane $\mathbb{C}^{-}$, where it has simple poles $\overline{z}_{j}$, i.e., $\overline{a}(\overline{z}_{j})=0$. It follows that
\begin{equation}
\label{E:M}
M(x,z_{j})=b(z_{j})e^{i\big(z_{j}-\frac{q_{0}^{2}}{z_{j}}\big)x}\cdot N(x,z_{j})
\end{equation}
and
\begin{equation}
\overline{M}(x,\overline{z}_{j})=\overline{b}(\overline{z}_{j})e^{-i\big(\overline{z}_{j}-\frac{q_{0}^{2}}{\overline{z}_{j}}\big)x}\cdot \overline{N}(x,\overline{z}_{j}).
\end{equation}
Then subtracting the values at infinity, the induced pole at the origin and the poles, assumed simple in the upper/lower half planes respectively, at $a(z_j)=0, j=1,2...J$ and  $\bar{a}(\overline{z}_{j}), j=1,2...\bar{J} $ (later we will see that $J=\bar{J}$) gives
\begin{equation}
\label{E:jump 1}
\begin{split}
&\left[\frac{M(x,z)}{za(z)}-\left(\begin{array}{cc}
1\\
0
\end{array}\right)-
\frac{1}{z}\left(\begin{array}{cc}
0\\
-iq_{-}^{*}
\end{array}\right)
-\sum_{j=1}^{J}\frac{M(x,z_{j})}{(z-z_{j})z_{j}a'(z_{j})}\right]\\
&-\left[\frac{\overline{N}(x,z)}{z}-\left(\begin{array}{cc}
1\\
0
\end{array}\right)-
\frac{1}{z}\left(\begin{array}{cc}
0\\
-iq_{-}^{*}
\end{array}\right)
-\sum_{j=1}^{J}\frac{b(z_{j})e^{i\big(z_{j}-\frac{q_{0}^{2}}{z_{j}}\big)x}\cdot N(x,z_{j})}{(z-z_{j})z_{j}a'(z_{j})}\right]\\
&=\rho(z)e^{i\big(z-\frac{q_{0}^{2}}{z}\big)x}\cdot \frac{N(x,z)}{z}
\end{split}
\end{equation}
and
\begin{equation}
\label{E:jump 2}
\begin{split}
&\left[\frac{\overline{M}(x,z)}{z\overline{a}(z)}-\left(\begin{array}{cc}
0\\
1
\end{array}\right)-
\frac{1}{z}\left(\begin{array}{cc}
-iq_{+}\\
0
\end{array}\right)
-\sum_{j=1}^{\overline{J}}\frac{\overline{M}(x,\overline{z}_{j})}{(z-\overline{z}_{j})\overline{z}_{j}a'(\overline{z}_{j})}\right]\\
&-\left[\frac{N(x,z)}{z}-\left(\begin{array}{cc}
0\\
1
\end{array}\right)-
\frac{1}{z}\left(\begin{array}{cc}
-iq_{+}\\
0
\end{array}\right)
-\sum_{j=1}^{\overline{J}}\frac{\overline{b}(\overline{z}_{j})e^{-i\big(\overline{z}_{j}-\frac{q_{0}^{2}}{\overline{z}_{j}}\big)x}\cdot \overline{N}(x,\overline{z}_{j})}{(z-\overline{z}_{j})\overline{z}_{j}\overline{a}'(\overline{z}_{j})}\right]\\
&=\overline{\rho}(z)e^{-i\big(z-\frac{q_{0}^{2}}{z}\big)x}\cdot \frac{\overline{N}(x,z)}{z}.
\end{split}
\end{equation}
We now introduce the projection operators
\begin{equation}
P_{\pm}(f)(z)=\frac{1}{2\pi i}\int_{-\infty}^{+\infty}\frac{f(\xi)}{\xi-(z\pm i0)}d\xi,
\end{equation}
which are well-defined for any function $f(\xi)$ that is integrable on the real axis. If $f_{\pm}(\xi)$ is analytic in the upper/lower $z-$ plane and $f_{\pm}(\xi)$ is decaying at large $\xi$, then
\begin{equation}
P_{\pm}(f_{\pm})(z)=\pm f_{\pm}(z), \ \ \ P_{\mp}(f_{\pm})(z)=0.
\end{equation}
Applying $P_{-}$ to (\ref{E:jump 1}) and $P_{+}$ to (\ref{E:jump 2}), we can obtain
\begin{equation}
\begin{split}
\frac{\overline{N}(x,z)}{z}&=\left(\begin{array}{cc}
1\\
0
\end{array}\right)+
\frac{1}{z}\left(\begin{array}{cc}
0\\
-iq_{-}^{*}
\end{array}\right)
+\sum_{j=1}^{J}\frac{b(z_{j})e^{i\big(z_{j}-\frac{q_{0}^{2}}{z_{j}}\big)x}\cdot N(x,z_{j})}{(z-z_{j})z_{j}a'(z_{j})}\\
&+\frac{1}{2\pi i}\int_{-\infty}^{+\infty}\frac{\rho(\xi)}{\xi(\xi-z)}\cdot e^{i\big(\xi-\frac{q_{0}^{2}}{\xi}\big)x}\cdot N(x,\xi)d\xi
\end{split}
\end{equation}
and
\begin{equation}
\begin{split}
\frac{N(x,z)}{z}&=\left(\begin{array}{cc}
0\\
1
\end{array}\right)+
\frac{1}{z}\left(\begin{array}{cc}
-iq_{+}\\
0
\end{array}\right)
+\sum_{j=1}^{\overline{J}}\frac{\overline{b}(\overline{z}_{j})e^{-i\big(\overline{z}_{j}-\frac{q_{0}^{2}}{\overline{z}_{j}}\big)x}\cdot \overline{N}(x,\overline{z}_{j})}{(z-\overline{z}_{j})\overline{z}_{j}\overline{a}'(\overline{z}_{j})}\\
&-\frac{1}{2\pi i}\int_{-\infty}^{+\infty}\frac{\overline{\rho}(\xi)}{\xi(\xi-z)}\cdot e^{-i\big(\xi-\frac{q_{0}^{2}}{\xi}\big)x}\cdot \overline{N}(x,\xi)d\xi,
\end{split}
\end{equation}
i.e.,
\begin{equation}
\label{E:eigenfunction 1}
\begin{split}
\overline{N}(x,z)&=\left(\begin{array}{cc}
z\\
-iq_{-}^{*}
\end{array}\right)
+\sum_{j=1}^{J}\frac{z\cdot b(z_{j})e^{i\big(z_{j}-\frac{q_{0}^{2}}{z_{j}}\big)x}\cdot N(x,z_{j})}{(z-z_{j})z_{j}a'(z_{j})}\\
&+\frac{z}{2\pi i}\int_{-\infty}^{+\infty}\frac{\rho(\xi)}{\xi(\xi-z)}\cdot e^{i\big(\xi-\frac{q_{0}^{2}}{\xi}\big)x}\cdot N(x,\xi)d\xi
\end{split}
\end{equation}
and
\begin{equation}
\label{E:eigenfunction 2}
\begin{split}
N(x,z)&=\left(\begin{array}{cc}
-iq_{+}\\
z
\end{array}\right)
+\sum_{j=1}^{\overline{J}}\frac{z\cdot\overline{b}(\overline{z}_{j})e^{-i\big(\overline{z}_{j}-\frac{q_{0}^{2}}{\overline{z}_{j}}\big)x}\cdot \overline{N}(x,\overline{z}_{j})}{(z-\overline{z}_{j})\overline{z}_{j}\overline{a}'(\overline{z}_{j})}\\
&-\frac{z}{2\pi i}\int_{-\infty}^{+\infty}\frac{\overline{\rho}(\xi)}{\xi(\xi-z)}\cdot e^{-i\big(\xi-\frac{q_{0}^{2}}{\xi}\big)x}\cdot \overline{N}(x,\xi)d\xi.
\end{split}
\end{equation}

Since the symmetries are between eigenfunctions defined at both $\pm \infty$ we proceed to obtain the inverse scattering integral equations defined from the right end.

\subsubsection{Right scattering problem}
The right scattering problem can be written as
\begin{equation}
\psi(x,z)=\alpha(z)\overline{\phi}(x,z)+\beta(z)\phi(x,z)
\end{equation}
and
\begin{equation}
\overline{\psi}(x,z)=\overline{\alpha}(z)\phi(x,z)+\overline{\beta}(z)\overline{\phi}(x,z),
\end{equation}
where $\alpha(z)$, $\overline{\alpha}(z)$, $\beta(z)$ and $\overline{\beta}(z)$ are the right scattering data. Moreover, we can get the right scattering data and left scattering data satisfy the following relations
\begin{equation}
\overline{\alpha}(z)=\overline{a}(z), \ \ \ \alpha(z)=a(z), \ \ \ \overline{\beta}(z)=-b(z), \ \ \ \beta(z)=-\overline{b}(z).
\end{equation}
Thus,
\begin{equation}
\begin{split}
N(x,z)&=\alpha(z)\overline{M}(x,z)+\beta(z)M(x,z)e^{-i\big(z-\frac{q_{0}^{2}}{z}\big)x}\\
&=a(z)\overline{M}(x,z)-\overline{b}(z)M(x,z)e^{-i\big(z-\frac{q_{0}^{2}}{z}\big)x}
\end{split}
\end{equation}
and
\begin{equation}
\begin{split}
\overline{N}(x,z)&=\overline{\alpha}(z)M(x,z)+\overline{\beta}(z)\overline{M}(x,z)e^{i\big(z-\frac{q_{0}^{2}}{z}\big)x}\\
&=\overline{a}(z)M(x,z)-b(z)\overline{M}(x,z)e^{i\big(z-\frac{q_{0}^{2}}{z}\big)x}.
\end{split}
\end{equation}
The above two equations can be written as
\begin{equation}
\frac{N(x,z)}{za(z)}-\frac{\overline{M}(x,z)}{z}=-\frac{\overline{b}(z)}{a(z)}\cdot e^{-i\big(z-\frac{q_{0}^{2}}{z}\big)x}\cdot \frac{M(x,z)}{z}
\end{equation}
and
\begin{equation}
\frac{\overline{N}(x,z)}{z\overline{a}(z)}-\frac{M(x,z)}{z}=-\frac{b(z)}{\overline{a}(z)}\cdot e^{i\big(z-\frac{q_{0}^{2}}{z}\big)x}\cdot \frac{\overline{M}(x,z)}{z}.
\end{equation}
By the symmetry relations of scattering data, we have
\begin{equation}
\frac{N(x,z)}{za(z)}-\frac{\overline{M}(x,z)}{z}=\rho^{*}(-z^{*})\cdot e^{-i\big(z-\frac{q_{0}^{2}}{z}\big)x}\cdot \frac{M(x,z)}{z}
\end{equation}
and
\begin{equation}
\frac{\overline{N}(x,z)}{z\overline{a}(z)}-\frac{M(x,z)}{z}=\overline{\rho}^{*}(-z^{*})\cdot e^{i\big(z-\frac{q_{0}^{2}}{z}\big)x}\cdot \frac{\overline{M}(x,z)}{z},
\end{equation}
so that the functions will be bounded at infinity, though having an additional pole at $z=0$. Note that $N(x,z)/a(z)$, as a function of
$z$, is defined in the upper half plane $\mathbb{C}^{+}$, where it has simple poles $z_{j}$, i.e., $a(z_{j})=0$, and $\overline{N}(x,z)/\overline{a}(z)$,
is defined in the lower half plane $\mathbb{C}^{-}$, where it has simple poles $\overline{z}_{j}$, i.e., $\overline{a}(\overline{z}_{j})=0$. It follows that
\begin{equation}
N(x,z_{j})=-\overline{b}(z_{j})M(x,z_{j})e^{-i\big(z_{j}-\frac{q_{0}^{2}}{z_{j}}\big)x}
\end{equation}
and
\begin{equation}
\overline{N}(x,\overline{z}_{j})=-b(\overline{z}_{j})\overline{M}(x,\overline{z}_{j})e^{i\big(\overline{z}_{j}-\frac{q_{0}^{2}}{\overline{z}_{j}}\big)x}.
\end{equation}
Then, as before, subtracting the values at infinity, the induced pole at the origin and the poles, assumed simple in the upper/lower half planes respectively, at $a(z_j)=0, j=1,2...J$ and  $\bar{a}(\overline{z}_{j}), j=1,2...\bar{J} $, gives
\begin{equation}
\label{E:jump 3}
\begin{split}
&\left[\frac{N(x,z)}{za(z)}-\left(\begin{array}{cc}
0\\
1
\end{array}\right)
-\frac{1}{z}\left(\begin{array}{cc}
-iq_{-}\\
0
\end{array}\right)
-\sum_{j=1}^{J}\frac{N(x,z_{j})}{(z-z_{j})z_{j}a'(z_{j})}\right]\\
&-\left[\frac{\overline{M}(x,z)}{z}-\left(\begin{array}{cc}
0\\
1
\end{array}\right)
-\frac{1}{z}\left(\begin{array}{cc}
-iq_{-}\\
0
\end{array}\right)-\sum_{j=1}^{J}\frac{-\overline{b}(z_{j})M(x,z_{j})e^{-i\big(z_{j}-\frac{q_{0}^{2}}{z_{j}}\big)x}}{(z-z_{j})z_{j}a'(z_{j})}\right]\\
&=\rho^{*}(-z^{*})\cdot e^{-i\big(z-\frac{q_{0}^{2}}{z}\big)x}\cdot \frac{M(x,z)}{z}
\end{split}
\end{equation}
and
\begin{equation}
\label{E:jump 4}
\begin{split}
&\left[\frac{\overline{N}(x,z)}{z\overline{a}(z)}-\left(\begin{array}{cc}
1\\
0
\end{array}\right)
-\frac{1}{z}\left(\begin{array}{cc}
0\\
-iq_{+}^{*}
\end{array}\right)
-\sum_{j=1}^{\overline{J}}\frac{\overline{N}(x,\overline{z}_{j})}{(z-\overline{z}_{j})\overline{z}_{j}\overline{a}'(\overline{z}_{j})}\right]\\
&-\left[\frac{M(x,z)}{z}-\left(\begin{array}{cc}
1\\
0
\end{array}\right)
-\frac{1}{z}\left(\begin{array}{cc}
0\\
-iq_{+}^{*}
\end{array}\right)-\sum_{j=1}^{\overline{J}}\frac{-b(\overline{z}_{j})\overline{M}(x,\overline{z}_{j})e^{i\big(\overline{z}_{j}-\frac{q_{0}^{2}}{\overline{z}_{j}}\big)x}}
{(z-\overline{z}_{j})\overline{z}_{j}\overline{a}'(\overline{z}_{j})}\right]\\
&=\overline{\rho}^{*}(-z^{*})\cdot e^{i\big(z-\frac{q_{0}^{2}}{z}\big)x}\cdot \frac{\overline{M}(x,z)}{z}.
\end{split}
\end{equation}
Applying $P_{-}$ to (\ref{E:jump 3}) and $P_{+}$ to (\ref{E:jump 4}), we can obtain
\begin{equation}
\label{E:eigenfunction 3}
\begin{split}
\overline{M}(x,z)&=\left(\begin{array}{cc}
-iq_{-}\\
z
\end{array}\right)+
\sum_{j=1}^{J}\frac{-z\cdot\overline{b}(z_{j})M(x,z_{j})e^{-i\big(z_{j}-\frac{q_{0}^{2}}{z_{j}}\big)x}}{(z-z_{j})z_{j}a'(z_{j})}\\
&+\frac{z}{2\pi i}\int_{-\infty}^{+\infty}\frac{\rho^{*}(-\xi)}{\xi(\xi-z)}\cdot e^{-i\big(\xi-\frac{q_{0}^{2}}{\xi}\big)x}\cdot M(x,\xi)d\xi
\end{split}
\end{equation}
and
\begin{equation}
\label{E:eigenfunction 4}
\begin{split}
M(x,z)&=\left(\begin{array}{cc}
z\\
-iq_{+}^{*}
\end{array}\right)+
\sum_{j=1}^{\overline{J}}\frac{-z\cdot b(\overline{z}_{j})\overline{M}(x,\overline{z}_{j})e^{i\big(\overline{z}_{j}-\frac{q_{0}^{2}}{\overline{z}_{j}}\big)x}}
{(z-\overline{z}_{j})\overline{z}_{j}\overline{a}'(\overline{z}_{j})}\\
&-\frac{z}{2\pi i}\int_{-\infty}^{+\infty}\frac{\overline{\rho}^{*}(-\xi)}{\xi(\xi-z)}\cdot e^{i\big(\xi-\frac{q_{0}^{2}}{\xi}\big)x}\cdot \overline{M}(x,\xi)d\xi.
\end{split}
\end{equation}
\subsection{Recovery of the potentials}
In order to reconstruct the potential we use asymptotics . For example from equation (\ref{E:asymptotic 3}), we have
\begin{equation}
\frac{\overline{N}_{1}(x,z)}{z}\sim \frac{q(x)}{q_{+}}
\end{equation}
as $z\rightarrow 0$. By (\ref{E:eigenfunction 1}), we can get
\begin{equation}
\frac{\overline{N}_{1}(x,z)}{z}\sim 1+\sum_{j=1}^{J}\frac{b(z_{j})e^{i\big(z_{j}-\frac{q_{0}^{2}}{z_{j}}\big)x}}{-z_{j}^{2}a'(z_{j})}\cdot N_{1}(x,z_{j})+\frac{1}{2\pi i}\int_{-\infty}^{+\infty}\frac{\rho(\xi)}{\xi^{2}}\cdot e^{i\big(\xi-\frac{q_{0}^{2}}{\xi}\big)x}\cdot N_{1}(x,\xi)d\xi
\end{equation}
as $z\rightarrow 0$. Hence,
\begin{equation}
\label{asympN1c}
q(x)=q_{+}\cdot\left[1+\sum_{j=1}^{J}\frac{b(z_{j})e^{i\big(z_{j}-\frac{q_{0}^{2}}{z_{j}}\big)x}}{-z_{j}^{2}a'(z_{j})}\cdot N_{1}(x,z_{j})+\frac{1}{2\pi i}\int_{-\infty}^{+\infty}\frac{\rho(\xi)}{\xi^{2}}\cdot e^{i\big(\xi-\frac{q_{0}^{2}}{\xi}\big)x}\cdot N_{1}(x,\xi)d\xi\right].
\end{equation}
Note that the rapidly varying phase makes the integrals well defined at $\xi=0$.

\subsection{Closing the system}

We can find $J=\overline{J}$ from $a\left(\frac{q_{0}^{2}}{z}\right)=-\overline{a}(z)$. To close the system, by the symmetry relations between the eigenfunctions, we have
\begin{equation}
\label{E:system 1}
\begin{split}
\left(\begin{array}{cc}
-\overline{M}_{2}^{*}(-x,-z^{*})\\
-\overline{M}_{1}^{*}(-x,-z^{*})
\end{array}\right)&=\left(\begin{array}{cc}
z\\
-iq_{-}^{*}
\end{array}\right)
+\sum_{j=1}^{J}\frac{z\cdot b(z_{j})e^{i\big(z_{j}-\frac{q_{0}^{2}}{z_{j}}\big)x}}{(z-z_{j})z_{j}a'(z_{j})}\cdot \left(\begin{array}{cc}
N_{1}(x,z_{j})\\
N_{2}(x,z_{j})
\end{array}\right)\\
&+\frac{z}{2\pi i}\int_{-\infty}^{+\infty}\frac{\rho(\xi)}{\xi(\xi-z)}\cdot e^{i\big(\xi-\frac{q_{0}^{2}}{\xi}\big)x}\cdot \left(\begin{array}{cc}
N_{1}(x,\xi)\\
N_{2}(x,\xi)
\end{array}\right)d\xi,
\end{split}
\end{equation}
\begin{equation}
\label{E:system 2}
\begin{split}
\left(\begin{array}{cc}
N_{1}(x,z)\\
N_{2}(x,z)
\end{array}\right)&=\left(\begin{array}{cc}
-iq_{+}\\
z
\end{array}\right)
+\sum_{j=1}^{J}\frac{z\cdot\overline{b}(\overline{z}_{j})e^{-i\big(\overline{z}_{j}-\frac{q_{0}^{2}}{\overline{z}_{j}}\big)x} }{(z-\overline{z}_{j})\overline{z}_{j}\overline{a}'(\overline{z}_{j})}\cdot\left(\begin{array}{cc}
-\overline{M}_{2}^{*}(-x,-\overline{z}_{j}^{*})\\
-\overline{M}_{1}^{*}(-x,-\overline{z}_{j}^{*})
\end{array}\right)\\
&-\frac{z}{2\pi i}\int_{-\infty}^{+\infty}\frac{\overline{\rho}(\xi)}{\xi(\xi-z)}\cdot e^{-i\big(\xi-\frac{q_{0}^{2}}{\xi}\big)x}\cdot \left(\begin{array}{cc}
-\overline{M}_{2}^{*}(-x,-\xi)\\
-\overline{M}_{1}^{*}(-x,-\xi)
\end{array}\right)d\xi,
\end{split}
\end{equation}
\begin{equation}
\label{E:system 3}
\begin{split}
\left(\begin{array}{cc}
\overline{M}_{1}(x,z)\\
\overline{M}_{2}(x,z)
\end{array}\right)&=\left(\begin{array}{cc}
-iq_{-}\\
z
\end{array}\right)+
\sum_{j=1}^{J}\frac{-z\cdot\overline{b}(z_{j})e^{-i\big(z_{j}-\frac{q_{0}^{2}}{z_{j}}\big)x}}{(z-z_{j})z_{j}a'(z_{j})}\cdot \left(\begin{array}{cc}
-N_{2}^{*}(-x,-z_{j}^{*})\\
-N_{1}^{*}(-x,-z_{j}^{*})
\end{array}\right)\\
&+\frac{z}{2\pi i}\int_{-\infty}^{+\infty}\frac{\rho^{*}(-\xi)}{\xi(\xi-z)}\cdot e^{-i\big(\xi-\frac{q_{0}^{2}}{\xi}\big)x}\cdot \left(\begin{array}{cc}
-N_{2}^{*}(-x,-\xi)\\
-N_{1}^{*}(-x,-\xi)
\end{array}\right)d\xi,
\end{split}
\end{equation}
\begin{equation}
\label{E:system 4}
\begin{split}
\left(\begin{array}{cc}
-N_{2}^{*}(-x,-z^{*})\\
-N_{1}^{*}(-x,-z^{*})
\end{array}\right)&=\left(\begin{array}{cc}
z\\
-iq_{+}^{*}
\end{array}\right)+
\sum_{j=1}^{J}\frac{-z\cdot b(\overline{z}_{j})e^{i\big(\overline{z}_{j}-\frac{q_{0}^{2}}{\overline{z}_{j}}\big)x}}
{(z-\overline{z}_{j})\overline{z}_{j}\overline{a}'(\overline{z}_{j})}\cdot\left(\begin{array}{cc}
\overline{M}_{1}(x,\overline{z}_{j})\\
\overline{M}_{2}(x,\overline{z}_{j})
\end{array}\right)\\
&-\frac{z}{2\pi i}\int_{-\infty}^{+\infty}\frac{\overline{\rho}^{*}(-\xi)}{\xi(\xi-z)}\cdot e^{i\big(\xi-\frac{q_{0}^{2}}{\xi}\big)x}\cdot \left(\begin{array}{cc}
\overline{M}_{1}(x,\xi)\\
\overline{M}_{2}(x,\xi)
\end{array}\right)d\xi.
\end{split}
\end{equation}
Then
\begin{equation}
\label{E:closing system 1}
\begin{split}
&\left(\begin{array}{cc}
N_{1}(x,z)\\
N_{2}(x,z)
\end{array}\right)=\left(\begin{array}{cc}
-iq_{+}\\
z
\end{array}\right)
+\sum_{j=1}^{J}\frac{z\cdot\overline{b}(\overline{z}_{j})e^{-i\big(\overline{z}_{j}-\frac{q_{0}^{2}}{\overline{z}_{j}}\big)x} }{(z-\overline{z}_{j})\overline{z}_{j}\overline{a}'(\overline{z}_{j})}\cdot\\
&\left(\begin{array}{cc}
\overline{z}_{j}+\sum_{l=1}^{J}\frac{\overline{z}_{j}\cdot b(z_{l})e^{i\big(z_{l}-\frac{q_{0}^{2}}{z_{l}}\big)x}}{(\overline{z}_{j}-z_{l})z_{l}a'(z_{l})}\cdot N_{1}(x, z_{l})+\frac{\overline{z}_{j}}{2\pi i}\int_{-\infty}^{+\infty}\frac{\rho(\xi)}{\xi(\xi-\overline{z}_{j})}\cdot e^{i\big(\xi-\frac{q_{0}^{2}}{\xi}\big)x}\cdot N_{1}(x, \xi)d\xi\\
-iq_{-}^{*}+\sum_{l=1}^{J}\frac{\overline{z}_{j}\cdot b(z_{l})e^{i\big(z_{l}-\frac{q_{0}^{2}}{z_{l}}\big)x}}{(\overline{z}_{j}-z_{l})z_{l}a'(z_{l})}\cdot N_{2}(x, z_{l})+\frac{\overline{z}_{j}}{2\pi i}\int_{-\infty}^{+\infty}\frac{\rho(\xi)}{\xi(\xi-\overline{z}_{j})}\cdot e^{i\big(\xi-\frac{q_{0}^{2}}{\xi}\big)x}\cdot N_{2}(x, \xi)d\xi
\end{array}\right)\\
&-\frac{z}{2\pi i}\int_{-\infty}^{+\infty}\frac{\overline{\rho}(\xi)}{\xi(\xi-z)}\cdot e^{-i\big(\xi-\frac{q_{0}^{2}}{\xi}\big)x}\cdot\\
&\left(\begin{array}{cc}
\xi+\sum_{l=1}^{J}\frac{\xi\cdot b(z_{l})e^{i\big(z_{l}-\frac{q_{0}^{2}}{z_{l}}\big)x}}{(\xi-z_{l})z_{l}a'(z_{l})}\cdot N_{1}(x, z_{l})+\frac{\xi}{2\pi i}\int_{-\infty}^{+\infty}\frac{\rho(\eta)}{\eta(\eta-\xi)}\cdot e^{i\big(\eta-\frac{q_{0}^{2}}{\eta}\big)x}\cdot N_{1}(x, \eta)d\eta\\
-iq_{-}^{*}+\sum_{l=1}^{J}\frac{\xi\cdot b(z_{l})e^{i\big(z_{l}-\frac{q_{0}^{2}}{z_{l}}\big)x}}{(\xi-z_{l})z_{l}a'(z_{l})}\cdot N_{2}(x, z_{l})+\frac{\xi}{2\pi i}\int_{-\infty}^{+\infty}\frac{\rho(\eta)}{\eta(\eta-\xi)}\cdot e^{i\big(\eta-\frac{q_{0}^{2}}{\eta}\big)x}\cdot N_{2}(x, \eta)d\eta
\end{array}\right)d\xi,
\end{split}
\end{equation}
\begin{equation}
\label{E:closing system 2}
\begin{split}
&\left(\begin{array}{cc}
\overline{M}_{1}(x,z)\\
\overline{M}_{2}(x,z)
\end{array}\right)=\left(\begin{array}{cc}
-iq_{-}\\
z
\end{array}\right)+
\sum_{j=1}^{J}\frac{-z\cdot\overline{b}(z_{j})e^{-i\big(z_{j}-\frac{q_{0}^{2}}{z_{j}}\big)x}}{(z-z_{j})z_{j}a'(z_{j})}\cdot\\
& \left(\begin{array}{cc}
z_{j}+
\sum_{l=1}^{J}\frac{-z_{j}\cdot b(\overline{z}_{l})e^{i\big(\overline{z}_{l}-\frac{q_{0}^{2}}{\overline{z}_{l}}\big)x}}
{(z_{j}-\overline{z}_{l})\overline{z}_{l}\overline{a}'(\overline{z}_{l})}\cdot \overline{M}_{1}(x, \overline{z}_{l})-\frac{z_{j}}{2\pi i}\int_{-\infty}^{+\infty}\frac{\overline{\rho}^{*}(-\xi)}{\xi(\xi-z_{j})}\cdot e^{i\big(\xi-\frac{q_{0}^{2}}{\xi}\big)x}\cdot\overline{M}_{1}(x,\xi)d\xi\\
-iq_{+}^{*}+
\sum_{l=1}^{J}\frac{-z_{j}\cdot b(\overline{z}_{l})e^{i\big(\overline{z}_{l}-\frac{q_{0}^{2}}{\overline{z}_{l}}\big)x}}
{(z_{j}-\overline{z}_{l})\overline{z}_{l}\overline{a}'(\overline{z}_{l})}\cdot \overline{M}_{2}(x, \overline{z}_{l})-\frac{z_{j}}{2\pi i}\int_{-\infty}^{+\infty}\frac{\overline{\rho}^{*}(-\xi)}{\xi(\xi-z_{j})}\cdot e^{i\big(\xi-\frac{q_{0}^{2}}{\xi}\big)x}\cdot\overline{M}_{2}(x,\xi)d\xi
\end{array}\right)\\
&+\frac{z}{2\pi i}\int_{-\infty}^{+\infty}\frac{\rho^{*}(-\xi)}{\xi(\xi-z)}\cdot e^{-i\big(\xi-\frac{q_{0}^{2}}{\xi}\big)x}\cdot\\
&\left(\begin{array}{cc}
\xi+
\sum_{l=1}^{J}\frac{-\xi\cdot b(\overline{z}_{l})e^{i\big(\overline{z}_{l}-\frac{q_{0}^{2}}{\overline{z}_{l}}\big)x}}
{(\xi-\overline{z}_{l})\overline{z}_{l}\overline{a}'(\overline{z}_{l})}\cdot\overline{M}_{1}(x,\overline{z}_{l})-\frac{\xi}{2\pi i}\int_{-\infty}^{+\infty}\frac{\overline{\rho}^{*}(-\eta)}{\eta(\eta-\xi)}\cdot e^{i\big(\eta-\frac{q_{0}^{2}}{\eta}\big)x}\cdot \overline{M}_{1}(x,\eta)d\eta\\
-iq_{+}^{*}+
\sum_{l=1}^{J}\frac{-\xi\cdot b(\overline{z}_{l})e^{i\big(\overline{z}_{l}-\frac{q_{0}^{2}}{\overline{z}_{l}}\big)x}}
{(\xi-\overline{z}_{l})\overline{z}_{l}\overline{a}'(\overline{z}_{l})}\cdot\overline{M}_{2}(x,\overline{z}_{l})-\frac{\xi}{2\pi i}\int_{-\infty}^{+\infty}\frac{\overline{\rho}^{*}(-\eta)}{\eta(\eta-\xi)}\cdot e^{i\big(\eta-\frac{q_{0}^{2}}{\eta}\big)x}\cdot \overline{M}_{2}(x,\eta)d\eta
\end{array}\right)d\xi.
\end{split}
\end{equation}

We note that from eq (\ref{asympN1c})  $q(x)$  is given in  terms of the component $N_1$. We can use only the first component of eq (\ref{E:closing system 1}) to find this function and hence $q(x)$. Hence to complete the inverse scattering we reduce the problem to solving an integral equation in terms of the component $N_1$ only.

\subsection{Trace formula}
Since $a(z)$ and $\overline{a}(z)$ are analytic in the upper and lower $z-$ plane respectively. As mentioned above, we assume that $a(z)$ has simple zeros, which we call
$\widetilde{z}_{i}$ and $z_{j}$, where $\Re\widetilde{z}_{i}=0$ and $\Re z_{j}\neq 0$, then $-z_{j}^{*}$ is also a simple zero of $a(z)$ by $a^{*}(-z^{*})=a(z)$. By the symmetry relation $a\left(\frac{q_{0}^{2}}{z}\right)=-\overline{a}(z)$, we can deduce that $\overline{a}(z)$ has simple zeros $\frac{q_{0}^{2}}{z_{j}}$, $-\frac{q_{0}^{2}}{z_{j}^{*}}$ and $\frac{q_{0}^{2}}{\widetilde{z}_{i}}$.

We define
\begin{equation}
\gamma(z)=a(z)\cdot \prod_{j=1}^{J_{1}} \frac{z-\frac{q_{0}^{2}}{z_{j}}}{z-z_{j}}\cdot \frac{z+\frac{q_{0}^{2}}{z_{j}^{*}}}{z+z_{j}^{*}}
\cdot\prod_{i=1}^{J_{2}} \frac{z-\frac{q_{0}^{2}}{\widetilde{z}_{i}}}{z-\widetilde{z}_{i}},
\end{equation}
\begin{equation}
\overline{\gamma}(z)=\overline{a}(z)\cdot
\prod_{j=1}^{J_{1}}\frac{z-z_{j}}{z-\frac{q_{0}^{2}}{z_{j}}}\cdot\frac{z+z_{j}^{*}}{z+\frac{q_{0}^{2}}{z_{j}^{*}}}
\cdot\prod_{i=1}^{J_{2}}\frac{z-\widetilde{z}_{i}}{z-\frac{q_{0}^{2}}{\widetilde{z}_{i}}},
\end{equation}
where $2J_{1}+J_{2}=J$.

Then $\gamma(z)$ and $\overline{\gamma}(z)$ are analytic in the upper and lower $z-$ plane respectively and have no zeros in their respective half planes. We can get
\begin{equation}
\log \gamma(z)=\frac{1}{2\pi i}\int_{-\infty}^{+\infty}\frac{\log \gamma(\xi)}{\xi-z}d\xi, \ \ \
\frac{1}{2\pi i}\int_{-\infty}^{+\infty}\frac{\log \overline{\gamma}(\xi)}{\xi-z}d\xi=0, \ \ \ \Im z>0,
\end{equation}

\begin{equation}
\log \overline{\gamma}(z)=-\frac{1}{2\pi i}\int_{-\infty}^{+\infty}\frac{\log \overline{\gamma}(\xi)}{\xi-z}d\xi, \ \ \
\frac{1}{2\pi i}\int_{-\infty}^{+\infty}\frac{\log \gamma(\xi)}{\xi-z}d\xi=0, \ \ \ \Im z<0.
\end{equation}
Adding or subtracting the above equations in each half plane respectively, yields
\begin{equation}
\log \gamma(z)=\frac{1}{2\pi i}\int_{-\infty}^{+\infty}\frac{\log \gamma(\xi)\overline{\gamma}(\xi)}{\xi-z}d\xi, \ \ \ \Im z>0,
\end{equation}
\begin{equation}
\log \overline{\gamma}(z)=-\frac{1}{2\pi i}\int_{-\infty}^{+\infty}\frac{\log \gamma(\xi)\overline{\gamma}(\xi)}{\xi-z}d\xi, \ \ \ \Im z<0.
\end{equation}
Hence,
\begin{equation}
\log a(z)=\log\left(\prod_{j=1}^{J_{1}}\frac{z-z_{j}}{z-\frac{q_{0}^{2}}{z_{j}}}\cdot\frac{z+z_{j}^{*}}{z+\frac{q_{0}^{2}}{z_{j}^{*}}}
\cdot\prod_{i=1}^{J_{2}}\frac{z-\widetilde{z}_{i}}{z-\frac{q_{0}^{2}}{\widetilde{z}_{i}}}\right)+\frac{1}{2\pi i}\int_{-\infty}^{+\infty}\frac{\log \gamma(\xi)\overline{\gamma}(\xi)}{\xi-z}d\xi, \ \ \ \Im z>0,
\end{equation}

\begin{equation}
\log \overline{a}(z)=\log\left(\prod_{j=1}^{J_{1}} \frac{z-\frac{q_{0}^{2}}{z_{j}}}{z-z_{j}}\cdot \frac{z+\frac{q_{0}^{2}}{z_{j}^{*}}}{z+z_{j}^{*}}
\cdot\prod_{i=1}^{J_{2}} \frac{z-\frac{q_{0}^{2}}{\widetilde{z}_{i}}}{z-\widetilde{z}_{i}}\right)-\frac{1}{2\pi i}\int_{-\infty}^{+\infty}\frac{\log \gamma(\xi)\overline{\gamma}(\xi)}{\xi-z}d\xi, \ \ \ \Im z<0.
\end{equation}
Note that
\begin{equation}
\gamma(z)\overline{\gamma}(z)=a(z)\overline{a}(z),
\end{equation}
from the unitarity condition
\begin{equation}
a(z)\overline{a}(z)-b(z)\overline{b}(z)=1,
\end{equation}
then
\begin{equation}
\log a(z)=\log\left(\prod_{j=1}^{J_{1}}\frac{z-z_{j}}{z-\frac{q_{0}^{2}}{z_{j}}}\cdot\frac{z+z_{j}^{*}}{z+\frac{q_{0}^{2}}{z_{j}^{*}}}
\cdot\prod_{i=1}^{J_{2}}\frac{z-\widetilde{z}_{i}}{z-\frac{q_{0}^{2}}{\widetilde{z}_{i}}}\right)+\frac{1}{2\pi i}\int_{-\infty}^{+\infty}\frac{\log (1+b(\xi)\overline{b}(\xi))}{\xi-z}d\xi, \ \ \ \Im z>0,
\end{equation}

\begin{equation}
\log \overline{a}(z)=\log\left(\prod_{j=1}^{J_{1}} \frac{z-\frac{q_{0}^{2}}{z_{j}}}{z-z_{j}}\cdot \frac{z+\frac{q_{0}^{2}}{z_{j}^{*}}}{z+z_{j}^{*}}
\cdot\prod_{i=1}^{J_{2}} \frac{z-\frac{q_{0}^{2}}{\widetilde{z}_{i}}}{z-\widetilde{z}_{i}}\right)-\frac{1}{2\pi i}\int_{-\infty}^{+\infty}\frac{\log (1+b(\xi)\overline{b}(\xi))}{\xi-z}d\xi, \ \ \ \Im z<0.
\end{equation}
By the symmetry $b^{*}(-z^{*})=-\overline{b}(z)$, we can obtain
\begin{equation}
\log a(z)=\log\left(\prod_{j=1}^{J_{1}}\frac{z-z_{j}}{z-\frac{q_{0}^{2}}{z_{j}}}\cdot\frac{z+z_{j}^{*}}{z+\frac{q_{0}^{2}}{z_{j}^{*}}}
\cdot\prod_{i=1}^{J_{2}}\frac{z-\widetilde{z}_{i}}{z-\frac{q_{0}^{2}}{\widetilde{z}_{i}}}\right)+\frac{1}{2\pi i}\int_{-\infty}^{+\infty}\frac{\log (1-b(\xi)b^{*}(-\xi^{*}))}{\xi-z}d\xi, \ \ \ \Im z>0,
\end{equation}

\begin{equation}
\log \overline{a}(z)=\log\left(\prod_{j=1}^{J_{1}} \frac{z-\frac{q_{0}^{2}}{z_{j}}}{z-z_{j}}\cdot \frac{z+\frac{q_{0}^{2}}{z_{j}^{*}}}{z+z_{j}^{*}}
\cdot\prod_{i=1}^{J_{2}} \frac{z-\frac{q_{0}^{2}}{\widetilde{z}_{i}}}{z-\widetilde{z}_{i}}\right)-\frac{1}{2\pi i}\int_{-\infty}^{+\infty}\frac{\log (1-b(\xi)b^{*}(-\xi^{*}))}{\xi-z}d\xi, \ \ \ \Im z<0.
\end{equation}
Thus we can reconstruct $a(k), \bar{a}(k)$ in terms of the eigenvalues (zero's)  and only {\bf one function $b(k)$}.

\subsection{ Discrete scattering data and their symmetries}

In order to find reflectionless potentials/solitons we need to be able to calculate the relevant scattering data:

\[ b(z_j) ~\mbox{and} ~~~\bar{b}(\bar{z}_j) ,~~j=1,2,...J.\]
and

\[a'(z_j), ~~\bar{a}'(z_j) \]
The latter functions can be calculated via the trace formulae. So we concentrate on the former.

Since
\begin{equation}
N_{1}(x,z)=-M_{2}^{*}(-x,-z^{*}), \ \ \ N_{2}(x,z)=-M_{1}^{*}(-x,-z^{*}),
\end{equation}
\begin{equation}
M_{1}(x,z_{j})=b(z_{j})e^{i\big(z_{j}-\frac{q_{0}^{2}}{z_{j}}\big)x}\cdot N_{1}(x,z_{j})
\end{equation}
and
\begin{equation}
M_{2}(x,z_{j})=b(z_{j})e^{i\big(z_{j}-\frac{q_{0}^{2}}{z_{j}}\big)x}\cdot N_{2}(x,z_{j}),
\end{equation}
we have
\begin{equation}
\label{E:N1}
N_{1}(x,z_{j})=-b^{*}(-z_{j}^{*})\cdot e^{-i\left(z_{j}-\frac{q_{0}^{2}}{z_{j}}\right)x}\cdot N_{2}^{*}(-x,-z_{j}^{*}),
\end{equation}
\begin{equation}
\label{E:N2}
N_{2}(x,z_{j})=-b^{*}(-z_{j}^{*})\cdot e^{-\left(z_{j}-\frac{q_{0}^{2}}{z_{j}}\right)x}\cdot N_{1}^{*}(-x,-z_{j}^{*}).
\end{equation}
By rewriting (\ref{E:N2}), we obtain
\begin{equation}
N_{2}^{*}(-x,-z_{j}^{*})=-b(z_{j})\cdot e^{\left(z_{j}-\frac{q_{0}^{2}}{z_{j}}\right)x}\cdot N_{1}(x,z_{j}).
\end{equation}
Combining (\ref{E:N1}), we can deduce the following symmetry condition on the discrete data $b(z_j)$
\begin{equation}
\label{E:b}
b(z_{j})b^{*}(-z_{j}^{*})=1.
\end{equation}

Similar analysis shows that $\bar{b}(\bar{z}_j) $ satisfies an analogous equation

\[ \bar{b}(\bar{z}_j) \bar{b}^{*}(-\bar{z}_{j}^{*}) =1\]

Also, since $a(z) \sim -1$ as $z \rightarrow 0$, from the trace formula on the real axis, we have the general symmetry constraint
\begin{equation}
\label{E:product1}
\prod_{j=1}^{J_{1}}\frac{|z_{j}|^{4}}{q_{0}^{4}}\cdot\prod_{i=1}^{J_{2}}\frac{|\widetilde{z}_{i}|^{2}}{q_{0}^{2}}
e^{  {\frac{1}{2 \pi i}\int_{-\infty}^{\infty} \log(1+ b(\xi)b^*(-\xi))/\xi d\xi }}=1~~
\end{equation}
where $2J_{1}+J_{2}=J$.

\subsection{Reflectioness potentials and soliton solutions}
Reflectioness potentials and, soliton solutions when time dependence is added,  correspond to zero reflection coefficients, i.e., $\rho(\xi)=0$ and $\overline{\rho}(\xi)=0$ for all real $\xi$.
 We also note,   from the symmetry relation $\bar{b}(z)=-b^*(-z^*)$ it follows that the reflection coefficients $\rho(z)= b(z)/a(z), \bar{\rho}(z)= \bar{b}(z)/\bar{a}(z)$ will both vanish when $b(z)=0$ for $z$ on the real axis. By substituting $z=z_{l}$ in (\ref{E:closing system 1}) and $z=\overline{z}_{l}$ in (\ref{E:closing system 2}), the system (\ref{E:closing system 1}) and (\ref{E:closing system 2}) reduces to algebraic equations that determine the functional form of these special potentials. When time dependence is added the reflectionless potentials correspond to soliton solutions.
The reduced equations take the form


\begin{equation}
\begin{split}
\left(\begin{array}{cc}
N_{1}(x,z_{l})\\
N_{2}(x,z_{l})
\end{array}\right)&=\left(\begin{array}{cc}
-iq_{+}\\
z_{l}
\end{array}\right)
+\sum_{j=1}^{J}\frac{z_{l}\cdot\overline{b}(\overline{z}_{j})e^{-i\big(\overline{z}_{j}-\frac{q_{0}^{2}}{\overline{z}_{j}}\big)x} }{(z_{l}-\overline{z}_{j})\overline{z}_{j}\overline{a}'(\overline{z}_{j})}\cdot\\
&\left(\begin{array}{cc}
\overline{z}_{j}+\sum_{l=1}^{J}\frac{\overline{z}_{j}\cdot b(z_{l})e^{i\big(z_{l}-\frac{q_{0}^{2}}{z_{l}}\big)x}}{(\overline{z}_{j}-z_{l})z_{l}a'(z_{l})}\cdot N_{1}(x, z_{l})\\
-iq_{-}^{*}+\sum_{l=1}^{J}\frac{\overline{z}_{j}\cdot b(z_{l})e^{i\big(z_{l}-\frac{q_{0}^{2}}{z_{l}}\big)x}}{(\overline{z}_{j}-z_{l})z_{l}a'(z_{l})}\cdot N_{2}(x, z_{l})
\end{array}\right),
\end{split}
\end{equation}
\begin{equation}
\begin{split}
\left(\begin{array}{cc}
\overline{M}_{1}(x,\overline{z}_{l})\\
\overline{M}_{2}(x,\overline{z}_{l})
\end{array}\right)&=\left(\begin{array}{cc}
-iq_{-}\\
\overline{z}_{l}
\end{array}\right)+
\sum_{j=1}^{J}\frac{-\overline{z}_{l}\cdot\overline{b}(z_{j})e^{-i\big(z_{j}-\frac{q_{0}^{2}}{z_{j}}\big)x}}{(\overline{z}_{l}-z_{j})z_{j}a'(z_{j})}\cdot\\
&\left(\begin{array}{cc}
z_{j}+
\sum_{l=1}^{J}\frac{-z_{j}\cdot b(\overline{z}_{l})e^{i\big(\overline{z}_{l}-\frac{q_{0}^{2}}{\overline{z}_{l}}\big)x}}
{(z_{j}-\overline{z}_{l})\overline{z}_{l}\overline{a}'(\overline{z}_{l})}\cdot \overline{M}_{1}(x, \overline{z}_{l})\\
-iq_{+}^{*}+
\sum_{l=1}^{J}\frac{-z_{j}\cdot b(\overline{z}_{l})e^{i\big(\overline{z}_{l}-\frac{q_{0}^{2}}{\overline{z}_{l}}\big)x}}
{(z_{j}-\overline{z}_{l})\overline{z}_{l}\overline{a}'(\overline{z}_{l})}\cdot \overline{M}_{2}(x, \overline{z}_{l})
\end{array}\right).
\end{split}
\end{equation}

The above equations are an algebraic system to solve for either $N(x,z_{l})$ or $\overline{M}(x,\overline{z}_{l}^{*}),  l=1,2...J$. The potential are reconstructed from
equation (\ref{asympN1c}) with $\rho(\xi)=0,\bar{\rho}(\xi)=0$; i.e.,

\begin{equation}
\label{asympN1d1}
q(x)=q_{+}\cdot\left[1+\sum_{j=1}^{J}\frac{b(z_{j})e^{i\big(z_{j}-\frac{q_{0}^{2}}{z_{j}}\big)x}}{-z_{j}^{2}a'(z_{j})}\cdot N_{1}(x,z_{j})\right].
\end{equation}
As before, since $a(z)\sim -1$ as $z\rightarrow 0$, by the trace formula when $b(\xi)=0$ in the real axis, we have the following symmetry constraint for the reflectionless potentials
\begin{equation}
\label{E:product}
\prod_{j=1}^{J_{1}}\frac{|z_{j}|^{4}}{q_{0}^{4}}\cdot\prod_{i=1}^{J_{2}}\frac{|\widetilde{z}_{i}|^{2}}{q_{0}^{2}}=1,
\end{equation}
where $2J_{1}+J_{2}=J$.

\subsection{Reflectionless potential solution: 1-eigenvalue}
In this subsection, we show an explicit form for the  1 eigenvalue/1-soliton solution without time dependence, by setting $J=1$. Then $J_{1}=0$ and $J_{2}=1$. Now let $\widetilde{z}_{1}=iv_{1}$, we have $\overline{\widetilde{z}}_{1}=-i\frac{q_{0}^{2}}{v_{1}}:=-i\overline{v}_{1}$, where $v_{1}$ is real and positive. Hence, we have
\begin{equation}
N_{1}(x,iv_{1})=\frac{-iq_{+}+\frac{v_{1}\overline{b}(-i\overline{v}_{1})
e^{-\big(\overline{v}_{1}+\frac{q_{0}^{2}}{\overline{v}_{1}}\big)x}}{(v_{1}+\overline{v}_{1})\overline{a}'(-i\overline{v}_{1})}}
{1-\frac{\overline{b}(-i\overline{v}_{1})b(iv_{1})e^{-\big(v_{1}+\overline{v}_{1}+\frac{q_{0}^{2}}{v_{1}}
+\frac{q_{0}^{2}}{\overline{v}_{1}}\big)x}}{(v_{1}+\overline{v}_{1})^{2}\overline{a}'(-i\overline{v}_{1})a'(iv_{1})}},
\end{equation}

\begin{equation}
N_{2}(x,iv_{1})=\frac{iv_{1}+\frac{q_{-}^{*}\cdot v_{1}\overline{b}(-i\overline{v}_{1})
e^{-\big(\overline{v}_{1}+\frac{q_{0}^{2}}{\overline{v}_{1}}\big)x}}{(v_{1}+\overline{v}_{1})\overline{v}_{1}\overline{a}'(-i\overline{v}_{1})}}
{1-\frac{\overline{b}(-i\overline{v}_{1})b(iv_{1})e^{-\big(v_{1}+\overline{v}_{1}+\frac{q_{0}^{2}}{v_{1}}
+\frac{q_{0}^{2}}{\overline{v}_{1}}\big)x}}{(v_{1}+\overline{v}_{1})^{2}\overline{a}'(-i\overline{v}_{1})a'(iv_{1})}},
\end{equation}

\begin{equation}
\overline{M}_{1}(x,-i\overline{v}_{1})=\frac{-iq_{-}-\frac{\overline{v}_{1}\overline{b}(iv_{1})
e^{\big(v_{1}+\frac{q_{0}^{2}}{v_{1}}\big)x}}{(v_{1}+\overline{v}_{1})a'(iv_{1})}}
{1-\frac{\overline{b}(iv_{1})b(-i\overline{v}_{1})e^{\big(v_{1}+\overline{v}_{1}+\frac{q_{0}^{2}}{v_{1}}
+\frac{q_{0}^{2}}{\overline{v}_{1}}\big)x}}{(v_{1}+\overline{v}_{1})^{2}\overline{a}'(-i\overline{v}_{1})a'(iv_{1})}},
\end{equation}

\begin{equation}
\overline{M}_{2}(x,-i\overline{v}_{1})=\frac{-i\overline{v}_{1}+\frac{q_{+}^{*}\cdot\overline{v}_{1}\overline{b}(iv_{1})
e^{\big(v_{1}+\frac{q_{0}^{2}}{v_{1}}\big)x}}{(v_{1}+\overline{v}_{1})v_{1}a'(iv_{1})}}
{1-\frac{\overline{b}(iv_{1})b(-i\overline{v}_{1})e^{\big(v_{1}+\overline{v}_{1}+\frac{q_{0}^{2}}{v_{1}}
+\frac{q_{0}^{2}}{\overline{v}_{1}}\big)x}}{(v_{1}+\overline{v}_{1})^{2}\overline{a}'(-i\overline{v}_{1})a'(iv_{1})}}.
\end{equation}
Therefore,
\begin{equation}
\begin{split}
q(x)&=q_{+}\cdot\left[1+\frac{b(iv_{1})e^{-\big(v_{1}+\frac{q_{0}^{2}}{v_{1}}\big)x}}{v_{1}^{2}\cdot a'(iv_{1})}\cdot N_{1}(x,iv_{1})\right].
\end{split}
\end{equation}
By the trace formula when $b(\xi)=0$ in the real axis, we can get
\begin{equation}
a'(iv_{1})=\frac{1}{i(v_{1}+\overline{v}_{1})}, \ \ \ \overline{a}'(-i\overline{v}_{1})=\frac{i}{v_{1}+\overline{v}_{1}}.
\end{equation}
From (\ref{E:product}), we can deduce $v_{1}=\overline{v}_{1}=q_{0}$. Hence,
\begin{equation}
a'(iv_{1})=a'(iq_{0})=\frac{1}{2iq_{0}}, \ \ \ \overline{a}'(-i\overline{v}_{1})=\overline{a}'(-iq_{0})=\frac{i}{2q_{0}}.
\end{equation}

By choosing $\widetilde{z}_{1}=iq_{0}$ in (\ref{E:b}), we can get $|b(iq_{0})|^{2}=1$. Similarly, we have $|\overline{b}(-iq_{0})|^{2}=1$. Thus, we write $b(iq_{0})=e^{i\theta_{1}}$ and $\overline{b}(-iq_{0})=e^{i\overline{\theta}_{1}}$, where both $\theta_{1}$ and $\overline{\theta}_{1}$ are real. Moreover, we can obtain
$\overline{b}(iq_{0})=-e^{-i\theta_{1}}$ and $b(-iq_{0})=-e^{-i\overline{\theta}_{1}}$.  Moreover, from (\ref{E:scatering}), we can also get
\begin{equation}
e^{i \theta_{1}}=b(iq_{0})=b\left(\frac{q_{0}^{2}}{-iq_{0}}\right)=-\frac{\overline{b}(-iq_{0})}{e^{i(\theta_{+}+\theta_{-})}}
=\frac{e^{i\overline{\theta}_{1}}}{e^{2i\theta_{+}}},
\end{equation}
i.e., $\overline{\theta}_{1}=\theta_{1}+2\theta_{+}+2n\pi$, where $n\in\mathbb{Z}$. Thus we have the reflectionless potential corresponding to one eigenvalue:
\begin{equation}
\label{1refl}
q(x)=q_{0}e^{i\theta_{+}}\cdot
\left[1+ \frac{2e^{i\theta_{1}-2q_{0}x}\cdot(e^{i\theta_{+}}+e^{i(\theta_{1}+2\theta_{+})-2q_{0}x})}
{1-e^{2i(\theta_{1}+\theta_{+})-4q_{0}x}}\right].
\end{equation}
To find the corresponding soliton solution we need the time evolution of the data which we derive next.

\subsection{Time evolution}
Since
\begin{equation}
\label{E:time evolution}
v_{t}=
\left(\begin{array}{cc}
2ik^{2}-iq(x,t)q^{*}(-x,t)& -2kq(x,t)-iq_{x}(x,t)\\
2kq^{*}(-x,t)+iq^{*}_{x}(-x,t)& -2ik^{2}+iq(x,t)q^{*}(-x,t)
\end{array}\right)v,
\end{equation}
then
\begin{equation}
\frac{\partial v_{1}}{\partial t}=(2ik^{2}-iq(x,t)q^{*}(-x,t))v_{1}+(-2kq(x,t)-iq_{x}(x,t))v_{2}
\end{equation}
and
\begin{equation}
\frac{\partial v_{2}}{\partial t}=(2kq^{*}(-x,t)+iq^{*}_{x}(-x,t))v_{1}+(-2ik^{2}+iq(x,t)q^{*}(-x,t))v_{2}.
\end{equation}
Note that
\begin{equation}
q(x,t)\rightarrow q_{\pm}=q_{0}e^{2iq_{0}^{2}t+i\theta_{\pm}},\ \ as \ \ x\rightarrow\pm\infty,
\end{equation}
where $q_{0}>0$, $0\leq \theta_{\pm}<2\pi$, $\theta_{+}-\theta_{-}=\pi$. Thus,
\begin{equation}
\frac{\partial v_{1}}{\partial t}\sim i(2k^{2}+q_{0}^{2})\cdot v_{1}-2kq_{0}e^{2iq_{0}^{2}t+i\theta_{+}}\cdot v_{2}
\end{equation}
and
\begin{equation}
\frac{\partial v_{2}}{\partial t}\sim 2kq_{0}e^{-2iq_{0}^{2}t-i\theta_{-}}\cdot v_{1}-i(2k^{2}+q_{0}^{2})\cdot v_{2}
\end{equation}
as $x\rightarrow +\infty$;
\begin{equation}
\frac{\partial v_{1}}{\partial t}\sim i(2k^{2}+q_{0}^{2})\cdot v_{1}-2kq_{0}e^{2iq_{0}^{2}t+i\theta_{-}}\cdot v_{2}
\end{equation}
and
\begin{equation}
\frac{\partial v_{2}}{\partial t}\sim 2kq_{0}e^{-2iq_{0}^{2}t-i\theta_{+}}\cdot v_{1}-i(2k^{2}+q_{0}^{2})\cdot v_{2}
\end{equation}
as $x\rightarrow -\infty$.
As $x\rightarrow\pm\infty$, the eigenfunctions of the scattering problem asymptotically satisfy
\begin{equation}
\left(\begin{array}{cc}
v_{1}\\
v_{2}
\end{array}\right)_{x}
=
\left(\begin{array}{cc}
-ik& q_{0}e^{2iq_{0}^{2}t+i\theta_{\pm}}\\
-q_{0}e^{-2iq_{0}^{2}t-i\theta_{\mp}}& ik
\end{array}\right)
\left(\begin{array}{cc}
v_{1}\\
v_{2}
\end{array}\right),
\end{equation}
we can get
\begin{equation}
q_{0}e^{2iq_{0}^{2}t+i\theta_{\pm}} \cdot v_{2}\sim\frac{\partial v_{1}}{\partial x}+ik v_{1}
\end{equation}
and
\begin{equation}
q_{0}e^{-2iq_{0}^{2}t-i\theta_{\mp}}\cdot v_{1}\sim -\frac{\partial v_{2}}{\partial x}+ik v_{2}
\end{equation}
as $x\rightarrow\pm\infty$. Hence,
\begin{equation}
\frac{\partial v_{1}}{\partial t}\sim iq_{0}^{2}\cdot v_{1}-2k \frac{\partial v_{1}}{\partial x}
\end{equation}
and
\begin{equation}
\frac{\partial v_{2}}{\partial t}\sim -iq_{0}^{2}\cdot v_{2}-2k \frac{\partial v_{2}}{\partial x}
\end{equation}
as $x\rightarrow\pm\infty$.

Note that the eigenfunctions themselves, whose boundary values at space infinities, are not compatible with this time evolution. Therefore, one introduces time-dependent eigenfunctions
\begin{equation}
\Phi(x,t)=e^{i A_{\infty}t}\cdot \phi(x,t), \ \ \ \overline{\Phi}(x,t)=e^{i B_{\infty}t}\cdot \overline{\phi}(x,t),
\end{equation}
\begin{equation}
\Psi(x,t)=e^{i C_{\infty}t}\cdot \psi(x,t), \ \ \ \overline{\Psi}(x,t)=e^{i D_{\infty}t}\cdot \overline{\psi}(x,t)
\end{equation}
to be solutions of (\ref{E:time evolution}). We have
\begin{equation}
\frac{\partial \Phi_{1}(x,t)}{\partial t}=i A_{\infty} \Phi_{1}(x,t)+e^{iA_{\infty}t} \frac{\partial \phi_{1}(x,t)}{\partial t},
\ \ \ \frac{\partial \Phi_{2}(x,t)}{\partial t}=i A_{\infty} \Phi_{2}(x,t)+e^{iA_{\infty}t} \frac{\partial \phi_{2}(x,t)}{\partial t},
\end{equation}
\begin{equation}
\frac{\partial \overline{\Phi}_{1}(x,t)}{\partial t}=i B_{\infty} \overline{\Phi}_{1}(x,t)+e^{iB_{\infty}t} \frac{\partial \overline{\phi}_{1}(x,t)}{\partial t},
\ \ \ \frac{\partial \overline{\Phi}_{2}(x,t)}{\partial t}=i B_{\infty} \overline{\Phi}_{2}(x,t)+e^{iB_{\infty}t} \frac{\partial \overline{\phi}_{2}(x,t)}{\partial t},
\end{equation}
\begin{equation}
\frac{\partial \Psi_{1}(x,t)}{\partial t}=i C_{\infty} \Psi_{1}(x,t)+e^{iC_{\infty}t} \frac{\partial \psi_{1}(x,t)}{\partial t},
\ \ \ \frac{\partial \Psi_{2}(x,t)}{\partial t}=i C_{\infty} \Psi_{2}(x,t)+e^{iC_{\infty}t} \frac{\partial \psi_{2}(x,t)}{\partial t},
\end{equation}
\begin{equation}
\frac{\partial \overline{\Psi}_{1}(x,t)}{\partial t}=i D_{\infty} \overline{\Psi}_{1}(x,t)+e^{iD_{\infty}t} \frac{\partial \overline{\psi}_{1}(x,t)}{\partial t},
\ \ \ \frac{\partial \overline{\Psi}_{2}(x,t)}{\partial t}=i D_{\infty} \overline{\Psi}_{2}(x,t)+e^{iD_{\infty}t} \frac{\partial \overline{\psi}_{2}(x,t)}{\partial t}.
\end{equation}
Note that
\begin{equation}
\phi(x,t)\sim\left(\begin{array}{cc}
\lambda+k\\
-iq_{0}e^{-2iq_{0}^{2}t-i\theta_{+}}
\end{array}\right)e^{-i\lambda x}, \ \ \
\frac{\partial \phi(x,t)}{\partial t}\sim \left(\begin{array}{cc}
0\\
-2q_{0}^{3}e^{-2iq_{0}^{2}t-i\theta_{+}}
\end{array}\right)e^{-i\lambda x}
\end{equation}
as $x\rightarrow -\infty$. From
\begin{equation}
\frac{\partial \Phi_{1}(x,t)}{\partial t}\sim iq_{0}^{2}\cdot \Phi_{1}(x,t)-2k \frac{\partial \Phi_{1}(x,t)}{\partial x}
=i A_{\infty} \Phi_{1}(x,t)+e^{iA_{\infty}t} \frac{\partial \phi_{1}(x,t)}{\partial t},
\end{equation}
as $x\rightarrow -\infty$, we can deduce
\begin{equation}
A_{\infty}=q_{0}^{2}+2\lambda k.
\end{equation}
Similarly, we have
\begin{equation}
B_{\infty}=-A_{\infty}=-q_{0}^{2}-2\lambda k,
\end{equation}
\begin{equation}
C_{\infty}=-A_{\infty}=-q_{0}^{2}-2\lambda k,
\end{equation}
\begin{equation}
D_{\infty}=A_{\infty}=q_{0}^{2}+2\lambda k.
\end{equation}
Then
\begin{equation}
\frac{\partial \phi}{\partial t}=\left(\begin{array}{cc}
2ik^{2}-iq(x,t)q^{*}(-x,t)-i(q_{0}^{2}+2\lambda k)& -2kq(x,t)-iq_{x}(x,t)\\
2kq^{*}(-x,t)+iq^{*}_{x}(-x,t)& -2ik^{2}+iq(x,t)q^{*}(-x,t)-i(q_{0}^{2}+2\lambda k)
\end{array}\right)\phi,
\end{equation}
\begin{equation}
\frac{\partial \overline{\phi}}{\partial t}=\left(\begin{array}{cc}
2ik^{2}-iq(x,t)q^{*}(-x,t)+i(q_{0}^{2}+2\lambda k)& -2kq(x,t)-iq_{x}(x,t)\\
2kq^{*}(-x,t)+iq^{*}_{x}(-x,t)& -2ik^{2}+iq(x,t)q^{*}(-x,t)+i(q_{0}^{2}+2\lambda k)
\end{array}\right)\overline{\phi},
\end{equation}
\begin{equation}
\frac{\partial \psi}{\partial t}=\left(\begin{array}{cc}
2ik^{2}-iq(x,t)q^{*}(-x,t)+i(q_{0}^{2}+2\lambda k)& -2kq(x,t)-iq_{x}(x,t)\\
2kq^{*}(-x,t)+iq^{*}_{x}(-x,t)& -2ik^{2}+iq(x,t)q^{*}(-x,t)+i(q_{0}^{2}+2\lambda k)
\end{array}\right)\psi,
\end{equation}
\begin{equation}
\frac{\partial \overline{\psi}}{\partial t}=\left(\begin{array}{cc}
2ik^{2}-iq(x,t)q^{*}(-x,t)-i(q_{0}^{2}+2\lambda k)& -2kq(x,t)-iq_{x}(x,t)\\
2kq^{*}(-x,t)+iq^{*}_{x}(-x,t)& -2ik^{2}+iq(x,t)q^{*}(-x,t)-i(q_{0}^{2}+2\lambda k)
\end{array}\right)\overline{\psi}.
\end{equation}
Note that
\begin{equation}
\phi(x,t)=b(t)\psi(x,t)+a(t)\overline{\psi}(x,t)
\end{equation}
and
\begin{equation}
\overline{\phi}(x,t)=\overline{a}(t)\psi(x,t)+\overline{b}(t)\overline{\psi}(x,t),
\end{equation}
then
\begin{equation}
\begin{split}
&\left(\begin{array}{cc}
2ik^{2}-iq(x,t)q^{*}(-x,t)-i(q_{0}^{2}+2\lambda k)& -2kq(x,t)-iq_{x}(x,t)\\
2kq^{*}(-x,t)+iq^{*}_{x}(-x,t)& -2ik^{2}+iq(x,t)q^{*}(-x,t)-i(q_{0}^{2}+2\lambda k)
\end{array}\right)\\
&\cdot\left(\begin{array}{cc}
b(t)\psi_{1}(x,t)+a(t)\overline{\psi}_{1}(x,t)
\\
b(t)\psi_{2}(x,t)+a(t)\overline{\psi}_{2}(x,t)
\end{array}\right)=\frac{\partial b(t)}{\partial t}\cdot\left(\begin{array}{cc}
\psi_{1}(x,t)
\\
\psi_{2}(x,t)
\end{array}\right)\\
&+b(t)\left(\begin{array}{cc}
2ik^{2}-iq(x,t)q^{*}(-x,t)+i(q_{0}^{2}+2\lambda k)& -2kq(x,t)-iq_{x}(x,t)\\
2kq^{*}(-x,t)+iq^{*}_{x}(-x,t)& -2ik^{2}+iq(x,t)q^{*}(-x,t)+i(q_{0}^{2}+2\lambda k)
\end{array}\right)\left(\begin{array}{cc}
\psi_{1}(x,t)
\\
\psi_{2}(x,t)
\end{array}\right)\\
&+\frac{\partial a(t)}{\partial t}\cdot\left(\begin{array}{cc}
\overline{\psi}_{1}(x,t)
\\
\overline{\psi}_{2}(x,t)
\end{array}\right)+a(t)\cdot\\
&\left(\begin{array}{cc}
2ik^{2}-iq(x,t)q^{*}(-x,t)-i(q_{0}^{2}+2\lambda k)& -2kq(x,t)-iq_{x}(x,t)\\
2kq^{*}(-x,t)+iq^{*}_{x}(-x,t)& -2ik^{2}+iq(x,t)q^{*}(-x,t)-i(q_{0}^{2}+2\lambda k)
\end{array}\right)\left(\begin{array}{cc}
\overline{\psi}_{1}(x,t)
\\
\overline{\psi}_{2}(x,t)
\end{array}\right),
\end{split}
\end{equation}
i.e.,
\begin{equation}
\frac{\partial b(t)}{\partial t}\cdot\left(\begin{array}{cc}
\psi_{1}(x,t)
\\
\psi_{2}(x,t)
\end{array}\right)+b(t)\left(\begin{array}{cc}
2iA_{\infty}& 0\\
0& 2iA_{\infty}
\end{array}\right)\left(\begin{array}{cc}
\psi_{1}(x,t)
\\
\psi_{2}(x,t)
\end{array}\right)+\frac{\partial a(t)}{\partial t}\left(\begin{array}{cc}
\overline{\psi}_{1}(x,t)
\\
\overline{\psi}_{2}(x,t)
\end{array}\right)=0.
\end{equation}
Taking $x\rightarrow +\infty$, since $\psi(x,t)$ and $\overline{\psi}(x,t)$ are linearly independent, so
\begin{equation}
\frac{\partial a(t)}{\partial t}=0, \ \ \ \frac{\partial b(t)}{\partial t}=-2iA_{\infty}b(t).
\end{equation}
Similarly, we can get
\begin{equation}
\frac{\partial \overline{a}(t)}{\partial t}=0, \ \ \ \frac{\partial \overline{b}(t)}{\partial t}=2iA_{\infty}\overline{b}(t).
\end{equation}
Therefore, both $a(t)$ and $\overline{a}(t)$ are time independent, and
\begin{equation}
b(iq_{0}, t)=b(iq_{0}, 0)e^{-2i(q_{0}^{2}+2\lambda k)t}=e^{i\theta_{1}}\cdot e^{-2iq_{0}^{2}t}
\end{equation}
and
\begin{equation}
\overline{b}(-iq_{0}, t)=\overline{b}(-iq_{0}, 0)e^{2i(q_{0}^{2}+2\lambda k)t}=e^{i\overline{\theta}_{1}}\cdot e^{2iq_{0}^{2}t}
=e^{i(\theta_{1}+2\theta_{+})}\cdot e^{2iq_{0}^{2}t}.
\end{equation}
Putting all the above into the formula we had for the reflectionless potential (\ref{1refl}) we obtain the following one soliton solution
\begin{equation}
\begin{split}
q(x,t)&=q_{0}e^{2iq_{0}^{2}t+i\theta_{+}}\cdot
\left[1+ \frac{2e^{i\theta_{1}-2q_{0}x}\cdot(e^{i\theta_{+}}+e^{i(\theta_{1}+2\theta_{+})-2q_{0}x})}
{1-e^{2i(\theta_{1}+\theta_{+})-4q_{0}x}}\right]\\
&=q_{0}\cdot e^{i(2q_{0}^{2}t+\theta_{+})}\cdot \coth\left[q_{0}x-\frac{i}{2}(\theta_{+}+\theta_{1})\right].
\end{split}
\end{equation}
The above solution can be transformed to the following form
\begin{equation}
q(x,t)=q_{0}\cdot e^{i(2q_{0}^{2}t+\theta_{+}- \pi)}\cdot \tanh \left[q_{0}x- i\theta_* 
\right]
\end{equation}
where $\theta_*=\frac{1}{2}(\theta_{+}+\theta_{1}+\pi)$. This is similar to the well known black soliton of the standard integrable NLS equation with only a complex phase shift difference.
From this solution we see that there is a singularity only when $\theta_{+}+\theta_{1}=0, \pm 2\pi$ which puts a restriction on the otherwise arbitrary phases $\theta_{+},\theta_{1}$. The magnitude of the solution is stationary. In Fig. \ref{FigA} and Fig. \ref{FigA'} below we give typical one soliton solutions. We see in Fig. \ref{FigA} the magnitude rises from a constant background, whereas in Fig. \ref{FigA'} the magnitude dips from the constant background.

\begin{figure}[h]
\begin{tabular}{cc}
\includegraphics[width=0.5\textwidth]{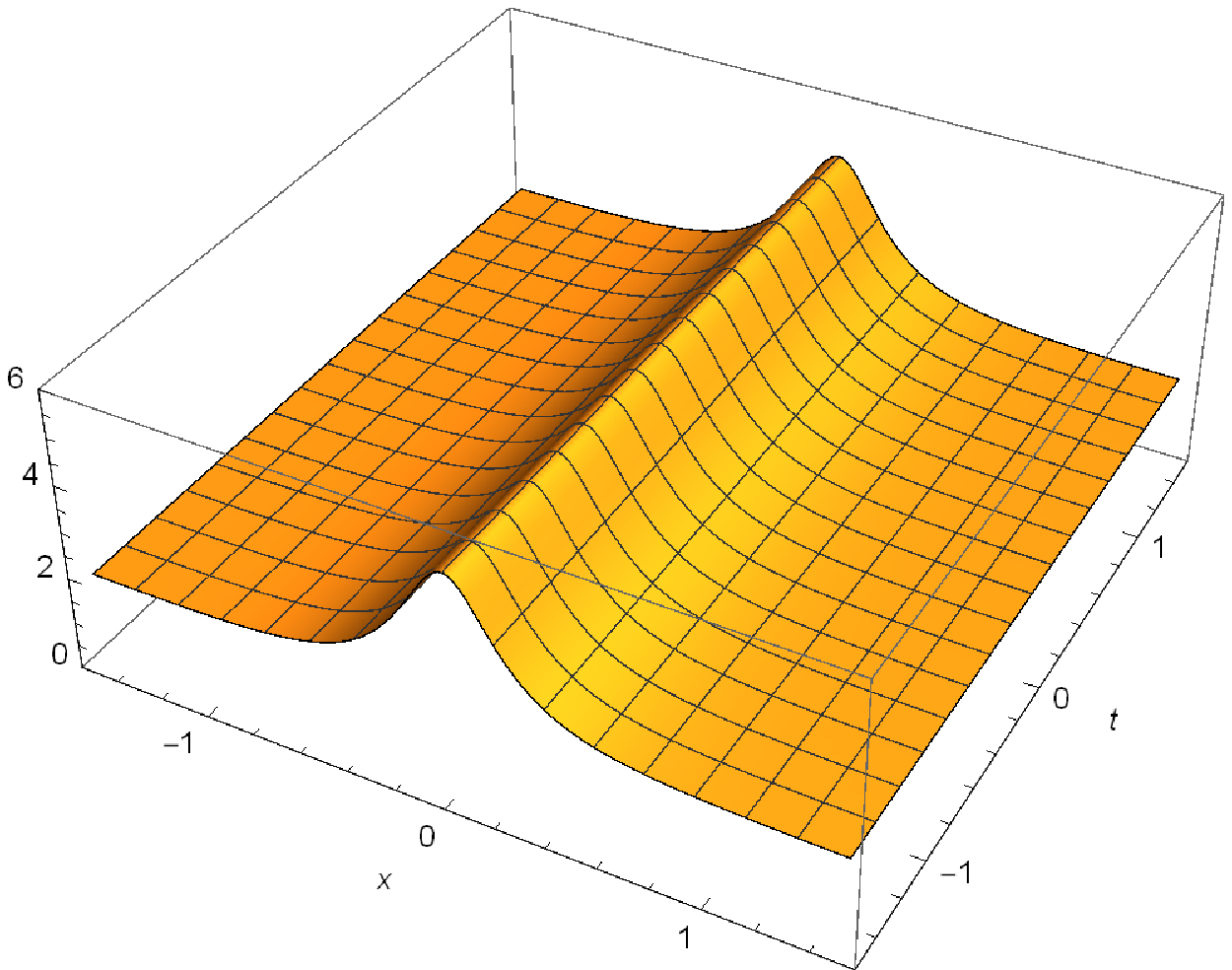}&
\includegraphics[width=0.5\textwidth]{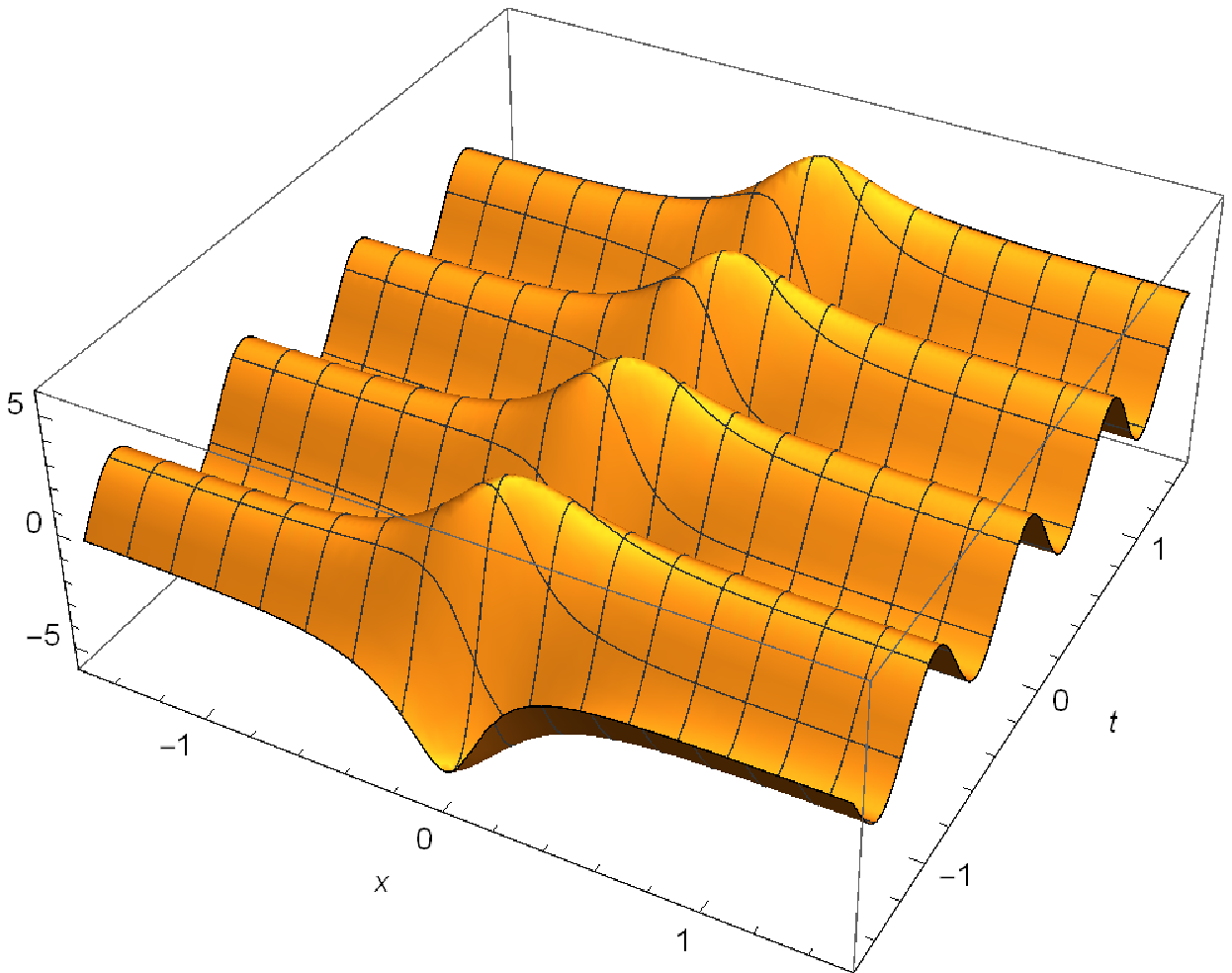}\\
(a) & (b)
\end{tabular}
\caption{(a) The amplitude of $q(x,t)$ with $\theta_{+}=\frac{\pi}{4}$, $\theta_{1}=0$ and $q_{0}=2$.  (b) The real part of $q(x,t)$ with $\theta_{+}=\frac{\pi}{4}$, $\theta_{1}=0$ and $q_{0}=2$.}
\label{FigA}
\end{figure}

\begin{figure}[h]
\begin{tabular}{cc}
\includegraphics[width=0.5\textwidth]{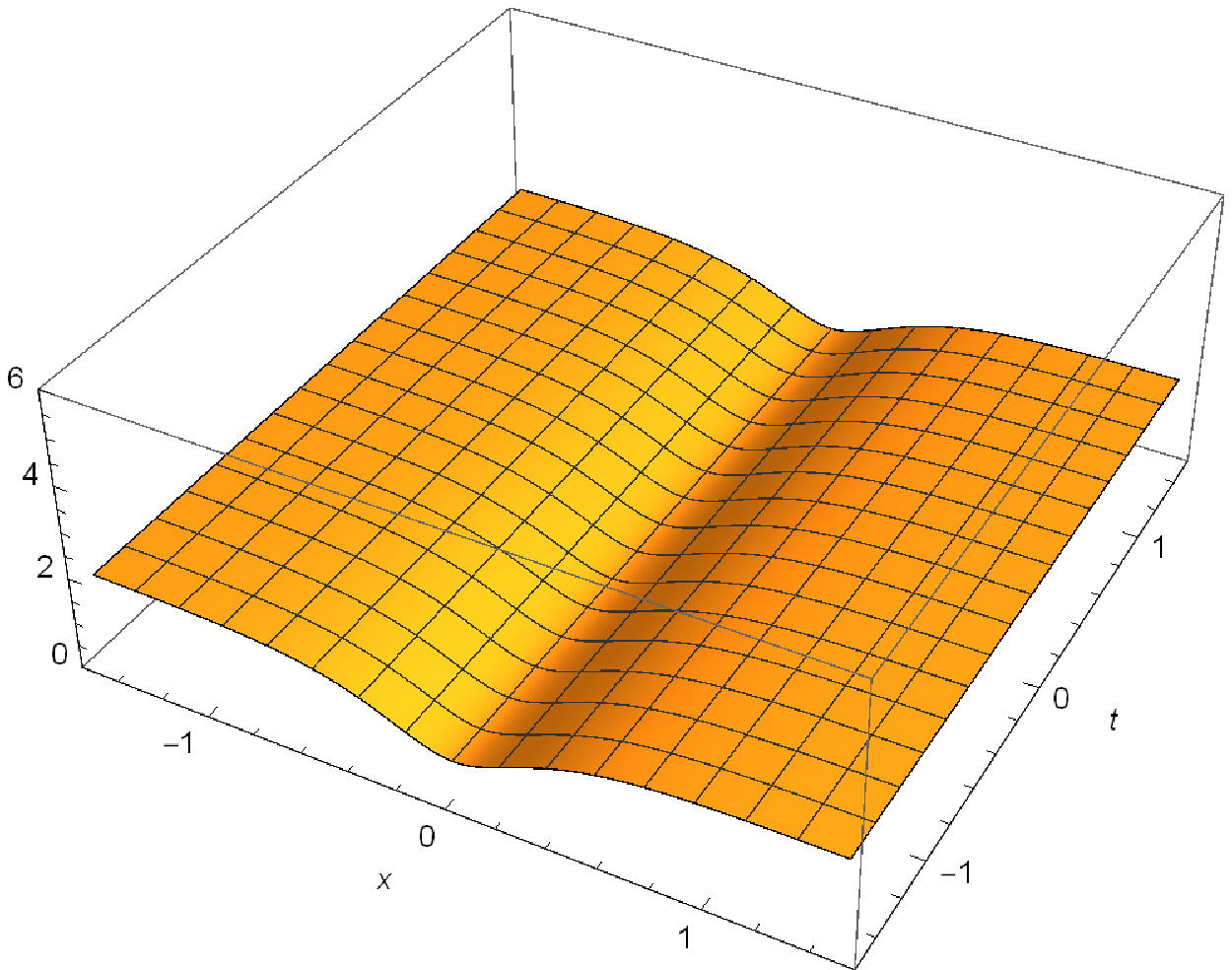}&
\includegraphics[width=0.5\textwidth]{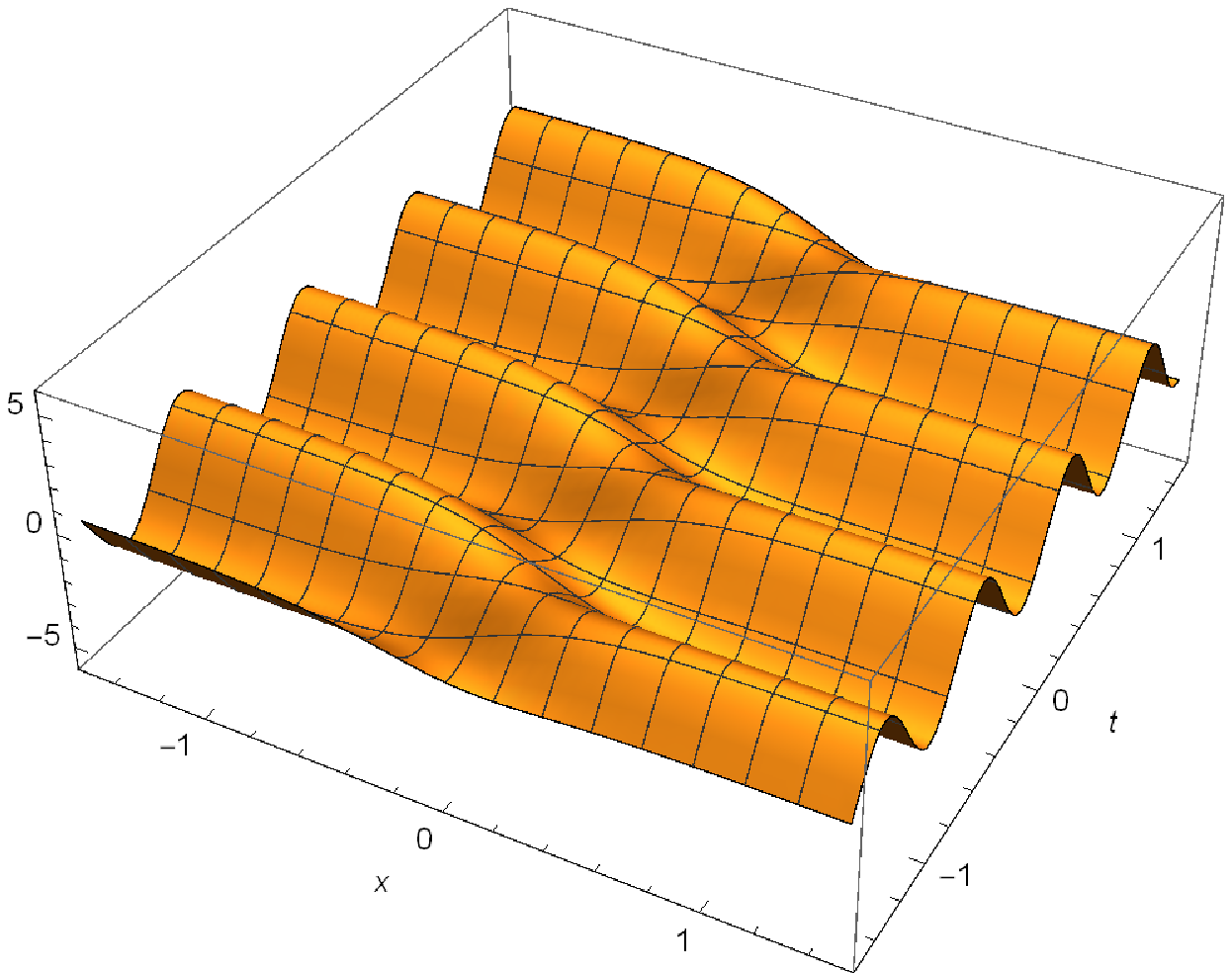}\\
(a) & (b)
\end{tabular}
\caption{(a) The amplitude of $q(x,t)$ with $\theta_{+}=\frac{5\pi}{4}$, $\theta_{1}=0$ and $q_{0}=2$.  (b) The real part of $q(x,t)$ with $\theta_{+}=\frac{5\pi}{4}$, $\theta_{1}=0$ and $q_{0}=2$.}
\label{FigA'}
\end{figure}

\section{The case of $\sigma=-1$ with $\theta_{+}-\theta_{-}=0$}
In this section we consider the nonzero boundary conditions (NZBCs) given in (\ref{NZBCA}) with$\sigma=-1$ , $\theta_{+}=\theta_{-}:=\theta$
\begin{equation}
q(x,t)\rightarrow q_{\pm}(t):=\widetilde{q}(t)=q_{0}e^{-2iq_{0}^{2}t+i\theta},\ \ as \ \ x\rightarrow\pm\infty,
\end{equation}
where $q_{0}>0$, $0\leq \theta<2\pi$.

\subsection{Direct scattering}
Equation
\begin{equation}
iq_{t}(x,t)=q_{xx}(x,t)+2q^{2}(x,t)q^{*}(-x,t)
\end{equation}
can be associated to the so-called ZS-AKNS scattering problem
\begin{equation}
v_{x}=(ikJ+Q)v, \ \ \ x\in \mathbb{R},
\end{equation}
where
\begin{equation}J=
\left(\begin{matrix}
-1& 0\\
0& 1
\end{matrix}\right),\ \ \
Q=
\left(\begin{matrix}
0& q(x,t)\\
-q^{*}(-x,t)& 0
\end{matrix}\right),
\end{equation}
$q(x,t)$ is the potential and $k$ is a complex spectral parameter.

As $x\rightarrow\pm\infty$, the eigenfunctions of the scattering problem asymptotically satisfy
\begin{equation}
\left(\begin{array}{cc}
v_{1}\\
v_{2}
\end{array}\right)_{x}
=
\left(\begin{array}{cc}
-ik& q_{0}e^{-2iq_{0}^{2}t+i\theta}\\
-q_{0}e^{2iq_{0}^{2}t-i\theta}& ik
\end{array}\right)
\left(\begin{array}{cc}
v_{1}\\
v_{2}
\end{array}\right),
\end{equation}
i.e.,
\begin{equation}
v_{x}=(ikJ+Q_{\pm}(t))v, \ \ \ Q_{\pm}(t)=\left(\begin{array}{cc}
0& q_{0}e^{-2iq_{0}^{2}t+i\theta}\\
-q_{0}e^{2iq_{0}^{2}t-i\theta}& 0
\end{array}\right),
\end{equation}
or
\begin{equation}
\frac{\partial^{2} v_{j}}{\partial x^{2}}=-(k^{2}+q_{0}^{2})v_{j}, \ \ \ j=1,2.
\end{equation}
Each of the two equations has two linearly independent solutions $e^{i\lambda x}$ and $e^{-i\lambda x}$ as $|x|\rightarrow\infty$, where
$\lambda=\sqrt{k^{2}+q_{0}^{2}}$. We introduce the local polar coordinates
\begin{equation}
k-iq_{0}=r_{1}e^{i\theta_{1}}, \ \ \ -\frac{\pi}{2}\leq\theta_{1}<\frac{3\pi}{2},
\end{equation}
\begin{equation}
k+iq_{0}=r_{2}e^{i\theta_{2}}, \ \ \ -\frac{\pi}{2}\leq\theta_{2}<\frac{3\pi}{2},
\end{equation}
where $r_{1}=|k-iq_{0}|$ and $r_{2}=|k+iq_{0}|$. One can write $\lambda(k)=(r_{1}r_{2})^{\frac{1}{2}}\cdot e^{i\cdot \frac{\theta_{1}+\theta_{2}}{2}+im\pi}$, and $m=0,1$ respectively on sheets I ($\mathbb{K}_{1}$) and II ($\mathbb{K}_{2}$). The variable $k$ is then thought of as belonging to a Riemann surface $\mathbb{K}$ consisting of sheets I and II with both coinciding with the complex plane cut along $\Sigma:=[-iq_{0}, iq_{0}]$
with its edges glued in such a way that $\lambda(k)$ is continuous through the cut. Along the real $k$ axis, we have $\lambda(k)=\pm sign(k) \sqrt{k^{2}+q_{0}^{2}}$, where the plus/minus signs apply, respectively, on sheet I and sheet II of the Riemann surface, and where the square root sign denotes the principal branch of the real-valued square root function. We denote $\mathbb{C}^{\pm}$ the open upper/lower complex half planes, and $\mathbb{K}^{\pm}$ the open upper/lower complex half planes cut along $\Sigma$. Then $\lambda$ provides one-to-one correspondences between the following sets:
1. $k\in\mathbb{K}^{+}=\mathbb{C}^{+}\setminus (0, iq_{0}]$ and $\lambda\in\mathbb{C}^{+}$,
~\\2. $k\in \partial\mathbb{K}^{+}=\mathbb{R}\cup\{is-0^{+}:0<s<q_{0}\}\cup\{iq_{0}\}\cup\{is+0^{+}:0<s<q_{0}\}$ and $\lambda\in\mathbb{R}$,
~\\3. $k\in\mathbb{K}^{-}=\mathbb{C}^{-}\setminus [-iq_{0}, 0)$ and $\lambda\in\mathbb{C}^{-}$,
~\\4. $k\in \partial\mathbb{K}^{-}=\mathbb{R}\cup\{is-0^{+}:-q_{0}<s<0\}\cup\{-iq_{0}\}\cup\{is+0^{+}:-q_{0}<s<0\}$ and $\lambda\in\mathbb{R}$.


Moreover, $\lambda^{\pm}(k)$ will denote the boundary values taken by $\lambda(k)$ for $k\in\Sigma$ from the right/left edge of the cut, with $\lambda^{\pm}(k)=\pm\sqrt{q_{0}^{2}-|k|^{2}}$, $k=is\pm 0^{+}$, $|s|<q_{0}$ on the right/left edge of the cut.

See also Figs. \ref{fig3}, \ref{fig4}. The contours along the real axis encircling $\pm iq_0$ are used in a Riemann-Hilbert formulation.


\begin{figure}
\begin{tabular*}{\textwidth}{@{}cc@{}}
\begin{minipage}{\dimexpr0.5\textwidth-2\tabcolsep}

\begin{tikzpicture}

\draw [fill=gray!50!white] (0,0) rectangle (-2.5,2.5);
\draw [fill=gray!50!white] (2.5,0) rectangle (-2.5,2.5);
\draw (2.5,-2.5) rectangle (-2.5,2.5);


\draw[green] (-2.5,.25) -- (-.4,.25);
\draw[green]   (-.4,.25) --  (-.4,1.5) ;
\draw[-stealth] (-.4,.99) --  (-.4,1) ;
\draw[green]  (.4,1.5) --  (.4,.25) ;
\draw[-stealth] (.4,.91) --  (.4,.9) ;
\draw[green]  (.4,.25) -- (2.5,.25);

\draw[green]  (-.4,1.5) arc (180:0:.4);	

\draw  [-stealth] (0,1)  (-1.51,.25) --(-1.5,.25) ;
\draw  [-stealth] (0,1)  (1.49,.25) -- (1.5,.25);

\draw[green]  (-2.5,-.25) -- (-.4,-.25);
\draw[green]  (-.4,-.25) --  (-.4,-1.5) ;
\draw [-stealth] (-.4,-.99) --  (-.4,-1) ;
\draw[green]  (.4,-1.5) --  (.4,-.25) ;
\draw [-stealth] (.4,-.91) --  (.4,-.9) ;
\draw[green]  (0.4,-.25) -- (2.5,-.25);

\draw[green]  (-.4,-1.5) arc (180:360:.4);	

\draw  [-stealth] (4,1) (-1.51,-.25) -- (-1.5,-.25);
\draw  [-stealth] (0,1)  (1.49,-.25) -- (1.5,-.25);

\draw[red, -stealth] (-2.5,0) -- (2.5,0);
\draw[-stealth] (0,-2.5) -- (0,2.5);
\draw[red] (0,-1.5) -- (0,1.5);

\draw (2.8,0) node[] {\tiny Re $k$};
\draw (0,2.8) node[] {\tiny Im $k$};
\draw (-.2,1.5) node[] {\tiny $iq_0$};
\draw (-.17,-1.6) node[] {\tiny -$iq_0$};
\draw (1.7,1.6) node[] {\tiny $k_n$};
\draw (1.7,-1.6) node[] {\tiny $k_n^*$};
\draw (.3,-.12) node[] {\tiny $0^+$};
\draw (-.25,-.14) node[] {\tiny $0^-$};

\draw (-2,3) node[] {\tiny Sheet I:};

\fill[blue] (0,1.5)  circle (1pt);
\fill[blue] (0,-1.5)  circle (1pt);
\fill[blue] (1.5,1.5)  circle (1pt);
\fill[blue] (1.5,-1.5)  circle (1pt);

\fill[blue] (.1,0)  circle (1pt);
\fill[blue] (-.1,0)  circle (1pt);

\end{tikzpicture}
\captionsetup{font=footnotesize}
\captionof{figure}{The first sheet of the Riemann surface, showing the branch cut (red), the contour (green) and the regions when $\Im \lambda>0$ (grey) and $\Im \lambda<0$ (white), where $\lambda(k)=(r_{1}r_{2})^{\frac{1}{2}}e^{i\frac{\theta_{1}+\theta_{2}}{2}}$.\label{fig3}}
\end{minipage}%
&
\begin{minipage}{\dimexpr0.5\textwidth-2\tabcolsep}

\begin{tikzpicture}

\draw [fill=gray!50!white] (0,0) rectangle (-2.5,-2.5);
\draw [fill=gray!50!white] (2.5,0) rectangle (-2.5,-2.5);
\draw (2.5,-2.5) rectangle (-2.5,2.5);

\draw[green] (-2.5,.25) -- (-.4,.25);
\draw[green]  (-.4,.25) --  (-.4,1.5) ;
\draw [-stealth]  (-.4,1.01) -- (-.4,1)  ;
\draw[green] (.4,1.5) --  (.4,.25) ;
\draw [-stealth] (.4,1.09)--  (.4,1.1)  ;
\draw[green] (.4,.25) -- (2.5,.25);

\draw[green] (-.4,1.5) arc (180:0:.4);	

\draw  [-stealth] (0,1)   (-1.49,.25)-- (-1.5,.25);
\draw  [-stealth] (0,1)   (1.5,.25)-- (1.49,.25);

\draw[green] (-2.5,-.25) -- (-.4,-.25);
\draw[green] (-.4,-.25) --  (-.4,-1.5) ;
\draw [-stealth] (-.4,-1) --  (-.4,-.99) ;
\draw[green] (.4,-1.5) --  (.4,-.25) ;
\draw [-stealth] (.4,-1.09) --  (.4,-1.1) ;
\draw[green] (0.4,-.25) -- (2.5,-.25);

\draw[green] (-.4,-1.5) arc (180:360:.4);	

\draw  [-stealth] (4,1) (-1.49,-.25) -- (-1.5,-.25);
\draw  [-stealth] (0,1)  (1.5,-.25) -- (1.49,-.25);

\draw[red,-stealth] (-2.5,0) -- (2.5,0);
\draw[-stealth] (0,-2.5) -- (0,2.5);
\draw[red] (0,-1.5) -- (0,1.5);

\draw (2.8,0) node[] {\tiny Re $k$};
\draw (0,2.8) node[] {\tiny Im $k$};
\draw (-.2,1.5) node[] {\tiny $iq_0$};
\draw (-.17,-1.6) node[] {\tiny -$iq_0$};
\draw (1.7,1.6) node[] {\tiny $k_n$};
\draw (1.7,-1.6) node[] {\tiny $k_n^*$};
\draw (.3,-.12) node[] {\tiny $0^+$};
\draw (-.25,-.14) node[] {\tiny $0^-$};

\draw (-2,3) node[] {\tiny Sheet II:};

\fill[blue] (0,1.5)  circle (1pt);
\fill[blue] (0,-1.5)  circle (1pt);
\fill[blue] (1.5,1.5)  circle (1pt);
\fill[blue] (1.5,-1.5)  circle (1pt);

\fill[blue] (.1,0)  circle (1pt);
\fill[blue] (-.1,0)  circle (1pt);

\end{tikzpicture}
\captionsetup{font=footnotesize}
\captionof{figure}{The second sheet of the Riemann surface, showing the branch cut (red), the contour (green) and the regions when $\Im \lambda<0$ (grey) and $\Im \lambda>0$ (white), where $\lambda(k)=-(r_{1}r_{2})^{\frac{1}{2}}e^{i\frac{\theta_{1}+\theta_{2}}{2}}$. \label{fig4}}
\end{minipage}

\end{tabular*}%

\captionof{figure}{The two sheets of the Riemann surface $\mathbb{K}$.}
\end{figure}

\begin{remark}
Topologically, the Riemann surface $\mathbb{K}$ is equivalent to a surface with genus $0$.
\end{remark}

%

The eigenfunctions are defined by the following boundary conditions
\begin{equation}
\phi(x,k)\sim w e^{-i\lambda x}, \ \ \ \overline{\phi}(x,k)\sim \overline{w}e^{i\lambda x}
\end{equation}
as $x\rightarrow-\infty$,
\begin{equation}
\psi(x,k)\sim v e^{i\lambda x}, \ \ \ \overline{\psi}(x,k)\sim \overline{v}e^{-i\lambda x}
\end{equation}
as $x\rightarrow +\infty$, where
\begin{equation}
\label{E:boundary conditions 1'}
w=\left(\begin{array}{cc}
\lambda+k\\
-i\widetilde{q}^{*}
\end{array}\right), \ \ \
\overline{w}=\left(\begin{array}{cc}
-i\widetilde{q}\\
\lambda+k
\end{array}\right),
\end{equation}
\begin{equation}
\label{E:boundary conditions 2'}
v=\left(\begin{array}{cc}
-i\widetilde{q}\\
\lambda+k
\end{array}\right), \ \ \
\overline{v}=
\left(\begin{array}{cc}
\lambda+k\\
-i\widetilde{q}^{*}
\end{array}\right)
\end{equation}
satisfy the boundary conditions, but they are not unique. Hence, we define the bounded eigenfunctions as follows:
\begin{equation}
\label{E:definition 1'}
M(x,k)=e^{i\lambda x}\phi(x,k), \ \ \ \overline{M}(x,k)=e^{-i\lambda x}\overline{\phi}(x,k),
\end{equation}
\begin{equation}
\label{E:definition 2'}
N(x,k)=e^{-i\lambda x}\psi(x,k), \ \ \ \overline{N}(x,k)=e^{i\lambda x}\overline{\psi}(x,k).
\end{equation}
The eigenfunctions can be represented by means of the following integral equations
\begin{equation}
M(x,k)=
\left(\begin{array}{cc}
\lambda+k\\
-i\widetilde{q}^{*}
\end{array}\right)
+\int_{-\infty}^{+\infty}G_{-}(x-x',k)((Q-Q_{-})M)(x',k)dx',
\end{equation}
\begin{equation}
\overline{M}(x,k)=
\left(\begin{array}{cc}
-i\widetilde{q}\\
\lambda+k
\end{array}\right)
+\int_{-\infty}^{+\infty}\overline{G}_{-}(x-x',k)((Q-Q_{-})M)(x',k)dx',
\end{equation}
\begin{equation}
N(x,k)=
\left(\begin{array}{cc}
-i\widetilde{q}\\
\lambda+k
\end{array}\right)
+\int_{-\infty}^{+\infty}G_{+}(x-x',k)((Q-Q_{+})M)(x',k)dx',
\end{equation}
\begin{equation}
\overline{N}(x,k)=\left(\begin{array}{cc}
\lambda+k\\
-i\widetilde{q}^{*}
\end{array}\right)
+\int_{-\infty}^{+\infty}\overline{G}_{+}(x-x',k)((Q-Q_{+})M)(x',k)dx',
\end{equation}
where
\begin{equation}
G_{-}(x,k)=\frac{\theta(x)}{2\lambda}[(1+e^{2i\lambda x})\lambda I-i(e^{2i\lambda x}-1)(ikJ+Q_{-})],
\end{equation}
\begin{equation}
\overline{G}_{-}(x,k)=\frac{\theta(x)}{2\lambda}[(1+e^{-2i\lambda x})\lambda I+i(e^{-2i\lambda x}-1)(ikJ+Q_{-})],
\end{equation}
\begin{equation}
G_{+}(x,k)=-\frac{\theta(-x)}{2\lambda}[(1+e^{-2i\lambda x})\lambda I+i(e^{-2i\lambda x}-1)(ikJ+Q_{+})],
\end{equation}
\begin{equation}
\overline{G}_{+}(x,k)=-\frac{\theta(-x)}{2\lambda}[(1+e^{2i\lambda x})\lambda I-i(e^{2i\lambda x}-1)(ikJ+Q_{+})].
\end{equation}
\begin{definition}
We say $f\in L^{1,1}(\mathbb{R})$ if $\int_{-\infty}^{+\infty}|f(x)|\cdot(1+|x|)dx<\infty$.
\end{definition}
Then, using similar methods as in the prior case ($\sigma=-1, \Delta \theta = \pi$) we find the following result (see also \cite{Demontis2})


\begin{theorem}
\label{Thm4.2}
Suppose the entries of $Q-Q_{\pm}$ belong to $L^{1,1}(\mathbb{R})$, then for each $x\in\mathbb{R}$, the eigenfunctions $M(x,k)$ and $N(x,k)$ are continuous for $k\in \overline{\mathbb{K}^{+}}\cup \partial \overline{\mathbb{K}^{-}}$ and analytic for $k\in \mathbb{K}^{+}$, $\overline{M}(x,k)$ and $\overline{N}(x,k)$ are continuous for $k\in \overline{\mathbb{K}^{-}}\cup \partial \overline{\mathbb{K}^{+}}$ and analytic for $k\in \mathbb{K}^{-}$.
\end{theorem}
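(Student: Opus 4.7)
The plan is to mirror the proof of Theorem \ref{T:1}, adapting it to the new branch structure in which $\lambda(k)=\sqrt{k^{2}+q_{0}^{2}}$ has its branch points at $\pm iq_{0}$ rather than at $\pm q_{0}$, so that the cut lies on the segment $\Sigma=[-iq_{0},iq_{0}]$ instead of on the real axis. First I would introduce the spectral projectors associated with the asymptotic matrix $ikJ+Q_{\pm}$:
\begin{equation*}
P^{\pm}_{-i\lambda}(k)=\frac{1}{2\lambda}\begin{pmatrix}\lambda+k & i\widetilde{q}\\ -i\widetilde{q}^{*} & \lambda-k\end{pmatrix},\qquad
P^{\pm}_{i\lambda}(k)=\frac{1}{2\lambda}\begin{pmatrix}\lambda-k & -i\widetilde{q}\\ i\widetilde{q}^{*} & \lambda+k\end{pmatrix},
\end{equation*}
which are idempotent, complementary, and diagonalize $ikJ+Q_{\pm}$ with eigenvalues $\mp i\lambda$. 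As in the proof of Theorem \ref{T:1}, I would rewrite the Green's function integral equation for $M(x,k)$ as
\begin{equation*}
M(x,k)=\begin{pmatrix}\lambda+k\\ -i\widetilde{q}^{*}\end{pmatrix}+\int_{-\infty}^{x}\bigl[P^{-}_{-i\lambda}+e^{2i\lambda(x-x')}P^{-}_{i\lambda}\bigr]\bigl((Q-Q_{-})M\bigr)(x',k)\,dx',
\end{equation*}
and similarly for $N$, $\overline{M}$, $\overline{N}$. The point is that for $k\in\overline{\mathbb{K}^{+}}$ one has $\Im\lambda\ge 0$, so the exponential factor is bounded by $1$ when $x'\le x$, which keeps the kernel under control.

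Next I would estimate the Neumann iterates. The projectors satisfy the bound
\begin{equation*}
\|P^{\pm}_{\pm i\lambda}(k)\|_{2}=\frac{\sqrt{2|\lambda|^{2}+2|k|^{2}+2q_{0}^{2}}}{2|\lambda|},
\end{equation*}
which is locally bounded on any compact subset of $\overline{\mathbb{K}^{+}}\setminus\{\pm iq_{0}\}$ and of $\partial\overline{\mathbb{K}^{-}}\setminus\{\pm iq_{0}\}$. On such a compact set, the Neumann series
\begin{equation*}
M(x,k)=\sum_{n\ge 0}M^{(n)}(x,k),\qquad \|M^{(n+1)}(x,k)\|\le C\int_{-\infty}^{x}\|Q(x')-Q_{-}\|\,\|M^{(n)}(x',k)\|\,dx',
\end{equation*}
converges uniformly by the usual telescoping-integral estimate
\begin{equation*}
\frac{1}{n!}\Bigl[\int_{-\infty}^{x}\|Q(\xi)-Q_{-}\|\,d\xi\Bigr]^{n},
\end{equation*}
which is finite since $L^{1,1}(\mathbb{R})\subset L^{1}(\mathbb{R})$. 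A uniformly convergent series of functions analytic in $\mathbb{K}^{+}$ is again analytic in $\mathbb{K}^{+}$, giving the analyticity claim, and yields continuity on $\overline{\mathbb{K}^{+}}\cup\partial\overline{\mathbb{K}^{-}}$ away from $\pm iq_{0}$.

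The main obstacle is the extension of continuity to the branch points $k=\pm iq_{0}$, where $\lambda\to 0$ and the projectors become singular. Following the device used in Theorem \ref{T:1}, I would rewrite the kernel as
\begin{equation*}
P^{-}_{-i\lambda}+e^{2i\lambda(x-x')}P^{-}_{i\lambda}=I_{2}+\bigl[e^{2i\lambda(x-x')}-1\bigr]P^{-}_{i\lambda}.
\end{equation*}
The combination $[e^{2i\lambda(x-x')}-1]P^{-}_{i\lambda}$ remains bounded as $\lambda\to 0$, because the factor $e^{2i\lambda(x-x')}-1=O(\lambda)$ cancels the $1/\lambda$ singularity in the projector. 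A direct estimation gives
\begin{equation*}
\bigl\|I_{2}+[e^{2i\lambda(x-x')}-1]P^{-}_{i\lambda}\bigr\|\le 1+2|x-x'|\,|k|\le \max\{1,2|k|\}(1+|x|)(1+|x'|),
\end{equation*}
so the Neumann iteration now closes under the hypothesis $\int(1+|x'|)\|Q(x')-Q_{-}\|\,dx'<\infty$, i.e. $Q-Q_{-}\in L^{1,1}(\mathbb{R})$, and produces a function continuous at $k=\pm iq_{0}$. The claims for $N$, $\overline{M}$ and $\overline{N}$ follow by the identical argument with the roles of $P^{\pm}_{\pm i\lambda}$ and of the integration intervals exchanged in the obvious way, using $\Im\lambda\le 0$ on $\overline{\mathbb{K}^{-}}$.
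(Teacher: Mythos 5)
Your overall strategy is the right one and matches what the paper intends: Theorem \ref{Thm4.2} is presented in the text with only the remark that it follows ``using similar methods as in the prior case,'' and your proposal is precisely that adaptation --- the same spectral projectors (correctly recomputed for $\lambda^{2}=k^{2}+q_{0}^{2}$), the same Volterra rewriting of the Green's-function equations, the same Neumann series, and the same device $P^{-}_{-i\lambda}+e^{2i\lambda(x-x')}P^{-}_{i\lambda}=I_{2}+[e^{2i\lambda(x-x')}-1]P^{-}_{i\lambda}$ at the branch points. Away from $k=\pm iq_{0}$ everything you write is correct.

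There is, however, a genuine gap at the one step where the hypothesis $L^{1,1}$ actually has to do some work, namely continuity at the branch points $k=\pm iq_{0}$. After bounding the kernel by $\max\{1,2|k|\}(1+|x|)(1+|x'|)$ you assert that ``the Neumann iteration now closes'' under $\int(1+|x'|)\|Q(x')-Q_{-}\|\,dx'<\infty$. It does not: if $\|M^{(n)}(x,k)\|\le C_{n}(1+|x|)$, then the next iterate is controlled by $(1+|x|)\int_{-\infty}^{x}(1+|x'|)\cdot(1+|x'|)\|Q(x')-Q_{-}\|\,dx'$, so each iteration compounds an additional factor of $(1+|x'|)$ and the scheme only closes if $Q-Q_{-}\in L^{1,2}$ --- which is exactly why the paper's own Theorem \ref{T:1} assumes $L^{1,2}$ for branch-point continuity in the analogous case. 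To obtain the stated $L^{1,1}$ result you must not split $|x-x'|$ into $(1+|x|)(1+|x'|)$ before iterating. Keep the kernel bound $1+2|k|\,|x-x'|$, note that in the iterated integral the variables are ordered, $x_{n}\le\cdots\le x_{1}\le x_{0}=x$, so that $|x_{j-1}-x_{j}|\le x-x_{j}$, and then use the symmetry of the resulting integrand to write the simplex integral as $\frac{1}{n!}\bigl[\int_{-\infty}^{x}\bigl(1+2|k|(x-x')\bigr)\|Q(x')-Q_{-}\|\,dx'\bigr]^{n}$; the bracket is finite for each fixed $x$ precisely when $Q-Q_{-}\in L^{1,1}$, and the series then converges uniformly for $k$ in a neighborhood of $\pm iq_{0}$ and $x$ in compact sets. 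With that Deift--Trubowitz-type estimate in place of your last display, the remainder of your argument goes through for $N$, $\overline{M}$ and $\overline{N}$ as well.
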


If we assume that the entries of $Q-Q_{\pm}$ do not grow faster than $e^{-ax^{2}}$, where $a$ is a positive real number, by similar methods as in Case 1, we have the following result.

\begin{theorem}
\label{T:Schwarz condition 2}
Suppose the entries of $Q-Q_{\pm}$ do not grow faster than $e^{-ax^{2}}$, where $a$ is a positive real number, then for each $x\in\mathbb{R}$, the eigenfunctions $M(x,k)$, $N(x,k)$, $\overline{M}(x,k)$ and $\overline{N}(x,k)$ are analytic in the Riemann surface $\mathbb{K}$.
\end{theorem}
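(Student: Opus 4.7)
The plan is to mirror the argument used for Theorem \ref{T:Schwarz condition 1}, taking advantage of the fact that the only obstruction to extending analyticity from $k\in\mathbb{K}^{+}$ (respectively $k\in\mathbb{K}^{-}$) to the full Riemann surface $\mathbb{K}$ is the potential exponential growth of the kernel $e^{2i\lambda(y-z)}$ when $\Im\lambda$ has the ``wrong'' sign. The Gaussian bound on $Q-Q_{\pm}$ will dominate this growth and give absolute convergence in $k$ globally.

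First I would rewrite the vector integral equation for $M(x,k)$ componentwise, as was done in the proof of Theorem \ref{T:Schwarz condition 1}: after iterating once, the first component satisfies
\begin{equation}
M_{1}(x,k)=\lambda+k+\int_{-\infty}^{x}\!\!\int_{-\infty}^{y}(q(y)-\widetilde{q})(-q^{*}(-z)+\widetilde{q}^{\,*})e^{2i\lambda(y-z)}M_{1}(z,k)\,dz\,dy,
\end{equation}
and
\begin{equation}
M_{2}(x,k)=-i\widetilde{q}^{\,*}+\int_{-\infty}^{x}e^{2i\lambda(x-y)}(-q^{*}(-y)+\widetilde{q}^{\,*})M_{1}(y,k)\,dy,
\end{equation}
where $\widetilde{q}=q_{+}=q_{-}$ in the present case. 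Only the first identity is needed for proving analyticity in $k$; the second merely recovers $M_{2}$ from $M_{1}$ by a single integration whose integrand inherits the same Gaussian decay control.

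Next, I would run the Neumann iteration $M_{1}(x,k)=\sum_{n\ge 0}M_{1}^{(n)}(x,k)$ with $M_{1}^{(0)}=\lambda+k$ and
\begin{equation}
M_{1}^{(n+1)}(x,k)=\int_{-\infty}^{x}\!\!\int_{-\infty}^{y}F(y,z)\,e^{2i\lambda(y-z)}M_{1}^{(n)}(z,k)\,dz\,dy,
\end{equation}
where $F(y,z):=(q(y)-\widetilde{q})(\widetilde{q}^{\,*}-q^{*}(-z))$. The crucial point is that, by hypothesis, $|F(y,z)|\le C\,e^{-a(y^{2}+z^{2})}$ for some $C,a>0$. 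Fix any compact set $\mathcal{K}\subset\mathbb{K}$; then $|\Im\lambda(k)|$ is bounded on $\mathcal{K}$ by some constant $\Lambda$. On the slice $z\le y\le x$ one has $|e^{2i\lambda(y-z)}|\le e^{2\Lambda(y-z)}$, and the product $e^{-a(y^{2}+z^{2})}e^{2\Lambda(y-z)}$ is integrable over $\{z\le y\}$ with a bound independent of $k\in\mathcal{K}$ because the Gaussian decay dominates any exponential. A standard iteration of this bound, using the factorial gain from the identity
\begin{equation}
\frac{1}{j!}\int_{-\infty}^{x}|f(\xi)|\left[\int_{-\infty}^{\xi}|f(\xi')|\,d\xi'\right]^{j}d\xi=\frac{1}{(j+1)!}\left[\int_{-\infty}^{x}|f(\xi)|\,d\xi\right]^{j+1}
\end{equation}
(applied with $|f|$ a Gaussian-weighted version of $|F|$ times $e^{2\Lambda|\,\cdot\,|}$), produces a majorant series $\sum_{n}\frac{C(\mathcal{K})^{n}}{n!}$ that converges uniformly for $k\in\mathcal{K}$. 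Since each $M_{1}^{(n)}(x,k)$ is an entire function of $k$ on $\mathbb{K}$ (each $\lambda$-dependent factor appearing in a convergent Gaussian-damped integral is analytic in $k\in\mathbb{K}$), the Weierstrass theorem yields analyticity of $M_{1}(x,k)$ on $\mathbb{K}$. A one-step integration then gives analyticity of $M_{2}(x,k)$, hence of $M(x,k)$.

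Finally, the arguments for $N(x,k)$, $\overline{M}(x,k)$, $\overline{N}(x,k)$ are completely analogous, using the corresponding integral representations with kernels $G_{+}$, $\overline{G}_{-}$, $\overline{G}_{+}$ respectively; the only change is the sign of the exponent in the exponential factor and the direction of the $x'$-integration, neither of which matters once Gaussian decay is available. The main (essentially only) obstacle is the bookkeeping needed to verify that the iterated Gaussian-weighted integrals admit a $k$-independent majorant on every compact $\mathcal{K}\subset\mathbb{K}$; this is a routine estimate of the form $\int\!\int e^{-a(y^{2}+z^{2})}e^{2\Lambda(y-z)}\,dz\,dy<\infty$, but it is the step where the hypothesis $Q-Q_{\pm}=O(e^{-ax^{2}})$ is genuinely used rather than the weaker $L^{1,1}$ hypothesis of Theorem \ref{Thm4.2}.
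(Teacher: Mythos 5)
Your proposal is correct and follows essentially the same route as the paper: the paper proves the analogous Theorem \ref{T:Schwarz condition 1} by passing to the componentwise (double-integral) form of the equation for $M_{1}$ and observing that the Gaussian decay of $Q-Q_{\pm}$ dominates the exponential kernel $e^{2i\lambda(y-z)}$ on all of $\mathbb{K}$, and then simply invokes ``similar methods'' for the present case. You supply the details the paper leaves implicit (bounding $|\Im\lambda|$ on compacta, the factorial majorant, and Weierstrass convergence), but the underlying argument is the same.
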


\subsubsection{Scattering data}
We have
\begin{equation}
\label{E:linear combination 1'}
\phi(x,k)=b(k)\psi(x,k)+a(k)\overline{\psi}(x,k)
\end{equation}
and
\begin{equation}
\label{E:linear combination 2'}
\overline{\phi}(x,k)=\overline{a}(k)\psi(x,k)+\overline{b}(k)\overline{\psi}(x,k)
\end{equation}
hold for any $k$ such that all four eigenfunctions exist. Moreover,
\begin{equation}
a(k)\overline{a}(k)-b(k)\overline{b}(k)=1,
\end{equation}
where
\begin{equation}
a(k)=\frac{W(\phi(x,k),\psi(x,k))}{2\lambda(\lambda+k)}, \ \ \ \overline{a}(k)=-\frac{W(\overline{\phi}(x,k),\overline{\psi}(x,k))}{2\lambda(\lambda+k)},
\end{equation}
\begin{equation}
b(k)=-\frac{W(\phi(x,k),\overline{\psi}(x,k))}{2\lambda(\lambda+k)}, \ \ \ \overline{b}(k)=\frac{W(\overline{\phi}(x,k),\psi(x,k))}{2\lambda(\lambda+k)}.
\end{equation}
When $k\in (-iq_{0}, iq_{0})$, the above scattering data and eigenfunctions are defined by means of the corresponding values on the right/left edge of the cut, are labeled with superscripts $\pm$ as clarified below. Explicitly, one has
\begin{equation}
a^{\pm}(k)=\frac{W(\phi^{\pm}(x,k),\psi^{\pm}(x,k))}{2\lambda^{\pm}(\lambda^{\pm}+k)}, \ \ \ k\in (-iq_{0}, iq_{0}),
\end{equation}
\begin{equation}
\overline{a}^{\pm}(k)=-\frac{W(\overline{\phi}^{\pm}(x,k),\overline{\psi}^{\pm}(x,k))}{2\lambda^{\pm}(\lambda^{\pm}+k)}, \ \ \ k\in (-iq_{0}, iq_{0}),
\end{equation}
\begin{equation}
b^{\pm}(k)=-\frac{W(\phi^{\pm}(x,k),\overline{\psi}^{\pm}(x,k))}{2\lambda^{\pm}(\lambda^{\pm}+k)}, \ \ \ k\in (-iq_{0}, iq_{0}),
\end{equation}
\begin{equation}
\overline{b}^{\pm}(k)=\frac{W(\overline{\phi}^{\pm}(x,k),\psi^{\pm}(x,k))}{2\lambda^{\pm}(\lambda^{\pm}+k)},  \ \ \ k\in (-iq_{0}, iq_{0}).
\end{equation}

Then from the analytic behavior of the eigenfunctions we have the following theorem.
\begin{theorem}
\label{T:2'}
Suppose the entries of $Q-Q_{\pm}$ belong to $L^{1,1}(\mathbb{R})$, then $a(k)$ is continuous for $k\in \overline{\mathbb{K}^{+}}\cup \partial \overline{\mathbb{K}^{-}}\setminus\{\pm iq_{0}\}$ and analytic for $k\in \mathbb{K}^{+}$, and $\overline{a}(k)$ is continuous for $k\in \overline{\mathbb{K}^{-}}\cup \partial \overline{\mathbb{K}^{+}}\setminus\{\pm iq_{0}\}$ and analytic for $k\in \mathbb{K}^{-}$. Moreover, $b(k)$ and $\overline{b}(k)$ are continuous in $k\in \mathbb{R}\cup (-iq_{0}, iq_{0})$. In addition, if the entries of $Q-Q_{\pm}$ do not grow faster than $e^{-ax^{2}}$, where $a$ is a positive real number, then $a(k)\lambda(k)$, $\overline{a}(k)\lambda(k)$, $b(k)\lambda(k)$ and $\overline{b}(k)\lambda(k)$ are analytic for $k\in\mathbb{K}$.
\end{theorem}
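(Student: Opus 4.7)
The plan is to transfer the analyticity and continuity information on the bounded eigenfunctions (Theorem~\ref{Thm4.2} and Theorem~\ref{T:Schwarz condition 2}) to the scattering data through their Wronskian representations, keeping careful track of where the denominator $2\lambda(\lambda+k)$ degenerates. First I would rewrite each Wronskian in terms of the modified eigenfunctions. Since $\phi=e^{-i\lambda x}M$, $\psi=e^{i\lambda x}N$, $\overline{\phi}=e^{i\lambda x}\overline{M}$, $\overline{\psi}=e^{-i\lambda x}\overline{N}$, the exponential factors cancel in the ``direct'' Wronskians, giving $W(\phi,\psi)=M_1N_2-M_2N_1$ and $W(\overline{\phi},\overline{\psi})=\overline{M}_1\overline{N}_2-\overline{M}_2\overline{N}_1$, while the ``cross'' Wronskians pick up a factor $e^{\mp 2i\lambda x}$ that is bounded in $x$ only when $\lambda\in\mathbb{R}$. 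I would also record that $\lambda+k$ is non-vanishing for $q_0>0$ (else $\lambda^2=k^2$ would force $q_0=0$), so the only zero of the denominator is at the branch points $k=\pm iq_0$ where $\lambda=0$.

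Under the hypothesis $Q-Q_\pm\in L^{1,1}(\mathbb{R})$, Theorem~\ref{Thm4.2} yields analyticity of $M,N$ in $\mathbb{K}^+$ and continuity on $\overline{\mathbb{K}^+}\cup\partial\overline{\mathbb{K}^-}$, with the analogous statement for $\overline{M},\overline{N}$ on $\mathbb{K}^-$ and $\overline{\mathbb{K}^-}\cup\partial\overline{\mathbb{K}^+}$. The formula $a(k)=(M_1N_2-M_2N_1)/(2\lambda(\lambda+k))$ then immediately gives analyticity of $a(k)$ in $\mathbb{K}^+$ and continuity on $\overline{\mathbb{K}^+}\cup\partial\overline{\mathbb{K}^-}\setminus\{\pm iq_0\}$; the identical argument applied to $\overline{a}(k)$ yields the corresponding claim in $\mathbb{K}^-$. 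For $b(k)$ and $\overline{b}(k)$, the cross-Wronskians force the additional factor $e^{\mp 2i\lambda x}$, which is bounded in $x$ if and only if $\lambda\in\mathbb{R}$; the preimage of this set under $\lambda(k)$ is precisely $\mathbb{R}\cup(-iq_0,iq_0)$, and continuity there is inherited from the boundary continuity of $M,\overline{N}$ (resp.\ $\overline{M},N$).

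For the Schwartz hypothesis, Theorem~\ref{T:Schwarz condition 2} promotes all four eigenfunctions $M,N,\overline{M},\overline{N}$ to analytic functions on the entire Riemann surface $\mathbb{K}$. Multiplying any scattering coefficient by $\lambda(k)$ cancels the simple zero of $\lambda$ at the branch points, leaving $1/(2(\lambda+k))$, an analytic and non-vanishing factor on $\mathbb{K}$. Since the $e^{\pm 2i\lambda x}$ prefactor appearing for $b,\overline{b}$ is an entire function of $\lambda$ and therefore analytic on $\mathbb{K}$, each of $a(k)\lambda(k)$, $\overline{a}(k)\lambda(k)$, $b(k)\lambda(k)$, $\overline{b}(k)\lambda(k)$ extends to an analytic function on $\mathbb{K}$.

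The main technical delicacy is the correct interpretation of the cross-Wronskians $W(\phi,\overline{\psi})$ and $W(\overline{\phi},\psi)$: these are $x$-independent as Wronskians of exact ODE solutions, but when decomposed as $e^{\mp 2i\lambda x}$ times a combination of $M$ and $\overline{N}$, neither factor is separately $x$-independent and the combination only converges to a finite limit as the eigenfunctions are evaluated where both families are simultaneously bounded---i.e., on the real $\lambda$-axis. This is the point where one must argue that $b,\overline{b}$ do not extend analytically off $\mathbb{R}\cup(-iq_0,iq_0)$ under the $L^{1,1}$ hypothesis, and why the Schwartz condition (which gives analytic extensions of the individual $M,\overline{N}$ everywhere on $\mathbb{K}$) is exactly what is needed to rescue analyticity for $b(k)\lambda(k)$ and $\overline{b}(k)\lambda(k)$ on the full surface.
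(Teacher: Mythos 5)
Your proposal is correct and follows essentially the same route the paper intends: the paper states this theorem as a direct consequence of the Wronskian representations $a=W(\phi,\psi)/(2\lambda(\lambda+k))$, etc., together with the analyticity/continuity of the eigenfunctions from Theorem~\ref{Thm4.2} and Theorem~\ref{T:Schwarz condition 2}, which is precisely your transfer argument, including the observations that $\lambda+k\neq 0$ for $q_{0}>0$ and that multiplying by $\lambda$ removes the branch-point degeneracy. The only cosmetic point is that the restriction of $b,\overline{b}$ to $\mathbb{R}\cup(-iq_{0},iq_{0})$ is most cleanly phrased as the set where \emph{both} families of Jost solutions are simultaneously defined (the common boundary $\partial\mathbb{K}^{+}\cup\partial\mathbb{K}^{-}$, i.e.\ $\lambda\in\mathbb{R}$), rather than as a boundedness condition on the factor $e^{\mp 2i\lambda x}$, but this does not affect the validity of your argument.
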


\subsection{Symmetry reductions}
The symmetry in the potential induces a symmetry between the eigenfunctions. Indeed, if $v(x,k)=(v_{1}(x,k), v_{2}(x,k))^{T}$ solves
(\ref{E:nonlocal ode}), then $(v_{2}^{*}(-x,-k^{*}), v_{1}^{*}(-x,-k^{*}))^{T}$ also solves (\ref{E:nonlocal ode}). Moreover, if $k\rightarrow -k^{*}$, then $\theta_{1}\rightarrow \pi-\theta_{1}$, $\theta_{2}\rightarrow \pi-\theta_{2}$. Hence, $\lambda_{j}^{*}(-k^{*})=-\lambda_{j}(k)$, where $j=1, 2$. Taking into account boundary conditions (\ref{E:boundary conditions 1'}) and (\ref{E:boundary conditions 2'}), we can obtain
\begin{equation}
\psi(x,k)=-\left(\begin{array}{cc}
0& 1\\
1& 0
\end{array}\right)\phi^{*}(-x,-k^{*})
\end{equation}
and
\begin{equation}
\overline{\psi}(x,k)=-\left(\begin{array}{cc}
0& 1\\
1& 0
\end{array}\right)\overline{\phi}^{*}(-x,-k^{*}).
\end{equation}
Similarly, we can get
\begin{equation}
N(x,k)=-\left(\begin{array}{cc}
0& 1\\
1& 0
\end{array}\right)M^{*}(-x,-k^{*})
\end{equation}
and
\begin{equation}
\overline{N}(x,k)=-\left(\begin{array}{cc}
0& 1\\
1& 0
\end{array}\right)\overline{M}^{*}(-x,-k^{*}).
\end{equation}
Moreover,
\begin{equation}
a^{*}(-k^{*})=a(k),
\end{equation}
\begin{equation}
\overline{a}^{*}(-k^{*})=\overline{a}(k),
\end{equation}
\begin{equation}
b^{*}(-k^{*})=-\overline{b}(k).
\end{equation}
When using a  particular single sheet for the Riemann surface of the function $\lambda^{2}=k^{2}+q_{0}^{2}$, the involution $(k,\lambda)\rightarrow (k,-\lambda)$ leads to relationships between eigenfunctions 
across the cut; namely it relates values of eigenfunctions and scattering data for the same value of $k$ from either side of the cut. One has
\begin{equation}
\phi^{\mp}(x,k)=\frac{-\lambda^{\pm}+k}{-i\widetilde{q}}\cdot\overline{\phi}^{\pm}(x,k), \ \ \ \psi^{\mp}(x,k)=\frac{-\lambda^{\pm}+k}{-i\widetilde{q}^{*}}\cdot\overline{\psi}^{\pm}(x,k)
\end{equation}
for $k\in [-iq_{0}, iq_{0}]$. Similarly,
\begin{equation}
M^{\mp}(x,k)=\frac{-\lambda^{\pm}+k}{-i\widetilde{q}}\cdot\overline{M}^{\pm}(x,k), \ \ \ N^{\mp}(x,k)=\frac{-\lambda^{\pm}+k}{-i\widetilde{q}^{*}}\cdot\overline{N}^{\pm}(x,k)
\end{equation}
for $k\in [-iq_{0}, iq_{0}]$.
Then
\begin{equation}
a^{\pm}(k)=\overline{a}^{\mp}(k), \ \ \ b^{\pm}(k)=\frac{\widetilde{q}^{*}}{\widetilde{q}}\cdot \overline{b}^{\mp}(k)
\end{equation}
for $k\in [-iq_{0}, iq_{0}]$.
\subsection{Riemann-Hilbert problem}
We will develop the Riemann-Hilbert problem across $\partial\mathbb{K}^{+}\cup \partial\mathbb{K}^{-}$.

(1) Riemann-Hilbert problem across the real axis (see Fig. \ref{fig6}.)
Note that (\ref{E:linear combination 1'}) and (\ref{E:linear combination 2'}) can be written as
\begin{equation}
\mu(x,k)=\rho(k)e^{2i\lambda x}N(x,k)+\overline{N}(x,k)
\end{equation}
and
\begin{equation}
\overline{\mu}(x,k)=N(x,k)+\overline{\rho}(k)e^{-2i\lambda x}\overline{N}(x,k),
\end{equation}
where $\mu(x,k)=M(x,k)a^{-1}(k)$, $\overline{\mu}(x,k)=\overline{M}(x,k)\overline{a}^{-1}(k)$, $\rho(k)=b(k)a^{-1}(k)$ and $\overline{\rho}(k)=\overline{b}(k)\overline{a}^{-1}(k)$. Introducing the $2\times2$ matrices
\begin{equation}
m_{+}(x,k)=(\mu(x,k), N(x,k)), \ \ \ m_{-}(x,k)=(\overline{N}(x,k), \overline{\mu}(x,k)).
\end{equation}
Hence, we can write the 'jump' conditions as
\begin{equation}
m_{+}(x,k)-m_{-}(x,k)=m_{-}(x,k)\left(\begin{array}{cc}
-\rho(k)\overline{\rho}(k)& -\overline{\rho}(k)e^{-2i\lambda x}\\
\rho(k)e^{2i\lambda x}& 0
\end{array}\right)
\end{equation}
for $k\in \mathbb{R}$.

(2) Riemann-Hilbert problem across $[0, iq_{0}]$ (see Fig. \ref{fig7}.). By (\ref{E:linear combination 1'}), (\ref{E:linear combination 2'}), the notation $\lambda=\lambda^{+}=-\lambda^{-}$ and the symmetry relations of Jost functions and scattering data, we have
\begin{equation}
\frac{M^{+}(x,k)}{a^{+}(k)}=\frac{b^{+}(k)}{a^{+}(k)}\cdot N^{+}(x,k)\cdot e^{2i\lambda x}+\frac{-i\widetilde{q}^{*}}{\lambda^{-}+k}\cdot N^{-}(x,k)
\end{equation}
and
\begin{equation}
\frac{-iq_{-}}{\lambda^{-}+k}\cdot\frac{M^{-}(x,k)}{a^{-}(k)}=N^{+}(x,k)+\frac{\widetilde{q}}{\widetilde{q}^{*}}\cdot\frac{b^{-}(k)}{a^{-}(k)}\cdot\frac{-i\widetilde{q}^{*}}{\lambda^{-}+k}\cdot N^{-}(x,k)\cdot e^{-2i\lambda x}.
\end{equation}
Introducing the $2\times2$ matrices
\begin{equation}
m^{+}(x,k)=(\mu^{+}(x,k), N^{+}(x,k)), \ \ \ m^{-}(x,k)=\left(\frac{-i\widetilde{q}^{*}}{\lambda^{-}+k}N^{-}(x,k), \frac{-i\widetilde{q}}{\lambda^{-}+k}\cdot\mu^{-}(x,k)\right),
\end{equation}
where $\mu^{+}(x,k)=M^{+}(x,k)a^{+}(k)^{-1}$, $\mu^{-}(x,k)=M^{-}(x,k)a^{-}(k)^{-1}$, $\rho^{+}(k)=b^{+}(k)a^{+}(k)^{-1}$ and $\rho^{-}(k)=b^{-}(k)a^{-}(k)^{-1}$. Then
\begin{equation}
m^{+}(x,k)-m^{-}(x,k)=m^{-}(x,k)\left(\begin{array}{cc}
-\frac{\widetilde{q}}{\widetilde{q}^{*}}\cdot\rho^{+}(k)\rho^{-}(k)& -\frac{\widetilde{q}}{\widetilde{q}^{*}}\cdot \rho^{-}(k)e^{-2i \lambda x}\\
\rho^{+}(k)e^{2i\lambda x}& 0
\end{array}\right)
\end{equation}
for $k\in \mathbb{C}^{+}$.

(3) Riemann-Hilbert problem across $[-iq_{0}, 0)$ (see Fig. \ref{fig8}). By (\ref{E:linear combination 1'}), (\ref{E:linear combination 2'}), the notation $\lambda=\lambda^{+}=-\lambda^{-}$ and the symmetry relations of Jost functions and scattering data, we can get
\begin{equation}
\frac{-\lambda^{-}+k}{-i\widetilde{q}}\cdot \frac{\overline{M}^{-}(x,k)}{\overline{a}^{-}(x,k)}=\frac{\widetilde{q}^{*}}{\widetilde{q}}\cdot \frac{\overline{b}^{-}(k)}{\overline{a}^{-}(k)}\cdot \frac{-\lambda^{-}+k}{-i\widetilde{q}^{*}}\cdot \overline{N}^{-}(x,k)\cdot e^{2i\lambda x}
+\overline{N}^{+}(x,k)
\end{equation}
and
\begin{equation}
\frac{\overline{M}^{+}(x,k)}{\overline{a}^{+}(k)}=\frac{-\lambda^{-}+k}{-i\widetilde{q}^{*}}\cdot \overline{N}^{-}(x,k)+\frac{\overline{b}^{+}(k)}{\overline{a}^{+}(k)}\cdot \overline{N}^{+}(x,k)\cdot e^{-2i \lambda x}.
\end{equation}
Introducing the $2\times2$ matrices
\begin{equation}
\overline{m}^{+}(x,k)=\left(\overline{\mu}^{+}(x,k), \overline{N}^{+}(x,k)\right), \ \ \ \overline{m}^{-}(x,k)=\left(\frac{-\lambda^{-}+k}{-i\widetilde{q}^{*}}\cdot \overline{N}^{-}(x,k), \frac{-\lambda^{-}+k}{-i\widetilde{q}}\cdot \overline{\mu}^{-}(k)\right),
\end{equation}
where $\overline{\mu}^{+}(x,k)=\overline{M}^{+}(x,k)\overline{a}^{+}(k)^{-1}$, $\overline{\mu}^{-}(x,k)=\overline{M}^{-}(x,k)\overline{a}^{-}(k)^{-1}$, $\overline{\rho}^{+}(k)=\overline{b}^{+}(k)\overline{a}^{+}(k)^{-1}$ and $\overline{\rho}^{-}(k)=\overline{b}^{-}(k)\overline{a}^{-}(k)^{-1}$. Then
\begin{equation}
\overline{m}^{+}(x,k)-\overline{m}^{-}(x,k)=\overline{m}^{-}(x,k)\left(\begin{array}{cc}
-\frac{\widetilde{q}^{*}}{\widetilde{q}}\cdot\overline{\rho}^{+}(k)\overline{\rho}^{-}(k)& -\frac{\widetilde{q}^{*}}{\widetilde{q}}\cdot \overline{\rho}^{-}(k)\cdot e^{2i\lambda x}\\
\overline{\rho}^{+}(k)e^{-2i\lambda x}& 0
\end{array}\right)
\end{equation}
for $k\in \mathbb{C}^{-}$.

\noindent
\begin{figure}
\begin{tabular*}{\textwidth}{ccc}
\hspace*{-2cm}
\begin{minipage}{\dimexpr0.5\textwidth-1.5\tabcolsep}

\begin{tikzpicture}

\draw[decoration = {zigzag,segment length = 2mm, amplitude = 1mm}, decorate,red] (-2.5,0) -- (2.4,0);
\draw[-stealth] (-2.5,0) -- (2.5,0);
\draw[-stealth] (0,-2.5) -- (0,2.5);

\draw (2.8,0) node[] {\tiny Re $k$};
\draw (0,2.8) node[] {\tiny Im $k$};
\draw (.3,-.17) node[] {\tiny $0^+$};
\draw (-.25,-.17) node[] {\tiny $0^-$};

\end{tikzpicture}

\captionsetup{font=footnotesize}
\captionof{figure}{Riemann-Hilbert problem across the real axis.}
\label{fig6}
\end{minipage}%
&
\hspace*{0cm}
\begin{minipage}{\dimexpr0.5\textwidth-2\tabcolsep}

\begin{tikzpicture}

\draw[-stealth] (-2.5,0) -- (2.5,0);
\draw[-stealth] (0,-2.5) -- (0,2.5);
\draw[decoration = {zigzag,segment length = 2mm, amplitude = 1mm}, decorate,red] (0,0) -- (0,1.8);

\draw (2.8,0) node[] {\tiny Re $k$};
\draw (0,2.8) node[] {\tiny Im $k$};
\draw (.3,-.14) node[] {\tiny $0^+$};
\draw (-.25,-.14) node[] {\tiny $0^-$};
\draw (0.3,1.8) node[] {\tiny $i q_0$};
\fill[blue] (0,1.8)  circle (1pt);

\end{tikzpicture}

\captionsetup{font=footnotesize}
\captionof{figure}{Riemann-Hilbert problem across $[0, iq_{0}]$.}
\label{fig7}
\end{minipage}%
&
\hspace*{0cm}
\begin{minipage}{\dimexpr0.5\textwidth-2\tabcolsep}

\begin{tikzpicture}

\draw[-stealth] (-2.5,0) -- (2.5,0);
\draw[-stealth] (0,-2.5) -- (0,2.5);
\draw[decoration = {zigzag,segment length = 2mm, amplitude = 1mm}, decorate,red] (0,0) -- (0,-1.8);

\draw (2.8,0) node[] {\tiny Re $k$};
\draw (0,2.8) node[] {\tiny Im $k$};
\draw (.3,-.14) node[] {\tiny $0^+$};
\draw (-.25,-.14) node[] {\tiny $0^-$};
\draw (-0.4,-1.8) node[] {\tiny $-i q_0$};
\fill[blue] (0,-1.8)  circle (1pt);

\end{tikzpicture}

\captionsetup{font=footnotesize}
\captionof{figure}{Riemann-Hilbert problem across $[-iq_{0}, 0)$.}
\label{fig8}
\end{minipage}%

\end{tabular*}%

\end{figure}

%
%
%
%

\begin{remark}
For the Riemann-Hilbert problem
\begin{equation}
F^{+}(\xi)-F^{-}(\xi)=F^{-}(\xi)g(\xi),
\end{equation}
on the contour $\widetilde{\Sigma}:=\Sigma_{1}\cup\Sigma_{2}\cup\Sigma_{3}$, where $\Sigma_{1}:=\mathbb{R}$, $\Sigma_{2}:=[0, iq_{0}]$, $\Sigma_{3}:=[-iq_{0}, 0)$, $g(\xi)$ is H\"{o}lder-continuously in $\widetilde{\Sigma}$ and $g(\xi)=g_{m}(\xi)$ is chosen depending on which piece of the contour is considered, where $m=1, 2, 3$. We can consider the projection operators
\begin{equation}
(P_{j}(f))(k)=\frac{1}{2\pi i}\int_{\widetilde{\Sigma}}\frac{\lambda(k)+\lambda(\xi)}{2\lambda(\xi)}\cdot\frac{f(\xi)}{\xi-k}d\xi, \ \ k\in \mathbb{C}_{j}, \ \ j=1,2,
\end{equation}
where $\frac{\lambda(k)+\lambda(\xi)}{2\lambda(\xi)}\cdot\frac{d\xi}{\xi-k}$ is the Weierstrass kernel, $\int_{\widetilde{\Sigma}}:=\int_{\Sigma_{1}\cup\Sigma_{2}}+\int_{\Sigma_{1}\cup\Sigma_{3}}$ denote the integrals along the oriented contours in Figure \ref{fig3}. One can show that (\cite{Zverovich})
\begin{equation*}
\frac{\lambda(k)+\lambda(\xi)}{2\lambda(\xi)}\cdot\frac{d\xi}{\xi-k}=\frac{d\xi}{\xi-k}+ regular \ terms
\end{equation*}
for $(\xi, \lambda(\xi))\rightarrow (k, \lambda(k))$. If $f_{j}\ (j=1, 2)$, is sectionally analytic in $\mathbb{C}_{j}$ and rapidly decaying as $|k|\rightarrow\infty$ in the proper sheet, we have
\begin{equation}
(P_{j}(f_{j}))(k)=(-1)^{j-1}f_{j}(k), \ \ \ k\in \mathbb{C}_{j},
\end{equation}
\begin{equation}
(P_{j}(f_{l}))(k)=0, \ \ \ k\in \mathbb{C}_{j}.
\end{equation}
We obtain
\begin{equation}
F^{-}(k)=\frac{1}{2\pi i}\int_{\widetilde{\Sigma}}\frac{\lambda(k)+\lambda(\xi)}{2\lambda(\xi)}\cdot\frac{F^{-}(\xi)g(\xi)}{\xi-k}d\xi,\ \ \ k\in\mathbb{C}_{2},
\end{equation}
\begin{equation}
F^{+}(k)=\frac{1}{2\pi i}\int_{\widetilde{\Sigma}}\frac{\lambda(k)+\lambda(\xi)}{2\lambda(\xi)}\cdot\frac{F^{-}(\xi)g(\xi)}{\xi-k}d\xi,\ \ \ k\in\mathbb{C}_{1}.
\end{equation}
For $\xi\in\widetilde{\Sigma}$, we can take limits from proper sheet and connected by the analogue of Plemelj-Sokhotskii formulas
\begin{equation}
F^{+}(\xi_{0})=\lim_{k\rightarrow\xi_{0}\in\widetilde{\Sigma}, k\in\mathbb{C}_{1}}\frac{1}{2\pi i}\int_{\widetilde{\Sigma}}\frac{\lambda(k)+\lambda(\xi)}{2\lambda(\xi)}\cdot\frac{F^{-}(\xi)g(\xi)}{\xi-k}d\xi,
\end{equation}
\begin{equation}
F^{-}(\xi_{0})=\lim_{k\rightarrow\xi_{0}\in\widetilde{\Sigma}, k\in\mathbb{C}_{2}}\frac{1}{2\pi i}\int_{\widetilde{\Sigma}}\frac{\lambda(k)+\lambda(\xi)}{2\lambda(\xi)}\cdot\frac{F^{-}(\xi)g(\xi)}{\xi-k}d\xi.
\end{equation}
\end{remark}

In principle, the above RH problem can be analyzed for the two sheeted problem. But a uniformizing coordinate makes the problem considerably more straight forward. This is discussed next.

\subsection{Uniformization coordinates}
Before discussing the properties of scattering data and solving the inverse problem, we introduce a uniformization variable $z$ see also \cite{BK14}), defined by the conformal mapping:
\begin{equation}
\label{zsect4}
z=z(k)=k+\lambda(k),
\end{equation}
where $ \lambda= \sqrt{k^2+q_0^2}$ and the inverse mapping is given by
\begin{equation}
k=k(z)=\frac{1}{2}\left(z-\frac{q_{0}^{2}}{z}\right).
\end{equation}
Then
\begin{equation}
\lambda(z)=\frac{1}{2}\left(z+\frac{q_{0}^{2}}{z}\right).
\end{equation}
We let $C_{0}$ be the circle of radius $q_{0}$ centered at the origin in $z-$ plane. We observe that

(1) The branch cut on either sheet is mapped onto $C_{0}$. In particular, $z(\pm iq_{0})=\pm iq_{0}$ from either sheet, $z(0_{I}^{\pm})=\pm q_{0}$ and $z(0_{II}^{\pm})=\mp q_{0}$.

(2) $\mathbb{K}_{1}$ is mapped onto the exterior of $C_{0}$, $\mathbb{K}_{2}$ is mapped onto the interior of $C_{0}$. In particular, $z(\infty_{I})=\infty$ and $z(\infty_{II})=0$; the first/second quadrants of $\mathbb{K}_{1}$ are mapped into the first/second quadrants outside $C_{0}$ respectively; the first/second quadrants of $\mathbb{K}_{2}$ are mapped into the second/first quadrants inside $C_{0}$ respectively; $z_{I}z_{II}=q_{0}^{2}$.

(3) The regions in $k-$ plane such that $\Im \lambda>0$ and $\Im \lambda<0$ are mapped onto $D^{+}=\{z\in\mathbb{C}:(|z|^{2}-q_{0}^{2})\cdot\Im z>0\}$ and $D^{-}=\{z\in\mathbb{C}:(|z|^{2}-q_{0}^{2})\cdot\Im z<0\}$ respectively.

From Theorem \ref{Thm4.2} and the definition of the uniformization variable $z$ by equation (\ref{zsect4}), we have that the eigenfunctions $M,N$ are analytic for $z \in D^{+}$ and
the eigenfunctions $\bar{M},\bar{N}$  are analytic for in $z \in D^{-}$.

See Figs \ref{fig5}, \ref{fig9}, \ref{fig10}.

\vspace{1 in}


\begin{figure}
\centering
\begin{tikzpicture}

\draw [fill=gray!50!white] (2.5,0) rectangle (-2.5,2.5);
\draw (2.5,-2.5) rectangle (-2.5,2.5);
 \draw  [red,fill=white]  (0,0) circle (1.5);

 \filldraw[red,fill=gray!50!white] (-1.5,0) arc (180:360:1.5);


\draw [-stealth] (-1.2,-.9) arc (224:225:1.5);
\draw [-stealth] (1.12,-1) arc (314:315:1.5);
\draw [-stealth] (1.12,1) arc (46:45:1.5);
\draw [-stealth] (-1.12,1) arc (136:135:1.5);

\draw[red] (-2.49,0) -- (2.5,0);
\draw[-stealth] (2.49,0) -- (2.5,0);
\draw[-stealth] (0,-2.5) -- (0,2.5);

\draw [-stealth]  (-1.76,0) -- (-1.75,0);
\draw [-stealth]  (-.99,0) -- (-1,0);
\draw [-stealth]  (.76,0) -- (.75,0);
\draw [-stealth]  (1.99,0) -- (2,0);

\draw (2.8,0) node[] {\tiny Re $z$};
\draw (0,2.8) node[] {\tiny Im $z$};
\draw (.25,1.7) node[] {\tiny $iq_0$};
\draw (.25,-1.7) node[] {\tiny -$iq_0$};
\draw (-.5,1) node[] {\tiny $q_0^2/z_n$};
\draw (-.5,-1) node[] {\tiny $-q_0^2/z_n^*$};
\draw (1.9,1.9) node[] {\tiny $z_n$};
\draw (1.9,-1.9) node[] {\tiny $z_n^*$};
\draw (2,-.2) node[] {\tiny $0_{I}^+ \Big( 0_{II}^- \Big)$};
\draw (-2,-.2) node[] {\tiny $0_{I}^-\Big( 0_{II}^+ \Big)$};
\draw (0.3,-.2) node[] {\tiny $\infty_{II}$};

\fill[blue] (0,1.5)  circle (1pt);
\fill[blue] (0,-1.5)  circle (1pt);
\fill[blue] (-.5,.75)  circle (1pt);
\fill[blue] (-.5,-.75)  circle (1pt);
\fill[blue] (0,0)  circle (1pt);
\fill[blue] (1.5,0)  circle (1pt);
\fill[blue] (-1.5,0)  circle (1pt);
\fill[blue] (1.75,1.75)  circle (1pt);
\fill[blue] (1.75,-1.75)  circle (1pt);

\end{tikzpicture}

\caption{The complex $z-$plane, showing the branch cut (red), the regions $D^{\pm}$ where $\Im \lambda>0$ (grey) and $\Im \lambda<0$ (white) respectively. In particular, the first sheet is mapped onto the region outside the circle, and the second sheet is mapped onto the region inside the circle.
Importantly, the eigenfunctions $M,N$ are analytic for  $z \in D^{+}$ and
the eigenfunctions $\bar{M},\bar{N}$  are analytic for $z \in D^{-}$.
\label{fig5}}

\end{figure}
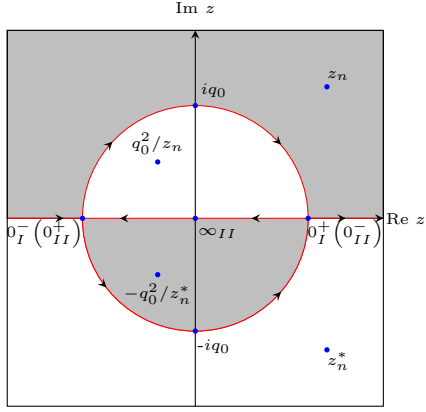

\vspace{1 in}
\noindent
\begin{figure}
\begin{tabular*}{\textwidth}{@{}cc@{}}
\begin{minipage}{\dimexpr0.5\textwidth-2\tabcolsep}

\begin{tikzpicture}

\draw[green]  (1.5,0) arc (180:0:1.5);
\draw[green] (4.5,-1) arc (360:180:1.5);
\draw [-stealth]  (4.12,1) arc (41:40:1.5);
\draw [-stealth] (1.88,-2) arc (219:220:1.5);
\draw [-stealth]  (1.88,1) arc (141:140:1.5);
\draw [-stealth] (4.12,-2) arc (319:320:1.5);

\draw[green] (0,0) -- (1.5,0) ;
\draw [-stealth] (.99,0) -- (1,0) ;
\draw[green] (0,-1) -- (1.5,-1) ;
\draw [-stealth] (.99,-1) -- (1,-1) ;

\draw[green] (4.5,0) -- (6,0) ;
\draw [-stealth] (5.49,0) -- (5.5,0) ;
\draw[green] (4.5,-1) -- (6,-1) ;
\draw [-stealth] (5.49,-1) -- (5.5,-1) ;

\fill[blue] (4.5,0)  circle (1pt);
\fill[blue] (4.5,-1)  circle (1pt);
\fill[blue] (1.5,0)  circle (1pt);
\fill[blue] (1.5,-1)  circle (1pt);
\fill[blue] (3,1.5)  circle (1pt);
\fill[blue] (3,-2.5)  circle (1pt);

\draw (3,1.8) node[] {\tiny $i q_0$};
\draw (3,-2.8) node[] {\tiny -$i q_0$};
\draw (4.3,0) node[] {\tiny $0_{I}^+$};
\draw (4.3,-1) node[] {\tiny $0_{I}^+$};
\draw (1.8,0) node[] {\tiny $0_{I}^-$};
\draw (1.8,-1) node[] {\tiny $0_{I}^-$};

\draw (1,2) node[] {\tiny Sheet I:};

\end{tikzpicture}


\captionsetup{font=footnotesize}
\captionof{figure}{The contour (green) in the first sheet is mapped into the above curve in $z-$ plane. \label{fig9}}
\end{minipage}%
&
\begin{minipage}{\dimexpr0.5\textwidth-2\tabcolsep}

\begin{tikzpicture}

\draw[green]  (1.5,0) arc (180:0:1.5);
\draw[green] (4.5,-1) arc (360:180:1.5);
\draw [-stealth]  (4.12,1) arc (41:40:1.5);
\draw [-stealth] (1.88,-2) arc (219:220:1.5);
\draw [-stealth]  (1.88,1) arc (141:140:1.5);
\draw [-stealth] (4.12,-2) arc (319:320:1.5);

\draw[green] (4.5,0) -- (1.5,0) ;
\draw [-stealth] (2.26,0) -- (2.25,0) ;
\draw [-stealth] (3.76,0) -- (3.75,0) ;

\draw[green] (4.5,-1) -- (1.5,-1) ;
\draw [-stealth] (2.26,-1) -- (2.25,-1) ;
\draw [-stealth] (3.76,-1)  -- (3.75,-1) ;

\fill[blue] (4.5,0)  circle (1pt);
\fill[blue] (4.5,-1)  circle (1pt);
\fill[blue] (1.5,0)  circle (1pt);
\fill[blue] (1.5,-1)  circle (1pt);
\fill[blue] (3,1.5)  circle (1pt);
\fill[blue] (3,-2.5)  circle (1pt);
\fill[blue] (3,0)  circle (1pt);
\fill[blue] (3,-1)  circle (1pt);

\draw (3,1.8) node[] {\tiny $i q_0$};
\draw (3,-2.8) node[] {\tiny -$i q_0$};
\draw (4.7,-.1) node[] {\tiny $0_{II}^-$};
\draw (4.7,-.9) node[] {\tiny $0_{II}^-$};
\draw (1.3,-.1) node[] {\tiny $0_{II}^+$};
\draw (1.3,-.9) node[] {\tiny $0_{II}^+$};
\draw (3,.2) node[] {\tiny $\infty_{ II}$};
\draw (3,-1.2) node[] {\tiny $\infty_{ II}$};

\draw (1,2) node[] {\tiny Sheet II:};

\end{tikzpicture}

\captionsetup{font=footnotesize}
\captionof{figure}{The contour (green) in the second sheet is mapped into the above curve in $z-$ plane. \label{fig10}}
\end{minipage}

\end{tabular*}%
\captionsetup{labelformat=empty}
\caption{The contours (green) in two sheets (Figure \ref{fig3} and Figure \ref{fig4}) are mapped into Figure \ref{fig9} and Figure \ref{fig10} respectively via the uniformization coordinate.}

\end{figure}

\subsection{Symmetries via uniformization coordinates}
It is found that (1) when $z\rightarrow -z^{*}$, then $(k,\lambda)\rightarrow (-k^{*}, -\lambda^{*})$; (2) when $z\rightarrow -\frac{q_{0}^{2}}{z}$, then $(k, \lambda)\rightarrow (k, -\lambda)$. Hence,
\begin{equation}
\psi(x,z)=-\left(\begin{array}{cc}
0& 1\\
1& 0
\end{array}\right)\phi^{*}(-x,-z^{*}),
\end{equation}
\begin{equation}
\overline{\psi}(x,z)=-\left(\begin{array}{cc}
0& 1\\
1& 0
\end{array}\right)\overline{\phi}^{*}(-x,-z^{*}),
\end{equation}
\begin{equation}
\phi\left(x,-\frac{q_{0}^{2}}{z}\right)=\frac{\frac{q_{0}^{2}}{z}}{i\widetilde{q}}\cdot\overline{\phi}(x,z), \ \ \ \psi\left(x,-\frac{q_{0}^{2}}{z}\right)=\frac{-i\widetilde{q}}{z}\cdot\overline{\psi}(x,z), \ \ \ z\in D^{-}.
\end{equation}

Similarly, we can get
\begin{equation}
\label{NM2}
N(x,z)=-\left(\begin{array}{cc}
0& 1\\
1& 0
\end{array}\right)M^{*}(-x,-z^{*})
\end{equation}
and
\begin{equation}
\label{NbMb2}
\overline{N}(x,z)=-\left(\begin{array}{cc}
0& 1\\
1& 0
\end{array}\right)\overline{M}^{*}(-x,-z^{*}).
\end{equation}
Moreover,
\begin{equation}
a^{*}(-z^{*})=a(z), \ \ \ \Im z>0,
\end{equation}
\begin{equation}
\overline{a}^{*}(-z^{*})=\overline{a}(z), \ \ \ \Im z<0,
\end{equation}
\begin{equation}
b^{*}(-z^{*})=-\overline{b}(z),
\end{equation}
\begin{equation}
\label{E:scatering'}
a\left(-\frac{q_{0}^{2}}{z}\right)=\overline{a}(z),\ \ \ z\in D^{-}; \ \ \ b\left(-\frac{q_{0}^{2}}{z}\right)=\frac{\widetilde{q}^{*}}{\widetilde{q}}\cdot \overline{b}(z).
\end{equation}

\subsection{Asymptotic behavior of eigenfunctions and scattering data}
In order to solve the inverse problem, one has to determine the asymptotic behavior of eigenfunctions and scattering data both as $z\rightarrow\infty$ in $\mathbb{K}_{1}$ and as $z\rightarrow 0$ in $\mathbb{K}_{2}$. We have
\begin{equation}
M(x,z)\sim\left\{\begin{array}{ll}
\left(\begin{array}{cc}
z\\
-iq^{*}(-x)
\end{array}\right), \ \ \ z\rightarrow\infty\\
\left(\begin{array}{cc}
z\cdot\frac{q(x)}{\widetilde{q}}\\
-i\widetilde{q}^{*}
\end{array}\right), \ \ \ z\rightarrow 0,\\
\end{array}\right.
\end{equation}

\begin{equation}
N(x,z)\sim\left\{\begin{array}{ll}
\left(\begin{array}{cc}
-iq(x)\\
z
\end{array}\right), \ \ \ z\rightarrow\infty\\
\left(\begin{array}{cc}
-i\widetilde{q}\\
z\cdot \frac{q^{*}(-x)}{\widetilde{q}^{*}}
\end{array}\right), \ \ \ z\rightarrow 0,\\
\end{array}\right.
\end{equation}

\begin{equation}
\overline{M}(x,z)\sim\left\{\begin{array}{ll}
\left(\begin{array}{cc}
-iq(x)\\
z
\end{array}\right), \ \ \ z\rightarrow\infty\\
\left(\begin{array}{cc}
-i\widetilde{q}\\
z\cdot \frac{q^{*}(-x)}{\widetilde{q}^{*}}
\end{array}\right), \ \ \ z\rightarrow 0,\\
\end{array}\right.
\end{equation}

\begin{equation}
\label{E:asymptotic 3'}
\overline{N}(x,z)\sim\left\{\begin{array}{ll}
\left(\begin{array}{cc}
z\\
-iq^{*}(-x)
\end{array}\right), \ \ \ z\rightarrow\infty\\
\left(\begin{array}{cc}
z\cdot\frac{q(x)}{\widetilde{q}}\\
-i\widetilde{q}^{*}
\end{array}\right), \ \ \ z\rightarrow 0,\\
\end{array}\right.
\end{equation}

\begin{equation}
a(z)=
\left\{\begin{array}{ll}
1,\ \ \ z\rightarrow\infty,\\
1, \ \ \ z\rightarrow 0,\\
\end{array}\right.
\end{equation}

\begin{equation}
\overline{a}(z)=
\left\{\begin{array}{ll}
1,\ \ \ z\rightarrow\infty,\\
1, \ \ \ z\rightarrow 0,\\
\end{array}\right.
\end{equation}
\begin{equation}
\lim_{z\rightarrow\infty}zb(z)=0, \ \ \ \lim_{z\rightarrow0}\frac{b(z)}{z^{2}}=0.
\end{equation}

\subsection{Riemann-Hilbert problem via uniformization coordinates}
\subsubsection{Left scattering problem}
In order to take into account the behavior of the eigenfunctions,  the `jump' conditions at $\Sigma$, where $\Sigma:=(-\infty,-q_{0})\cup (q_{0}, +\infty)\cup \overrightarrow{(q_{0}, -q_{0})}\cup \{q_{0}e^{i\theta}, \pi\leq \theta\leq 2\pi\}_{clockwise, upper \ circle}\cup \{q_{0}e^{i\theta}, -\pi\leq \theta\leq 0\}_{anticlockwise, lower \ circle}$, can be written as
\begin{equation}
\frac{M(x,z)}{za(z)}-\frac{\overline{N}(x,z)}{z}=\rho(z)e^{i\big(z+\frac{q_{0}^{2}}{z}\big)x}\cdot \frac{N(x,z)}{z}
\end{equation}
and
\begin{equation}
\frac{\overline{M}(x,z)}{z\overline{a}(z)}-\frac{N(x,z)}{z}=\overline{\rho}(z)e^{-i\big(z+\frac{q_{0}^{2}}{z}\big)x}\cdot \frac{\overline{N}(x,z)}{z},
\end{equation}
so that the functions will be bounded at infinity, though having an additional pole at $z=0$. Note that $M(x,z)/a(z)$, as a function of
$z$, is defined in $D^{+}$, where it has simple poles $z_{j}$, i.e., $a(z_{j})=0$, and $\overline{M}(x,z)/\overline{a}(z)$,
is defined in $D^{-}$, where it has simple poles $\overline{z}_{j}$, i.e., $\overline{a}(\overline{z}_{j})=0$. It follows that
\begin{equation}
M(x,z_{j})=b(z_{j})e^{i\big(z_{j}+\frac{q_{0}^{2}}{z_{j}}\big)x}\cdot N(x,z_{j})
\end{equation}
and
\begin{equation}
\overline{M}(x,\overline{z}_{j})=\overline{b}(\overline{z}_{j})e^{-i\big(\overline{z}_{j}+\frac{q_{0}^{2}}{\overline{z}_{j}}\big)x}\cdot \overline{N}(x,\overline{z}_{j}).
\end{equation}
Then subtracting the values at infinity, the induced pole at the origin and the poles, assumed simple in $D^{+}$/$D^{-}$ respectively, at $a(z_j)=0, j=1,2...J$ and  $\bar{a}(\overline{z}_{j}), j=1,2...\bar{J} $ gives
\begin{equation}
\label{E:jump 1'}
\begin{split}
&\left[\frac{M(x,z)}{za(z)}-\left(\begin{array}{cc}
1\\
0
\end{array}\right)-
\frac{1}{z}\left(\begin{array}{cc}
0\\
-i\widetilde{q}^{*}
\end{array}\right)
-\sum_{j=1}^{J}\frac{M(x,z_{j})}{(z-z_{j})z_{j}a'(z_{j})}\right]\\
&-\left[\frac{\overline{N}(x,z)}{z}-\left(\begin{array}{cc}
1\\
0
\end{array}\right)-
\frac{1}{z}\left(\begin{array}{cc}
0\\
-i\widetilde{q}^{*}
\end{array}\right)
-\sum_{j=1}^{J}\frac{b(z_{j})e^{i\big(z_{j}+\frac{q_{0}^{2}}{z_{j}}\big)x}\cdot N(x,z_{j})}{(z-z_{j})z_{j}a'(z_{j})}\right]\\
&=\rho(z)e^{i\big(z+\frac{q_{0}^{2}}{z}\big)x}\cdot \frac{N(x,z)}{z}
\end{split}
\end{equation}
and
\begin{equation}
\label{E:jump 2'}
\begin{split}
&\left[\frac{\overline{M}(x,z)}{z\overline{a}(z)}-\left(\begin{array}{cc}
0\\
1
\end{array}\right)-
\frac{1}{z}\left(\begin{array}{cc}
-i\widetilde{q}\\
0
\end{array}\right)
-\sum_{j=1}^{\overline{J}}\frac{\overline{M}(x,\overline{z}_{j})}{(z-\overline{z}_{j})\overline{z}_{j}a'(\overline{z}_{j})}\right]\\
&-\left[\frac{N(x,z)}{z}-\left(\begin{array}{cc}
0\\
1
\end{array}\right)-
\frac{1}{z}\left(\begin{array}{cc}
-i\widetilde{q}\\
0
\end{array}\right)
-\sum_{j=1}^{\overline{J}}\frac{\overline{b}(\overline{z}_{j})e^{-i\big(\overline{z}_{j}+\frac{q_{0}^{2}}{\overline{z}_{j}}\big)x}\cdot \overline{N}(x,\overline{z}_{j})}{(z-\overline{z}_{j})\overline{z}_{j}\overline{a}'(\overline{z}_{j})}\right]\\
&=\overline{\rho}(z)e^{-i\big(z+\frac{q_{0}^{2}}{z}\big)x}\cdot \frac{\overline{N}(x,z)}{z}.
\end{split}
\end{equation}
We now introduce the projection operators
\begin{equation}
P_{\pm}(f)(z)=\frac{1}{2\pi i}\int_{\Sigma}\frac{f(\xi)}{\xi-(z\pm i0)}d\xi,
\end{equation}
where $z$ lies in the $\pm$ regions and 
$\Sigma:=(-\infty,-q_{0})\cup (q_{0}, +\infty)\cup \overrightarrow{(q_{0}, -q_{0})}\cup \{q_{0}e^{i\theta}, \pi\leq \theta\leq 2\pi\}_{clockwise, upper \ circle}\cup \{q_{0}e^{i\theta}, -\pi\leq \theta\leq 0\}_{anticlockwise, lower \ circle}$ indicated by the curve in Figure \ref{fig5} (red) with (black) arrows.

%
%
%
%
%

If $f_{\pm}(\xi)$ is analytic in $D^{\pm}$ and $f_{\pm}(\xi)$ is decaying at large $\xi$, then
\begin{equation}
P_{\pm}(f_{\pm})(z)=\pm f_{\pm}(z), \ \ \ P_{\mp}(f_{\pm})(z)=0.
\end{equation}
Applying $P_{-}$ to (\ref{E:jump 1'}) and $P_{+}$ to (\ref{E:jump 2'}), we can obtain
\begin{equation}
\begin{split}
\frac{\overline{N}(x,z)}{z}&=\left(\begin{array}{cc}
1\\
0
\end{array}\right)+
\frac{1}{z}\left(\begin{array}{cc}
0\\
-i\widetilde{q}^{*}
\end{array}\right)
+\sum_{j=1}^{J}\frac{b(z_{j})e^{i\big(z_{j}+\frac{q_{0}^{2}}{z_{j}}\big)x}\cdot N(x,z_{j})}{(z-z_{j})z_{j}a'(z_{j})}\\
&+\frac{1}{2\pi i}\int_{\Sigma}\frac{\rho(\xi)}{\xi(\xi-z)}\cdot e^{i\big(\xi+\frac{q_{0}^{2}}{\xi}\big)x}\cdot N(x,\xi)d\xi
\end{split}
\end{equation}
and
\begin{equation}
\begin{split}
\frac{N(x,z)}{z}&=\left(\begin{array}{cc}
0\\
1
\end{array}\right)+
\frac{1}{z}\left(\begin{array}{cc}
-i\widetilde{q}\\
0
\end{array}\right)
+\sum_{j=1}^{\overline{J}}\frac{\overline{b}(\overline{z}_{j})e^{-i\big(\overline{z}_{j}+\frac{q_{0}^{2}}{\overline{z}_{j}}\big)x}\cdot \overline{N}(x,\overline{z}_{j})}{(z-\overline{z}_{j})\overline{z}_{j}\overline{a}'(\overline{z}_{j})}\\
&-\frac{1}{2\pi i}\int_{\Sigma}\frac{\overline{\rho}(\xi)}{\xi(\xi-z)}\cdot e^{-i\big(\xi+\frac{q_{0}^{2}}{\xi}\big)x}\cdot \overline{N}(x,\xi)d\xi,
\end{split}
\end{equation}
i.e.,
\begin{equation}
\label{E:eigenfunction 1'}
\begin{split}
\overline{N}(x,z)&=\left(\begin{array}{cc}
z\\
-i\widetilde{q}^{*}
\end{array}\right)
+\sum_{j=1}^{J}\frac{z\cdot b(z_{j})e^{i\big(z_{j}+\frac{q_{0}^{2}}{z_{j}}\big)x}\cdot N(x,z_{j})}{(z-z_{j})z_{j}a'(z_{j})}\\
&+\frac{z}{2\pi i}\int_{\Sigma}\frac{\rho(\xi)}{\xi(\xi-z)}\cdot e^{i\big(\xi+\frac{q_{0}^{2}}{\xi}\big)x}\cdot N(x,\xi)d\xi
\end{split}
\end{equation}
and
\begin{equation}
\label{E:eigenfunction 2'}
\begin{split}
N(x,z)&=\left(\begin{array}{cc}
-i\widetilde{q}\\
z
\end{array}\right)
+\sum_{j=1}^{\overline{J}}\frac{z\cdot\overline{b}(\overline{z}_{j})e^{-i\big(\overline{z}_{j}+\frac{q_{0}^{2}}{\overline{z}_{j}}\big)x}\cdot \overline{N}(x,\overline{z}_{j})}{(z-\overline{z}_{j})\overline{z}_{j}\overline{a}'(\overline{z}_{j})}\\
&-\frac{z}{2\pi i}\int_{\Sigma}\frac{\overline{\rho}(\xi)}{\xi(\xi-z)}\cdot e^{-i\big(\xi+\frac{q_{0}^{2}}{\xi}\big)x}\cdot \overline{N}(x,\xi)d\xi.
\end{split}
\end{equation}
\subsubsection{Right scattering problem}
The right scattering problem can be written as
\begin{equation}
\psi(x,z)=\alpha(z)\overline{\phi}(x,z)+\beta(z)\phi(x,z)
\end{equation}
and
\begin{equation}
\overline{\psi}(x,z)=\overline{\alpha}(z)\phi(x,z)+\overline{\beta}(z)\overline{\phi}(x,z),
\end{equation}
where $\alpha(z)$, $\overline{\alpha}(z)$, $\beta(z)$ and $\overline{\beta}(z)$ are the right scattering data. Moreover, we can get the right scattering data and left scattering data satisfy the following relations
\begin{equation}
\overline{\alpha}(z)=\overline{a}(z), \ \ \ \alpha(z)=a(z), \ \ \ \overline{\beta}(z)=-b(z), \ \ \ \beta(z)=-\overline{b}(z).
\end{equation}
Thus,
\begin{equation}
\begin{split}
N(x,z)&=\alpha(z)\overline{M}(x,z)+\beta(z)M(x,z)e^{-i\big(z+\frac{q_{0}^{2}}{z}\big)x}\\
&=a(z)\overline{M}(x,z)-\overline{b}(z)M(x,z)e^{-i\big(z+\frac{q_{0}^{2}}{z}\big)x}
\end{split}
\end{equation}
and
\begin{equation}
\begin{split}
\overline{N}(x,z)&=\overline{\alpha}(z)M(x,z)+\overline{\beta}(z)\overline{M}(x,z)e^{i\big(z+\frac{q_{0}^{2}}{z}\big)x}\\
&=\overline{a}(z)M(x,z)-b(z)\overline{M}(x,z)e^{i\big(z+\frac{q_{0}^{2}}{z}\big)x}.
\end{split}
\end{equation}
The above two equations can be written as
\begin{equation}
\frac{N(x,z)}{za(z)}-\frac{\overline{M}(x,z)}{z}=-\frac{\overline{b}(z)}{a(z)}\cdot e^{-i\big(z+\frac{q_{0}^{2}}{z}\big)x}\cdot \frac{M(x,z)}{z}
\end{equation}
and
\begin{equation}
\frac{\overline{N}(x,z)}{z\overline{a}(z)}-\frac{M(x,z)}{z}=-\frac{b(z)}{\overline{a}(z)}\cdot e^{i\big(z+\frac{q_{0}^{2}}{z}\big)x}\cdot \frac{\overline{M}(x,z)}{z}.
\end{equation}
By the symmetry relations of scattering data, we have
\begin{equation}
\frac{N(x,z)}{za(z)}-\frac{\overline{M}(x,z)}{z}=\rho^{*}(-z^{*})\cdot e^{-i\big(z+\frac{q_{0}^{2}}{z}\big)x}\cdot \frac{M(x,z)}{z}
\end{equation}
and
\begin{equation}
\frac{\overline{N}(x,z)}{z\overline{a}(z)}-\frac{M(x,z)}{z}=\overline{\rho}^{*}(-z^{*})\cdot e^{i\big(z+\frac{q_{0}^{2}}{z}\big)x}\cdot \frac{\overline{M}(x,z)}{z},
\end{equation}
so that the functions will be bounded at infinity, though having an additional pole at $z=0$. Note that $N(x,z)/a(z)$, as a function of
$z$, is defined in $D^{+}$, where it has simple poles $z_{j}$, i.e., $a(z_{j})=0$, and $\overline{N}(x,z)/\overline{a}(z)$,
is defined in $D^{-}$, where it has simple poles $\overline{z}_{j}$, i.e., $\overline{a}(\overline{z}_{j})=0$. It follows that
\begin{equation}
N(x,z_{j})=-\overline{b}(z_{j})M(x,z_{j})e^{-i\big(z_{j}+\frac{q_{0}^{2}}{z_{j}}\big)x}
\end{equation}
and
\begin{equation}
\overline{N}(x,\overline{z}_{j})=-b(\overline{z}_{j})\overline{M}(x,\overline{z}_{j})e^{i\big(\overline{z}_{j}+\frac{q_{0}^{2}}{\overline{z}_{j}}\big)x}.
\end{equation}
Then, as before, subtracting the values at infinity, the induced pole at the origin and the poles, assumed simple in $D^{+}$/ $D^{-}$ respectively, at $a(z_j)=0, j=1,2...J$ and  $\bar{a}(\overline{z}_{j}), j=1,2...\bar{J} $ gives
\begin{equation}
\label{E:jump 3'}
\begin{split}
&\left[\frac{N(x,z)}{za(z)}-\left(\begin{array}{cc}
0\\
1
\end{array}\right)
-\frac{1}{z}\left(\begin{array}{cc}
-i\widetilde{q}\\
0
\end{array}\right)
-\sum_{j=1}^{J}\frac{N(x,z_{j})}{(z-z_{j})z_{j}a'(z_{j})}\right]\\
&-\left[\frac{\overline{M}(x,z)}{z}-\left(\begin{array}{cc}
0\\
1
\end{array}\right)
-\frac{1}{z}\left(\begin{array}{cc}
-i\widetilde{q}\\
0
\end{array}\right)-\sum_{j=1}^{J}\frac{-\overline{b}(z_{j})M(x,z_{j})e^{-i\big(z_{j}+\frac{q_{0}^{2}}{z_{j}}\big)x}}{(z-z_{j})z_{j}a'(z_{j})}\right]\\
&=\rho^{*}(-z^{*})\cdot e^{-i\big(z+\frac{q_{0}^{2}}{z}\big)x}\cdot \frac{M(x,z)}{z}
\end{split}
\end{equation}
and
\begin{equation}
\label{E:jump 4'}
\begin{split}
&\left[\frac{\overline{N}(x,z)}{z\overline{a}(z)}-\left(\begin{array}{cc}
1\\
0
\end{array}\right)
-\frac{1}{z}\left(\begin{array}{cc}
0\\
-i\widetilde{q}^{*}
\end{array}\right)
-\sum_{j=1}^{\overline{J}}\frac{\overline{N}(x,\overline{z}_{j})}{(z-\overline{z}_{j})\overline{z}_{j}\overline{a}'(\overline{z}_{j})}\right]\\
&-\left[\frac{M(x,z)}{z}-\left(\begin{array}{cc}
1\\
0
\end{array}\right)
-\frac{1}{z}\left(\begin{array}{cc}
0\\
-i\widetilde{q}^{*}
\end{array}\right)-\sum_{j=1}^{\overline{J}}\frac{-b(\overline{z}_{j})\overline{M}(x,\overline{z}_{j})e^{i\big(\overline{z}_{j}+\frac{q_{0}^{2}}{\overline{z}_{j}}\big)x}}
{(z-\overline{z}_{j})\overline{z}_{j}\overline{a}'(\overline{z}_{j})}\right]\\
&=\overline{\rho}^{*}(-z^{*})\cdot e^{i\big(z+\frac{q_{0}^{2}}{z}\big)x}\cdot \frac{\overline{M}(x,z)}{z}.
\end{split}
\end{equation}
Applying $P_{-}$ to (\ref{E:jump 3'}) and $P_{+}$ to (\ref{E:jump 4'}), we can obtain
\begin{equation}
\label{E:eigenfunction 3'}
\begin{split}
\overline{M}(x,z)&=\left(\begin{array}{cc}
-i\widetilde{q}\\
z
\end{array}\right)+
\sum_{j=1}^{J}\frac{-z\cdot\overline{b}(z_{j})M(x,z_{j})e^{-i\big(z_{j}+\frac{q_{0}^{2}}{z_{j}}\big)x}}{(z-z_{j})z_{j}a'(z_{j})}\\
&+\frac{z}{2\pi i}\int_{\Sigma}\frac{\rho^{*}(-\xi^{*})}{\xi(\xi-z)}\cdot e^{-i\big(\xi+\frac{q_{0}^{2}}{\xi}\big)x}\cdot M(x,\xi)d\xi
\end{split}
\end{equation}
and
\begin{equation}
\label{E:eigenfunction 4'}
\begin{split}
M(x,z)&=\left(\begin{array}{cc}
z\\
-i\widetilde{q}^{*}
\end{array}\right)+
\sum_{j=1}^{\overline{J}}\frac{-z\cdot b(\overline{z}_{j})\overline{M}(x,\overline{z}_{j})e^{i\big(\overline{z}_{j}+\frac{q_{0}^{2}}{\overline{z}_{j}}\big)x}}
{(z-\overline{z}_{j})\overline{z}_{j}\overline{a}'(\overline{z}_{j})}\\
&-\frac{z}{2\pi i}\int_{\Sigma}\frac{\overline{\rho}^{*}(-\xi^{*})}{\xi(\xi-z)}\cdot e^{i\big(\xi+\frac{q_{0}^{2}}{\xi}\big)x}\cdot \overline{M}(x,\xi)d\xi.
\end{split}
\end{equation}
\subsection{Recovery of the potentials}
Note that $N_{1}(x,z)\sim -iq(x)$ as $z\rightarrow\infty$, and
\begin{equation}
N_{1}(x,z)\sim -i\widetilde{q}+\sum_{j=1}^{\overline{J}}\frac{\overline{b}(\overline{z}_{j})e^{-i\big(\overline{z}_{j}+\frac{q_{0}^{2}}{\overline{z}_{j}}\big)x}\cdot \overline{N}_{1}(x,\overline{z}_{j})}{\overline{z}_{j}\overline{a}'(\overline{z}_{j})}
+\frac{1}{2\pi i}\int_{\Sigma}\frac{\overline{\rho}(\xi)}{\xi}\cdot e^{-i\big(\xi+\frac{q_{0}^{2}}{\xi}\big)x}\cdot \overline{N}_{1}(x,\xi)d\xi,
\end{equation}
we have
\begin{equation}
\label{asympN1c'}
q(x)=\widetilde{q}+i\sum_{j=1}^{\overline{J}}\frac{\overline{b}(\overline{z}_{j})e^{-i\big(\overline{z}_{j}+\frac{q_{0}^{2}}{\overline{z}_{j}}\big)x}\cdot \overline{N}_{1}(x,\overline{z}_{j})}{\overline{z}_{j}\overline{a}'(\overline{z}_{j})}+\frac{1}{2\pi}\int_{\Sigma}\frac{\overline{\rho}(\xi)}{\xi}\cdot e^{-i\big(\xi+\frac{q_{0}^{2}}{\xi}\big)x}\cdot \overline{N}_{1}(x,\xi)d\xi.
\end{equation}

\subsection{Closing the system}

%

Similarly, we can get $J=\overline{J}$ from $a\left(-\frac{q_{0}^{2}}{z}\right)=\overline{a}(z)$. To close the system,
by the symmetry relations between the eigenfunctions, we have
\begin{equation}
\label{E:closing system 3}
\begin{split}
&\left(\begin{array}{cc}
N_{1}(x,z)\\
N_{2}(x,z)
\end{array}\right)=\left(\begin{array}{cc}
-i\widetilde{q}\\
z
\end{array}\right)
+\sum_{j=1}^{J}\frac{z\cdot\overline{b}(\overline{z}_{j})e^{-i\big(\overline{z}_{j}+\frac{q_{0}^{2}}{\overline{z}_{j}}\big)x} }{(z-\overline{z}_{j})\overline{z}_{j}\overline{a}'(\overline{z}_{j})}\cdot\\
&\left(\begin{array}{cc}
\overline{z}_{j}+\sum_{l=1}^{J}\frac{\overline{z}_{j}\cdot b(z_{l})e^{i\big(z_{l}+\frac{q_{0}^{2}}{z_{l}}\big)x}}{(\overline{z}_{j}-z_{l})z_{l}a'(z_{l})}\cdot N_{1}(x, z_{l})+\frac{\overline{z}_{j}}{2\pi i}\int_{\Sigma}\frac{\rho(\xi)}{\xi(\xi-\overline{z}_{j})}\cdot e^{i\big(\xi+\frac{q_{0}^{2}}{\xi}\big)x}\cdot N_{1}(x, \xi)d\xi\\
-i\widetilde{q}^{*}+\sum_{l=1}^{J}\frac{\overline{z}_{j}\cdot b(z_{l})e^{i\big(z_{l}+\frac{q_{0}^{2}}{z_{l}}\big)x}}{(\overline{z}_{j}-z_{l})z_{l}a'(z_{l})}\cdot N_{2}(x, z_{l})+\frac{\overline{z}_{j}}{2\pi i}\int_{\Sigma}\frac{\rho(\xi)}{\xi(\xi-\overline{z}_{j})}\cdot e^{i\big(\xi+\frac{q_{0}^{2}}{\xi}\big)x}\cdot N_{2}(x, \xi)d\xi
\end{array}\right)\\
&-\frac{z}{2\pi i}\int_{\Sigma}\frac{\overline{\rho}(\xi)}{\xi(\xi-z)}\cdot e^{-i\big(\xi+\frac{q_{0}^{2}}{\xi}\big)x}\cdot\\
&\left(\begin{array}{cc}
\xi+\sum_{l=1}^{J}\frac{\xi\cdot b(z_{l})e^{i\big(z_{l}-\frac{q_{0}^{2}}{z_{l}}\big)x}}{(\xi-z_{l})z_{l}a'(z_{l})}\cdot N_{1}(x, z_{l})+\frac{\xi}{2\pi i}\int_{\Sigma}\frac{\rho(\eta)}{\eta(\eta-\xi)}\cdot e^{i\big(\eta+\frac{q_{0}^{2}}{\eta}\big)x}\cdot N_{1}(x, \eta)d\eta\\
-i\widetilde{q}^{*}+\sum_{l=1}^{J}\frac{\xi\cdot b(z_{l})e^{i\big(z_{l}+\frac{q_{0}^{2}}{z_{l}}\big)x}}{(\xi-z_{l})z_{l}a'(z_{l})}\cdot N_{2}(x, z_{l})+\frac{\xi}{2\pi i}\int_{\Sigma}\frac{\rho(\eta)}{\eta(\eta-\xi)}\cdot e^{i\big(\eta+\frac{q_{0}^{2}}{\eta}\big)x}\cdot N_{2}(x, \eta)d\eta
\end{array}\right)d\xi,
\end{split}
\end{equation}
\begin{equation}
\label{E:closing system 4}
\begin{split}
&\left(\begin{array}{cc}
\overline{M}_{1}(x,z)\\
\overline{M}_{2}(x,z)
\end{array}\right)=\left(\begin{array}{cc}
-i\widetilde{q}\\
z
\end{array}\right)+
\sum_{j=1}^{J}\frac{-z\cdot\overline{b}(z_{j})e^{-i\big(z_{j}+\frac{q_{0}^{2}}{z_{j}}\big)x}}{(z-z_{j})z_{j}a'(z_{j})}\cdot\\
& \left(\begin{array}{cc}
z_{j}+
\sum_{l=1}^{J}\frac{-z_{j}\cdot b(\overline{z}_{l})e^{i\big(\overline{z}_{l}+\frac{q_{0}^{2}}{\overline{z}_{l}}\big)x}}
{(z_{j}-\overline{z}_{l})\overline{z}_{l}\overline{a}'(\overline{z}_{l})}\cdot \overline{M}_{1}(x, \overline{z}_{l})-\frac{z_{j}}{2\pi i}\int_{\Sigma}\frac{\overline{\rho}^{*}(-\xi^{*})}{\xi(\xi-z_{j})}\cdot e^{i\big(\xi+\frac{q_{0}^{2}}{\xi}\big)x}\cdot\overline{M}_{1}(x,\xi)d\xi\\
-i\widetilde{q}^{*}+
\sum_{l=1}^{J}\frac{-z_{j}\cdot b(\overline{z}_{l})e^{i\big(\overline{z}_{l}+\frac{q_{0}^{2}}{\overline{z}_{l}}\big)x}}
{(z_{j}-\overline{z}_{l})\overline{z}_{l}\overline{a}'(\overline{z}_{l})}\cdot \overline{M}_{2}(x, \overline{z}_{l})-\frac{z_{j}}{2\pi i}\int_{\Sigma}\frac{\overline{\rho}^{*}(-\xi^{*})}{\xi(\xi-z_{j})}\cdot e^{i\big(\xi+\frac{q_{0}^{2}}{\xi}\big)x}\cdot\overline{M}_{2}(x,\xi)d\xi
\end{array}\right)\\
&+\frac{z}{2\pi i}\int_{\Sigma}\frac{\rho^{*}(-\xi^{*})}{\xi(\xi-z)}\cdot e^{-i\big(\xi+\frac{q_{0}^{2}}{\xi}\big)x}\cdot\\
&\left(\begin{array}{cc}
\xi+
\sum_{l=1}^{J}\frac{-\xi\cdot b(\overline{z}_{l})e^{i\big(\overline{z}_{l}+\frac{q_{0}^{2}}{\overline{z}_{l}}\big)x}}
{(\xi-\overline{z}_{l})\overline{z}_{l}\overline{a}'(\overline{z}_{l})}\cdot\overline{M}_{1}(x,\overline{z}_{l})-\frac{\xi}{2\pi i}\int_{\Sigma}\frac{\overline{\rho}^{*}(-\eta^{*})}{\eta(\eta-\xi)}\cdot e^{i\big(\eta+\frac{q_{0}^{2}}{\eta}\big)x}\cdot \overline{M}_{1}(x,\eta)d\eta\\
-i\widetilde{q}^{*}+
\sum_{l=1}^{J}\frac{-\xi\cdot b(\overline{z}_{l})e^{i\big(\overline{z}_{l}+\frac{q_{0}^{2}}{\overline{z}_{l}}\big)x}}
{(\xi-\overline{z}_{l})\overline{z}_{l}\overline{a}'(\overline{z}_{l})}\cdot\overline{M}_{2}(x,\overline{z}_{l})-\frac{\xi}{2\pi i}\int_{\Sigma}\frac{\overline{\rho}^{*}(-\eta^{*})}{\eta(\eta-\xi)}\cdot e^{i\big(\eta+\frac{q_{0}^{2}}{\eta}\big)x}\cdot \overline{M}_{2}(x,\eta)d\eta
\end{array}\right)d\xi.
\end{split}
\end{equation}

We note that from eq (\ref{asympN1c'})  $q(x)$  is given in only terms of the component $\bar{N}_1$. We can use only the second component of eq (\ref{E:closing system 4}) to find $\bar{M}_2$ which via the symmetry relation (\ref{NbMb2})  yields this function and hence $q(x)$. Hence to complete the inverse scattering we reduce the problem to solving an integral equation in terms of the component $\bar{M}_2$ only.

\subsection{Trace formula}
In a similar manner to the prior section, we can get the Trace formula as follows.
\begin{equation}
\log a(z)=\log\left(\prod_{j=1}^{J_{1}}\frac{z-z_{j}}{z+\frac{q_{0}^{2}}{z_{j}}}\cdot\frac{z+z_{j}^{*}}{z-\frac{q_{0}^{2}}{z_{j}^{*}}}
\cdot\prod_{i=1}^{J_{2}}\frac{z-\widetilde{z}_{i}}{z+\frac{q_{0}^{2}}{\widetilde{z}_{i}}}\right)+\frac{1}{2\pi i}\int_{\Sigma}\frac{\log (1-b(\xi)b^{*}(-\xi^{*}))}{\xi-z}d\xi, \ \ \ z\in D^{+}
\end{equation}
and
\begin{equation}
\log \overline{a}(z)=\log\left(\prod_{j=1}^{J_{1}} \frac{z+\frac{q_{0}^{2}}{z_{j}}}{z-z_{j}}\cdot \frac{z-\frac{q_{0}^{2}}{z_{j}^{*}}}{z+z_{j}^{*}}
\cdot\prod_{i=1}^{J_{2}} \frac{z+\frac{q_{0}^{2}}{\widetilde{z}_{i}}}{z-\widetilde{z}_{i}}\right)-\frac{1}{2\pi i}\int_{\Sigma}\frac{\log (1-b(\xi)b^{*}(-\xi^{*}))}{\xi-z}d\xi, \ \ \ z\in D^{-},
\end{equation}
where $\Re\widetilde{z}_{i}=0$, $\Re z_{j}\neq 0$ and $2J_{1}+J_{2}=J$.

\subsection{Reflectioness potentials and soliton solutions}
Reflectioness potentials and soliton solutions correspond to zero reflection coefficients, i.e., $\rho(\xi)=0$ and $\overline{\rho}(\xi)=0$ for all 
$\xi\in \Sigma$.
 We also note,   from the symmetry relation $\bar{b}(z)=-b^*(-z^*)$ it follows that the reflection coefficients $\rho(z)= b(z)/a(z), \bar{\rho}(z)= \bar{b}(z)/\bar{a}(z)$ will both vanish when $b(z)=0$ for $z$ on the real axis. By substituting $z=z_{l}$ in (\ref{E:closing system 3}) and $z=\overline{z}_{l}$ in (\ref{E:closing system 4}), the system (\ref{E:closing system 3}) and (\ref{E:closing system 4}) reduces to an algebraic systems of equations that  determine the functional form of these special potentials. When time dependence is added the reflectionless potentials correspond to soliton solutions.
The reduced equations take the form
\begin{equation}
\begin{split}
\left(\begin{array}{cc}
N_{1}(x,z_{l})\\
N_{2}(x,z_{l})
\end{array}\right)&=\left(\begin{array}{cc}
-i\widetilde{q}\\
z_{l}
\end{array}\right)
+\sum_{j=1}^{J}\frac{z_{l}\cdot\overline{b}(\overline{z}_{j})e^{-i\big(\overline{z}_{j}+\frac{q_{0}^{2}}{\overline{z}_{j}}\big)x} }{(z_{l}-\overline{z}_{j})\overline{z}_{j}\overline{a}'(\overline{z}_{j})}\cdot\\
&\left(\begin{array}{cc}
\overline{z}_{j}+\sum_{l=1}^{J}\frac{\overline{z}_{j}\cdot b(z_{l})e^{i\big(z_{l}+\frac{q_{0}^{2}}{z_{l}}\big)x}}{(\overline{z}_{j}-z_{l})z_{l}a'(z_{l})}\cdot N_{1}(x, z_{l})\\
-i\widetilde{q}^{*}+\sum_{l=1}^{J}\frac{\overline{z}_{j}\cdot b(z_{l})e^{i\big(z_{l}+\frac{q_{0}^{2}}{z_{l}}\big)x}}{(\overline{z}_{j}-z_{l})z_{l}a'(z_{l})}\cdot N_{2}(x, z_{l})
\end{array}\right),
\end{split}
\end{equation}
\begin{equation}
\begin{split}
\left(\begin{array}{cc}
\overline{M}_{1}(x,\overline{z}_{l})\\
\overline{M}_{2}(x,\overline{z}_{l})
\end{array}\right)&=\left(\begin{array}{cc}
-i\widetilde{q}\\
\overline{z}_{l}
\end{array}\right)+
\sum_{j=1}^{J}\frac{-\overline{z}_{l}\cdot\overline{b}(z_{j})e^{-i\big(z_{j}+\frac{q_{0}^{2}}{z_{j}}\big)x}}{(\overline{z}_{l}-z_{j})z_{j}a'(z_{j})}\cdot\\
& \left(\begin{array}{cc}
z_{j}+
\sum_{l=1}^{J}\frac{-z_{j}\cdot b(\overline{z}_{l})e^{i\big(\overline{z}_{l}+\frac{q_{0}^{2}}{\overline{z}_{l}}\big)x}}
{(z_{j}-\overline{z}_{l})\overline{z}_{l}\overline{a}'(\overline{z}_{l})}\cdot \overline{M}_{1}(x, \overline{z}_{l})\\
-i\widetilde{q}^{*}+
\sum_{l=1}^{J}\frac{-z_{j}\cdot b(\overline{z}_{l})e^{i\big(\overline{z}_{l}+\frac{q_{0}^{2}}{\overline{z}_{l}}\big)x}}
{(z_{j}-\overline{z}_{l})\overline{z}_{l}\overline{a}'(\overline{z}_{l})}\cdot \overline{M}_{2}(x, \overline{z}_{l})
\end{array}\right).
\end{split}
\end{equation}



The above equations are an algebraic system to solve for either $N(x,z_{l})$ or $\overline{M}(x,\overline{z}_{l}), l=1,2...J$. The potential is reconstructed from
equation (\ref{asympN1c'}) with $\rho(\xi)=0,\bar{\rho}(\xi)=0$; i.e.,

\begin{equation}
\label{asympN1d1}
q(x)=\widetilde{q}+i\sum_{j=1}^{J}\frac{\overline{b}(\overline{z}_{j})e^{-i\big(\overline{z}_{j}+\frac{q_{0}^{2}}{\overline{z}_{j}}\big)x}\cdot \overline{N}_{1}(x,\overline{z}_{j})}{\overline{z}_{j}\overline{a}'(\overline{z}_{j})}.
\end{equation}
Since $a(z)\sim 1$ as $z\rightarrow 0$, from the trace formula when $b(\xi)=0$ in $\Sigma$, we have the following constraint for the reflectionless potentials
\begin{equation}
\label{E:product2}
\prod_{j=1}^{J_{1}}\frac{|z_{j}|^{4}}{q_{0}^{4}}\cdot\prod_{i=1}^{J_{2}}\left(\frac{|\widetilde{z}_{i}|^{2}}{q_{0}^{2}}\right)=1,
\end{equation}
where $2J_{1}+J_{2}=J$. We claim that $J\geq 2$. Otherwise, if $J=1$, then $J_{1}=0$ and $J_{2}=1$, which implies that $|\widetilde{z}_{1}|^{2}= q_{0}^{2}$; hence the eigenvalue lies on the circle, which in this case is the continuous spectrum. Such eigenvalues are not proper; they are not considered here.

\subsection{Discrete scattering data and their symmetries}
In order to find reflectionless potentials and solitons we need to be able to calculate the relevant discrete scattering data.

The coefficients
\[ b(z_j) ~\mbox{and} ~~~\bar{b}(\bar{z}_j) ,~~j=1,2,...J.\]
can be calculated in the same way as in Section 3.12


to find

\begin{equation}
b(z_{j})b^{*}(-z_{j}^{*})=1, \bar{b}(\bar{z}_j) \bar{b}^{*}(-\bar{z}_{j}^{*}) =1,
\end{equation}

\begin{equation}
-b(z_{j})\overline{b}(z_{j})=1, \ \ \ -b(\overline{z}_{j})\overline{b}(\overline{z}_{j})=1.
\end{equation}


The functions
\[a'(z_j), ~~\bar{a}'(z_j) \]
 can be calculated via the trace formulae.

\subsection{Reflectionless potential solution: 2-eigenvalue}

In this subsection, we construct  an explicit 2-eigenvalue reflectionless potential by setting $J=2$; (this solution turns into a soliton solution when time dependence is added. The simplest reflectionless potential case  obtains when $J_{1}=0$ and $J_{2}=2$; as mentioned above when $J_{1}=1, J_{2}=0$ leads to eigenvalues on the continuous spectrum.  Note that $|\widetilde{z}_{1}|\cdot|\widetilde{z}_{2}|=q_{0}^{2}$. Now let $\widetilde{z}_{1}=iv_{1}$, where $v_{1}>q_{0}$. Then $\widetilde{z}_{2}=-i\frac{q_{0}^{2}}{v_{1}}$, $a(iv_{1})=a(-i\frac{q_{0}^{2}}{v_{1}})=0$ and $\overline{a}(i \frac{q_{0}^{2}}{v_{1}})=\overline{a}(-i v_{1})=0$. Thus,
\begin{equation}
a(z)=\frac{z-iv_{1}}{z-i\frac{q_{0}^{2}}{v_{1}}}\cdot \frac{z+i\frac{q_{0}^{2}}{v_{1}}}{z+iv_{1}}, \ \ \
\overline{a}(z)=\frac{z-i\frac{q_{0}^{2}}{v_{1}}}{z-iv_{1}}\cdot \frac{z+iv_{1}}{z+i\frac{q_{0}^{2}}{v_{1}}},
\end{equation}
we can get
\begin{equation}
a'(iv_{1})=\frac{i\left(v_{1}+\frac{q_{0}^{2}}{v_{1}}\right)}{2(q_{0}^{2}-v_{1}^{2})}, \ \ \
a'\left(-i\frac{q_{0}^{2}}{v_{1}}\right)=\frac{i\left(\frac{q_{0}^{2}}{v_{1}}+v_{1}\right)}{2q_{0}^{2}\left(\frac{q_{0}^{2}}{v_{1}^{2}}-1\right)},
\end{equation}
\begin{equation}
\overline{a}'(-iv_{1})=\frac{i\left(v_{1}+\frac{q_{0}^{2}}{v_{1}}\right)}{2(v_{1}^{2}-q_{0}^{2})}, \ \ \
\overline{a}'\left(i \frac{q_{0}^{2}}{v_{1}}\right)=\frac{-i\left(v_{1}+\frac{q_{0}^{2}}{v_{1}}\right)}{2q_{0}^{2}\left(\frac{q_{0}^{2}}{v_{1}^{2}}-1\right)}.
\end{equation}
Similarly, from section 4.12,
we can get $|b(iv_{1})|=|\overline{b}(iv_{1})|=|b(-i\frac{q_{0}^{2}}{v_{1}})|=|\overline{b}(-i\frac{q_{0}^{2}}{v_{1}})|
=|b(-iv_{1})|=|\overline{b}(-iv_{1})|=|b(i\frac{q_{0}^{2}}{v_{1}})|=|\overline{b}(i\frac{q_{0}^{2}}{v_{1}})|=1$. So we can write
$b(iv_{1})=e^{i\varphi_{1}}$ and $b(-i\frac{q_{0}^{2}}{v_{1}})=e^{i\varphi_{2}}$.

By the symmetry relations $b^{*}(-z^{*})=-\overline{b}(z)$ and $b\left(-\frac{q_{0}^{2}}{z}\right)=\frac{\widetilde{q}^{*}}{\widetilde{q}} \overline{b}(z)$, we can obtain $\overline{b}(i \frac{q_{0}^{2}}{v_{1}})=e^{i(2\theta+\varphi_{1})}$, $\overline{b}(-iv_{1})=e^{i(2\theta+\varphi_{2})}$,
$\overline{b}(iv_{1})=-e^{-i\varphi_{1}}$, $\overline{b}(-i\frac{q_{0}^{2}}{v_{1}})=-e^{-i\varphi_{2}}$, $b(-iv_{1})=-e^{-i(2\theta+\varphi_{2})}$, $b(i\frac{q_{0}^{2}}{v_{1}})=-e^{-i(2\theta+\varphi_{1})}$.

Hence, we have
\begin{equation}
\overline{M}_{2}^{*}\left(-x,i \frac{q_{0}^{2}}{v_{1}}\right)=
\frac{iq_{0}e^{2v_{1}x}\cdot(q_{0}^{2}+v_{1}^{2})\cdot\left(q_{0}+v_{1}e^{\left(\frac{q_{0}^{2}}{v_{1}}-v_{1}\right)x}e^{i(\theta+\varphi_{2})}\right)}
{v_{1}\left[2q_{0}v_{1}\cdot e^{\left(\frac{q_{0}^{2}}{v_{1}}+v_{1}\right)x}\cdot e^{i\theta}(e^{i\varphi_{1}}-e^{i\varphi_{2}})
-(q_{0}^{2}+v_{1}^{2})\left(e^{2v_{1}x}-e^{\frac{2q_{0}^{2}}{v_{1}}x}\cdot e^{i(2\theta+\varphi_{1}+\varphi_{2})}\right)\right]},
\end{equation}
\begin{equation}
\overline{M}_{2}^{*}\left(-x,-iv_{1}\right)=
\frac{ie^{v_{1}x}\cdot(q_{0}^{2}+v_{1}^{2})\cdot\left(-q_{0}e^{\frac{q_{0}^{2}}{v_{1}}x}e^{i(\theta+\varphi_{1})}+v_{1}e^{v_{1}x}\right)}
{2q_{0}v_{1}\cdot e^{\left(\frac{q_{0}^{2}}{v_{1}}+v_{1}\right)x}\cdot e^{i\theta}(e^{i\varphi_{2}}-e^{i\varphi_{1}})
-(q_{0}^{2}+v_{1}^{2})\left(e^{2v_{1}x}-e^{\frac{2q_{0}^{2}}{v_{1}}x}\cdot e^{i(2\theta+\varphi_{1}+\varphi_{2})}\right)}.
\end{equation}

By (\ref{asympN1d1}), we have
\begin{equation}
\begin{split}
q(x)&= \widetilde{q}-i\frac{2\left(\frac{q_{0}^{2}}{v_{1}^{2}}-1\right)e^{i(2\theta+\varphi_{1})} e^{\left(\frac{q_{0}^{2}}{v_{1}}-v_{1}\right)x}}{1+\frac{q_{0}^{2}}{v_{1}^{2}}}\overline{M}_{2}^{*}\left(-x, i \frac{q_{0}^{2}}{v_{1}}\right)\\
&-i\frac{2\left(v_{1}-\frac{q_{0}^{2}}{v_{1}}\right) e^{i(2\theta+\varphi_{2})} e^{\left(\frac{q_{0}^{2}}{v_{1}}-v_{1}\right)x}}{\frac{q_{0}^{2}}{v_{1}}+v_{1}}\overline{M}_{2}^{*}\left(-x, -iv_{1}\right).
\end{split}
\end{equation}
\subsection{Time evolution}
As in the prior section we find,
\begin{equation}
\frac{\partial a(t)}{\partial t}=0, \ \ \ \frac{\partial \overline{a}(t)}{\partial t}=0, \ \ \ \frac{\partial b(t)}{\partial t}=2i(q_{0}^{2}-2\lambda k)b(t).
\end{equation}
Hence, $a(t)$ and $\overline{a}(t)$ are time independent. Moreover,
\begin{equation}
b(iv_{1},t)=e^{i\varphi_{1}}\cdot e^{2i\left[q_{0}^{2}+\frac{1}{2}\left(v_{1}^{2}-\frac{q_{0}^{4}}{v_{1}^{2}}\right)\right]t},
\end{equation}
\begin{equation}
b\left(-i \frac{q_{0}^{2}}{v_{1}}, t\right)=e^{i\varphi_{2}}\cdot e^{2i\left[q_{0}^{2}-\frac{1}{2}\left(v_{1}^{2}-\frac{q_{0}^{4}}{v_{1}^{2}}\right)\right]t}.
\end{equation}
Putting all the above into the formula we had for the reflectionless potential, we obtain the following 2-soliton solution, which is a oscillating in time or `breathing' soliton solution
\begin{equation}
\begin{split}
&q(x,t)=q_{0}e^{-2iq_{0}^{2}t+i\theta}-i\frac{2\left(\frac{q_{0}^{2}}{v_{1}^{2}}-1\right)\cdot e^{i(2\theta+\varphi_{1})} \cdot e^{-2iq_{0}^{2}t+i\left(v_{1}^{2}-\frac{q_{0}^{4}}{v_{1}^{2}}\right)t}\cdot e^{\left(\frac{q_{0}^{2}}{v_{1}}-v_{1}\right)x}}{1+\frac{q_{0}^{2}}{v_{1}^{2}}}\cdot\\
&\left(\frac{iq_{0}e^{2v_{1}x}(q_{0}^{2}+v_{1}^{2})\left(q_{0}+v_{1}e^{\left(\frac{q_{0}^{2}}{v_{1}}-v_{1}\right)x}e^{i(\theta+\varphi_{2})}e^{-i\left(v_{1}^{2}-\frac{q_{0}^{4}}{v_{1}^{2}}\right)t}\right)}
{v_{1}\left[2q_{0}v_{1} e^{\left(\frac{q_{0}^{2}}{v_{1}}+v_{1}\right)x} e^{i\theta}\left(e^{i\varphi_{1}}e^{i\left(v_{1}^{2}-\frac{q_{0}^{4}}{v_{1}^{2}}\right)t}-e^{i\varphi_{2}}e^{-i\left(v_{1}^{2}-\frac{q_{0}^{4}}{v_{1}^{2}}\right)t}\right)
-(q_{0}^{2}+v_{1}^{2})\left(e^{2v_{1}x}-e^{\frac{2q_{0}^{2}}{v_{1}}x}\cdot e^{i(2\theta+\varphi_{1}+\varphi_{2})}\right)\right]}\right)\\
&-i\frac{2\left(v_{1}-\frac{q_{0}^{2}}{v_{1}}\right) e^{i(2\theta+\varphi_{2})} \cdot e^{-2iq_{0}^{2}t-i\left(v_{1}^{2}-\frac{q_{0}^{4}}{v_{1}^{2}}\right)t}\cdot e^{\left(\frac{q_{0}^{2}}{v_{1}}-v_{1}\right)x}}{\frac{q_{0}^{2}}{v_{1}}+v_{1}}\cdot\\
&\left(\frac{ie^{v_{1}x}\cdot(q_{0}^{2}+v_{1}^{2})\left(-q_{0}e^{\frac{q_{0}^{2}}{v_{1}}x}e^{i(\theta+\varphi_{1})}e^{i\left(v_{1}^{2}-\frac{q_{0}^{4}}{v_{1}^{2}}\right)t}+v_{1}e^{v_{1}x}\right)}
{2q_{0}v_{1} e^{\left(\frac{q_{0}^{2}}{v_{1}}+v_{1}\right)x}\cdot e^{i\theta}\left(e^{i\varphi_{2}}\cdot e^{-i\left(v_{1}^{2}-\frac{q_{0}^{4}}{v_{1}^{2}}\right)t}-e^{i\varphi_{1}} e^{i\left(v_{1}^{2}-\frac{q_{0}^{4}}{v_{1}^{2}}\right)t}\right)
+(q_{0}^{2}+v_{1}^{2})\left(e^{2v_{1}x}-e^{\frac{2q_{0}^{2}}{v_{1}}x} e^{i(2\theta+\varphi_{1}+\varphi_{2})}\right)}\right)\\
&=\Bigg(e^{-\frac{i}{2}\left(\frac{2\left(q_{0}^{2}+v_{1}^{2}\right)\left(t\left(q_{0}^{2}+v_{1}^{2}\right)-iv_{1}x\right)}{v_{1}^{2}}+\varphi_{1}+\varphi_{2}\right)}
\cdot\Big(2q_{0}^{4}\cdot e^{2itv_{1}^{2}+\left(\frac{q_{0}^{2}}{v_{1}}+v_{1}\right)x+i(\theta+\varphi_{1})}-2v_{1}^{4}e^{\frac{2iq_{0}^{4}t}{v_{1}^{2}}+\left(\frac{q_{0}^{2}}{v_{1}}+v_{1}\right)x+i(\theta+\varphi_{2})}\\
&-q_{0}v_{1}(q_{0}^{2}+v_{1}^{2})\cdot e^{\frac{it\left(q_{0}^{4}+v_{1}^{4}\right)}{v_{1}^{2}}+2v_{1}x}+q_{0}v_{1}(q_{0}^{2}+v_{1}^{2})\cdot e^{i\cdot \frac{q_{0}^{4}t-2iq_{0}^{2}v_{1}x+v_{1}^{2}\left(tv_{1}^{2}+2\theta+\varphi_{1}+\varphi_{2}\right)}{v_{1}^{2}}} \Big)\Bigg)/\\
&\Bigg(2v_{1}\left(-2iq_{0}v_{1}\sin\left[\frac{1}{2}\left(2t\left(\frac{q_{0}^{4}}{v_{1}^{2}}-v_{1}^{2}\right)-\varphi_{1}+\varphi_{2}\right)\right]
\right)+(q_{0}^{2}+v_{1}^{2})\sinh\left[\frac{q_{0}^{2}x}{v_{1}}+\frac{-2v_{1}x+i(2\theta+\varphi_{1}+\varphi_{2})}{2}\right]\Bigg).
\end{split}
\end{equation}
A typical breather solution is plotted in figure \ref{breather} below.

From this solution, we can see that when $t=0$, the singularity occurs at
\begin{equation}
x=\frac{Ln \left(\frac{2q_{0}^{2}v_{1}e^{i\theta}(e^{i\varphi_{1}}-e^{i\varphi_{2}})\pm\sqrt{4q_{0}^{2}v_{1}^{2}
e^{2i\theta_{1}(e^{i\varphi_{1}}-e^{i\varphi_{2}})^{2}}+4e^{i(2\theta+\varphi_{1}+\varphi_{2})}(q_{0}^{2}+v_{1}^{2})^{2}}}{2(q_{0}^{2}+v_{1}^{2})}\right)}{v_{1}-\frac{q_{0}^{2}}{v_{1}}}.
\end{equation}

\begin{figure}[h]
\begin{tabular}{cc}
\includegraphics[width=0.5\textwidth]{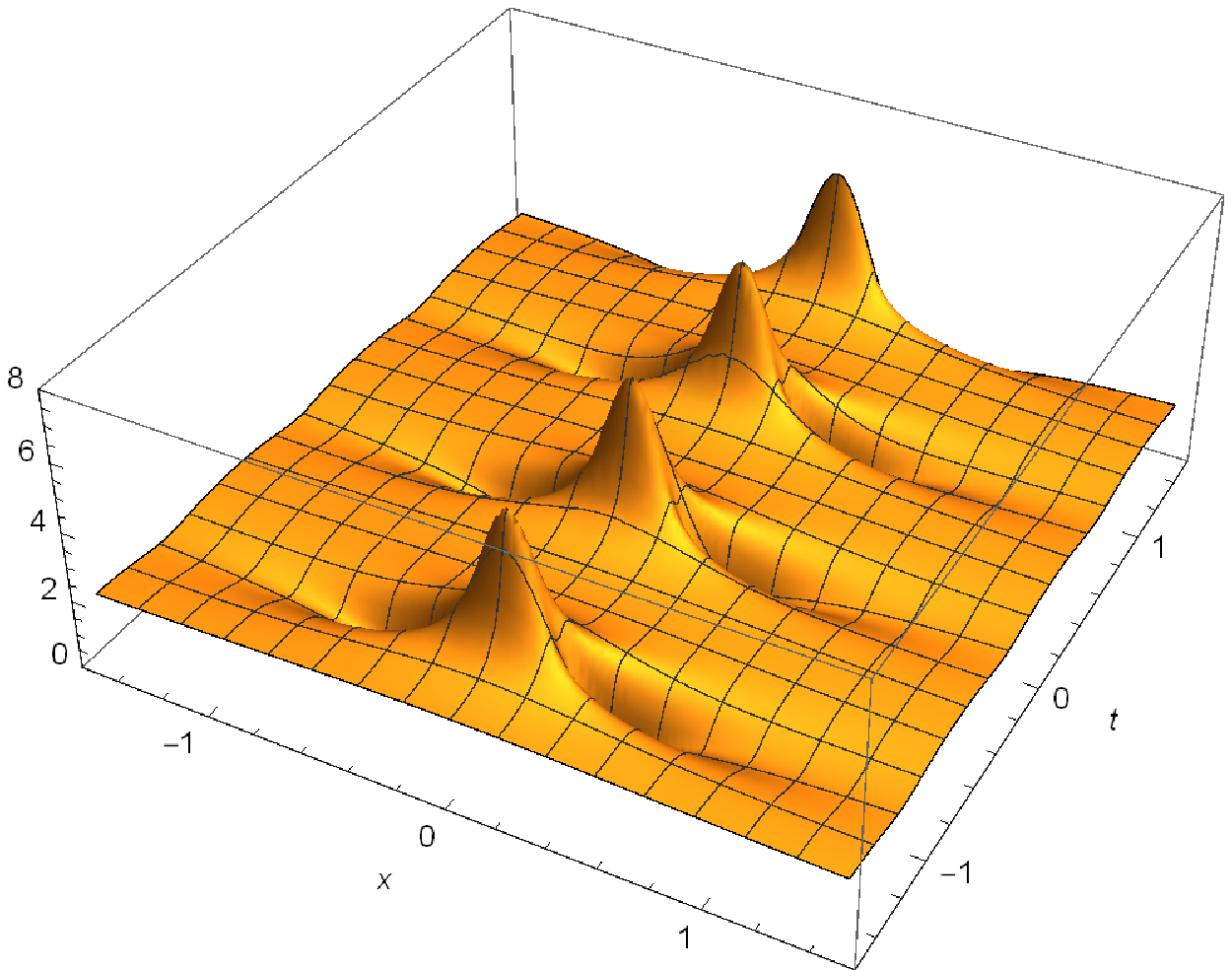}&
\includegraphics[width=0.5\textwidth]{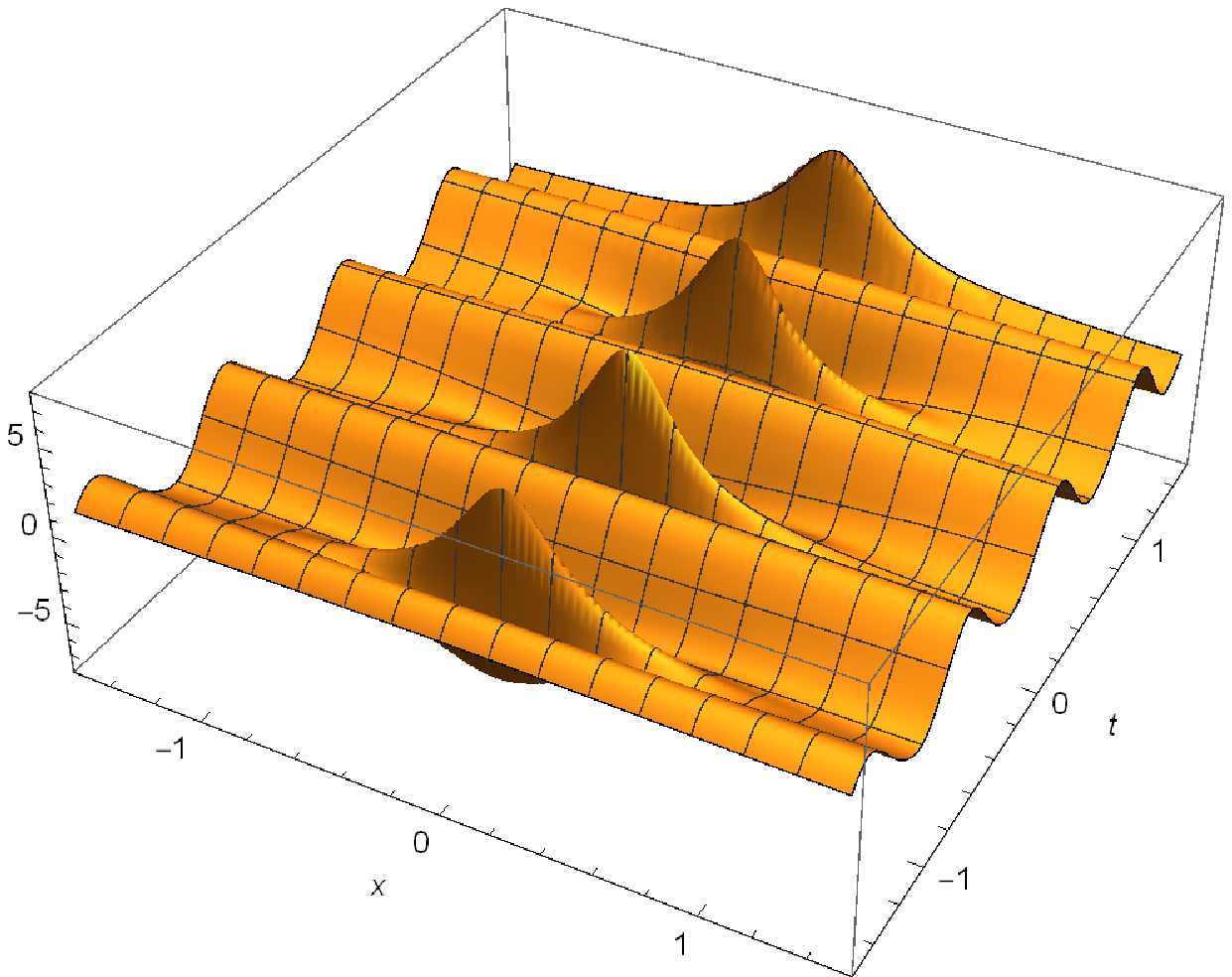}\\
(a) & (b)
\end{tabular}
\caption{(a) The amplitude of $q(x,t)$ with $\theta=\frac{\pi}{2}$, $\theta_{1}=0$, $\theta_{2}=0$, $v_{1}=3$ and $q_{0}=2$.  (b) The real part of $q(x,t)$ with $\theta=\frac{\pi}{2}$, $\theta_{1}=0$, $\theta_{2}=0$, $v_{1}=3$ and $q_{0}=2$.}
\label{breather}
\label{FigB}
\end{figure}


\section{The case of $\sigma=1$ with $\Delta \theta= \theta_{+}-\theta_{-}=0$}
Next we will study the nonlocal nonlinear Schr\"{o}dinger (NLS) equation (\ref{E:nonlocal NLS}) with $\sigma=1$
\begin{equation}
\label{E:nonlocal focusing NLS}
iq_{t}(x,t)=q_{xx}(x,t)-2q^{2}(x,t)q^{*}(-x,t)
\end{equation}
with nonzero boundary conditions (NZBCs)
\begin{equation}
q(x,t)\rightarrow \widehat{q}(t)=q_{0}e^{2iq_{0}^{2}t+i\theta},\ \ as \ \ x\rightarrow\pm\infty,
\end{equation}
where $q_{0}>0$, $0\leq \theta<2\pi$.

\subsection{Direct Scattering}
The nonlocal nonlinear Schr\"{o}dinger (NLS) equation (\ref{E:nonlocal NLS}) with $\sigma=1$ is associated with the following $2\times2$
compatible systems  \cite{AblowitzMusslimani}:
\begin{equation}
v_{x}=
\left(\begin{array}{cc}
-ik& q(x,t)\\
q^{*}(-x,t)& ik
\end{array}\right)v,
\end{equation}
\begin{equation}
v_{t}=
\left(\begin{array}{cc}
2ik^{2}+iq(x,t)q^{*}(-x,t)& -2kq(x,t)-iq_{x}(x,t)\\
-2kq^{*}(-x,t)-iq^{*}_{x}(-x,t)& -2ik^{2}-iq(x,t)q^{*}(-x,t)
\end{array}\right)v,
\end{equation}
where $q(x,t)$ is a complex-valued function of the real variables $x$ and $t$ with corresponding Lax pairs given by
\begin{equation}
X=
\left(\begin{matrix}
-ik& q(x,t)\\
q^{*}(-x,t)& ik
\end{matrix}\right),\ \
T=
\left(\begin{matrix}
2ik^{2}+iq(x,t)q^{*}(-x,t)& -2kq(x,t)-iq_{x}(x,t)\\
-2kq^{*}(-x,t)-iq^{*}_{x}(-x,t)& -2ik^{2}-iq(x,t)q^{*}(-x,t)
\end{matrix}\right).
\end{equation}
Alternatively the space part of the compatible system may be written in the form
\begin{equation}
v_{x}=(ikJ+Q)v, \ \ \ x\in \mathbb{R},
\end{equation}
where
\begin{equation}J=
\left(\begin{matrix}
-1& 0\\
0& 1
\end{matrix}\right),\ \ \
Q=
\left(\begin{matrix}
0& q(x,t)\\
q^{*}(-x,t)& 0
\end{matrix}\right),
\end{equation}
Here, $q(x,t)$ is called the potential and $k$ is a complex spectral parameter.

As $x\rightarrow\pm\infty$, the eigenfunctions of the scattering problem asymptotically satisfy
\begin{equation}
\left(\begin{array}{cc}
v_{1}\\
v_{2}
\end{array}\right)_{x}
=
\left(\begin{array}{cc}
-ik& q_{0}e^{2iq_{0}^{2}t+i\theta}\\
q_{0}e^{-2iq_{0}^{2}t-i\theta}& ik
\end{array}\right)
\left(\begin{array}{cc}
v_{1}\\
v_{2}
\end{array}\right),
\end{equation}
i.e.,
\begin{equation}
v_{x}=(ikJ+Q_{\pm}(t))v, \ \ \ Q_{\pm}(t)=\left(\begin{array}{cc}
0& q_{0}e^{2iq_{0}^{2}t+i\theta}\\
q_{0}e^{-2iq_{0}^{2}t-i\theta}& 0
\end{array}\right),
\end{equation}
which conveniently reduces to
\begin{equation}
\frac{\partial^{2} v_{j}}{\partial x^{2}}=-(k^{2}-q_{0}^{2})v_{j}, \ \ \ j=1,2.
\end{equation}
Each of the two equations has two linearly independent solutions $e^{i\lambda x}$ and $e^{-i\lambda x}$ as $|x|\rightarrow\infty$, where
$\lambda=\sqrt{k^{2}-q_{0}^{2}}$. The variable $k$ is then thought of as belonging to a Riemann surface $\mathbb{K}$ consisting of two sheets $\mathbb{C}_{1}$ and $\mathbb{C}_{2}$ with both coinciding with the complex plane cut along $(-\infty, -q_{0}]\cup [q_{0}, +\infty)$
with its edges glued in such a way that $\lambda(k)$ is continuous through the cut. The Riemann surface and the definition of $\lambda$ are the same as discussed above in section 3 ($\sigma=-1, \Delta \theta= \pi.$)
\subsection{Eigenfunctions}
Following the approach discussed in Section 3 we introduce the eigenfunctions defined by the following boundary conditions
\begin{equation}
\phi(x,k)\sim w e^{-i\lambda x}, \ \ \ \overline{\phi}(x,k)\sim \overline{w}e^{i\lambda x}
\end{equation}
as $x\rightarrow-\infty$,
\begin{equation}
\psi(x,k)\sim v e^{i\lambda x}, \ \ \ \overline{\psi}(x,k)\sim \overline{v}e^{-i\lambda x}
\end{equation}
as $x\rightarrow +\infty$, where
\begin{equation}
w=\left(\begin{array}{cc}
\lambda+k\\
i\widehat{q}^{*}
\end{array}\right), \ \ \
\overline{w}=\left(\begin{array}{cc}
-i\widehat{q}\\
\lambda+k
\end{array}\right),
\end{equation}
\begin{equation}
v=\left(\begin{array}{cc}
-i\widehat{q}\\
\lambda+k
\end{array}\right), \ \ \
\overline{v}=
\left(\begin{array}{cc}
\lambda+k\\
i\widehat{q}^{*}
\end{array}\right)
\end{equation}
satisfy the boundary conditions; they are not unique.
We consider functions with constant boundary conditions and define bounded eigenfunctions as follows:
\begin{equation}
M(x,k)=e^{i\lambda x}\phi(x,k), \ \ \ \overline{M}(x,k)=e^{-i\lambda x}\overline{\phi}(x,k),
\end{equation}
\begin{equation}
N(x,k)=e^{-i\lambda x}\psi(x,k), \ \ \ \overline{N}(x,k)=e^{i\lambda x}\overline{\psi}(x,k).
\end{equation}
The eigenfunctions  can be represented by means of the following integral equations
\begin{equation}
M(x,k)=
\left(\begin{array}{cc}
\lambda+k\\
i\widehat{q}^{*}
\end{array}\right)
+\int_{-\infty}^{+\infty}G_{-}(x-x',k)((Q-Q_{-})M)(x',k)dx',
\end{equation}
\begin{equation}
\overline{M}(x,k)=
\left(\begin{array}{cc}
-i\widehat{q}\\
\lambda+k
\end{array}\right)
+\int_{-\infty}^{+\infty}\overline{G}_{-}(x-x',k)((Q-Q_{-})M)(x',k)dx',
\end{equation}
\begin{equation}
N(x,k)=
\left(\begin{array}{cc}
-i\widehat{q}\\
\lambda+k
\end{array}\right)
+\int_{-\infty}^{+\infty}G_{+}(x-x',k)((Q-Q_{+})M)(x',k)dx',
\end{equation}
\begin{equation}
\overline{N}(x,k)=\left(\begin{array}{cc}
\lambda+k\\
i\widehat{q}^{*}
\end{array}\right)
+\int_{-\infty}^{+\infty}\overline{G}_{+}(x-x',k)((Q-Q_{+})M)(x',k)dx'.
\end{equation}
From the Fourier transform method, we get
\begin{equation}
G_{-}(x,k)=\frac{\theta(x)}{2\lambda}[(1+e^{2i\lambda x})\lambda I-i(e^{2i\lambda x}-1)(ikJ+Q_{-})],
\end{equation}
\begin{equation}
\overline{G}_{-}(x,k)=\frac{\theta(x)}{2\lambda}[(1+e^{-2i\lambda x})\lambda I+i(e^{-2i\lambda x}-1)(ikJ+Q_{-})],
\end{equation}
\begin{equation}
G_{+}(x,k)=-\frac{\theta(-x)}{2\lambda}[(1+e^{-2i\lambda x})\lambda I+i(e^{-2i\lambda x}-1)(ikJ+Q_{+})],
\end{equation}
\begin{equation}
\overline{G}_{+}(x,k)=-\frac{\theta(-x)}{2\lambda}[(1+e^{2i\lambda x})\lambda I-i(e^{2i\lambda x}-1)(ikJ+Q_{+})],
\end{equation}
where $\theta(x)$ is the Heaviside function, i.e., $\theta(x)=1$ if $x>0$ and $\theta(x)=0$ if $x<0$.

The analyticity of eigenfunctions and definitions of scattering data are the same as in Section 3.
\subsection{Symmetry reductions}
Taking into account boundary conditions, we can obtain
\begin{equation}
\psi(x,k)=\left(\begin{array}{cc}
0& 1\\
-1& 0
\end{array}\right)\phi^{*}(-x,-k^{*})
\end{equation}
and
\begin{equation}
\overline{\psi}(x,k)=\left(\begin{array}{cc}
0& -1\\
1& 0
\end{array}\right)\overline{\phi}^{*}(-x,-k^{*}).
\end{equation}
Similarly, we can get
\begin{equation}
N(x,k)=\left(\begin{array}{cc}
0& 1\\
-1& 0
\end{array}\right)M^{*}(-x,-k^{*})
\end{equation}
and
\begin{equation}
\overline{N}(x,k)=\left(\begin{array}{cc}
0& -1\\
1& 0
\end{array}\right)\overline{M}^{*}(-x,-k^{*}).
\end{equation}
Moreover,
\begin{equation}
a^{*}(-k^{*})=a(k),
\end{equation}
\begin{equation}
\overline{a}^{*}(-k^{*})=\overline{a}(k),
\end{equation}
\begin{equation}
b^{*}(-k^{*})=\overline{b}(k).
\end{equation}

When using a particular single sheet for the Riemann surface of the function $\lambda^{2}=k^{2}-q_{0}^{2}$, the involution $(k, \lambda)\rightarrow (k, -\lambda)$ can be considered across the cuts. The scattering data and eigenfunctions are defined by means of the corresponding values on the upper/lower edge of the cut, are labeled with superscripts $\pm$ as clarified below.
Explicitly, one has
\begin{equation}
a^{\pm}(k)=\frac{W(\phi^{\pm}(x,k),\psi^{\pm}(x,k))}{2\lambda^{\pm}(\lambda^{\pm}+k)}, \ \ \ k\in (-\infty, -q_{0}]\cup [q_{0}, +\infty),
\end{equation}
\begin{equation}
\overline{a}^{\pm}(k)=-\frac{W(\overline{\phi}^{\pm}(x,k),\overline{\psi}^{\pm}(x,k))}{2\lambda^{\pm}(\lambda^{\pm}+k)}, \ \ \ k\in (-\infty, -q_{0}]\cup [q_{0}, +\infty),
\end{equation}
\begin{equation}
b^{\pm}(k)=-\frac{W(\phi^{\pm}(x,k),\overline{\psi}^{\pm}(x,k))}{2\lambda^{\pm}(\lambda^{\pm}+k)}, \ \ \ k\in (-\infty, -q_{0}]\cup [q_{0}, +\infty),
\end{equation}
\begin{equation}
\overline{b}^{\pm}(k)=\frac{W(\overline{\phi}^{\pm}(x,k),\psi^{\pm}(x,k))}{2\lambda^{\pm}(\lambda^{\pm}+k)},  \ \ \ k\in(-\infty, -q_{0}]\cup [q_{0}, +\infty).
\end{equation}
Using the notation $\lambda=\lambda^{+}=-\lambda^{-}$, we have the following symmetry:
\begin{equation}
\phi^{\mp}(x,k)=\frac{\lambda^{\mp}+k}{-i\widehat{q}}\overline{\phi}^{\pm}(x,k), \ \ \ \ \ \psi^{\mp}(x,k)=\frac{\lambda^{\mp}+k}{i\widehat{q}^{*}}\overline{\psi}^{\pm}(x,k)
\end{equation}
for $k\in(-\infty, -q_{0}]\cup [q_{0}, +\infty)$.
Moreover,
\begin{equation}
a^{\pm}(k)=\overline{a}^{\mp}(k), \ \ \ \ \ b^{\pm}(k)=-\frac{\widehat{q}^{*}}{\widehat{q}}\cdot \overline{b}^{\mp}(k)
\end{equation}
for $k\in(-\infty, -q_{0}]\cup [q_{0}, +\infty)$.

\subsection{Uniformization coordinates}
As in Section 3 we introduce a uniformization variable $z$, defined by the conformal mapping:
\begin{equation}
z=z(k)=k+\lambda(k),
\end{equation}
and the inverse mapping is given by
\begin{equation}
k=k(z)=\frac{1}{2}\left(z+\frac{q_{0}^{2}}{z}\right).
\end{equation}
Then
\begin{equation}
\lambda(z)=\frac{1}{2}\left(z-\frac{q_{0}^{2}}{z}\right).
\end{equation}
\subsection{Symmetries via uniformization coordinates}
We have seen that (1) when $z\rightarrow -z^{*}$, then $(k,\lambda)\rightarrow (-k^{*}, -\lambda^{*})$; (2) when $z\rightarrow \frac{q_{0}^{2}}{z}$, then $(k, \lambda)\rightarrow (k, -\lambda)$. Hence,
\begin{equation}
\psi(x,z)=\left(\begin{array}{cc}
0& 1\\
-1& 0
\end{array}\right)\phi^{*}(-x,-z^{*}),
\end{equation}
\begin{equation}
\overline{\psi}(x,z)=\left(\begin{array}{cc}
0& -1\\
1& 0
\end{array}\right)\overline{\phi}^{*}(-x,-z^{*}),
\end{equation}
\begin{equation}
\phi\left(x, \frac{q_{0}^{2}}{z}\right)=\frac{\frac{q_{0}^{2}}{z}}{-i\widehat{q}}\overline{\phi}(x,z), \ \ \
\psi\left(x, \frac{q_{0}^{2}}{z}\right)=\frac{-i\widehat{q}}{z}\overline{\psi}(x,z), \ \ \ \Im z<0.
\end{equation}
Similarly, we can get
\begin{equation}
N(x,z)=\left(\begin{array}{cc}
0& 1\\
-1& 0
\end{array}\right)M^{*}(-x,-z^{*})
\end{equation}
and
\begin{equation}
\overline{N}(x,z)=\left(\begin{array}{cc}
0& -1\\
1& 0
\end{array}\right)\overline{M}^{*}(-x,-z^{*}).
\end{equation}

Moreover,
\begin{equation}
a^{*}(-z^{*})=a(z),
\end{equation}
\begin{equation}
\overline{a}^{*}(-z^{*})=\overline{a}(z),
\end{equation}
\begin{equation}
b^{*}(-z^{*})=\overline{b}(z),
\end{equation}
\begin{equation}
a\left(\frac{q_{0}^{2}}{z}\right)=\overline{a}(z), \ \ \ \Im z<0; \ \ \ b\left(\frac{q_{0}^{2}}{z}\right)=-\frac{\widehat{q}^{*}}{\widehat{q}}\cdot \overline{b}(z).
\end{equation}
\subsection{Asymptotic behavior of eigenfunctions and scattering data}
In order to solve the inverse problem, one has to determine the asymptotic behavior of eigenfunctions and scattering data both as $z\rightarrow\infty$ and as $z\rightarrow 0$. We have
\begin{equation}
M(x,z)\sim\left\{\begin{array}{ll}
\left(\begin{array}{cc}
z\\
iq^{*}(-x)
\end{array}\right), \ \ \ z\rightarrow\infty\\
\left(\begin{array}{cc}
z\cdot\frac{q(x)}{q_{+}}\\
i\widehat{q}^{*}
\end{array}\right), \ \ \ z\rightarrow 0,\\
\end{array}\right.
\end{equation}

\begin{equation}
N(x,z)\sim\left\{\begin{array}{ll}
\left(\begin{array}{cc}
-iq(x)\\
z
\end{array}\right), \ \ \ z\rightarrow\infty\\
\left(\begin{array}{cc}
-i\widehat{q}\\
z\cdot \frac{q^{*}(-x)}{\widehat{q}^{*}}
\end{array}\right), \ \ \ z\rightarrow 0,\\
\end{array}\right.
\end{equation}

\begin{equation}
\overline{M}(x,z)\sim\left\{\begin{array}{ll}
\left(\begin{array}{cc}
-iq(x)\\
z
\end{array}\right), \ \ \ z\rightarrow\infty\\
\left(\begin{array}{cc}
-i\widehat{q}\\
z\cdot \frac{q^{*}(-x)}{\widehat{q}^{*}}
\end{array}\right), \ \ \ z\rightarrow 0,\\
\end{array}\right.
\end{equation}

\begin{equation}
\label{E:asymptotic 3'''}
\overline{N}(x,z)\sim\left\{\begin{array}{ll}
\left(\begin{array}{cc}
z\\
iq^{*}(-x)
\end{array}\right), \ \ \ z\rightarrow\infty\\
\left(\begin{array}{cc}
z\cdot\frac{q(x)}{\widehat{q}}\\
i\widehat{q}^{*}
\end{array}\right), \ \ \ z\rightarrow 0,\\
\end{array}\right.
\end{equation}

\begin{equation}
a(z)=
\left\{\begin{array}{ll}
1,\ \ \ z\rightarrow\infty,\\
1, \ \ \ z\rightarrow 0,\\
\end{array}\right.
\end{equation}

\begin{equation}
\overline{a}(z)=
\left\{\begin{array}{ll}
1,\ \ \ z\rightarrow\infty,\\
1, \ \ \ z\rightarrow 0,\\
\end{array}\right.
\end{equation}
\begin{equation}
\lim_{z\rightarrow\infty}zb(z)=0, \ \ \ \lim_{z\rightarrow0}\frac{b(z)}{z^{2}}=0.
\end{equation}

\subsection{Left and right scattering problems}
Using similar the methods as in Section 3, we find
\begin{equation}
\label{E:eigenfunction 1''}
\begin{split}
\overline{N}(x,z)&=\left(\begin{array}{cc}
z\\
i\widehat{q}^{*}
\end{array}\right)
+\sum_{j=1}^{J}\frac{z\cdot b(z_{j})e^{i\big(z_{j}-\frac{q_{0}^{2}}{z_{j}}\big)x}\cdot N(x,z_{j})}{(z-z_{j})z_{j}a'(z_{j})}\\
&+\frac{z}{2\pi i}\int_{-\infty}^{+\infty}\frac{\rho(\xi)}{\xi(\xi-z)}\cdot e^{i\big(\xi-\frac{q_{0}^{2}}{\xi}\big)x}\cdot N(x,\xi)d\xi,
\end{split}
\end{equation}
\begin{equation}
\label{E:eigenfunction 2''}
\begin{split}
N(x,z)&=\left(\begin{array}{cc}
-i\widehat{q}\\
z
\end{array}\right)
+\sum_{j=1}^{\overline{J}}\frac{z\cdot\overline{b}(\overline{z}_{j})e^{-i\big(\overline{z}_{j}-\frac{q_{0}^{2}}{\overline{z}_{j}}\big)x}\cdot \overline{N}(x,\overline{z}_{j})}{(z-\overline{z}_{j})\overline{z}_{j}\overline{a}'(\overline{z}_{j})}\\
&-\frac{z}{2\pi i}\int_{-\infty}^{+\infty}\frac{\overline{\rho}(\xi)}{\xi(\xi-z)}\cdot e^{-i\big(\xi-\frac{q_{0}^{2}}{\xi}\big)x}\cdot \overline{N}(x,\xi)d\xi.
\end{split}
\end{equation}
\begin{equation}
\label{E:eigenfunction 3''}
\begin{split}
\overline{M}(x,z)&=\left(\begin{array}{cc}
-i\widehat{q}\\
z
\end{array}\right)+
\sum_{j=1}^{J}\frac{-z\cdot\overline{b}(z_{j})M(x,z_{j})e^{-i\big(z_{j}-\frac{q_{0}^{2}}{z_{j}}\big)x}}{(z-z_{j})z_{j}a'(z_{j})}\\
&+\frac{z}{2\pi i}\int_{-\infty}^{+\infty}\frac{\rho^{*}(-\xi)}{\xi(\xi-z)}\cdot e^{-i\big(\xi-\frac{q_{0}^{2}}{\xi}\big)x}\cdot M(x,\xi)d\xi
\end{split}
\end{equation}
and
\begin{equation}
\label{E:eigenfunction 4''}
\begin{split}
M(x,z)&=\left(\begin{array}{cc}
z\\
i\widehat{q}^{*}
\end{array}\right)+
\sum_{j=1}^{\overline{J}}\frac{-z\cdot b(\overline{z}_{j})\overline{M}(x,\overline{z}_{j})e^{i\big(\overline{z}_{j}-\frac{q_{0}^{2}}{\overline{z}_{j}}\big)x}}
{(z-\overline{z}_{j})\overline{z}_{j}\overline{a}'(\overline{z}_{j})}\\
&-\frac{z}{2\pi i}\int_{-\infty}^{+\infty}\frac{\overline{\rho}^{*}(-\xi)}{\xi(\xi-z)}\cdot e^{i\big(\xi-\frac{q_{0}^{2}}{\xi}\big)x}\cdot \overline{M}(x,\xi)d\xi.
\end{split}
\end{equation}
\subsection{Recovery of the potentials}
Note that
\begin{equation}
\frac{\overline{N}_{1}(x,z)}{z}\sim \frac{q(x)}{\widehat{q}}
\end{equation}
as $z\rightarrow 0$, and
\begin{equation}
\frac{\overline{N}_{1}(x,z)}{z}\rightarrow1+\sum_{j=1}^{J}\frac{b(z_{j})e^{i\big(z_{j}-\frac{q_{0}^{2}}{z_{j}}\big)x}}{-z_{j}^{2}a'(z_{j})}\cdot N_{1}(x,z_{j})+\frac{1}{2\pi i}\int_{-\infty}^{+\infty}\frac{\rho(\xi)}{\xi^{2}}\cdot e^{i\big(\xi-\frac{q_{0}^{2}}{\xi}\big)x}\cdot N_{1}(x,\xi)d\xi
\end{equation}
as $z\rightarrow 0$, therefore,
\begin{equation}
\label{E:asympN1c''}
q(z)=\widehat{q}\cdot \left[1+\sum_{j=1}^{J}\frac{b(z_{j})e^{i\big(z_{j}-\frac{q_{0}^{2}}{z_{j}}\big)x}}{-z_{j}^{2}a'(z_{j})}\cdot N_{1}(x,z_{j})+\frac{1}{2\pi i}\int_{-\infty}^{+\infty}\frac{\rho(\xi)}{\xi^{2}}\cdot e^{i\big(\xi-\frac{q_{0}^{2}}{\xi}\big)x}\cdot N_{1}(x,\xi)d\xi\right].
\end{equation}

\subsection{Closing the system}
We find $J=\overline{J}$ from $a\left(\frac{q_{0}^{2}}{z}\right)=\overline{a}(z)$. To close the system, 
by the symmetry relations between the eigenfunctions, we have
\begin{equation}
\label{E:closing system 5}
\begin{split}
&\left(\begin{array}{cc}
N_{1}(x,z)\\
N_{2}(x,z)
\end{array}\right)=\left(\begin{array}{cc}
-i\widehat{q}\\
z
\end{array}\right)
+\sum_{j=1}^{J}\frac{z\cdot\overline{b}(\overline{z}_{j})e^{-i\big(\overline{z}_{j}-\frac{q_{0}^{2}}{\overline{z}_{j}}\big)x} }{(z-\overline{z}_{j})\overline{z}_{j}\overline{a}'(\overline{z}_{j})}\cdot\\
&\left(\begin{array}{cc}
\overline{z}_{j}+\sum_{l=1}^{J}\frac{\overline{z}_{j}\cdot b(z_{l})e^{i\big(z_{l}-\frac{q_{0}^{2}}{z_{l}}\big)x}}{(\overline{z}_{j}-z_{l})z_{l}a'(z_{l})}\cdot N_{1}(x, z_{l})+\frac{\overline{z}_{j}}{2\pi i}\int_{-\infty}^{+\infty}\frac{\rho(\xi)}{\xi(\xi-\overline{z}_{j})}\cdot e^{i\big(\xi-\frac{q_{0}^{2}}{\xi}\big)x}\cdot N_{1}(x, \xi)d\xi\\
i\widehat{q}^{*}+\sum_{l=1}^{J}\frac{\overline{z}_{j}\cdot b(z_{l})e^{i\big(z_{l}-\frac{q_{0}^{2}}{z_{l}}\big)x}}{(\overline{z}_{j}-z_{l})z_{l}a'(z_{l})}\cdot N_{2}(x, z_{l})+\frac{\overline{z}_{j}}{2\pi i}\int_{-\infty}^{+\infty}\frac{\rho(\xi)}{\xi(\xi-\overline{z}_{j})}\cdot e^{i\big(\xi-\frac{q_{0}^{2}}{\xi}\big)x}\cdot N_{2}(x, \xi)d\xi
\end{array}\right)\\
&-\frac{z}{2\pi i}\int_{-\infty}^{+\infty}\frac{\overline{\rho}(\xi)}{\xi(\xi-z)}\cdot e^{-i\big(\xi-\frac{q_{0}^{2}}{\xi}\big)x}\cdot\\
&\left(\begin{array}{cc}
\xi+\sum_{l=1}^{J}\frac{\xi\cdot b(z_{l})e^{i\big(z_{l}-\frac{q_{0}^{2}}{z_{l}}\big)x}}{(\xi-z_{l})z_{l}a'(z_{l})}\cdot N_{1}(x, z_{l})+\frac{\xi}{2\pi i}\int_{-\infty}^{+\infty}\frac{\rho(\eta)}{\eta(\eta-\xi)}\cdot e^{i\big(\eta-\frac{q_{0}^{2}}{\eta}\big)x}\cdot N_{1}(x, \eta)d\eta\\
i\widehat{q}^{*}+\sum_{l=1}^{J}\frac{\xi\cdot b(z_{l})e^{i\big(z_{l}-\frac{q_{0}^{2}}{z_{l}}\big)x}}{(\xi-z_{l})z_{l}a'(z_{l})}\cdot N_{2}(x, z_{l})+\frac{\xi}{2\pi i}\int_{-\infty}^{+\infty}\frac{\rho(\eta)}{\eta(\eta-\xi)}\cdot e^{i\big(\eta-\frac{q_{0}^{2}}{\eta}\big)x}\cdot N_{2}(x, \eta)d\eta
\end{array}\right)d\xi,
\end{split}
\end{equation}
\begin{equation}
\label{E:closing system 6}
\begin{split}
&\left(\begin{array}{cc}
\overline{M}_{1}(x,z)\\
\overline{M}_{2}(x,z)
\end{array}\right)=\left(\begin{array}{cc}
-i\widehat{q}\\
z
\end{array}\right)+
\sum_{j=1}^{J}\frac{-z\cdot\overline{b}(z_{j})e^{-i\big(z_{j}-\frac{q_{0}^{2}}{z_{j}}\big)x}}{(z-z_{j})z_{j}a'(z_{j})}\cdot\\
& \left(\begin{array}{cc}
z_{j}+
\sum_{l=1}^{J}\frac{-z_{j}\cdot b(\overline{z}_{l})e^{i\big(\overline{z}_{l}-\frac{q_{0}^{2}}{\overline{z}_{l}}\big)x}}
{(z_{j}-\overline{z}_{l})\overline{z}_{l}\overline{a}'(\overline{z}_{l})}\cdot \overline{M}_{1}(x, \overline{z}_{l})-\frac{z_{j}}{2\pi i}\int_{-\infty}^{+\infty}\frac{\overline{\rho}^{*}(-\xi)}{\xi(\xi-z_{j})}\cdot e^{i\big(\xi-\frac{q_{0}^{2}}{\xi}\big)x}\cdot\overline{M}_{1}(x,\xi)d\xi\\
i\widehat{q}^{*}+
\sum_{l=1}^{J}\frac{-z_{j}\cdot b(\overline{z}_{l})e^{i\big(\overline{z}_{l}-\frac{q_{0}^{2}}{\overline{z}_{l}}\big)x}}
{(z_{j}-\overline{z}_{l})\overline{z}_{l}\overline{a}'(\overline{z}_{l})}\cdot \overline{M}_{2}(x, \overline{z}_{l})-\frac{z_{j}}{2\pi i}\int_{-\infty}^{+\infty}\frac{\overline{\rho}^{*}(-\xi)}{\xi(\xi-z_{j})}\cdot e^{i\big(\xi-\frac{q_{0}^{2}}{\xi}\big)x}\cdot\overline{M}_{2}(x,\xi)d\xi
\end{array}\right)\\
&+\frac{z}{2\pi i}\int_{-\infty}^{+\infty}\frac{\rho^{*}(-\xi)}{\xi(\xi-z)}\cdot e^{-i\big(\xi-\frac{q_{0}^{2}}{\xi}\big)x}\cdot\\
&\left(\begin{array}{cc}
\xi+
\sum_{l=1}^{J}\frac{-\xi\cdot b(\overline{z}_{l})e^{i\big(\overline{z}_{l}-\frac{q_{0}^{2}}{\overline{z}_{l}}\big)x}}
{(\xi-\overline{z}_{l})\overline{z}_{l}\overline{a}'(\overline{z}_{l})}\cdot\overline{M}_{1}(x,\overline{z}_{l})-\frac{\xi}{2\pi i}\int_{-\infty}^{+\infty}\frac{\overline{\rho}^{*}(-\eta)}{\eta(\eta-\xi)}\cdot e^{i\big(\eta-\frac{q_{0}^{2}}{\eta}\big)x}\cdot \overline{M}_{1}(x,\eta)d\eta\\
i\widehat{q}^{*}+
\sum_{l=1}^{J}\frac{-\xi\cdot b(\overline{z}_{l})e^{i\big(\overline{z}_{l}-\frac{q_{0}^{2}}{\overline{z}_{l}}\big)x}}
{(\xi-\overline{z}_{l})\overline{z}_{l}\overline{a}'(\overline{z}_{l})}\cdot\overline{M}_{2}(x,\overline{z}_{l})-\frac{\xi}{2\pi i}\int_{-\infty}^{+\infty}\frac{\overline{\rho}^{*}(-\eta)}{\eta(\eta-\xi)}\cdot e^{i\big(\eta-\frac{q_{0}^{2}}{\eta}\big)x}\cdot \overline{M}_{2}(x,\eta)d\eta
\end{array}\right)d\xi.
\end{split}
\end{equation}

We note that from eq (\ref{E:asympN1c''})  $q(x)$  is given in only terms of the component $N_1$. We can use only the first component of eq (\ref{E:closing system 5}) to find this function and hence $q(x)$. Hence to complete the inverse scattering we reduce the problem to solving an integral equation in terms of the component $N_1$ only.

\subsection{Trace formula}
Using $b^{*}(-z^{*})=\overline{b}(z)$ and following the analysis in Section 3, we can get
\begin{equation}
\log a(z)=\log\left(\prod_{j=1}^{J_{1}}\frac{z-z_{j}}{z-\frac{q_{0}^{2}}{z_{j}}}\cdot\frac{z+z_{j}^{*}}{z+\frac{q_{0}^{2}}{z_{j}^{*}}}
\cdot\prod_{i=1}^{J_{2}}\frac{z-\widetilde{z}_{i}}{z-\frac{q_{0}^{2}}{\widetilde{z}_{i}}}\right)+\frac{1}{2\pi i}\int_{-\infty}^{+\infty}\frac{\log (1+b(\xi)b^{*}(-\xi^{*}))}{\xi-z}d\xi, \ \ \ \Im z>0,
\end{equation}

\begin{equation}
\log \overline{a}(z)=\log\left(\prod_{j=1}^{J_{1}} \frac{z-\frac{q_{0}^{2}}{z_{j}}}{z-z_{j}}\cdot \frac{z+\frac{q_{0}^{2}}{z_{j}^{*}}}{z+z_{j}^{*}}
\cdot\prod_{i=1}^{J_{2}} \frac{z-\frac{q_{0}^{2}}{\widetilde{z}_{i}}}{z-\widetilde{z}_{i}}\right)-\frac{1}{2\pi i}\int_{-\infty}^{+\infty}\frac{\log (1+b(\xi)b^{*}(-\xi^{*}))}{\xi-z}d\xi, \ \ \ \Im z<0,
\end{equation}
where $\Re\widetilde{z}_{i}=0$, $\Re z_{j}\neq 0$ and $2J_{1}+J_{2}=J$.

\subsection{Discrete scattering data and their symmetries}
As discussed earlier, the data $a(z_j), \bar{a}(\bar{z}_j)$ are calculated =from the trace formulae. The symmetries on the data
\[ b(z_j) ~\mbox{and} ~~~\bar{b}(\bar{z}_j) ,~~j=1,2,...J.\]
can be calculated from their associated eigenfunctions.
Since
\begin{equation}
N_{1}(x,z)=M_{2}^{*}(-x,-z^{*}), \ \ \ N_{2}(x,z)=-M_{1}^{*}(-x,-z^{*}),
\end{equation}
\begin{equation}
M_{1}(x,z_{j})=b(z_{j})e^{i\big(z_{j}-\frac{q_{0}^{2}}{z_{j}}\big)x}\cdot N_{1}(x,z_{j})
\end{equation}
and
\begin{equation}
M_{2}(x,z_{j})=b(z_{j})e^{i\big(z_{j}-\frac{q_{0}^{2}}{z_{j}}\big)x}\cdot N_{2}(x,z_{j}),
\end{equation}
we have
\begin{equation}
\label{E:N1'}
N_{1}(x,z_{j})=b^{*}(-z_{j}^{*})\cdot e^{-i\left(z_{j}-\frac{q_{0}^{2}}{z_{j}}\right)x}\cdot N_{2}^{*}(-x,-z_{j}^{*}),
\end{equation}
\begin{equation}
\label{E:N2'}
N_{2}(x,z_{j})=-b^{*}(-z_{j}^{*})\cdot e^{-\left(z_{j}-\frac{q_{0}^{2}}{z_{j}}\right)x}\cdot N_{1}^{*}(-x,-z_{j}^{*}).
\end{equation}
By rewriting (\ref{E:N2'}), we obtain
\begin{equation}
N_{2}^{*}(-x,-z_{j}^{*})=-b(z_{j})\cdot e^{\left(z_{j}-\frac{q_{0}^{2}}{z_{j}}\right)x}\cdot N_{1}(x,z_{j}).
\end{equation}
Combining (\ref{E:N1'}), we can deduce
\begin{equation}
b(z_{j})b^{*}(-z_{j}^{*})=-1.
\end{equation}
If $\Re z_{j} = 0$ we are led to a contradiction. Thus we must have that $\Re z_{j}\neq 0$; consequently the eigenvalues come in complex conjugate pairs and $J$ must be even.

\subsection{Reflectioness potentials and soliton solutions}
Reflectioness potentials 
correspond to zero reflection coefficients, i.e., $\rho(\xi)=0$ and $\overline{\rho}(\xi)=0$ for all real $\xi$.
 We also note,   from the symmetry relation $\bar{b}(z)=-b^*(-z^*)$ it follows that the reflection coefficients $\rho(z)= b(z)/a(z), \bar{\rho}(z)= \bar{b}(z)/\bar{a}(z)$ will both vanish when $b(z)=0$ for $z$ on the real axis. By substituting $z=z_{l}$ in (\ref{E:closing system 5}) and $z=\overline{z}_{l}$ in (\ref{E:closing system 6}), the system (\ref{E:closing system 5}) and (\ref{E:closing system 6}) reduces to algebraic equations that determine the functional form of these special potentials. When time dependence is added the reflectionless potentials correspond to soliton solutions.
The reduced equations take the form
\begin{equation}
\begin{split}
\left(\begin{array}{cc}
N_{1}(x,z_{l})\\
N_{2}(x,z_{l})
\end{array}\right)&=\left(\begin{array}{cc}
-i\widehat{q}\\
z_{l}
\end{array}\right)
+\sum_{j=1}^{J}\frac{z_{l}\cdot\overline{b}(\overline{z}_{j})e^{-i\big(\overline{z}_{j}-\frac{q_{0}^{2}}{\overline{z}_{j}}\big)x} }{(z_{l}-\overline{z}_{j})\overline{z}_{j}\overline{a}'(\overline{z}_{j})}\cdot\\
&\left(\begin{array}{cc}
\overline{z}_{j}+\sum_{l=1}^{J}\frac{\overline{z}_{j}\cdot b(z_{l})e^{i\big(z_{l}-\frac{q_{0}^{2}}{z_{l}}\big)x}}{(\overline{z}_{j}-z_{l})z_{l}a'(z_{l})}\cdot N_{1}(x, z_{l})\\
i\widehat{q}^{*}+\sum_{l=1}^{J}\frac{\overline{z}_{j}\cdot b(z_{l})e^{i\big(z_{l}-\frac{q_{0}^{2}}{z_{l}}\big)x}}{(\overline{z}_{j}-z_{l})z_{l}a'(z_{l})}\cdot N_{2}(x, z_{l})
\end{array}\right),
\end{split}
\end{equation}
\begin{equation}
\begin{split}
\left(\begin{array}{cc}
\overline{M}_{1}(x,\overline{z}_{l})\\
\overline{M}_{2}(x,\overline{z}_{l})
\end{array}\right)&=\left(\begin{array}{cc}
-i\widehat{q}\\
\overline{z}_{l}
\end{array}\right)+
\sum_{j=1}^{J}\frac{-\overline{z}_{l}\cdot\overline{b}(z_{j})e^{-i\big(z_{j}-\frac{q_{0}^{2}}{z_{j}}\big)x}}{(\overline{z}_{l}-z_{j})z_{j}a'(z_{j})}\cdot\\
& \left(\begin{array}{cc}
z_{j}+
\sum_{l=1}^{J}\frac{-z_{j}\cdot b(\overline{z}_{l})e^{i\big(\overline{z}_{l}-\frac{q_{0}^{2}}{\overline{z}_{l}}\big)x}}
{(z_{j}-\overline{z}_{l})\overline{z}_{l}\overline{a}'(\overline{z}_{l})}\cdot \overline{M}_{1}(x, \overline{z}_{l})\\
i\widehat{q}^{*}+
\sum_{l=1}^{J}\frac{-z_{j}\cdot b(\overline{z}_{l})e^{i\big(\overline{z}_{l}-\frac{q_{0}^{2}}{\overline{z}_{l}}\big)x}}
{(z_{j}-\overline{z}_{l})\overline{z}_{l}\overline{a}'(\overline{z}_{l})}\cdot \overline{M}_{2}(x, \overline{z}_{l})
\end{array}\right).
\end{split}
\end{equation}



The above equations are an algebraic system to solve for $N(x,z_{l})$ or $\overline{M}(x, \overline{z}_{l})$, $l=1,2...J$. The potential are reconstructed from
equation (\ref{E:asympN1c''}) with $\rho(\xi)=0,\bar{\rho}(\xi)=0$; i.e.,

\begin{equation}
\label{asympN1d''}
q(z)=\widehat{q}\cdot \left[1+\sum_{j=1}^{J}\frac{b(z_{j})e^{i\big(z_{j}-\frac{q_{0}^{2}}{z_{j}}\big)x}}{-z_{j}^{2}a'(z_{j})}\cdot N_{1}(x,z_{j})\right].
\end{equation}
Since $a(z)\sim 1$ as $z\rightarrow 0$, by the trace formula when $b(\xi)=0$ in the real axis, we have
the constraint
\begin{equation}
\label{E:product3}
\prod_{j=1}^{J/2}\frac{|z_{j}|^{4}}{q_{0}^{4}}=1.
\end{equation}

\subsection{Reflectionless potential solution: 2-eigenvalue}
In this subsection, we construct an explicit form for the 2-reflectionless solution (this is the 2 soliton solution when time dependence is added) by setting $J=2$ and $z_{1}=\xi_{1}+i \eta_{1}$, where $\xi_{1}, \eta_{1}>0$ and $\xi_{1}^{2}+\eta_{1}^{2}=q_{0}^{2}$. Then $a(\xi_{1}+i \eta_{1})=a(-\xi_{1}+i \eta_{1})=0$ and $\overline{a}(\xi_{1}-i\eta_{1})=\overline{a}(-\xi_{1}-i\eta_{1})=0$. Thus,
\begin{equation}
a(z)=\frac{(z-(\xi_{1}+i\eta_{1}))(z-(-\xi_{1}+i\eta_{1}))}{(z-(\xi_{1}-i\eta_{1}))(z-(-\xi_{1}-i\eta_{1}))}, \ \ \
\overline{a}(z)=\frac{(z-(\xi_{1}-i\eta_{1}))(z-(-\xi_{1}-i\eta_{1}))}{(z-(\xi_{1}+i\eta_{1}))(z-(-\xi_{1}+i\eta_{1}))}
\end{equation}
and we then find
\begin{equation}
a'(\xi_{1}+i\eta_{1})=\frac{-i\xi_{1}(\xi_{1}-i\eta_{1})}{2q_{0}^{2}\eta_{1}}, \ \ \
a'(-\xi_{1}+i\eta_{1})=\frac{-i\xi_{1}(\xi_{1}+i\eta_{1})}{2q_{0}^{2}\eta_{1}},
\end{equation}
\begin{equation}
\overline{a}'(\xi_{1}-i\eta_{1})=\frac{i\xi_{1}(\xi_{1}+i\eta_{1})}{2q_{0}^{2}\eta_{1}}, \ \ \
\overline{a}'(-\xi_{1}-i\eta_{1})=\frac{i\xi_{1}(\xi_{1}-i\eta_{1})}{2q_{0}^{2}\eta_{1}}.
\end{equation}

Moreover, we write $b(\xi_{1}+i\eta_{1})=c_{1}e^{i\theta_{1}}$, by $b(z_{j})b^{*}(-z_{j}^{*})=-1$, we have $b(-\xi_{1}+i\eta_{1})=-\frac{1}{c_{1}}e^{i\theta_{1}}$.

From $b\left(\frac{q_{0}^{2}}{z}\right)=-\frac{\widehat{q}^{*}}{\widehat{q}}\cdot \overline{b}(z)$, we get
\begin{equation}
\overline{b}(\xi_{1}-i\eta_{1})=-e^{2i\theta}\cdot b(\xi_{1}+i\eta_{1})=-c_{1}e^{i(2\theta+\theta_{1})}
\end{equation}
and
\begin{equation}
\overline{b}(-\xi_{1}-i\eta_{1})=-e^{2i\theta}\cdot b(-\xi_{1}+i\eta_{1})=\frac{1}{c_{1}}e^{i(2\theta+\theta_{1})}.
\end{equation}

Hence, we have
\begin{equation}
\begin{split}
&N_{1}(x,\xi_{1}+i\eta_{1})=\Big(e^{2\eta_{1}x+i\theta}\cdot\xi_{1}\cdot \Big(e^{2\eta_{1}x+2i(\theta+\theta_{1})}q_{0}(\eta_{1}+c_{1}^{2}\eta_{1}-i\xi_{1})-ic_{1}^{2}q_{1}\xi_{1}e^{6\eta_{1}x}
+c_{1}\xi_{1}(\xi_{1}+i\eta_{1})e^{3(\theta+\theta_{1})i}\\
&+c_{1}(\eta_{1}-i\xi_{1})(\eta_{1}+c_{1}^{2}\eta_{1}+ic_{1}^{2}\xi_{1})e^{4\eta_{1}x+i(
\theta+\theta_{1})}\Big)\Big)/\Big(c_{1}^{2}\xi_{2}^{2}e^{8\eta_{1}x}+c_{1}^{2}\xi_{1}^{2}e^{4(
\theta+\theta_{1})i}+e^{4\eta_{1}x+2i(
\theta+\theta_{1})}\\
&\cdot\Big((1+c_{1}^{2})^{2}\eta_{1}^{2}+(1+c_{1}^{4})\xi_{1}^{2}\Big)\Big)
\end{split}
\end{equation}
and
\begin{equation}
\begin{split}
&N_{1}(x,-\xi_{1}+i\eta_{1})=\Big(c_{1}\xi_{1}e^{2\eta_{1}x+i\theta}\cdot\Big(e^{4\eta_{1}x+i(\theta+\theta_{1})}(\eta_{1}+c_{1}^{2}\eta_{1}-i\xi_{1})
(\eta_{1}+i\xi_{1})-ic_{1}q_{0}\xi_{1}e^{6\eta_{1}x}\\
&+c_{1}^{2}\xi_{1}(\xi_{1}-i\eta_{1})e^{3i(\theta+\theta_{1})}-c_{1}q_{0}(\eta_{1}+c_{1}^{2}\eta_{1}+ic_{1}^{2}\xi_{1})
e^{2\eta_{1}x+2i(\theta+\theta_{1})}\Big)\Big)/\Big(c_{1}^{2}\xi_{1}^{2}e^{8\eta_{1}x}+c_{1}^{2}\xi_{1}^{2}e^{4i(\theta+\theta_{1})}\\
&+e^{4\eta_{1}x+2i(\theta+\theta_{1})}\Big((1+c_{1}^{2})^{2}\eta_{1}^{2}+(1+c_{1}^{4})\xi_{1}^{2}\Big)\Big)
\end{split}
\end{equation}

By (\ref{E:asympN1c''}), we can obtain
\begin{equation}
\begin{split}
q(x)&=\widehat{q}\cdot\left[1+\frac{c_{1}e^{i\theta_{1}}e^{-2\eta_{1}x}}{\frac{i\xi_{1}(\xi_{1}+i\eta_{1})}{2\eta_{1}}}N_{1}(x, \xi_{1}+i\eta_{1})
+\frac{-\frac{1}{c_{1}}e^{i\theta_{1}}e^{-2\eta_{1}x}}{\frac{i\xi_{1}(\xi_{1}-i\eta_{1})}{2\eta_{1}}}N_{1}(x, -\xi_{1}+i\eta_{1})\right]
\end{split}
\end{equation}
\subsection{Time evolution}
As in previous sections we find that the scattering data satisfies 
\begin{equation}
\frac{\partial a(t)}{\partial t}=0, \ \ \ \frac{\partial \overline{a}(t)}{\partial t}=0, \ \ \ \frac{\partial b(t)}{\partial t}=-2i(q_{0}^{2}+2\lambda k)b(t).
\end{equation}
Hence, $a(t)$ and $\overline{a}(t)$ are time independent. Moreover,
\begin{equation}
b(\xi_{1}+i\eta_{1},t)=c_{1}e^{i\theta_{1}}\cdot e^{-2i(q_{0}^{2}+2\xi_{1}\eta_{1}i)t},
\end{equation}
\begin{equation}
b(-\xi_{1}+i\eta_{1},t)=-\frac{1}{c_{1}}e^{i\theta_{1}}\cdot e^{-2i(q_{0}^{2}-2\xi_{1}\eta_{1}i)t}.
\end{equation}
Putting all the above into the formula we had for the reflectionless potential, we obtain the following 2-soliton solution
\begin{equation}
\begin{split}
&q(x,t)=\Big(e^{i(2q_{0}^{2}t+\theta)}\Big(c_{1}^{2}q_{0}e^{8t\eta_{1}\sqrt{q_{0}^{2}-\eta_{1}^{2}}}\big(-2\eta_{1}^{2}e^{4\eta_{1}x
+2i(\theta+\theta_{1})}+(q_{0}^{2}-\eta_{1}^{2})(e^{8\eta_{1}x}+e^{4i(\theta+\theta_{1})})\\
&+c_{1}^{4}q_{0}e^{4\eta_{1}(x+4t\sqrt{q_{0}^{2}-\eta_{1}^{2}})+2i(\theta+\theta_{1})}\cdot\big(q_{0}^{2}-2\eta_{1}^{2}
-2i\eta_{1}\sqrt{q_{0}^{2}-\eta_{1}^{2}}\big)+2c_{1}\eta_{1}e^{2\eta_{1}x+4t\eta_{1}\sqrt{q_{0}^{2}-\eta_{1}^{2}}+i(\theta+\theta_{1})}\\
&\cdot(-e^{4\eta_{1}x}+e^{2i(\theta+\theta_{1})})(-q_{0}^{2}+\eta_{1}(\eta_{1}-i\sqrt{q_{0}^{2}-\eta_{1}^{2}}))-2c_{1}^{3}\eta_{1}
e^{2\eta_{1}(x+6t\sqrt{q_{0}^{2}-\eta_{1}^{2}})+i(\theta+\theta_{1})}\cdot\big(-e^{4\eta_{1}x}+e^{2i(\theta+\theta_{1})}\big)\\
&\cdot\big(-q_{0}^{2}+\eta_{1}(\eta_{1}-i\sqrt{q_{0}^{2}-\eta_{1}^{2}})\big)+q_{0}e^{4\eta_{1}x+2i(\theta+\theta_{1})}\cdot
\big(q_{0}^{2}+2i\eta_{1}(i\eta_{1}+\sqrt{q_{0}^{2}-\eta_{1}^{2}})\big)\Big)\Big)/\\
&\Big(q_{0}^{2}e^{4\eta_{1}x+2i(\theta+\theta_{1})}+c_{1}^{4}q_{0}^{2}e^{4\eta_{1}(x+4t\sqrt{q_{0}^{2}-\eta_{1}^{2}})+2i(\theta+\theta_{1})}
-c_{1}^{2}e^{8t\eta_{1}\sqrt{q_{0}^{2}-\eta_{1}^{2}}}\big(-2\eta_{1}^{2}e^{4\eta_{1}x+2i(\theta+\theta_{1})}\\
&+(\eta_{1}^{2}-q_{0}^{2})(e^{8\eta_{1}x+4i(\theta+\theta_{1})})\big)\Big).
\end{split}
\end{equation}
From this solution, we can see that the singularity occurs only when $\theta+\theta_{1}=\frac{\pi}{2}$ at
\begin{equation}
x=\frac{Ln \left(\frac{(1+c_{1}^{2})^{2}\eta_{1}^{2}+(1+c_{1}^{4})\xi_{1}^{2}\mp\sqrt{(1+c_{1}^{2})^{2}\eta_{1}^{2}+(1+c_{1}^{4})\xi_{1}^{2})^{2}-4c_{1}^{4}\xi_{1}^{4}}}{2c_{1}^{2}\xi_{1}^{2}}\right)}{4\eta_{1}}
\end{equation}
for $t=0$.

In particular, when $q_{0}=2$, $\theta=0$, $\theta_{1}=-\frac{3\pi}{4}$, $c_{1}=\sqrt{2}$, and $\eta_{1}=\sqrt{3}$, we can get the result in \cite{He}, i.e.,
\begin{equation}
q(x,t)=i e^{8it}\cdot \frac{(-\sqrt{3}+3i)\cosh(4\sqrt{3}t)+2i\cosh(2\sqrt{3}x)+(-3\sqrt{3}+i)\sinh(4\sqrt{3}t)-2\sinh(2\sqrt{3}x)}
{3\cosh(4\sqrt{3}t)-\cosh(2\sqrt{3}x)+\sinh(4\sqrt{3}t)-i\sinh(2\sqrt{3}x)}.
\end{equation}
Typical 2 soliton solutions, which travel and interact, including the one in  \cite{He}, are shown in Figures \ref{FigC}-\ref{FigC'}-\ref{FigC''} below. In terms of magnitude, Fig. \ref{FigC} shows the interaction of two waves of elevation,  Fig. \ref{FigC'} shows the interaction of two depressive waves (i.e. two `dips') and Fig. \ref{FigC''} illustrates the interaction of one elevated wave and one depressive wave.


If we choose $q_{0}=2$, $\eta_{1}=\sqrt{3}$, $\theta=0$, $c_{1}=\sqrt{\frac{p_{2}-p_{1}}{p_{2}+p_{1}}}$ and $e^{2i\theta}=\frac{-p_{3}-ip_{4}}{p_{3}-ip_{4}}$, we can get the general formula in \cite{He}.

\begin{figure}[h]
\begin{tabular}{cc}
\includegraphics[width=0.5\textwidth]{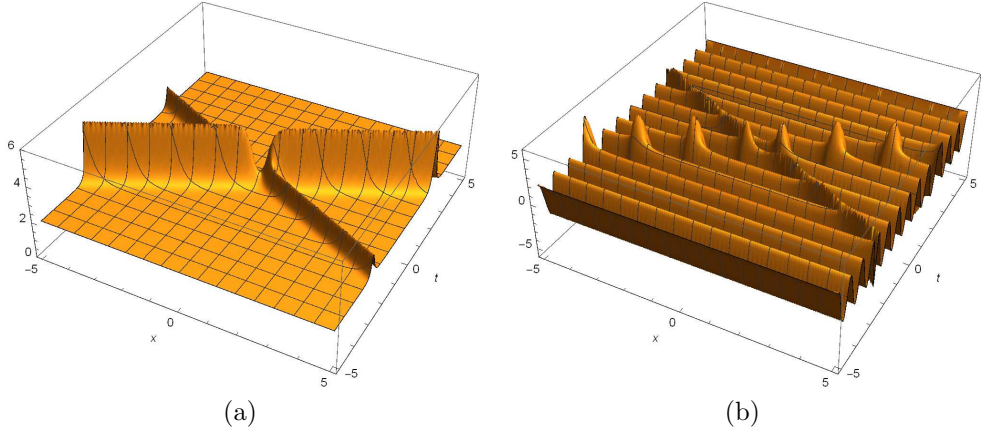}&
\includegraphics[width=0.5\textwidth]{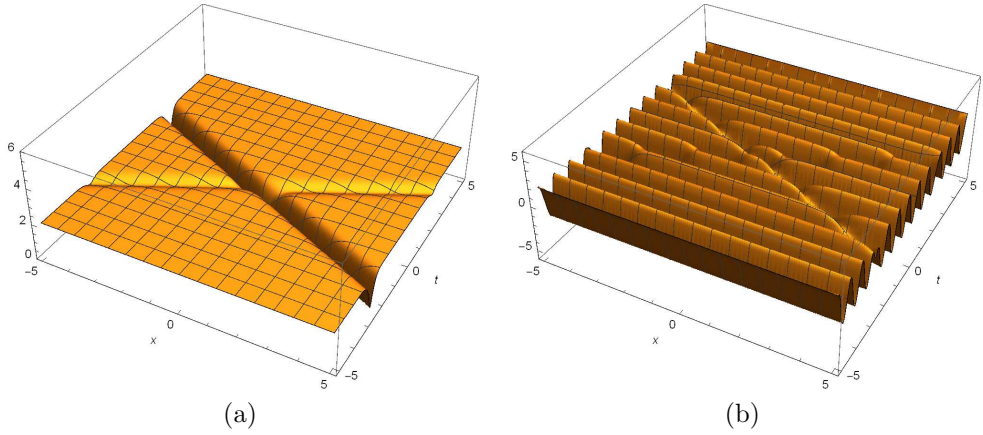}\\
(a) & (b)
\end{tabular}
\caption{(a) The amplitude of $q(x,t)$ with $\theta=0$, $\theta_{1}=-\frac{3\pi}{4}$, $c_{1}=\sqrt{2}$, $\eta_{1}=\sqrt{3}$ and $q_{0}=2$.  (b) The real part of $q(x,t)$ with $\theta=0$, $\theta_{1}=-\frac{3\pi}{4}$, $c_{1}=\sqrt{2}$, $\eta_{1}=\sqrt{3}$ and $q_{0}=2$. }
\label{FigC}
\end{figure}
\begin{figure}[h]
\begin{tabular}{cc}
\includegraphics[width=0.5\textwidth]{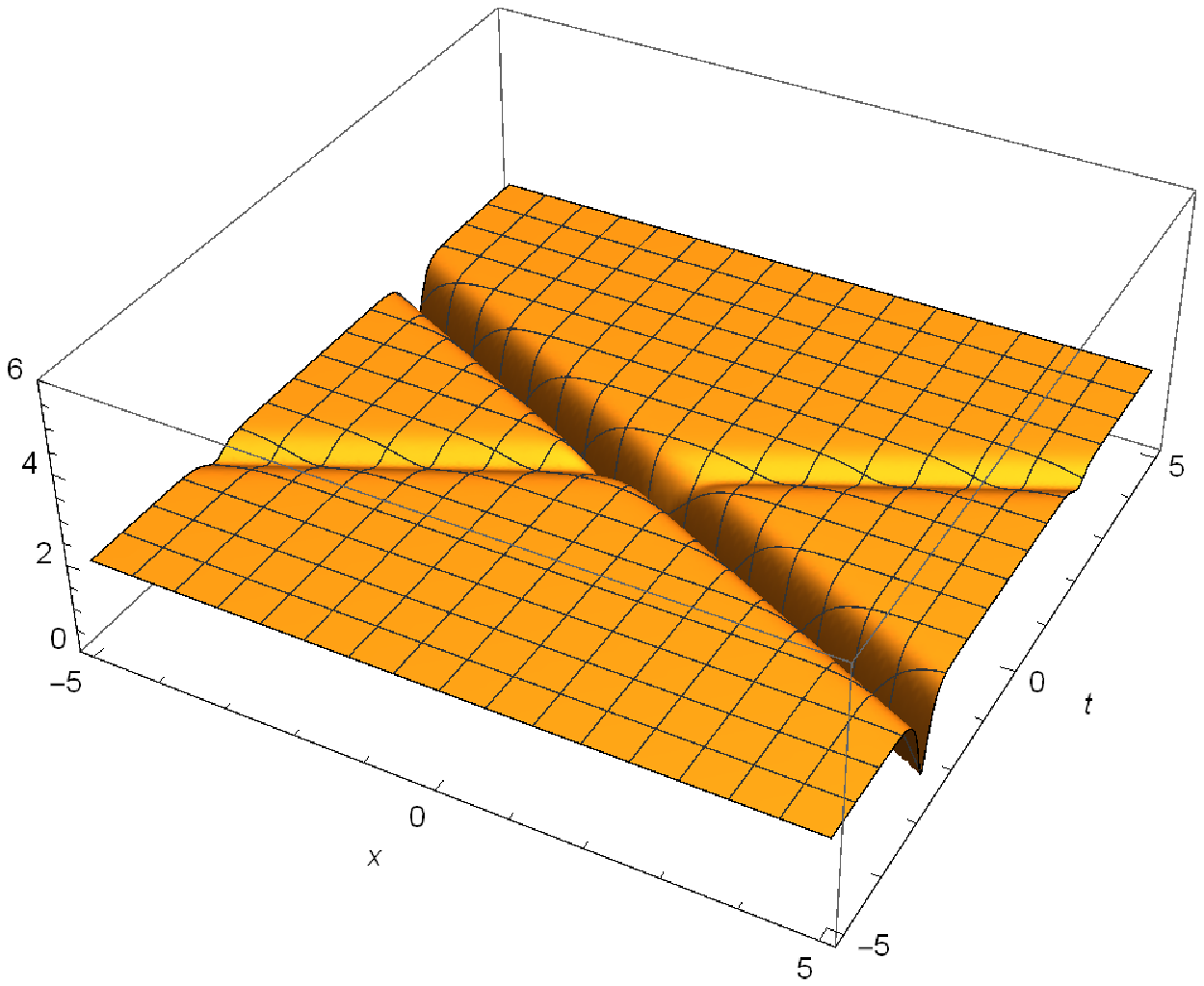}&
\includegraphics[width=0.5\textwidth]{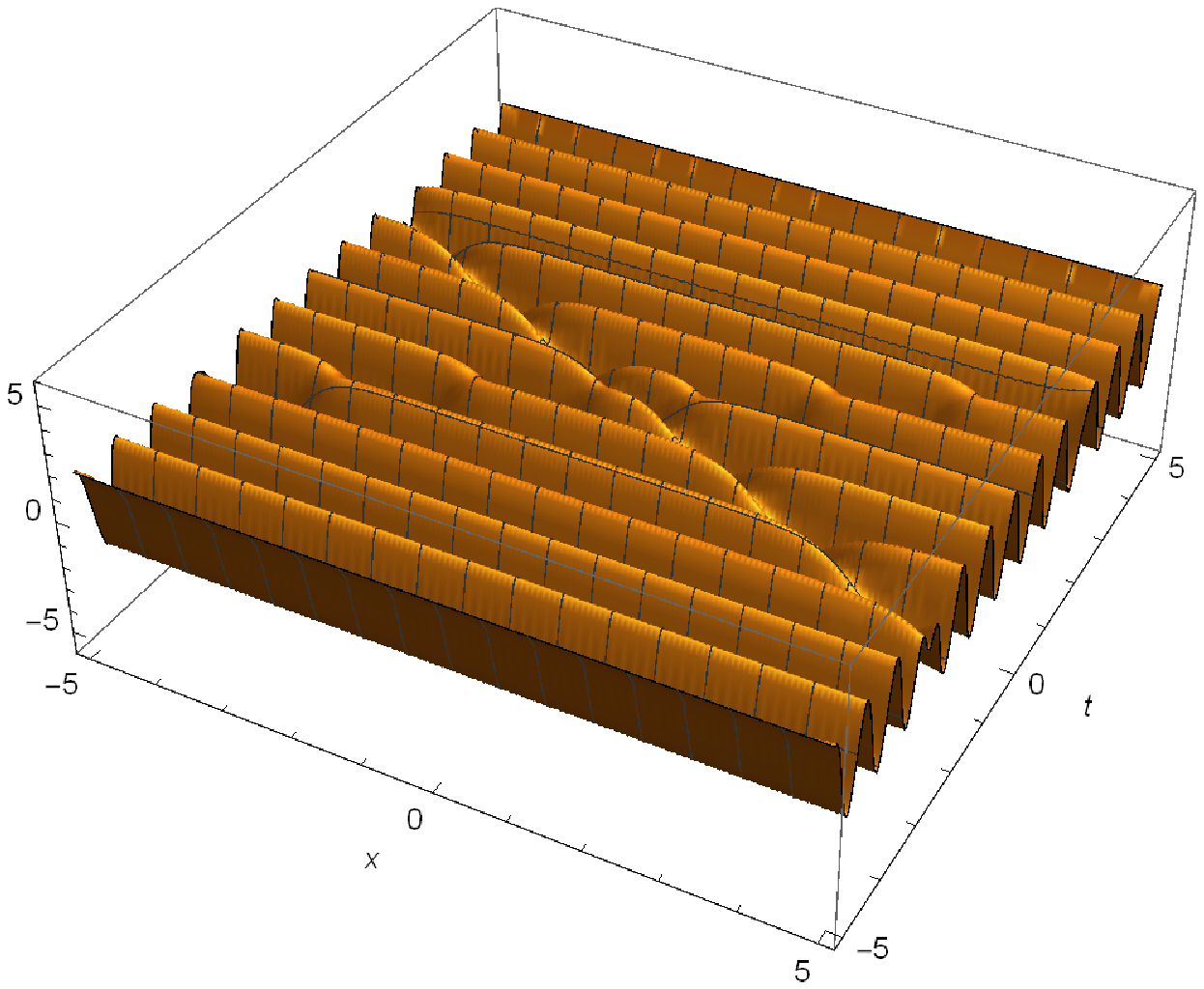}\\
(a) & (b)
\end{tabular}
\caption{(a) The amplitude of $q(x,t)$ with $\theta=0$, $\theta_{1}=\frac{3\pi}{4}$, $c_{1}=\sqrt{2}$, $\eta_{1}=\sqrt{3}$ and $q_{0}=2$.  (b) The real part of $q(x,t)$ with $\theta=0$, $\theta_{1}=\frac{3\pi}{4}$, $c_{1}=\sqrt{2}$, $\eta_{1}=\sqrt{3}$ and $q_{0}=2$. }
\label{FigC'}
\end{figure}
\begin{figure}[h]
\begin{tabular}{cc}
\includegraphics[width=0.5\textwidth]{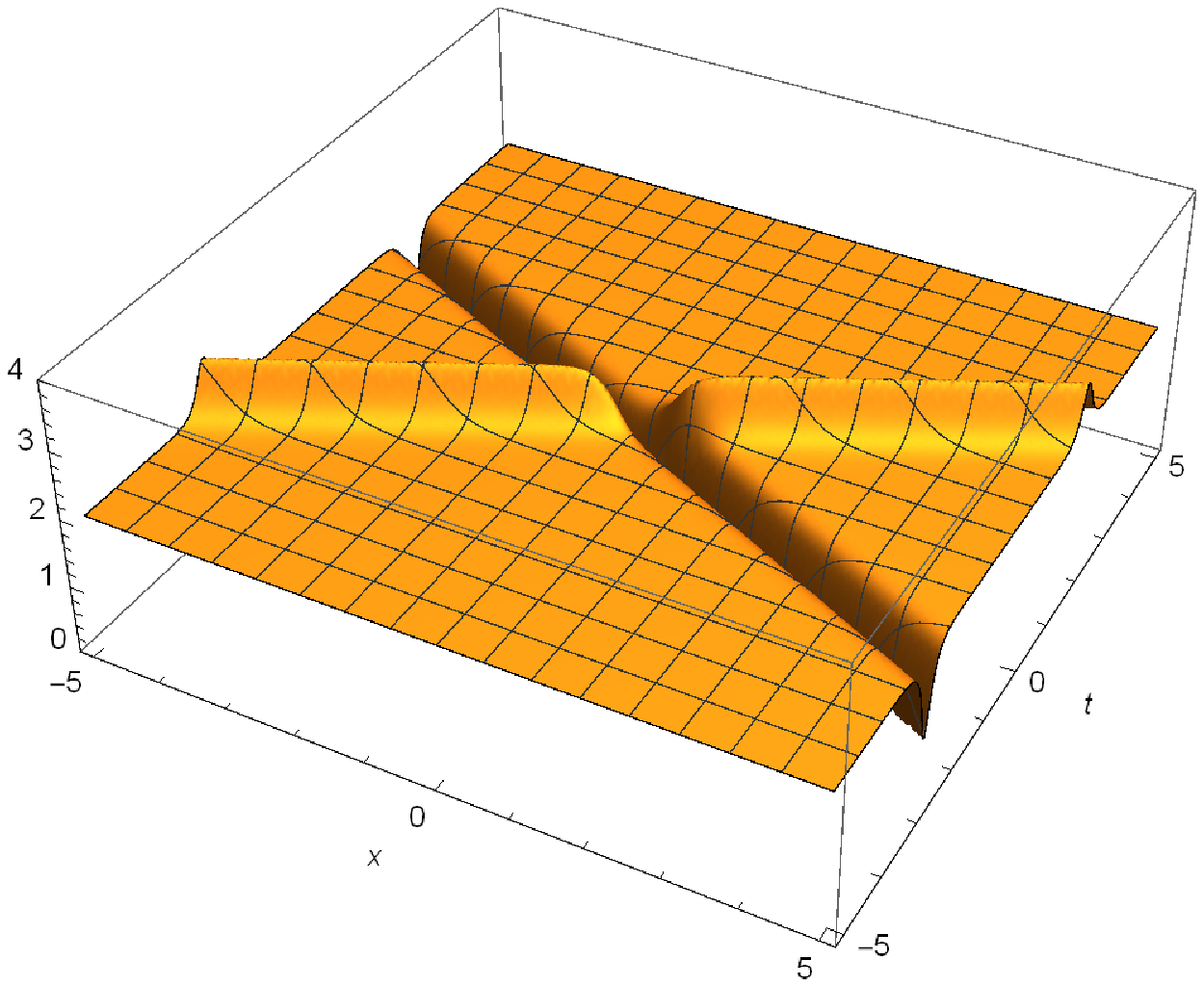}&
\includegraphics[width=0.5\textwidth]{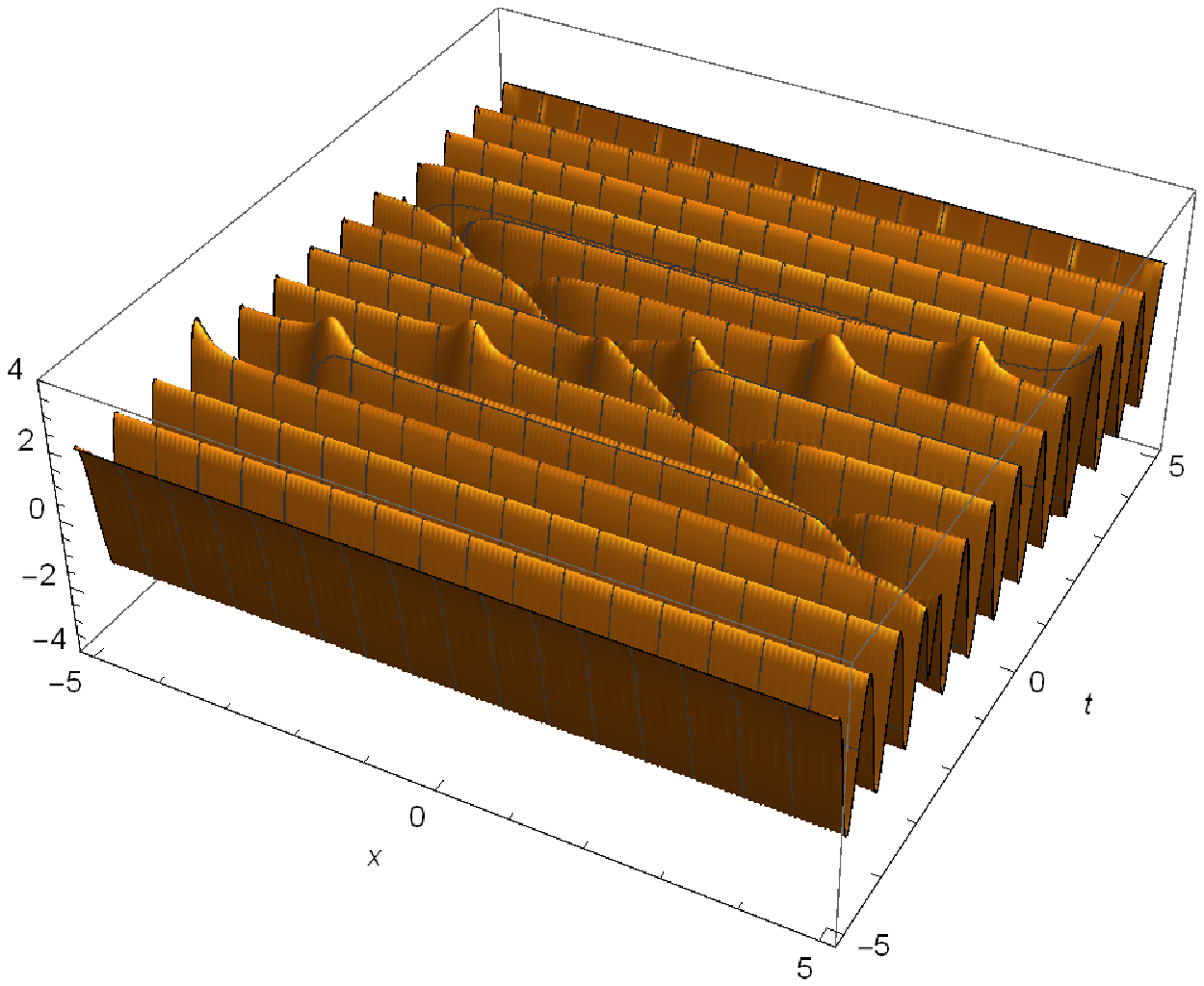}\\
(a) & (b)
\end{tabular}
\caption{(a) The amplitude of $q(x,t)$ with $\theta=0$, $\theta_{1}=\pi$, $c_{1}=\sqrt{2}$, $\eta_{1}=\sqrt{3}$ and $q_{0}=2$.  (b) The real part of $q(x,t)$ with $\theta=0$, $\theta_{1}=\pi$, $c_{1}=\sqrt{2}$, $\eta_{1}=\sqrt{3}$ and $q_{0}=2$. }
\label{FigC''}
\end{figure}

\section{The case of $\sigma=1$ with $\theta_{+}-\theta_{-}=\pi$}
The final case we will study is the nonlocal nonlinear Schr\"{o}dinger (NLS) equation (\ref{E:nonlocal NLS}) with $\sigma=1$
\begin{equation}
\label{E:nonlocal focusing NLS}
iq_{t}(x,t)=q_{xx}(x,t)-2q^{2}(x,t)q^{*}(-x,t)
\end{equation}
with nonzero boundary conditions (NZBCs)
\begin{equation}
q(x,t)\rightarrow q_{\pm}(t)=q_{0}e^{-2iq_{0}^{2}t+i\theta_{\pm}},\ \ as \ \ x\rightarrow\pm\infty,
\end{equation}
where $q_{0}>0$, $0\leq \theta_{\pm}<2\pi$, $\theta_{+}-\theta_{-}=\pi$.

The analysis of this case is similar to the NZBCs studied in section 4. As discussed in section 4, it is natural to introduce the eigenfunctions defined by the following boundary conditions
\begin{equation}
\phi(x,k)\sim w e^{-i\lambda x}, \ \ \ \overline{\phi}(x,k)\sim \overline{w}e^{i\lambda x}
\end{equation}
as $x\rightarrow-\infty$,
\begin{equation}
\psi(x,k)\sim v e^{i\lambda x}, \ \ \ \overline{\psi}(x,k)\sim \overline{v}e^{-i\lambda x}
\end{equation}
as $x\rightarrow +\infty$, where
\begin{equation}
w=\left(\begin{array}{cc}
\lambda+k\\
iq_{+}^{*}
\end{array}\right), \ \ \
\overline{w}=\left(\begin{array}{cc}
-iq_{-}\\
\lambda+k
\end{array}\right),
\end{equation}
\begin{equation}
v=\left(\begin{array}{cc}
-iq_{+}\\
\lambda+k
\end{array}\right), \ \ \
\overline{v}=
\left(\begin{array}{cc}
\lambda+k\\
iq_{-}^{*}
\end{array}\right)
\end{equation}
satisfy the boundary conditions. We define the bounded eigenfunctions as follows:
\begin{equation}
M(x,k)=e^{i\lambda x}\phi(x,k), \ \ \ \overline{M}(x,k)=e^{-i\lambda x}\overline{\phi}(x,k),
\end{equation}
\begin{equation}
N(x,k)=e^{-i\lambda x}\psi(x,k), \ \ \ \overline{N}(x,k)=e^{i\lambda x}\overline{\psi}(x,k).
\end{equation}
The Jost functions can be represented by means of the following integral equations
\begin{equation}
M(x,k)=
\left(\begin{array}{cc}
\lambda+k\\
iq_{+}^{*}
\end{array}\right)
+\int_{-\infty}^{+\infty}G_{-}(x-x',k)((Q-Q_{-})M)(x',k)dx',
\end{equation}
\begin{equation}
\overline{M}(x,k)=
\left(\begin{array}{cc}
-iq_{-}\\
\lambda+k
\end{array}\right)
+\int_{-\infty}^{+\infty}\overline{G}_{-}(x-x',k)((Q-Q_{-})M)(x',k)dx',
\end{equation}
\begin{equation}
N(x,k)=
\left(\begin{array}{cc}
-iq_{+}\\
\lambda+k
\end{array}\right)
+\int_{-\infty}^{+\infty}G_{+}(x-x',k)((Q-Q_{+})M)(x',k)dx',
\end{equation}
\begin{equation}
\overline{N}(x,k)=\left(\begin{array}{cc}
\lambda+k\\
iq_{-}^{*}
\end{array}\right)
+\int_{-\infty}^{+\infty}\overline{G}_{+}(x-x',k)((Q-Q_{+})M)(x',k)dx',
\end{equation}
where
\begin{equation}
Q_{\pm}(t)=\left(\begin{array}{cc}
0& q_{\pm}(t)\\
q^{*}_{\mp}(t)& 0
\end{array}\right).
\end{equation}
Using the Fourier transform method, we get
\begin{equation}
G_{-}(x,k)=\frac{\theta(x)}{2\lambda}[(1+e^{2i\lambda x})\lambda I-i(e^{2i\lambda x}-1)(ikJ+Q_{-})],
\end{equation}
\begin{equation}
\overline{G}_{-}(x,k)=\frac{\theta(x)}{2\lambda}[(1+e^{-2i\lambda x})\lambda I+i(e^{-2i\lambda x}-1)(ikJ+Q_{-})],
\end{equation}
\begin{equation}
G_{+}(x,k)=-\frac{\theta(-x)}{2\lambda}[(1+e^{-2i\lambda x})\lambda I+i(e^{-2i\lambda x}-1)(ikJ+Q_{+})],
\end{equation}
\begin{equation}
\overline{G}_{+}(x,k)=-\frac{\theta(-x)}{2\lambda}[(1+e^{2i\lambda x})\lambda I-i(e^{2i\lambda x}-1)(ikJ+Q_{+})],
\end{equation}
where $\theta(x)$ is the Heaviside function, i.e., $\theta(x)=1$ if $x>0$ and $\theta(x)=0$ if $x<0$.
The analyticity of eigenfunctions and definitions of scattering data are the same as the second case.
\subsection{Symmetry reductions}
Taking into account boundary conditions, we can obtain
\begin{equation}
\psi(x,k)=\left(\begin{array}{cc}
0& 1\\
-1& 0
\end{array}\right)\phi^{*}(-x,-k^{*})
\end{equation}
and
\begin{equation}
\overline{\psi}(x,k)=\left(\begin{array}{cc}
0& -1\\
1& 0
\end{array}\right)\overline{\phi}^{*}(-x,-k^{*}).
\end{equation}
Similarly, we can get
\begin{equation}
N(x,k)=\left(\begin{array}{cc}
0& 1\\
-1& 0
\end{array}\right)M^{*}(-x,-k^{*})
\end{equation}
and
\begin{equation}
\label{NbMb6}
\overline{N}(x,k)=\left(\begin{array}{cc}
0& -1\\
1& 0
\end{array}\right)\overline{M}^{*}(-x,-k^{*}).
\end{equation}
Moreover,
\begin{equation}
a^{*}(-k^{*})=a(k),
\end{equation}
\begin{equation}
\overline{a}^{*}(-k^{*})=\overline{a}(k),
\end{equation}
\begin{equation}
b^{*}(-k^{*})=\overline{b}(k).
\end{equation}

Similarly, we can define the eigenfunctions and scattering data on the left/right edge of the cut for $k\in (-iq_{0}, iq_{0})$. Then
\begin{equation}
\phi^{\mp}(x,k)=\frac{\lambda^{\mp}+k}{-iq_{-}}\cdot\overline{\phi}^{\pm}(x,k), \ \ \ \psi^{\mp}(x,k)=\frac{\lambda^{\mp}+k}{i\widetilde{q}_{-}^{*}}\cdot\overline{\psi}^{\pm}(x,k)
\end{equation}
for $k\in (-iq_{0}, iq_{0})$. Moreover, we have
\begin{equation}
a^{\pm}(k)=-\overline{a}^{\mp}(k), \ \ \ \ \ b^{\pm}(k)=\frac{q_{0}^{2}}{q_{+}\cdot q_{-}}\cdot \overline{b}^{\mp}(k)
\end{equation}
for $k\in (-iq_{0}, iq_{0})$.

\subsection{Uniformization coordinates}
Before discussing the properties of scattering data and solving the inverse problem, we introduce a uniformization variable $z$, defined by the conformal mapping:
\begin{equation}
z=z(k)=k+\lambda(k),
\end{equation}
and the inverse mapping is given by
\begin{equation}
k=k(z)=\frac{1}{2}\left(z-\frac{q_{0}^{2}}{z}\right).
\end{equation}
Then
\begin{equation}
\lambda(z)=\frac{1}{2}\left(z+\frac{q_{0}^{2}}{z}\right).
\end{equation}
\subsection{Symmetries via uniformization coordinates}
It is known that (1) when $z\rightarrow -z^{*}$, then $(k,\lambda)\rightarrow (-k^{*}, -\lambda^{*})$; (2) when $z\rightarrow -\frac{q_{0}^{2}}{z}$, then $(k, \lambda)\rightarrow (k, -\lambda)$. Hence,
\begin{equation}
\psi(x,z)=\left(\begin{array}{cc}
0& 1\\
-1& 0
\end{array}\right)\phi^{*}(-x,-z^{*}),
\end{equation}
\begin{equation}
\overline{\psi}(x,z)=\left(\begin{array}{cc}
0& -1\\
1& 0
\end{array}\right)\overline{\phi}^{*}(-x,-z^{*}),
\end{equation}
\begin{equation}
\phi\left(x,-\frac{q_{0}^{2}}{z}\right)=\frac{\frac{q_{0}^{2}}{z}}{iq_{-}}\overline{\phi}(x,z), \ \ \ \psi\left(x,-\frac{q_{0}^{2}}{z}\right)=\frac{-iq_{+}}{z}\overline{\psi}(x,z), \ \ \ z\in D^{-}.
\end{equation}
Similarly, we can get
\begin{equation}
N(x,z)=\left(\begin{array}{cc}
0& 1\\
-1& 0
\end{array}\right)M^{*}(-x,-z^{*})
\end{equation}
and
\begin{equation}
\overline{N}(x,z)=\left(\begin{array}{cc}
0& -1\\
1& 0
\end{array}\right)\overline{M}^{*}(-x,-z^{*}).
\end{equation}

Moreover,
\begin{equation}
a^{*}(-z^{*})=a(z),
\end{equation}
\begin{equation}
\overline{a}^{*}(-z^{*})=\overline{a}(z),
\end{equation}
\begin{equation}
b^{*}(-z^{*})=\overline{b}(z),
\end{equation}
\begin{equation}
a\left(-\frac{q_{0}^{2}}{z}\right)=-\overline{a}(z), \ \ \ z\in D^{-}, \ \ \ b\left(-\frac{q_{0}^{2}}{z}\right)=\frac{q_{0}^{2}}{q_{+}\cdot q_{-}}\cdot\overline{b}(z),
\end{equation}
where $D^{-}$ is the white regions of Figure \ref{fig5}.

\subsection{Asymptotic behavior of eigenfunctions and scattering data}
In order to solve the inverse problem, one has to determine the asymptotic behavior of eigenfunctions and scattering data both as $z\rightarrow\infty$ in $\mathbb{K}_{1}$ and as $z\rightarrow 0$ in $\mathbb{K}_{2}$. We have
\begin{equation}
M(x,z)\sim\left\{\begin{array}{ll}
\left(\begin{array}{cc}
z\\
iq^{*}(-x)
\end{array}\right), \ \ \ z\rightarrow\infty\\
\left(\begin{array}{cc}
-z\cdot\frac{q(x)}{q_{+}}\\
-iq_{-}^{*}
\end{array}\right), \ \ \ z\rightarrow 0,\\
\end{array}\right.
\end{equation}

\begin{equation}
N(x,z)\sim\left\{\begin{array}{ll}
\left(\begin{array}{cc}
-iq(x)\\
z
\end{array}\right), \ \ \ z\rightarrow\infty\\
\left(\begin{array}{cc}
-iq_{+}\\
z\cdot \frac{q^{*}(-x)}{q_{-}^{*}}
\end{array}\right), \ \ \ z\rightarrow 0,\\
\end{array}\right.
\end{equation}

\begin{equation}
\overline{M}(x,z)\sim\left\{\begin{array}{ll}
\left(\begin{array}{cc}
-iq(x)\\
z
\end{array}\right), \ \ \ z\rightarrow\infty\\
\left(\begin{array}{cc}
iq_{+}\\
-z\cdot \frac{q^{*}(-x)}{q_{-}^{*}}
\end{array}\right), \ \ \ z\rightarrow 0,\\
\end{array}\right.
\end{equation}

\begin{equation}
\overline{N}(x,z)\sim\left\{\begin{array}{ll}
\left(\begin{array}{cc}
z\\
iq^{*}(-x)
\end{array}\right), \ \ \ z\rightarrow\infty\\
\left(\begin{array}{cc}
z\cdot\frac{q(x)}{q_{+}}\\
iq_{-}^{*}
\end{array}\right), \ \ \ z\rightarrow 0,\\
\end{array}\right.
\end{equation}

\begin{equation}
a(z)=
\left\{\begin{array}{ll}
1,\ \ \ z\rightarrow\infty,\\
-1, \ \ \ z\rightarrow 0,\\
\end{array}\right.
\end{equation}

\begin{equation}
\overline{a}(z)=
\left\{\begin{array}{ll}
1,\ \ \ z\rightarrow\infty,\\
-1, \ \ \ z\rightarrow 0,\\
\end{array}\right.
\end{equation}
\begin{equation}
\lim_{z\rightarrow\infty}zb(z)=0, \ \ \ \lim_{z\rightarrow0}\frac{b(z)}{z^{2}}=0.
\end{equation}

\subsection{Left and right scattering problems}
Using the methods in the second case, we can get
\begin{equation}
\begin{split}
\overline{N}(x,z)&=\left(\begin{array}{cc}
z\\
iq_{-}^{*}
\end{array}\right)
+\sum_{j=1}^{J}\frac{z\cdot b(z_{j})e^{i\big(z_{j}+\frac{q_{0}^{2}}{z_{j}}\big)x}\cdot N(x,z_{j})}{(z-z_{j})z_{j}a'(z_{j})}\\
&+\frac{z}{2\pi i}\int_{\Sigma}\frac{\rho(\xi)}{\xi(\xi-z)}\cdot e^{i\big(\xi+\frac{q_{0}^{2}}{\xi}\big)x}\cdot N(x,\xi)d\xi,
\end{split}
\end{equation}
\begin{equation}
\begin{split}
N(x,z)&=\left(\begin{array}{cc}
-iq_{+}\\
z
\end{array}\right)
+\sum_{j=1}^{\overline{J}}\frac{z\cdot\overline{b}(\overline{z}_{j})e^{-i\big(\overline{z}_{j}+\frac{q_{0}^{2}}{\overline{z}_{j}}\big)x}\cdot \overline{N}(x,\overline{z}_{j})}{(z-\overline{z}_{j})\overline{z}_{j}\overline{a}'(\overline{z}_{j})}\\
&-\frac{z}{2\pi i}\int_{\Sigma}\frac{\overline{\rho}(\xi)}{\xi(\xi-z)}\cdot e^{-i\big(\xi+\frac{q_{0}^{2}}{\xi}\big)x}\cdot \overline{N}(x,\xi)d\xi,
\end{split}
\end{equation}
\begin{equation}
\begin{split}
\overline{M}(x,z)&=\left(\begin{array}{cc}
-iq_{-}\\
z
\end{array}\right)+
\sum_{j=1}^{J}\frac{-z\cdot\overline{b}(z_{j})M(x,z_{j})e^{-i\big(z_{j}+\frac{q_{0}^{2}}{z_{j}}\big)x}}{(z-z_{j})z_{j}a'(z_{j})}\\
&+\frac{z}{2\pi i}\int_{\Sigma}\frac{\rho^{*}(-\xi^{*})}{\xi(\xi-z)}\cdot e^{-i\big(\xi+\frac{q_{0}^{2}}{\xi}\big)x}\cdot M(x,\xi)d\xi
\end{split}
\end{equation}
and
\begin{equation}
\begin{split}
M(x,z)&=\left(\begin{array}{cc}
z\\
iq_{+}^{*}
\end{array}\right)+
\sum_{j=1}^{\overline{J}}\frac{-z\cdot b(\overline{z}_{j})\overline{M}(x,\overline{z}_{j})e^{i\big(\overline{z}_{j}+\frac{q_{0}^{2}}{\overline{z}_{j}}\big)x}}
{(z-\overline{z}_{j})\overline{z}_{j}\overline{a}'(\overline{z}_{j})}\\
&-\frac{z}{2\pi i}\int_{\Sigma}\frac{\overline{\rho}^{*}(-\xi^{*})}{\xi(\xi-z)}\cdot e^{i\big(\xi+\frac{q_{0}^{2}}{\xi}\big)x}\cdot \overline{M}(x,\xi)d\xi
\end{split}
\end{equation}
where we recall from the second case:  $\Sigma:=(-\infty,-q_{0})\cup (q_{0}, +\infty)\cup \overrightarrow{(q_{0}, -q_{0})}\cup \{q_{0}e^{i\theta}, \pi\leq \theta\leq 2\pi\}_{clockwise, upper \ circle}\cup \{q_{0}e^{i\theta}, -\pi\leq \theta\leq 0\}_{anticlockwise, lower \ circle}$.


\subsection{Recovery of the potentials}
Note that $N_{1}(x,z)\sim -iq(x)$ as $z\rightarrow\infty$, and
\begin{equation}
N_{1}(x,z)\sim -iq_{+}+\sum_{j=1}^{\overline{J}}\frac{\overline{b}(\overline{z}_{j})e^{-i\big(\overline{z}_{j}+\frac{q_{0}^{2}}{\overline{z}_{j}}\big)x}\cdot \overline{N}_{1}(x,\overline{z}_{j})}{\overline{z}_{j}\overline{a}'(\overline{z}_{j})}
+\frac{1}{2\pi i}\int_{\Sigma}\frac{\overline{\rho}(\xi)}{\xi}\cdot e^{-i\big(\xi+\frac{q_{0}^{2}}{\xi}\big)x}\cdot \overline{N}_{1}(x,\xi)d\xi,
\end{equation}
we have
\begin{equation}
\label{asympN1c6}
q(x)=q_{+}+i\sum_{j=1}^{\overline{J}}\frac{\overline{b}(\overline{z}_{j})e^{-i\big(\overline{z}_{j}+\frac{q_{0}^{2}}{\overline{z}_{j}}\big)x}\cdot \overline{N}_{1}(x,\overline{z}_{j})}{\overline{z}_{j}\overline{a}'(\overline{z}_{j})}+\frac{1}{2\pi}\int_{\Sigma}\frac{\overline{\rho}(\xi)}{\xi}\cdot e^{-i\big(\xi+\frac{q_{0}^{2}}{\xi}\big)x}\cdot \overline{N}_{1}(x,\xi)d\xi.
\end{equation}
\subsection{Trace formula}
As in the prior section we can show that
\begin{equation}
b(z_{j})b^{*}(-z_{j}^{*})=-1,
\end{equation}
which implies that $\Re z_{j}\neq 0$ and $J$ is even. Hence,
we can get the Trace formula as follows.
\begin{equation}
\log a(z)=\log\left(\prod_{j=1}^{J/2}\frac{z-z_{j}}{z+\frac{q_{0}^{2}}{z_{j}}}\cdot\frac{z+z_{j}^{*}}{z-\frac{q_{0}^{2}}{z_{j}^{*}}}
\right)+\frac{1}{2\pi i}\int_{\Sigma}\frac{\log (1+b(\xi)b^{*}(-\xi^{*}))}{\xi-z}d\xi, \ \ \ z\in D^{+}
\end{equation}
and
\begin{equation}
\log \overline{a}(z)=\log\left(\prod_{j=1}^{J/2} \frac{z+\frac{q_{0}^{2}}{z_{j}}}{z-z_{j}}\cdot \frac{z-\frac{q_{0}^{2}}{z_{j}^{*}}}{z+z_{j}^{*}}
\right)-\frac{1}{2\pi i}\int_{\Sigma}\frac{\log (1+b(\xi)b^{*}(-\xi^{*}))}{\xi-z}d\xi, \ \ \ z\in D^{-},
\end{equation}
where $\Re z_{j}\neq 0$.

We claim that this case does  not admit soliton solutions.] Indeed, by the asymptotic of $a(z)$, i.e.,  $a(z)\sim -1$ as $z\rightarrow 0$ and the fact $\Re z_{j}\neq 0$.
Moreover, the Trace formula yields
\begin{equation}
a(z)=\prod_{j=1}^{J/2}\frac{z-z_{j}}{z+\frac{q_{0}^{2}}{z_{j}}}\cdot \frac{z+z_{j}^{*}}{z-\frac{q_{0}^{2}}{z_{j}^{*}}} \mbox{exp}\left(\frac{1}{2\pi i}\int_{\Sigma}\frac{\log (1+b(\xi)b^{*}(-\xi^{*}))}{\xi-z}d\xi \right).
\end{equation}

Since the exponential term is real for all $\xi\in \Sigma$, it implies that

$$\prod_{j=1}^{J/2} \frac{|z_{j}|^{4}}{q_{0}^{4}}\mbox{exp}\left(\frac{1}{2\pi i}\int_{\Sigma}\frac{\log (1+b(\xi)b^{*}(-\xi^{*}))}{\xi}d\xi \right)=-1,$$
which is a contradiction. Thus, in this case there are no  solitons.

Indeed, we consider the integral $I:=\frac{1}{2\pi i}\int_{\Sigma}\frac{\log (1+b(\xi)b^{*}(-\xi^{*}))}{\xi}d\xi$, we can show that $I^{*}=I$. Set $I_{1}=\frac{1}{2\pi i}\int_{q_{0}}^{+\infty}\frac{\log(1+b(\xi)b^{*}(-\xi^{*}))}{\xi}d\xi$, $I_{2}=\frac{1}{2\pi i}\int_{-\infty}^{-q_{0}}\frac{\log(1+b(\xi)b^{*}(-\xi^{*}))}{\xi}d\xi$, $I_{3}=\frac{1}{2\pi i}\int_{q_{0}}^{-q_{0}}\frac{\log(1+b(\xi)b^{*}(-\xi^{*}))}{\xi}d\xi$, $I_{4}=\frac{1}{2\pi i}\int_{upper \ circle |\xi|=q_{0}, clockwise }\frac{\log(1+b(\xi)b^{*}(-\xi^{*}))}{\xi}d\xi$ and $I_{5}=\frac{1}{2\pi i}\int_{lower \ circle |\xi|=q_{0}, anticlockwise }\frac{\log(1+b(\xi)b^{*}(-\xi^{*}))}{\xi}d\xi$, then $I=I_{1}+I_{2}+I_{3}+I_{4}+I_{5}$. Moreover,
\begin{equation}
I_{1}=\frac{1}{2\pi i}\int_{-q_{0}}^{-\infty}\frac{\log(1+b(-\xi')b^{*}(\xi'))}{-\xi'}d(-\xi') \ \ \ (\xi=-\xi')
=\frac{1}{2\pi i}\int_{-q_{0}}^{-\infty}\frac{\log(1+b(-\xi')b^{*}(\xi'))}{\xi'}d\xi'.
\end{equation}
Thus,
\begin{equation}
I_{1}^{*}=-\frac{1}{2\pi i}\int_{-q_{0}}^{-\infty}\frac{\log(1+b^{*}(-\xi')b(\xi'))}{\xi'}d\xi'
=\frac{1}{2\pi i}\int_{-\infty}^{-q_{0}}\frac{\log(1+b(\xi')b^{*}(-\xi'))}{\xi'}d\xi'=I_{2}.
\end{equation}
Similarly, we have $I_{2}^{*}=I_{1}$ and $I_{3}^{*}=I_{3}$. Note that
\begin{equation}
I_{4}=-\frac{1}{2\pi i}\int_{0}^{\pi}\frac{\log(1+b(q_{0}e^{i\theta})b^{*}(-q_{0}e^{-i\theta}))}{q_{0}e^{i\theta}}d(q_{0}e^{i\theta})
=-\frac{1}{2\pi}\int_{0}^{\pi}\log(1+b(q_{0}e^{i\theta})b^{*}(-q_{0}e^{-i\theta}))d\theta
\end{equation}
and
\begin{equation}
\begin{split}
I_{4}^{*}&=-\frac{1}{2\pi}\int_{0}^{\pi}\log(1+b^{*}(q_{0}e^{i\theta})b(-q_{0}e^{-i\theta}))d\theta\\
&=-\frac{1}{2\pi}\int_{0}^{-\pi}\log(1+b^{*}(q_{0}e^{-i\theta'})b(-q_{0}e^{i\theta'}))d(-\theta') \ \ \ (\theta'=-\theta)\\
&=\frac{1}{2\pi}\int_{0}^{-\pi}\log(1+b^{*}(q_{0}e^{-i\theta'})b(-q_{0}e^{i\theta'}))d\theta'\\
&=\frac{1}{2\pi}\int_{\pi}^{0}\log(1+b^{*}(-q_{0}e^{-i\theta''})b(q_{0}e^{i\theta''}))d\theta'' \ \ \ (\theta'=\theta''+\pi)\\
&=-\frac{1}{2\pi}\int_{0}^{\pi}\log(1+b^{*}(-q_{0}e^{-i\theta''})b(q_{0}e^{i\theta''}))d\theta''\\
&=I_{4}.
\end{split}
\end{equation}
Similarly, we have $I_{5}^{*}=I_{5}$. Hence, we have proved $I^{*}=I$, i.e., $I$ is real.

\subsection{Closing the system}
This time the equation will not have a discrete spectrum contribution, to close the system, by the symmetry relations between the eigenfunctions, we have
\begin{equation}
\begin{split}
\left(\begin{array}{cc}
\overline{N}_{1}(x,z)\\
\overline{N}_{2}(x,z)
\end{array}\right)&=\left(\begin{array}{cc}
z\\
iq_{-}^{*}
\end{array}\right)+\frac{z}{2\pi i}\int_{\Sigma}\frac{\rho(\xi)}{\xi(\xi-z)}\cdot e^{i\big(\xi+\frac{q_{0}^{2}}{\xi}\big)x}\cdot \\
&\left(\begin{array}{cc}
-iq_{+}-\frac{\xi}{2\pi i}\int_{\Sigma}\frac{\overline{\rho}(\eta)}{\eta(\eta-\xi)}\cdot e^{-i\big(\eta+\frac{q_{0}^{2}}{\eta}\big)x}\cdot \overline{N}_{1}(x,\eta)d\eta\\
\xi-\frac{\xi}{2\pi i}\int_{\Sigma}\frac{\overline{\rho}(\eta)}{\eta(\eta-\xi)}\cdot e^{-i\big(\eta+\frac{q_{0}^{2}}{\eta}\big)x}\cdot \overline{N}_{2}(x,\eta)d\eta
\end{array}\right)d\xi,
\end{split}
\end{equation}
\begin{equation}
\begin{split}
\left(\begin{array}{cc}
M_{1}(x,z)\\
M_{2}(x,z)
\end{array}\right)
&=\left(\begin{array}{cc}
z\\
iq_{+}^{*}
\end{array}\right)
-\frac{z}{2\pi i}\int_{\Sigma}\frac{\overline{\rho}^{*}(-\xi^{*})}{\xi(\xi-z)}\cdot e^{i\big(\xi+\frac{q_{0}^{2}}{\xi}\big)x}\cdot \\ &\left(\begin{array}{cc}
-iq_{-}+\frac{\xi}{2\pi i}\int_{\Sigma}\frac{\rho^{*}(-\eta^{*})}{\eta(\eta-\xi)}\cdot e^{-i\big(\eta+\frac{q_{0}^{2}}{\eta}\big)x}\cdot M_{1}(x,\eta)d\eta\\
\xi+\frac{\xi}{2\pi i}\int_{\Sigma}\frac{\rho^{*}(-\eta^{*})}{\eta(\eta-\xi)}\cdot e^{-i\big(\eta+\frac{q_{0}^{2}}{\eta}\big)x}\cdot M(x,\eta)d\eta
\end{array}\right)d\xi.
\end{split}
\end{equation}

\subsection{Time evolution}
Similar to the case in section 4, we can get
\begin{equation}
\frac{\partial a(t)}{\partial t}=0, \ \ \ \frac{\partial \overline{a}(t)}{\partial t}=0, \ \ \ \frac{\partial b(t)}{\partial t}=2i(q_{0}^{2}-2\lambda k)b(t), \ \ \ \frac{\partial \overline{b}(t)}{\partial t}=-2i(q_{0}^{2}-2\lambda k)\overline{b}(t).
\end{equation}
Then
\begin{equation}
\rho(z,t)=\rho(z,0)e^{2i[q_{0}^{2}-\frac{1}{2}(\xi^{2}-\frac{q_{0}^{4}}{\xi^{2}})]t}, \ \ \ \overline{\rho}(z,t)=\overline{\rho}(z,0)e^{-2i[q_{0}^{2}-\frac{1}{2}(\xi^{2}-\frac{q_{0}^{4}}{\xi^{2}})]t}.
\end{equation}
It yields
\begin{equation}
q(x,t)=q_{0}e^{-2iq_{0}^{2}t+i\theta_{+}}+\frac{1}{2\pi}\int_{\Sigma}
\frac{\overline{\rho}(\xi,0)e^{-2i[q_{0}^{2}-\frac{1}{2}(\xi^{2}-\frac{q_{0}^{4}}{\xi^{2}})]t}}{\xi}e^{-i(\xi+\frac{q_{0}^{2}}{\xi})x}\cdot \overline{N}_{1}(x,\xi,t)d\xi.
\end{equation}
In principle we can formulate integral equations for $bar{M}_2$ and via the symmetry (\ref{NbMb6}) obtain $q(x)$ as done in section 4. 

\section{Box-type initial conditions for nonlocal NLS}
In this section, we consider the box-type initial conditions as follows:
\begin{equation}
q(x,0)=\left\{\begin{array}{ll}
\begin{array}{cc}
q_{0}e^{i\theta_{+}}, \ \ \ x>L,\\
q_{0}e^{i\theta_{-}}, \ \ \ x<-L,\\
q_{c}e^{i\theta_{c+}}, \ \ \ 0\leq x\leq L,\\
q_{c}e^{i\theta_{c-}}, \ \ \ -L\leq x<0,
\end{array}\\
\end{array}\right.
\end{equation}
where $q_{0}, q_{c}>0$, $0\leq \theta_{+}, \theta_{-}, \theta_{c+}, \theta_{c-}< 2\pi$ and $L>0$.
Then we will consider the four cases as we discuss above.

{\bf Case 1.} $\sigma=-1$ with $\theta_{+}-\theta_{-}=\pi$.
The box-type initial conditions can be specified as follows:
\begin{equation}
q(x,0)=\left\{\begin{array}{ll}
\begin{array}{cc}
q_{0}e^{i\theta}, \ \ \ x>L,\\
-q_{0}e^{i\theta}, \ \ \ x<-L,\\
q_{c}e^{i\theta}, \ \ \ -L\leq x\leq L,
\end{array}\\
\end{array}\right.
\end{equation}
where $q_{0}, q_{c}>0$ and $0\leq \theta< 2\pi$. As we show in Section 3, the two sets of eigenfunctions $\{\phi(x,k), \overline{\phi}(x,k)\}$ and $\{\psi(x,k), \overline{\psi}(x,k)\}$ satisfy the boundary conditions
\begin{equation}
\phi(x,k)\sim \left(\begin{array}{cc}
\lambda_{1}+k\\
-iq_{0}e^{-i\theta}
\end{array}\right)e^{-i\lambda_{1}x}, \ \ \
\overline{\phi}(x,k)\sim\left(\begin{array}{cc}
iq_{0}e^{i\theta}\\
\lambda_{1}+k
\end{array}\right)e^{i\lambda_{1}x}
\end{equation}
as $x\rightarrow -\infty$,
\begin{equation}
\psi(x,k)\sim \left(\begin{array}{cc}
-iq_{0}e^{i\theta}\\
\lambda_{1}+k
\end{array}\right)e^{i\lambda_{1}x}, \ \ \
\overline{\psi}(x,k)\sim\left(\begin{array}{cc}
\lambda_{1}+k\\
iq_{0}e^{-i\theta}
\end{array}\right)e^{-i\lambda_{1}x}
\end{equation}
as $x\rightarrow +\infty$, where $\lambda_{1}=\sqrt{k^{2}-q_{0}^{2}}$. Note that
\begin{equation}
\phi(x,k)=b(k)\psi(x,k)+a(k)\overline{\psi}(x,k),
\end{equation}
we have
\begin{equation}
\phi(x,k)\sim \left(\begin{array}{cc}
-iq_{0}e^{i\theta}b(k)e^{i\lambda_{1}x}+(\lambda_{1}+k)a(k)e^{-i\lambda_{1}x}\\
(\lambda_{1}+k)b(k)e^{i\lambda_{1}x}+iq_{0}e^{-i\theta}a(k)e^{-i\lambda_{1}x}
\end{array}\right)
\end{equation}
as $x\rightarrow +\infty$. We write the real axis as $(-\infty, +\infty)=(-\infty, -L)\cup[-L, L)\cup[L, +\infty)$.
In $[-L, L]$, we can solve
\begin{equation}
\left(\begin{array}{cc}
v_{1}\\
v_{2}
\end{array}\right)_{x}
=
\left(\begin{array}{cc}
-ik& q_{c}e^{i\theta}\\
-q_{c}e^{-i\theta}& ik
\end{array}\right)
\left(\begin{array}{cc}
v_{1}\\
v_{2}
\end{array}\right):=M_{-L}\left(\begin{array}{cc}
v_{1}\\
v_{2}
\end{array}\right)
\end{equation}
for $x$ as $v(x)=T(X)v(-L)$, $T(X)= \exp(X M_{-L})$, where $X:=x+L$, and the matrix $T(X)$ is explicitly written as
\begin{equation}
T(X)=\left(\begin{array}{cc}
\cos \lambda_{4}X-i\frac{k}{\lambda_{4}}\sin \lambda_{4}X& \frac{q_{c}e^{i\theta}}{\lambda_{4}}\sin \lambda_{4}X\\
-\frac{q_{c}e^{-i\theta}}{\lambda_{4}}\sin \lambda_{4}X& \cos \lambda_{4}X+i\frac{k}{\lambda_{4}}\sin \lambda_{4}X
\end{array}\right),
\end{equation}
where $\lambda_{4}=\sqrt{k^{2}+q_{c}^{2}}$.
Thus,
\begin{equation}
v(L)=T(2L)v(-L).
\end{equation}
The matrix $T$ is interpreted as a transfer matrix that connects two asymptotic forms in $x\rightarrow +\infty$ and $x\rightarrow -\infty$. We can get
\begin{equation}
\begin{split}
&\left(\begin{array}{cc}
-iq_{0}e^{i\theta}b(k)e^{i\lambda_{1}L}+(\lambda_{1}+k)a(k)e^{-i\lambda_{1}L}\\
(\lambda_{1}+k)b(k)e^{i\lambda_{1}L}+iq_{0}e^{-i\theta}a(k)e^{-i\lambda_{1}L}
\end{array}\right)\\
&=\phi(L,k)=T(2L)\phi(-L,k)=T(2L)\left(\begin{array}{cc}
\lambda_{1}+k\\
-iq_{0}e^{-i\theta}
\end{array}\right)e^{i\lambda_{1}L},
\end{split}
\end{equation}
i.e.,
\begin{equation}
\begin{split}
&\left(\begin{array}{cc}
-iq_{0}e^{i\theta}b(k)e^{i\lambda_{1}L}+(\lambda_{1}+k)a(k)e^{-i\lambda_{1}L}\\
(\lambda_{1}+k)b(k)e^{i\lambda_{1}L}+iq_{0}e^{-i\theta}a(k)e^{-i\lambda_{1}L}
\end{array}\right)\\
&=\left(\begin{array}{cc}
\cos 2\lambda_{4}L-i\frac{k}{\lambda_{4}}\sin 2\lambda_{4}L& \frac{q_{c}e^{i\theta}}{\lambda_{4}}\sin 2\lambda_{4}L\\
-\frac{q_{c}e^{-i\theta}}{\lambda_{4}}\sin 2\lambda_{4}L& \cos 2\lambda_{4}L+i\frac{k}{\lambda_{4}}\sin 2\lambda_{4}L
\end{array}\right)\cdot\left(\begin{array}{cc}
\lambda_{1}+k\\
-iq_{0}e^{-i\theta}
\end{array}\right)e^{i\lambda_{1}L}.
\end{split}
\end{equation}
Then
\begin{equation}
\begin{split}
&a(k)=(e^{2i\lambda_{1}L}((q_{0}^{2}+(k+\lambda_{1})^{2})\lambda_{4}\cos(2\lambda_{4}L)-i(k^{3}+2k^{2}\lambda_{1}+2q_{0}q_{c}\lambda_{1}\\
&+k(-q_{0}^{2}+2q_{0}q_{c}+\lambda_{1}^{2}))\sin(2\lambda_{4}L)))/((k-q_{0}+\lambda_{1})(k+q_{0}+\lambda_{1})\lambda_{4}).
\end{split}
\end{equation}

Set $k=i\xi$, where $\xi$ is a real number, we will distinguish two subcases.

{\bf Subcase 1.} $|\xi|\geq q_{c}$. We have $\lambda_{1}=i\sqrt{\xi^{2}+q_{0}^{2}}:=i\widetilde{\lambda}_{1}$ and $\lambda_{4}=i\sqrt{\xi^{2}-q_{c}^{2}}:=i\widetilde{\lambda}_{4}$, where both $\widetilde{\lambda}_{1}$ and $\widetilde{\lambda}_{4}$ are positive real numbers. Then $a(i\xi)=0$ if and only if $\xi$ is a root of the following equation:
\begin{equation}
\frac{e^{-2\widetilde{\lambda}_{4}L}-e^{2\widetilde{\lambda}_{4}L}}{e^{-2\widetilde{\lambda}_{4}L}+e^{2\widetilde{\lambda}_{4}L}}
=\frac{(q_{0}^{2}-(\xi+\widetilde{\lambda}_{1})^{2})\widetilde{\lambda}_{4}}{-\xi^{3}-2\xi^{2}\widetilde{\lambda}_{1}+2q_{0}q_{c}\widetilde{\lambda}_{1}
+\xi(-q_{0}^{2}+2q_{0}q_{c}-\widetilde{\lambda}_{1}^{2})}.
\end{equation}
Numerically, we do not find solutions to the above equation.


{\bf Subcase 2.} $|\xi|< q_{c}$. We have $\lambda_{1}=i\sqrt{\xi^{2}+q_{0}^{2}}:=i\widetilde{\lambda}_{1}$ and $\lambda_{4}=\sqrt{-\xi^{2}+q_{c}^{2}}$, where both $\widetilde{\lambda}_{1}$ and $\lambda_{4}$ are positive real numbers. Then $a(i\xi)=0$ if and only if $\xi$ is a root of the following equation:
\begin{equation}
\label{E:box 1}
\tan(2\lambda_{4}L)=\frac{(q_{0}^{2}-(\xi+\widetilde{\lambda}_{1})^{2})\lambda_{4}}{\xi^{3}+2\xi^{2}\widetilde{\lambda}_{1}-2q_{0}q_{c}\widetilde{\lambda}_{1}
-\xi(-q_{0}^{2}+2q_{0}q_{c}-\widetilde{\lambda}_{1}^{2})}.
\end{equation}
Numerically we find solutions to the above equation (\ref{E:box 1}). In Fig. \ref{fig11} below we show one intersection for the given parameters.
\begin{figure}[h]
\begin{tabular}{cc}
\includegraphics[width=0.8\textwidth]{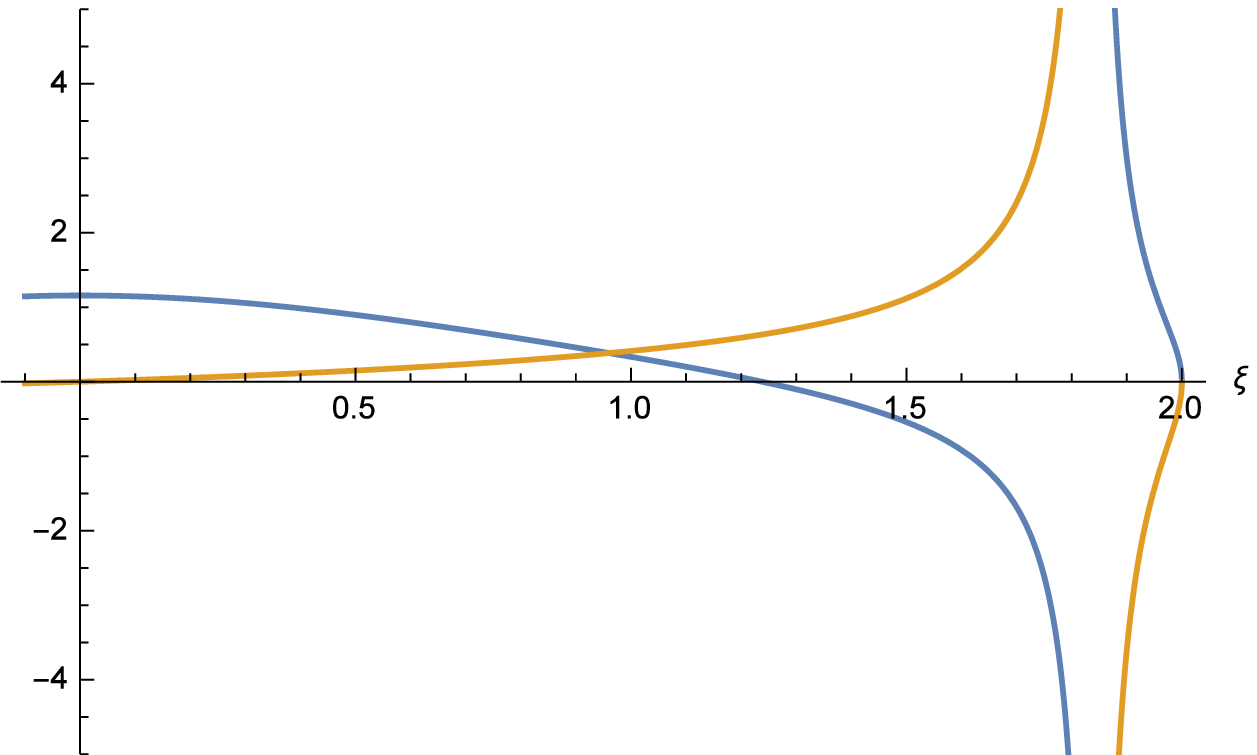}
\end{tabular}
\caption{The LHS (blue) and RHS (orange) of (\ref{E:box 1}) with $L=1$, $q_{0}=4.31$ and $q_{c}=2$, yielding one intersection between $0$ and $q_{c}$.}
\label{fig11}
\end{figure}

\begin{remark}
If we set a bigger $L$, then the LHS and RHS of (\ref{E:box 1}) can yield more intersections.
\end{remark}

{\bf Case 2.} $\sigma=-1$ with $\theta_{+}-\theta_{-}=0$.
The box-type initial conditions can be specified as follows:
\begin{equation}
q(x,0)=\left\{\begin{array}{ll}
\begin{array}{cc}
q_{0}e^{i\theta}, \ \ \ x\in (-\infty, -L)\cup (L, +\infty),\\
q_{c}e^{i\theta}, \ \ \ x\in [-L, L],
\end{array}\\
\end{array}\right.
\end{equation}
where $q_{0}, q_{c}>0$ and $0\leq \theta< 2\pi$. As we show in Section 4, the two sets of eigenfunctions $\{\phi(x,k), \overline{\phi}(x,k)\}$ and $\{\psi(x,k), \overline{\psi}(x,k)\}$ satisfy the boundary conditions
\begin{equation}
\phi(x,k)\sim \left(\begin{array}{cc}
\lambda_{3}+k\\
-iq_{0}e^{-i\theta}
\end{array}\right)e^{-i\lambda_{3}x}, \ \ \
\overline{\phi}(x,k)\sim\left(\begin{array}{cc}
-iq_{0}e^{i\theta}\\
\lambda_{3}+k
\end{array}\right)e^{i\lambda_{3}x}
\end{equation}
as $x\rightarrow -\infty$,
\begin{equation}
\psi(x,k)\sim \left(\begin{array}{cc}
-iq_{0}e^{i\theta}\\
\lambda_{3}+k
\end{array}\right)e^{i\lambda_{3}x}, \ \ \
\overline{\psi}(x,k)\sim\left(\begin{array}{cc}
\lambda_{3}+k\\
-iq_{0}e^{-i\theta}
\end{array}\right)e^{-i\lambda_{3}x}
\end{equation}
as $x\rightarrow +\infty$, where $\lambda_{3}=\sqrt{k^{2}+q_{0}^{2}}$.  Note that
\begin{equation}
\phi(x,k)=b(k)\psi(x,k)+a(k)\overline{\psi}(x,k),
\end{equation}
we have
\begin{equation}
\phi(x,k)\sim \left(\begin{array}{cc}
-iq_{0}e^{i\theta}b(k)e^{i\lambda_{3}x}+(\lambda_{3}+k)a(k)e^{-i\lambda_{3}x}\\
(\lambda_{3}+k)b(k)e^{i\lambda_{3}x}-iq_{0}e^{-i\theta}a(k)e^{-i\lambda_{3}x}
\end{array}\right)
\end{equation}
as $x\rightarrow +\infty$. We write the real axis as $(-\infty, +\infty)=(-\infty, -L)\cup[-L, 0)\cup[0, L)\cup[L, +\infty)$.
In $[-L, L]$, we can solve
\begin{equation}
\left(\begin{array}{cc}
v_{1}\\
v_{2}
\end{array}\right)_{x}
=
\left(\begin{array}{cc}
-ik& q_{c}e^{i\theta}\\
-q_{c}e^{-i\theta}& ik
\end{array}\right)
\left(\begin{array}{cc}
v_{1}\\
v_{2}
\end{array}\right):=M_{-L}\left(\begin{array}{cc}
v_{1}\\
v_{2}
\end{array}\right)
\end{equation}
for $x$ as $v(x)=T(X)v(-L)$, $T(X)= \exp(X M_{-L})$, where $X:=x+L$, and the matrix $T(X)$ is explicitly written as
\begin{equation}
T(X)=\left(\begin{array}{cc}
\cos \lambda_{4}X-i\frac{k}{\lambda_{4}}\sin \lambda_{4}X& \frac{q_{c}e^{i\theta}}{\lambda_{4}}\sin \lambda_{4}X\\
-\frac{q_{c}e^{-i\theta}}{\lambda_{4}}\sin \lambda_{4}X& \cos \lambda_{4}X+i\frac{k}{\lambda_{4}}\sin \lambda_{4}X
\end{array}\right),
\end{equation}
where $\lambda_{4}=\sqrt{k^{2}+q_{c}^{2}}$. The matrix $T$ is interpreted as a transfer matrix that connects two asymptotic forms in $x\rightarrow +\infty$ and $x\rightarrow -\infty$. We can get
\begin{equation}
\begin{split}
&\left(\begin{array}{cc}
-iq_{0}e^{i\theta}b(k)e^{i\lambda_{3}L}+(\lambda_{3}+k)a(k)e^{-i\lambda_{3}L}\\
(\lambda_{3}+k)b(k)e^{i\lambda_{3}L}-iq_{0}e^{-i\theta}a(k)e^{-i\lambda_{3}L}
\end{array}\right)\\
&=\phi(L,k)=T(2L)\phi(-L,k)=T(2L)\left(\begin{array}{cc}
\lambda_{3}+k\\
-iq_{0}e^{-i\theta}
\end{array}\right)e^{i\lambda_{3}L},
\end{split}
\end{equation}
i.e.,
\begin{equation}
\begin{split}
&\left(\begin{array}{cc}
-iq_{0}e^{i\theta}b(k)e^{i\lambda_{3}L}+(\lambda_{3}+k)a(k)e^{-i\lambda_{3}L}\\
(\lambda_{3}+k)b(k)e^{i\lambda_{3}L}-iq_{0}e^{-i\theta}a(k)e^{-i\lambda_{3}L}
\end{array}\right)\\
&=\left(\begin{array}{cc}
\cos 2\lambda_{4}L-i\frac{k}{\lambda_{4}}\sin 2\lambda_{4}L& \frac{q_{c}e^{i\theta}}{\lambda_{4}}\sin 2\lambda_{4}L\\
-\frac{q_{c}e^{-i\theta}}{\lambda_{4}}\sin 2\lambda_{4}L& \cos 2\lambda_{4}L+i\frac{k}{\lambda_{4}}\sin 2\lambda_{4}L
\end{array}\right)\left(\begin{array}{cc}
\lambda_{3}+k\\
-iq_{0}e^{-i\theta}
\end{array}\right)e^{i\lambda_{3}L}.
\end{split}
\end{equation}
Then
\begin{equation}
a(k)=\frac{e^{2i\lambda_{3}L}(\lambda_{4}(q_{0}^{2}+(k+\lambda_{3})^{2})\cos(2\lambda_{4}L)-i(k^{3}+2k^{2}\lambda_{3}+2q_{0}q_{c}\lambda_{3}
+k(-q_{0}^{2}+2q_{0}q_{c}+\lambda_{3}^{2}))\sin(2\lambda_{4}L))}
{\lambda_{4}(q_{0}^{2}+(k+\lambda_{3})^{2})}.
\end{equation}


Set $k=i\xi$, where $\xi$ is a real number, we will distinguish three subcases.

{\bf Subcase 1.} $|\xi|\leq\min\{q_{0}, q_{c}\}$. We obtain both $\lambda_{3}=\sqrt{-\xi^{2}+q_{0}^{2}}$ and $\lambda_{4}=\sqrt{-\xi^{2}+q_{c}^{2}}$ are positive real numbers. Then $a(i\xi)=0$ if and only if $\xi$ is a root of the following equation:
\begin{equation}
\tan(2\lambda_{4}L)=\frac{\lambda_{4}(q_{0}^{2}-\xi^{2}+2i\xi \lambda_{3}+\lambda_{3}^{2})}{i[-i\xi^{3}-2\xi^{2}\lambda_{3}+2q_{0}q_{c}\lambda_{3}+i\xi(-q_{0}^{2}+2q_{0}q_{c}+\lambda_{3}^{2})]},
\end{equation}
where the left hand side is real, however, the right hand side is complex, which is a contradiction.

{\bf Subcase 2.} $|\xi|\geq\max\{q_{0}, q_{c}\}$. We have $\lambda_{3}=i\sqrt{\xi^{2}-q_{0}^{2}}:=i\widetilde{\lambda}_{3}$
and $\lambda_{4}=i\sqrt{\xi^{2}-q_{c}^{2}}:=i\widetilde{\lambda}_{4}$, where both $\widetilde{\lambda}_{3}$ and $\widetilde{\lambda}_{4}$ are positive real numbers. Then $a(i\xi)=0$ if and only if $\xi$ is a root of the following equation:
\begin{equation}
\frac{e^{-2\widetilde{\lambda}_{4}L}-e^{2\widetilde{\lambda}_{4}L}}{e^{-2\widetilde{\lambda}_{4}L}+e^{2\widetilde{\lambda}_{4}L}}
=\frac{\widetilde{\lambda}_{4}[q_{0}^{2}-(\xi+\widetilde{\lambda}_{3})^{2}]}{-\xi^{3}-2\xi^{2}\widetilde{\lambda}_{3}+2q_{0}q_{c}\widetilde{\lambda}_{3}
+\xi(-q_{0}^{2}+2q_{0}q_{c}-\widetilde{\lambda}_{3}^{2})}.
\end{equation}
Numerically, we do not find solutions to the above equation.


{\bf Subcase 3.} $\min\{q_{0}, q_{c}\}<|\xi|<\max\{q_{0}, q_{c}\}$. If $q_{0}>q_{c}$, then $\lambda_{3}=\sqrt{-\xi^{2}+q_{0}^{2}}$ and $\lambda_{4}=i\sqrt{\xi^{2}-q_{c}^{2}}:=i\widetilde{\lambda}_{4}$, where both $\lambda_{3}$
and $\widetilde{\lambda}_{4}$ are positive real numbers. Then $a(i\xi)=0$ if and only if $\xi$ is a root of the following equation:
\begin{equation}
\frac{e^{-2\widetilde{\lambda}_{4}L}-e^{2\widetilde{\lambda}_{4}L}}{e^{-2\widetilde{\lambda}_{4}L}+e^{2\widetilde{\lambda}_{4}L}}
=\frac{i\widetilde{\lambda}_{4}[q_{0}^{2}+(i\xi+\lambda_{3})^{2}]}{-i\xi^{3}-2\xi^{2}\lambda_{3}+2q_{0}q_{c}\lambda_{3}+i\xi(-q_{0}^{2}+2q_{0}q_{c}+\lambda_{3}^{2})},
\end{equation}
where the left hand side is real, however, the right hand side is complex, which is a contradiction. If $q_{0}<q_{c}$, then $\lambda_{3}=i\sqrt{\xi^{2}-q_{0}^{2}}:=i\widetilde{\lambda}_{3}$ and $\lambda_{4}=\sqrt{-\xi^{2}+q_{c}^{2}}$, where both $\widetilde{\lambda}_{3}$ and $\lambda_{4}$ are positive real numbers. Then $a(i\xi)=0$ if and only if $\xi$ is a root of the following equation:
\begin{equation}
\label{E:box 2}
\tan(2\lambda_{4}L)=\frac{\lambda_{4}(q_{0}^{2}-(\xi+\widetilde{\lambda}_{3})^{2})}{\xi^{3}+2\xi^{2}\widetilde{\lambda}_{3}-2q_{0}q_{c}\widetilde{\lambda}_{3}
-\xi(-q_{0}^{2}+2q_{0}q_{c}-\widetilde{\lambda}_{3}^{2})}.
\end{equation}

Numerically we find solutions to the above equation (\ref{E:box 2}). In Fig. \ref{fig12} below we show two intersections for the given parameters.

\begin{figure}[h]
\begin{tabular}{cc}
\includegraphics[width=0.8\textwidth]{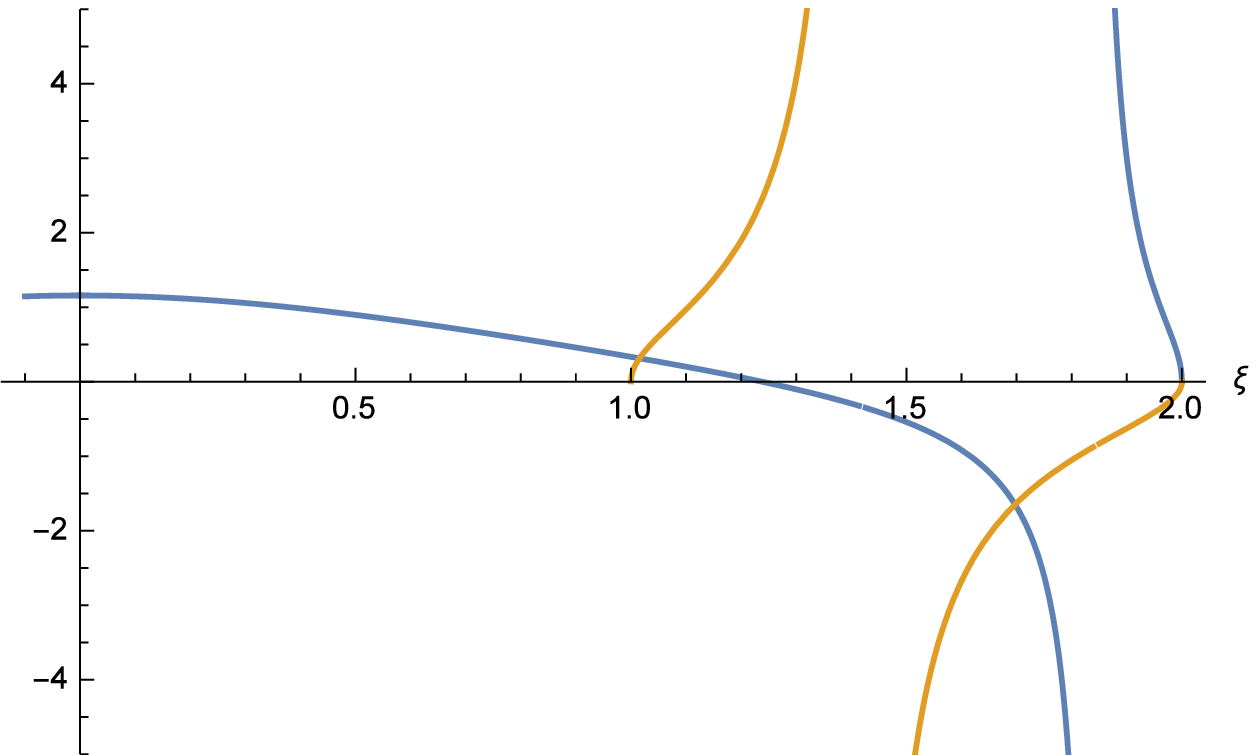}
\end{tabular}
\caption{The LHS (blue) and RHS (orange) of (\ref{E:box 2}) with $L=1$, $q_{0}=1$ and $q_{c}=2$, yielding two intersections between $q_{0}$ and $q_{c}$.}
\label{fig12}
\end{figure}

\begin{remark}
If we set a bigger $L$, then the LHS and RHS of (\ref{E:box 2}) can yield more intersections.
\end{remark}

{\bf Case 3.} $\sigma=1$ with $\theta_{+}-\theta_{-}=0$.
The box-type initial conditions can be specified as follows:
\begin{equation}
q(x,0)=\left\{\begin{array}{ll}
\begin{array}{cc}
q_{0}e^{i\theta}, \ \ \ x\in (-\infty, -L)\cup (L, +\infty),\\
q_{c}e^{i\theta}, \ \ \ x\in [-L, L],
\end{array}\\
\end{array}\right.
\end{equation}
where $q_{0}, q_{c}>0$ and $0\leq \theta< 2\pi$. As we show in Section 5, the two sets of eigenfunctions $\{\phi(x,k), \overline{\phi}(x,k)\}$ and $\{\psi(x,k), \overline{\psi}(x,k)\}$ satisfy the boundary conditions
\begin{equation}
\phi(x,k)\sim \left(\begin{array}{cc}
\lambda_{1}+k\\
iq_{0}e^{-i\theta}
\end{array}\right)e^{-i\lambda_{1}x}, \ \ \
\overline{\phi}(x,k)\sim\left(\begin{array}{cc}
-iq_{0}e^{i\theta}\\
\lambda_{1}+k
\end{array}\right)e^{i\lambda_{1}x}
\end{equation}
as $x\rightarrow -\infty$,
\begin{equation}
\psi(x,k)\sim \left(\begin{array}{cc}
-iq_{0}e^{i\theta}\\
\lambda_{1}+k
\end{array}\right)e^{i\lambda_{1}x}, \ \ \
\overline{\psi}(x,k)\sim\left(\begin{array}{cc}
\lambda_{1}+k\\
iq_{0}e^{-i\theta}
\end{array}\right)e^{-i\lambda_{1}x}
\end{equation}
as $x\rightarrow +\infty$. Note that
\begin{equation}
\phi(x,k)=b(k)\psi(x,k)+a(k)\overline{\psi}(x,k),
\end{equation}
we have
\begin{equation}
\phi(x,k)\sim \left(\begin{array}{cc}
-iq_{0}e^{i\theta}b(k)e^{i\lambda_{1}x}+(\lambda_{1}+k)a(k)e^{-i\lambda_{1}x}\\
(\lambda_{1}+k)b(k)e^{i\lambda_{1}x}+iq_{0}e^{-i\theta}a(k)e^{-i\lambda_{1}x}
\end{array}\right)
\end{equation}
as $x\rightarrow +\infty$.
We write the real axis as $(-\infty, +\infty)=(-\infty, -L)\cup[-L, 0)\cup[0, L)\cup[L, +\infty)$.
In $[-L, L]$, we can solve
\begin{equation}
\left(\begin{array}{cc}
v_{1}\\
v_{2}
\end{array}\right)_{x}
=
\left(\begin{array}{cc}
-ik& q_{c}e^{i\theta}\\
q_{c}e^{-i\theta}& ik
\end{array}\right)
\left(\begin{array}{cc}
v_{1}\\
v_{2}
\end{array}\right):=M_{-L}\left(\begin{array}{cc}
v_{1}\\
v_{2}
\end{array}\right)
\end{equation}
for $x$ as $v(x)=T(X)v(-L)$, $T(X)= \exp(X M_{-L})$, where $X:=x+L$, and the matrix $T(X)$ is explicitly written as
\begin{equation}
T(X)=\left(\begin{array}{cc}
\cos \lambda_{2}X-i\frac{k}{\lambda_{2}}\sin \lambda_{2}X& \frac{q_{c}e^{i\theta}}{\lambda_{2}}\sin \lambda_{2}X\\
\frac{q_{c}e^{-i\theta}}{\lambda_{2}}\sin \lambda_{2}X& \cos \lambda_{2}X+i\frac{k}{\lambda_{2}}\sin \lambda_{2}X
\end{array}\right),
\end{equation}
where $\lambda_{2}=\sqrt{k^{2}-q_{c}^{2}}$. The matrix $T$ is interpreted as a transfer matrix that connects two asymptotic forms in $x\rightarrow +\infty$ and $x\rightarrow -\infty$. We can get
\begin{equation}
\begin{split}
&\left(\begin{array}{cc}
-iq_{0}e^{i\theta}b(k)e^{i\lambda_{1}L}+(\lambda_{1}+k)a(k)e^{-i\lambda_{1}L}\\
(\lambda_{1}+k)b(k)e^{i\lambda_{1}L}+iq_{0}e^{-i\theta}a(k)e^{-i\lambda_{1}L}
\end{array}\right)\\
&=\phi(L,k)=T(2L)\phi(-L,k)=T(2L)\left(\begin{array}{cc}
\lambda_{1}+k\\
iq_{0}e^{-i\theta}
\end{array}\right)e^{i\lambda_{1}L},
\end{split}
\end{equation}
i.e.,
\begin{equation}
\begin{split}
&\left(\begin{array}{cc}
-iq_{0}e^{i\theta}b(k)e^{i\lambda_{1}L}+(\lambda_{1}+k)a(k)e^{-i\lambda_{1}L}\\
(\lambda_{1}+k)b(k)e^{i\lambda_{1}L}+iq_{0}e^{-i\theta}a(k)e^{-i\lambda_{1}L}
\end{array}\right)\\
&=\left(\begin{array}{cc}
\cos 2\lambda_{2}L-i\frac{k}{\lambda_{2}}\sin 2\lambda_{2}L& \frac{q_{c}e^{i\theta}}{\lambda_{2}}\sin 2\lambda_{2}L\\
\frac{q_{c}e^{-i\theta}}{\lambda_{2}}\sin 2\lambda_{2}L& \cos 2\lambda_{2}L+i\frac{k}{\lambda_{2}}\sin 2\lambda_{2}L
\end{array}\right)\left(\begin{array}{cc}
\lambda_{1}+k\\
iq_{0}e^{-i\theta}
\end{array}\right)e^{i\lambda_{1}L}.
\end{split}
\end{equation}
Then
\begin{equation}
a(k)=e^{2i\lambda_{1}L}\left(\cos(2\lambda_{2}L)-\frac{i(k^{3}+2k^{2}\lambda_{1}-2q_{0}q_{c}\lambda_{1}+k(q_{0}^{2}-2q_{0}q_{c}+\lambda_{1}^{2}))\sin(2\lambda_{2}L)}{\lambda_{2}(k-q_{0}+\lambda_{1})(k+q_{0}+\lambda_{1})}\right).
\end{equation}


Set $k=i\xi$, where $\xi$ is a real number, then $\lambda_{1}=i\sqrt{\xi^{2}+q_{0}^{2}}:=i\widetilde{\lambda}_{1}$ and $\lambda_{2}=i\sqrt{\xi^{2}+q_{c}^{2}}:=i\widetilde{\lambda}_{2}$, where both $\widetilde{\lambda}_{1}$ and $\widetilde{\lambda}_{2}$
are positive real numbers. We conclude that if $k=i\xi$ is a zero of $a(k)$, then $\xi$ is a root of the following equation:
\begin{equation}
\label{E:box 3}
\frac{e^{-2\widetilde{\lambda}_{2}L}-e^{2\widetilde{\lambda}_{2}L}}{e^{-2\widetilde{\lambda}_{2}L}+e^{2\widetilde{\lambda}_{2}L}}
=\frac{\widetilde{\lambda}_{2}[(\xi+\widetilde{\lambda}_{1})^{2}+q_{0}^{2}]}{2\xi^{3}+2\xi^{2}\widetilde{\lambda}_{1}+2q_{0}q_{c}\widetilde{\lambda}_{1}+2q_{0}q_{c}\xi}.
\end{equation}

\begin{figure}[h]
\begin{tabular}{cc}
\includegraphics[width=0.8\textwidth]{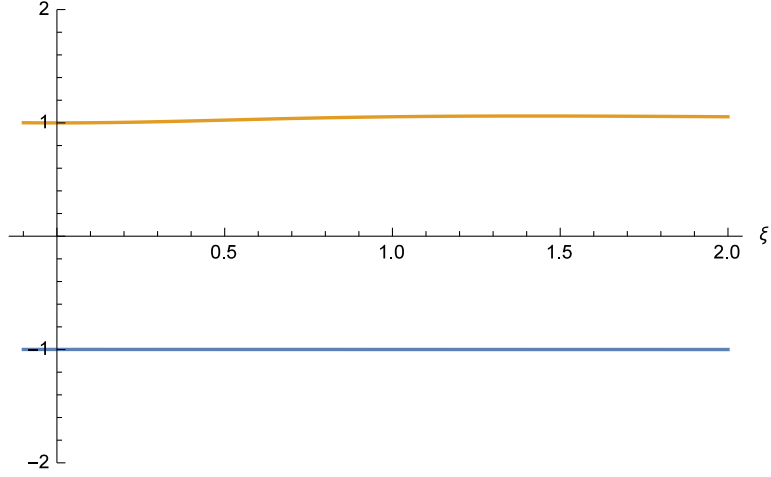}
\end{tabular}
\caption{The LHS (blue) and RHS (orange) of (\ref{E:box 3}) with $L=1$, $q_{0}=1$ and $q_{c}=2$, yielding no intersections for $\xi>0$.}
\label{fig13}
\end{figure}

From Fig. \ref{fig13}, we obtain that there are no eigenvalues lying in the imaginary axis. Next, we consider whether there are eigenvalues outside the imaginary axis.

Set $k=\xi+i\eta$, where both $\xi$ and $\eta$ are real numbers. We conclude that if $k=\xi+i\eta$ is a zero of $a(k)$, the $\xi$ and $\eta$ satisfy the following equation:
\begin{equation}
\label{E:box 4}
\begin{split}
&\tan(2L\sqrt{(\xi+i\eta)^{2}-q_{c}^{2}})\\
&=\frac{\sqrt{(\xi+i\eta)^{2}-q_{c}^{2}}(\xi+i\eta-q_{0}+\sqrt{(\xi+i\eta)^{2}-q_{0}^{2}})(\xi+i\eta+q_{0}+\sqrt{(\xi+i\eta)^{2}-q_{0}^{2}})}
{i[(\xi+i\eta)^{3}+2(\xi+i\eta)^{2}\sqrt{(\xi+i\eta)^{2}-q_{0}^{2}}-2q_{0}q_{c}\sqrt{(\xi+i\eta)^{2}-q_{0}^{2}}+(\xi+i\eta)((\xi+i\eta)^{2}-2q_{0}q_{c})]}.
\end{split}
\end{equation}

Numerically we find solutions to the above equation (\ref{E:box 4}). In Fig. \ref{fig14} below we show three paired intersections, i.e. six intersections total, for the given parameters.

\begin{figure}[h]
\begin{tabular}{cc}
\includegraphics[width=0.8\textwidth]{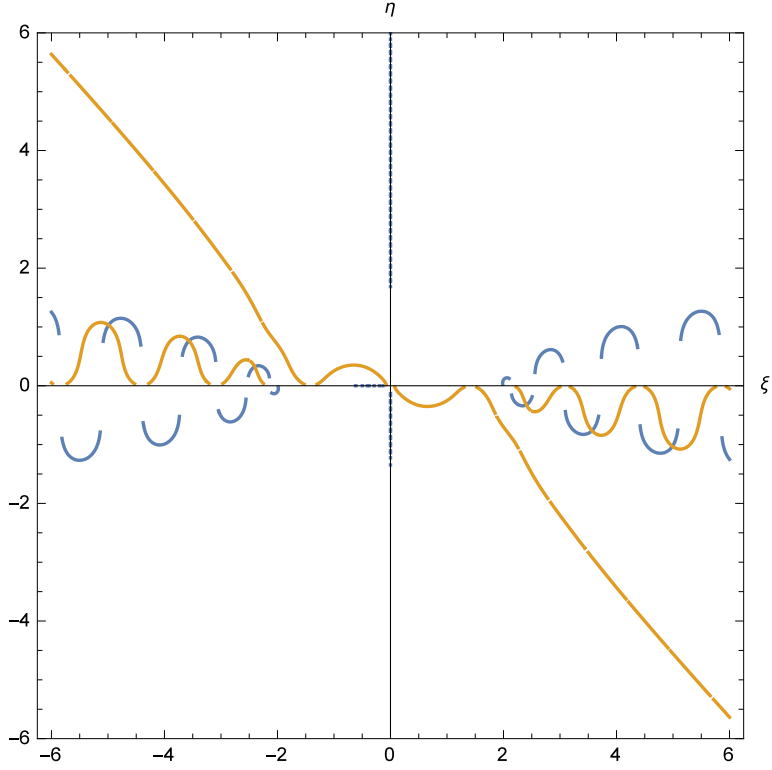}
\end{tabular}
\caption{The preimage of the intersections of real parts (blue) and imaginary parts (orange) of LHS and RHS of (\ref{E:box 4}) with $L=1$, $q_{0}=1$ and $q_{c}=2$.}
\label{fig14}
\end{figure}

\begin{remark}
When $L=1$, $q_{0}=1$ and $q_{c}=2$, $a(k)$ has zeros $k=2.40373-0.32349i$, $k=-2.40373+ 0.32349i$, $k=3.56577-0.764025i$, $k=-3.56577+0.764025i$, $k=4.98907-1.03002i$ and $k=-4.98907+1.03002i$,  i.e., the zeros of $a(k)$ appear in pairs.
\end{remark}

\begin{remark}
If we set a smaller $L$, then the LHS and RHS of (\ref{E:box 4}) can yield less intersections, such as $0$, $2$ and $4$. But they all appear in pairs. Similarly if $L$ is larger we can obtain more intersections. 
\end{remark}

{\bf Case 4.} $\sigma=1$ with $\theta_{+}-\theta_{-}=\pi$.
The box-type initial conditions can be specified as follows:
\begin{equation}
q(x,0)=\left\{\begin{array}{ll}
\begin{array}{cc}
q_{0}e^{i\theta}, \ \ \ x>L,\\
-q_{0}e^{i\theta}, \ \ \ x<-L,\\
q_{c}e^{i\theta}, \ \ \ -L\leq x\leq L,
\end{array}\\
\end{array}\right.
\end{equation}
where $q_{0}, q_{c}>0$ and $0\leq \theta< 2\pi$. As we show in Section 6, the two sets of eigenfunctions $\{\phi(x,k), \overline{\phi}(x,k)\}$ and $\{\psi(x,k), \overline{\psi}(x,k)\}$ satisfy the boundary conditions
\begin{equation}
\phi(x,k)\sim \left(\begin{array}{cc}
\lambda_{3}+k\\
iq_{0}e^{-i\theta}
\end{array}\right)e^{-i\lambda_{3}x}, \ \ \
\overline{\phi}(x,k)\sim\left(\begin{array}{cc}
iq_{0}e^{i\theta}\\
\lambda_{3}+k
\end{array}\right)e^{i\lambda_{3}x}
\end{equation}
as $x\rightarrow -\infty$,
\begin{equation}
\psi(x,k)\sim \left(\begin{array}{cc}
-iq_{0}e^{i\theta}\\
\lambda_{3}+k
\end{array}\right)e^{i\lambda_{3}x}, \ \ \
\overline{\psi}(x,k)\sim\left(\begin{array}{cc}
\lambda_{3}+k\\
-iq_{0}e^{-i\theta}
\end{array}\right)e^{-i\lambda_{3}x}
\end{equation}
as $x\rightarrow +\infty$.  Note that
\begin{equation}
\phi(x,k)=b(k)\psi(x,k)+a(k)\overline{\psi}(x,k),
\end{equation}
we have
\begin{equation}
\phi(x,k)\sim \left(\begin{array}{cc}
-iq_{0}e^{i\theta}b(k)e^{i\lambda_{3}x}+(\lambda_{3}+k)a(k)e^{-i\lambda_{3}x}\\
(\lambda_{3}+k)b(k)e^{i\lambda_{3}x}-iq_{0}e^{-i\theta}a(k)e^{-i\lambda_{3}x}
\end{array}\right)
\end{equation}
as $x\rightarrow +\infty$. We write the real axis as $(-\infty, +\infty)=(-\infty, -L)\cup[-L, L)\cup[L, +\infty)$.
In $[-L, L)$, we can solve
\begin{equation}
\left(\begin{array}{cc}
v_{1}\\
v_{2}
\end{array}\right)_{x}
=
\left(\begin{array}{cc}
-ik& q_{c}e^{i\theta}\\
q_{c}e^{-i\theta}& ik
\end{array}\right)
\left(\begin{array}{cc}
v_{1}\\
v_{2}
\end{array}\right):=M_{-L}\left(\begin{array}{cc}
v_{1}\\
v_{2}
\end{array}\right)
\end{equation}
for $x$ as $v(x)=T(X)v(-L)$, $T(X)= \exp(X M_{-L})$, where $X:=x+L$, and the matrix $T(X)$ is explicitly written as
\begin{equation}
T(X)=\left(\begin{array}{cc}
\cos \lambda_{2}X-i\frac{k}{\lambda_{2}}\sin \lambda_{2}X& \frac{q_{c}e^{i\theta}}{\lambda_{2}}\sin \lambda_{2}X\\
\frac{q_{c}e^{-i\theta}}{\lambda_{2}}\sin \lambda_{2}X& \cos \lambda_{2}X+i\frac{k}{\lambda_{2}}\sin \lambda_{2}X
\end{array}\right).
\end{equation}
Thus,
\begin{equation}
v(L)=T(2L)v(-L).
\end{equation}
The matrix $T$ is interpreted as a transfer matrix that connects two asymptotic forms in $x\rightarrow +\infty$ and $x\rightarrow -\infty$. We can get
\begin{equation}
\begin{split}
&\left(\begin{array}{cc}
-iq_{0}e^{i\theta}b(k)e^{i\lambda_{3}x}+(\lambda_{3}+k)a(k)e^{-i\lambda_{3}x}\\
(\lambda_{3}+k)b(k)e^{i\lambda_{3}x}-iq_{0}e^{-i\theta}a(k)e^{-i\lambda_{3}x}
\end{array}\right)\\
&=\phi(L,k)=T(2L)\phi(-L,k)=T(2L)\left(\begin{array}{cc}
\lambda_{3}+k\\
iq_{0}e^{-i\theta}
\end{array}\right)e^{i\lambda_{3}L},
\end{split}
\end{equation}
i.e.,
\begin{equation}
\begin{split}
&\left(\begin{array}{cc}
-iq_{0}e^{i\theta}b(k)e^{i\lambda_{3}x}+(\lambda_{3}+k)a(k)e^{-i\lambda_{3}x}\\
(\lambda_{3}+k)b(k)e^{i\lambda_{3}x}-iq_{0}e^{-i\theta}a(k)e^{-i\lambda_{3}x}
\end{array}\right)\\
&=\left(\begin{array}{cc}
\cos 2\lambda_{2}L-i\frac{k}{\lambda_{2}}\sin 2\lambda_{2}L& \frac{q_{c}e^{i\theta}}{\lambda_{2}}\sin 2\lambda_{2}L\\
\frac{q_{c}e^{-i\theta}}{\lambda_{2}}\sin 2\lambda_{2}L& \cos 2\lambda_{2}L+i\frac{k}{\lambda_{2}}\sin 2\lambda_{2}L
\end{array}\right)\cdot\left(\begin{array}{cc}
\lambda_{3}+k\\
iq_{0}e^{-i\theta}
\end{array}\right)e^{i\lambda_{3}L}.
\end{split}
\end{equation}
Then
\begin{equation}
\begin{split}
&a(k)=\frac{e^{2i\lambda_{3}L}}{\lambda_{2}(q_{0}^{2}+(k+\lambda_{3})^{2})}
(\lambda_{2}(k-q_{0}+\lambda_{3})(k+q_{0}+\lambda_{3})\cos(2\lambda_{2}L)\\
&-i(k^{3}+2k^{2}\lambda_{3}-2q_{0}q_{c}\lambda_{3}+k(q_{0}^{2}
-2q_{0}q_{c}+\lambda_{3}^{2}))\sin(2\lambda_{2}L)).
\end{split}
\end{equation}


Set $k=i\xi$, where $\xi$ is a real number, we will distinguish two subcases.

{\bf Subcase 1.} $|\xi|\leq q_{0}$. We have $\lambda_{3}=\sqrt{-\xi^{2}+q_{0}^{2}}$ and $\lambda_{2}=i\sqrt{\xi^{2}+q_{c}^{2}}:=i\widetilde{\lambda}_{2}$, where both $\lambda_{3}$ and $\widetilde{\lambda}_{2}$ are positive real numbers. Then $a(i\xi)=0$ if and only if $\xi$ is a root of the following equation:
\begin{equation}
\frac{e^{-2\widetilde{\lambda}_{2}L}-e^{2\widetilde{\lambda}_{2}L}}{e^{-2\widetilde{\lambda}_{2}L}+e^{2\widetilde{\lambda}_{2}L}}
=\frac{i\widetilde{\lambda}_{2}(i\xi-q_{0}+\lambda_{3})(i\xi+q_{0}+\lambda_{3})}{-i\xi^{3}-2\xi^{2}\lambda_{3}-2q_{0}q_{c}\lambda_{3}+i\xi(q_{0}^{2}-2q_{0}q_{c}+\lambda_{3}^{2})},
\end{equation}
where both the left hand side is real, however, both the right hand side is complex, which is a contradiction.

{\bf Subcase 2.} $|\xi|>q_{0}$. We have $\lambda_{3}=i\sqrt{\xi^{2}-q_{0}^{2}}:=i\widetilde{\lambda}_{3}$
and $\lambda_{2}=i\sqrt{\xi^{2}+q_{c}^{2}}:=i\widetilde{\lambda}_{2}$, where both $\widetilde{\lambda}_{3}$ and $\widetilde{\lambda}_{2}$ are positive real numbers. Then $a(i\xi)=0$ if and only if $\xi$ is a root of the following equation:
\begin{equation}
\label{E:box 5}
\frac{e^{-2\widetilde{\lambda}_{2}L}-e^{2\widetilde{\lambda}_{2}L}}{e^{-2\widetilde{\lambda}_{2}L}+e^{2\widetilde{\lambda}_{2}L}}
=\frac{\widetilde{\lambda}_{2}[(\xi+\widetilde{\lambda}_{3})^{2}+q_{0}^{2}]}{2\xi^{3}+2\xi^{2}\widetilde{\lambda}_{3}+2q_{0}q_{c}\widetilde{\lambda}_{3}-2q_{0}^{2}\xi+2q_{0}q_{c}\xi}.
\end{equation}

\begin{figure}[h]
\begin{tabular}{cc}
\includegraphics[width=0.8\textwidth]{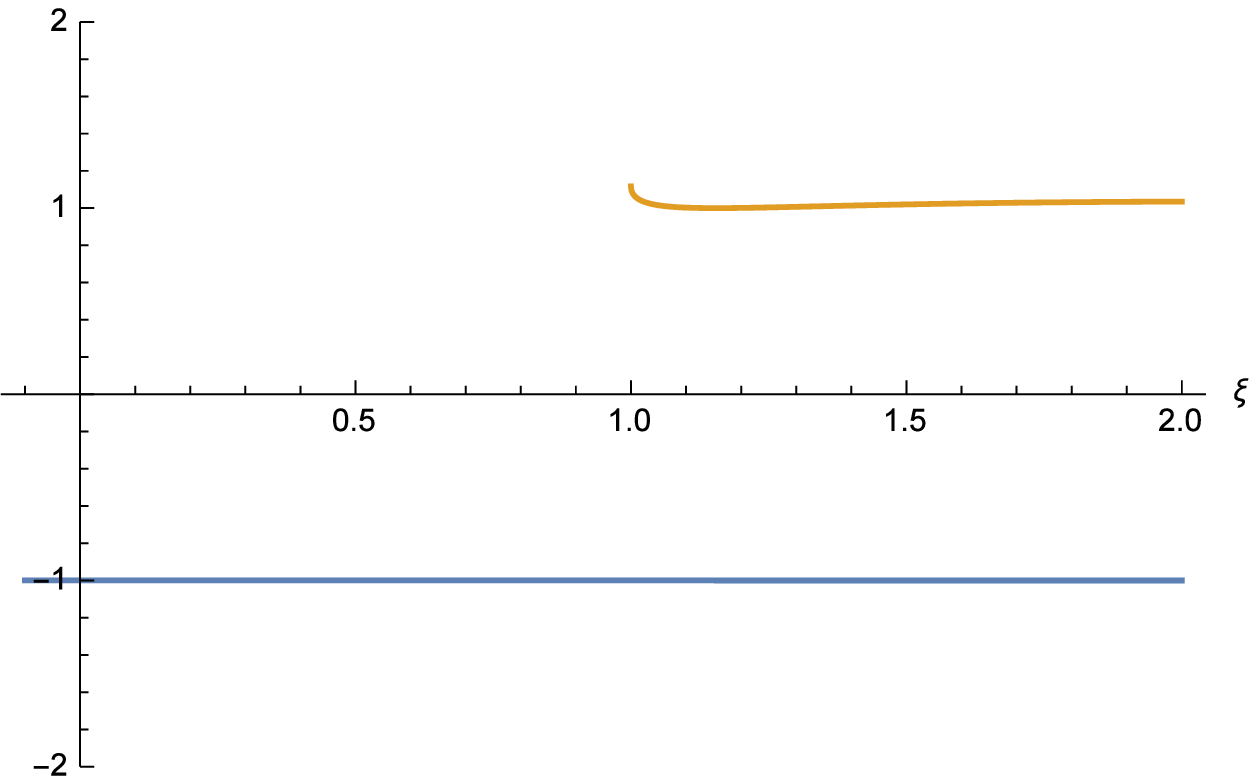}
\end{tabular}
\caption{The LHS (blue) and RHS (orange) of (\ref{E:box 5}) with $L=1$, $q_{0}=1$ and $q_{c}=2$, yielding no intersections for $\xi>q_{0}$.}
\label{fig15}
\end{figure}

From Fig \ref{fig15}, we obtain that there are no eigenvalues lying in the imaginary axis. Next, we consider whether there are eigenvalues outside the imaginary axis.

Set $k=\xi+i\eta$, where both $\xi$ and $\eta$ are real numbers. We conclude that if $k=\xi+i\eta$ is a zero of $a(k)$, the $\xi$ and $\eta$ satisfy the following equation:
\begin{equation}
\label{E:box 6}
\begin{split}
&\tan(2L\sqrt{(\xi+i\eta)^{2}-q_{c}^{2}})\\
&=\frac{\sqrt{(\xi+i\eta)^{2}-q_{c}^{2}}(\xi+i\eta-q_{0}+\sqrt{(\xi+i\eta)^{2}+q_{0}^{2}})(\xi+i\eta+q_{0}+\sqrt{(\xi+i\eta)^{2}+q_{0}^{2}})}
{i[(\xi+i\eta)^{3}+2(\xi+i\eta)^{2}\sqrt{(\xi+i\eta)^{2}+q_{0}^{2}}-2q_{0}q_{c}\sqrt{(\xi+i\eta)^{2}+q_{0}^{2}}+(\xi+i\eta)(-(\xi+i\eta)^{2}-2q_{0}q_{c})]}.
\end{split}
\end{equation}
Numerically, we do not find solutions to the above equation (\ref{E:box 6}). In Figure \ref{fig16} below we show no solutions for (\ref{E:box 6}).

\begin{figure}[h]
\begin{tabular}{cc}
\includegraphics[width=0.8\textwidth]{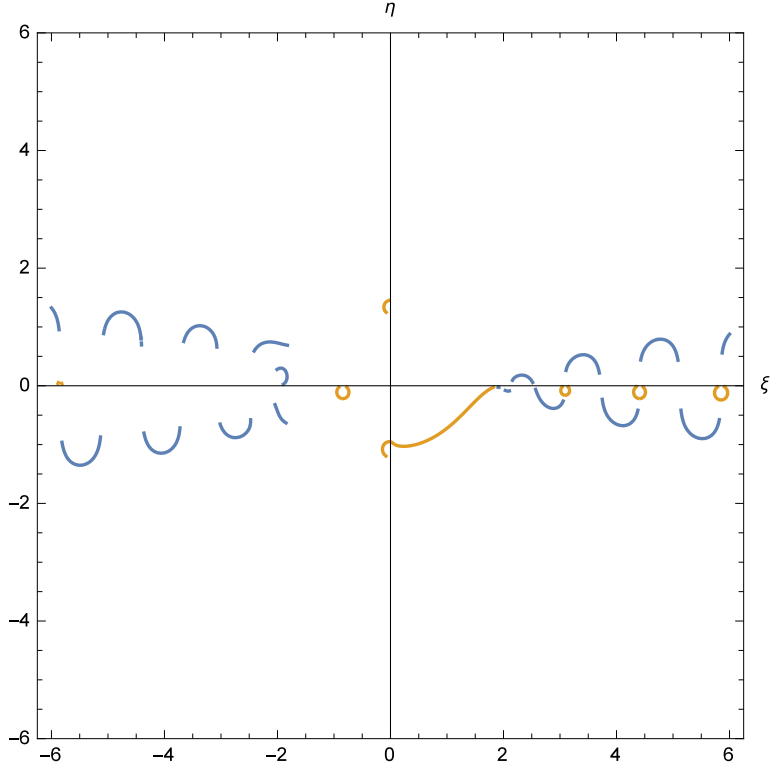}
\end{tabular}
\caption{The preimage of the intersections of real parts (blue) and imaginary parts (orange) of LHS and RHS of (\ref{E:box 6}) with $L=1$, $q_{0}=1$ and $q_{c}=2$.}
\label{fig16}
\end{figure}

\begin{remark}
When $L=1$, $q_{0}=1$ and $q_{c}=2$, $a(k)$ does not have zeros. We can obtain the same argument for other values of $L$, $q_{0}$ and $q_{c}$. In conclusion, it does not have soliton solution for Case 4.
\end{remark}
\newpage

\section{Modulational instability}
In this section we study modulation instability of a constant (in space) amplitude solution to the nonlocal NLS equation (\ref{E:nonlocal NLS}). More precisely, we study linear stability properties of the family of solutions defined by
\begin{equation}
\label{constant}
q(x,t) = A e^{i\omega t}\;,
\end{equation}
where $A$ is a constant complex amplitude and $\omega$ is a real frequency.
In the limit $x \rightarrow +\infty$ we define $A\equiv A_{+}$ and when $x \rightarrow -\infty$ we let $A\equiv A_{-}.$ If we let $A_{\pm} = q_0e^{i\theta_{\pm}}$ then in order for Eq.~(\ref{constant}) to be a solution to (\ref{E:nonlocal NLS}) we require that
$\theta_{+} - \theta_{-} = 0$ or $\pi.$ Thus, we distinguish here between four different cases:
$\theta_{+} - \theta_{-} = 0, \sigma = \pm 1$ and $\theta_{+} - \theta_{-} = \pi, \sigma = \pm 1.$
\subsection{Modulational instability I: $\theta_{+} - \theta_{-} = 0$}
In this case we have $ A_{+} =  A_{-}\equiv A$ and $\omega = 2\sigma |A|^2.$
To study modulational instability we write solutions to (\ref{E:nonlocal NLS}) in the form
\begin{equation}
\label{constant-pert}
q(x,t) = (A + \varepsilon \tilde{q}(x,t))e^{2 i\sigma |A|^2 t}\;.
\end{equation}
Substituting (\ref{constant-pert}) into (\ref{E:nonlocal NLS}) and retaining terms up to order $\varepsilon$ we obtain
\begin{equation}
\label{pert:nonlocal NLS}
i \tilde{q}_{t}(x,t) = \tilde{q}_{xx}(x,t) - 2\sigma |A|^2 \tilde{q}(x,t) - 2\sigma A^2 \tilde{q}^*(-x,t)\;.
\end{equation}
Equation (\ref{pert:nonlocal NLS}) governs the dynamical evolution of the perturbation
$\tilde{q}(x,t)$ subject to the boundary conditions $\tilde{q}(x,t)$ decays very fast at plus and minus infinity. We next decompose the perturbation in terms of its Fourier integral representation
\begin{equation}
\label{pert:four}
\tilde{q}(x,t) = \int_{-\infty}^{+\infty}
\left(
\tilde{q}_{1}(k) e^{i\lambda t} + \tilde{q}_{2}(k) e^{-i\lambda^* t}
\right) e^{ikx} dk
\;.
\end{equation}
It is evident that the constant solution is modulationally unstable if the eigenvalue $\lambda$ is complex. After some algebra we find
\begin{equation}
\label{eigs}
\lambda^2 = k^2(k^2 +4\sigma |A|^2)\;.
\end{equation}
Thus we conclude that  when $\sigma = +1$ the solution (\ref{constant}) is modulationaly stable whereas it is modulationaly unstable for $\sigma = -1.$

The unstable case corresponds to having two eigenvalues on the imaginary axis, or two solitons that are stationary and oscillate in time, as discussed in case 4.
\subsection{Modulational instability II: $\theta_{+} - \theta_{-} = \pi$}
In this second case we find $ A_{+} = -  A_{-}\equiv A$ thus
$\omega = -2\sigma |A|^2.$ Under this situation, the linear stability analysis proceeds as before by writing a perturbative solutions to (\ref{E:nonlocal NLS}) in the form (\ref{constant-pert}) with the exception that $q^*(-x,t) = (-A^* + \varepsilon \tilde{q}^*(-x,t))e^{-2 i\sigma |A|^2 t}.$
Substituting (\ref{constant-pert}) into (\ref{E:nonlocal NLS}) and retaining terms up to order
$\varepsilon$ we now obtain
\begin{equation}
\label{pert:nonlocal NLS-II}
i \tilde{q}_{t}(x,t) = \tilde{q}_{xx}(x,t) + 2\sigma |A|^2 \tilde{q}(x,t) - 2\sigma A^2 \tilde{q}^*(-x,t)\;.
\end{equation}
Decomposing the perturbation in terms of its Fourier integral representation (\ref{pert:four}) we then find
\begin{equation}
\label{eigsII}
\lambda^2 = (k^2 - 2\sigma |A|^2)^2 + 4|A|^2\;.
\end{equation}
It is thus evident that the constant amplitude solution is modulationally stable for both signs of
 $\sigma.$

\section{Conclusion}
In this paper the nonlocal NLS equation  (\ref{E:nonlocal NLS}) with nonzero boundary values at infinity is considered. Depending on the signs of nonlinearity: $\sigma= \pm 1$  and phase difference of the boundary values from plus infinity to minus infinity: $\Delta \theta= \theta_+-\theta_-$, there are four distinct cases to consider when the amplitude at infinity is constant.

The direct scattering problem and corresponding analytic properties and symmetries of the eigenfunctions and scattering data are obtained. The inverse problem is constructed via a Riemann-Hilbert problem formulated  from both the right and left in terms of a convenient uniformization variable.
The simplest pure soliton solutions are obtained.  When $\sigma=-1, \Delta \theta= \pi$ a pure one soliton solution is found; when  $\sigma=-1, \Delta \theta= 0$ a pure stationary two soliton solution is obtained;
and with $\sigma=1, \Delta \theta= 0$ a traveling bidirectional interacting two soliton solution is found.
In the final case $\sigma=1, \Delta \theta= \pi$, there are no soliton solutions allowed. A number of box potentials are analyzed; their associated eigenvalues are found and in all case shown to be consistent with the prior analytical results. Modulational instability of a plane wave is also discussed.

\section{Acknowledgements}
The authors are pleased to acknowledge Justin T. Cole and Yi-Ping Ma for considerable assistance in the figures of this paper and in Mathematica respectively. MJA was partially supported by NSF under Grant No. DMS-1310200.

\bibliographystyle{amsplain}

\begin{thebibliography}{10}
\bibitem{AblowitzMusslimani} Ablowitz M J and Musslimani Z H, \textsl{Integrable nonlocal nonlinear Schrödinger equation}, Phys. Rev. Lett. {\bf 110}  064105 (2013).
\bibitem{AblowitzMusslimani2} Ablowitz M J and Musslimani Z H, \textsl{Integrable discrete PT symmetric model}, Phys. Rev. E {\bf 90}  032912 (2014).
\bibitem{AblowitzMusslimani3} Ablowitz M J and Musslimani Z H, \textsl{Inverse scattering transform for the integrable nonlocal nonlinear
Schr\"odinger equation},  Nonlinearity , {\bf 29 (3)} (2016).
\bibitem{Ablowitz1} Ablowitz M J and Clarkson P A, {\it Solitons, Nonlinear Evolution Equations and Inverse
Scattering}, Cambridge University Press, Cambridge, (1991).
\bibitem{AKNS} Ablowitz M J, Kaup D J, Newell A C, and Segur H, \textsl{The Inverse Scattering Transform-Fourier Analysis for Nonlinear Problems}, Stud. Appl. Math. {\bf 53}, 249-315 (1974).
\bibitem{Ablowitz2} Ablowitz M J and Segur H, {\it Solitons and Inverse
Scattering Transform}, SIAM Studies in Applied Mathematics Vol. 4, SIAM, Philadelphia, PA, (1981).

\bibitem{Ablowitz5} Ablowitz M J  {\it Nonlinear Dispersive Waves, Asymptotic Analysis and Solitons}, Cambridge University Press, Cambridge, (2011).





\bibitem{Ablowitz3} Ablowitz M J, Prinari B and Trubatch A D,
{\it Discrete and Continuous Nonlinear Schr\"odinger Systems}, Cambridge University Press, Cambridge, (2004)
\bibitem{Ablowitz4} Ablowitz M J and Segur H, \textsl{The inverse scattering transform: Semi-infinite interval}, J. Math. Phys. {\bf 16}, 1054 (1975).
\bibitem{MJADB2012} Ablowitz M J and Baldwin D E, \textsl{Nonlinear shallow ocean-wave soliton interactions on flat beaches},  Phys. Rev. E, {\bf.  86} 036305 (2012).
\bibitem{AGRAWAL} Agrawal G P, \textsl{Nonlinear Fiber Optics}, Academic Press (1995).
\bibitem{BC85} R. Beals and R.R. Coifman, \textsl{Inverse scattering and evolution equations}, Comm. Pure \& Appl. Math., {\bf 38}, 29, 1985.
\bibitem{Bender} Bender C M and Boettcher S, \textsl{Real Spectra in Non-Hermitian Hamiltonians Having PT Symmetry}, Phys. Rev. Lett. {\bf 80} 5243 (1998).

\bibitem{BK14} Biondini G and Kovacic G, \textsl{Inverse scattering transform for the focusing nonlinear
Schr\"odinger equation with nonzero boundary conditions}, J. Math. Phys., Vol. 55,  031506 (2014)

\bibitem{BH} Biondini G and Wang H, \textsl{Solitons, BVPs and a nonlinear method of images},
J. Phys. A {\bf 42}, 205207: 1-18 (2009).
\bibitem{Cal-Deg} Calogero F and Degasperis A, \textsl{Spectral Transform and Solitons I}, North Holand (1982).
\bibitem{Chen} Chen A, Segev M and Christodoulides D N, \textsl{Optical spatial solitons: historical
overview and recent advances}, Rep. Progr. Phys. {\bf 75}, 086401 (2012).
\bibitem{Clarkson1} Clarkson P A,  \textsl{Painlev\'e equations --nonlinear special functions}, J. of Comp. and Applied Math. {\bf 153 (1)},127-140, (2003).
\bibitem{Clarkson2} Clarkson P A and Mansfield E L, \textsl{The second Painlev\'e equation, its hierarchy and associated special polynomials}, Nonlinearity {\bf 16 (3)}, R1-R26 (2003).
\bibitem{Demontis} Demontis F, Prinari B, van der Mee  C, and Vitale F, \textsl{The inverse scattering transform for the defocusing nonlinear
Schr\"{o}dinger equations with nonzero boundary conditions}, Stud. Appl. Math. {\bf 131}, 1-40 (2012).

\bibitem{Demontis2} Demontis F, Prinari B, van der Mee C, and Vitale F, \textsl{The inverse scattering transform for the focusing nonlinear
Schr¨odinger equations with asymmetric boundary conditions}, J.  Math. Phys. {\bf 55}, 101505 (2014).

\bibitem{RKDM1} El-Ganainy E, Makris K G, Christodoulides D N, and Musslimani Z H, \textsl{Theory of coupled optical PT-symmetric structures}, Opt. Lett., {\bf 32}, 2632 (2007).

\bibitem{FT87}  Faddeev L D and Takhtajan L A, \textsl{Hamiltonian Methods in the Theory of Solitons}  (Springer, Berlin, 1987).

\bibitem{Fokas2015} Fokas A S, \textsl{Integrable Multidimensional Versions of the Nonlocal Nonlinear Schr\"odinger Equation},
Nonlinearity {\bf 29}, 319 (2016)
\bibitem{ApplicNlclNLS16} T. A. Gadzhimuradov and A. M. Agalarov, \textsl{Towards a gauge-equivalent magnetic structure of the
nonlocal nonlinear Schr\"odinger equation}, Phys Rev A, {\bf 93},  062124 (2016).


\bibitem{Guo} Guo A, Salamo G J, Duchesne D, Morandott R, Volatier-Ravat M, Aimez V, Siviloglou G A and Christodoulides D N, \textsl{Observation of PT-Symmetry Breaking in Complex Optical Potentials}, Phys. Rev. Lett. {\bf 103}, 093902 (2009).


\bibitem{He} He J S and Qiu D Q, \textsl{Mirror Symmetrical Nonlocality of a Parity-Time Symmetry System}, Private Communication 2016.

\bibitem{KP} Kadomtsev B B and Petviashvili V I, \textsl{On the stability of solitary waves in weakly dispersive media}, Sov. Phys. Dokl. {\bf 15}, 539 (1970).
\bibitem{KIVSHAR} Kivshar Y and Agrawal G P, {\it Optical Solitons: From Fibers to Photonic Crystals},
Academic Press (2003).
\bibitem{Konotop} V.V. Konotop, J. Yang and D.A. Zezyulin, \textsl{Nonlinear waves in PT -symmetric
systems}, Rev. Mod. Phys. 88: 035002 (2016).


\bibitem{KDV} Korteweg D J and de Vries G, \textsl{On the change of form of long waves advancing in a rectangular canal, and on a new type of long stationary waves}, Philos. Mag. Ser. 5, {\bf 39}, 422, (1895).
\bibitem{PT3} Lazarides N and Tsironis G P, \textsl{Gain-Driven Discrete Breathers in PT-Symmetric Nonlinear Metamaterials}, Phys. Rev. Lett. {\bf 110}, 053901 (2013).
\bibitem{PT4} Lumer Y, Plotnik Y, Rechtsman M C, and Segev M, \textsl{Nonlinearly Induced PT Transition in Photonic Systems},
Phys. Rev. Lett. {\bf 111}, 263901 (2013).
\bibitem{RKDM2} Makris K G, El Ganainy R, Christodoulides D N, and Musslimani Z H, \textsl{Beam Dynamics in PT Symmetric Optical Lattices}, Phys. Rev. Lett. {\bf 100}, 103904 (2008).
\bibitem{RKDM3} Musslimani Z H, Makris K G, El Ganainy R, and Christodoulides D N, \textsl{Optical Solitons in PT Periodic Potentials}, Phys. Rev. Lett. {\bf 100}, 030402 (2008).
\bibitem{NOVIKOV} Novikov S P, Manakov S V, Pitaevskii L P and Zakharov V E, \textsl{Theory of Solitons. The inverse
Scattering Method}, Plenum (1984).

\bibitem{Prinari06} Prinari B, Ablowitz M J,  and Biondini G, \textsl{Inverse scattering transform for the vector nonlinear
Schr\"odinger equation with nonvanishing boundary
conditions}, J. Math. Phys., {\bf Vol. 47} 063508, (2004).


\bibitem{Regensburger1} Regensburger A, Bersch C, Mir M-A, Onishchukov G, Christodoulides D N and Peschel U, \textsl{Parity–time synthetic photonic lattices}, Nature, {\bf 488} 167 (2012).
\bibitem{Regensburger2} Regensburger A, Miri M A, Bersch C, N\"ager J, Onishchukov G, Christodoulides D N, and Peschel U, \textsl{Observation of Defect States in PT-Symmetric Optical Lattices},
Phys. Rev. Lett. {\bf 110}, 223902 (2013).


\bibitem{Ruter} Ruter C E, Makris K G, El-Ganainy R, Christodoulides D N, Segev M and Kip D, \textsl{Observation of parity time symmetry in optics}, Nature Phys. {\bf 6}, 192 (2010).

\bibitem{Segev} Segev M, \textsl{Optical spatial solitons}, Opt. Quantum Electron. {\bf 30}, 503–533 (1998).
\bibitem{ZS} Zakharov V E and Shabat A B, \textsl{Exact theory of two-dimensional self-focusing and one-dimensional self-modulation of waves in nonlinear media}, Sov. Phys. JETP {\bf 34}, 62 (1972).
\bibitem{ZS2}  Zakharov VE and  Shabat AB, \textsl{Interaction between solitons in a stable medium}, Sov. Phys. JETP 37, 823Ð828
(1973).
\bibitem{Zverovich} Zverovich E I, \textsl{Boundary value problems in the theory of analytic functions in H\"{o}lder classes on Riemann surfaces}, Russ. Math. Surv. {\bf 26(1)}, (1971).



\end{thebibliography}

\end{document}